\definecolor{lightblue}{RGB}{173, 216, 230} 
\colorlet{lightblueDark}{lightblue!80!black}
\DeclareMathOperator*{\argmax}{argmax}
\newcommand{\supp}[1]{\textbf{supp}\left(#1\right)}
\newcommand{\norm}[1]{\left\lVert#1\right\rVert}
\newcommand{\abs}[1]{\left\lvert#1 \right\rvert}
\DeclareFontFamily{U}{matha}{\hyphenchar\font45}
\DeclareFontShape{U}{matha}{m}{n}{
<-6> matha5 <6-7> matha6 <7-8> matha7
<8-9> matha8 <9-10> matha9
<10-12> matha10 <12-> matha12
}{}
\DeclareSymbolFont{matha}{U}{matha}{m}{n}
\DeclareFontFamily{U}{mathx}{\hyphenchar\font45}
\DeclareFontShape{U}{mathx}{m}{n}{
<-6> mathx5 <6-7> mathx6 <7-8> mathx7
<8-9> mathx8 <9-10> mathx9
<10-12> mathx10 <12-> mathx12
}{}
\DeclareSymbolFont{mathx}{U}{mathx}{m}{n}
\DeclareMathDelimiter{\vvvert} {0}{matha}{"7E}{mathx}{"17}%
\DeclarePairedDelimiterX{\normiii}[1]
{\vvvert}
{\vvvert}
{\ifblank{#1}{\:\cdot\:}{#1}}
\newtheorem{theorem}{Theorem} 
\newtheorem{assumption}{Assumption}
\newtheorem{lemma}[theorem]{Lemma} 
\newtheorem{proposition}[theorem]{Proposition} 
\newtheorem{corollary}[theorem]{Corollary}
\newtheorem{definition}[theorem]{Definition}
\def\ba{{\mathbf a}}
\def\bom{{\mathbf m}}
\def\bof{{\mathbf f}}
\def\bl{{\mathbf z}}
\def\bw{{\mathbf w}}
\def\bb{{\mathbf b}}
\def\bc{{\mathbf c}}
\def\bW{{\mathbf W}}
\def\bx{{\mathbf x}}
\def\bX{{\mathbf X}}
\def\by{{\mathbf y}}
\def\bY{{\mathbf Y}}
\newcommand{\boeta}{\boldsymbol{\eta}}
\newcommand{\bPi}{\boldsymbol{\Pi}}
\newcommand{\bpi}{\boldsymbol{\pi}}
\newcommand{\bmu}{\boldsymbol{\mu}}
\newcommand{\bnu}{\boldsymbol{\nu}}
\newcommand{\blambda}{\boldsymbol{\lambda}}
\title{Scalable calibration of individual-based epidemic models through categorical approximations}
\date{\today}
\author{Lorenzo Rimella \\
        \small{Dipartimento di Scienze Economiche, Università degli studi di Bergamo, Bergamo, IT}\\
        Nick Whiteley \\
        \small{School of Mathematics, University of Bristol, Bristol, UK}\\
        Chris Jewell \\
        \small{School of Mathematical Sciences, Lancaster University, Lancaster, UK}\\
        Paul Fearnhead \\
        \small{School of Mathematical Sciences, Lancaster University, Lancaster, UK}\\
        Michael Whitehouse \\
        \small{School of Public Health, Imperial College London, London, UK}}
\begin{document}
 \maketitle


\begin{abstract}
    Traditional compartmental models capture population-level dynamics but fail to characterize individual-level risk. The computational cost of exact likelihood evaluation for partially observed individual-based models, however, grows exponentially with the population size, necessitating approximate inference. Existing sampling-based methods usually require multiple simulations of the individuals in the population and rely on bespoke proposal distributions or summary statistics. We propose a deterministic approach to approximating the likelihood using categorical distributions. The approximate likelihood is amenable to automatic differentiation so that parameters can be estimated by maximization or posterior sampling using standard software libraries such as Stan or TensorFlow with little user effort. We prove the consistency of the maximum approximate likelihood estimator. We empirically test our approach on several classes of individual-based models for epidemiology: different sets of disease states, individual-specific transition rates, spatial interactions, under-reporting and misreporting. We demonstrate ground truth recovery and comparable marginal log-likelihood values at substantially reduced cost compared to competitor methods. Finally, we show the scalability and effectiveness of our approach with a real-world application on the 2001 UK Foot-and-Mouth outbreak, where the simplicity of the CAL allows us to include 162775 farms.
\end{abstract}

\section{Introduction} 

The traditional approach to modeling epidemics involves dividing a population into compartments representing epidemic states through which individuals progress over time. The dynamics of such models are mathematically represented by, for instance, a Markov chain \citep{lekone2006statistical}, a system of ODEs \citep{chowell2004basic}, or a system of SDEs \citep{ionides2006inference}. These models treat the population as homogeneous in two senses: individuals are equal in their epidemiological attributes, such as susceptibility and infectivity (\emph{homogeneous attributes}); and individuals contact each other at equal rates  (\emph{homogeneous-mixing}). These homogeneity assumptions can be relaxed by stratifying individuals into meta-populations \citep{whitehouse2023consistent}, where individuals have homogeneous attributes and are homogeneous-mixing within each meta-population but they have heterogeneous attributes and are heterogeneous-mixing across different meta-populations. While homogeneous\slash heterogeneous-mixing is a well-known terminology in the epidemiology literature \citep{jewell2009bayesian}, homogeneous\slash heterogeneous attributes is somewhat uncommon but we use it to emphasize the different levels of heterogeneity we are considering.

Individual-based models (IBMs) fully relax the homogeneity assumptions by modeling each individual's disease states explicitly, instead of partitioning and aggregating them across different compartments. The interpretation of ``individual'' is context-specific and could refer, for example, to an individual person \citep{bu2024stochastic} or farm \citep{jewell2009bayesian}. Whilst stratified compartmental models may handle covariates taking values in a discrete set, IBMs allow practitioners to work with both continuous and discrete covariates defined at an individual level, such as physical and physiological characteristics (age, health records, etc. \citep{cocker2023drum}), location (geographical and\slash or community membership \citep{jewell2009bayesian,rimella2022inference}) or contact networks \citep[Chapter~4]{estrada2010network}. The increasing availability of both epidemiological testing data and accompanying covariates at an individual level has motivated the development of many fine-scale IBMs of disease transmission \citep{cocker2023drum}. However, the computational cost of performing exact inference for such IBMs with noisy and incomplete data necessarily grows astronomically in the population size $N$, with exact computation of likelihoods having complexity which is exponential in $N$ \citep{rimella2025simulation}. Fast, simple, and theoretically justified calibration of IBMs is one of the open challenges in infectious disease modeling, motivating the present work.

We now summarize our novel contributions. \textbf{(1)} We propose a new form of approximate and recursive likelihood evaluation in partially observed individual-based epidemic models via categorical distributions, which involve no simulation from the model. \textbf{(2)} We prove strong consistency of parameter estimators obtained by maximizing our approximate likelihood when data are generated by the exact model. \textbf{(3)} The computational simplicity of our methodology allows it to scale up to large population sizes, even with a simple Python implementation. Moreover, it is particularly suited to automatic differentiation, for example gradient-based optimization or Hamiltonian Monte Carlo. \textbf{(4)} We calibrate an individual-based model to the 2001 Aphtovirus (Foot-and-Mouth) outbreak in the United Kingdom (UK), scaling up to include 162775 farms in the study. 

The paper is organized as follows. We conclude this section with a motivating example and related works. In Section \ref{sec:model_dynamic_obs}, we introduce our model, show how the motivating example can be formulated in that framework, and discuss closed-form likelihood computation. Section \ref{sec:CAL} presents the main algorithm and explains the rationale behind the approximation. In Section \ref{sec:theory_main}, we state our consistency result, Theorem \ref{thm:main_consistency}, and outline its proof. Numerical results on both synthetic and real data are reported in Section \ref{sec:exper}. Section \ref{sec:discussion_future} summarizes the paper and discusses limitations and future work.

\subsection{Motivating example} \label{sec:motivating_example_first}

Consider a discrete-time Susceptible-Infective-Susceptible (SIS) model for a population of size $N$. At each discrete time step $t=0,1,\ldots$, each individual assumes one of the disease states $\{S,I\}$. The following example could easily be extended to models with more than two disease states, we focus on SIS for ease of presentation. We present alternative building blocks of the model in separate paragraphs. Similar models incorporating individual-specific covariates have been considered in previous works, for example, by \cite{ju2021sequential,bu2022likelihood,rimella2022inference,bu2024stochastic}.

\paragraph{Heterogeneous attributes.}

IBMs allow individuals' covariates to be reflected in disease state transition probabilities. Let $\beta_{n k} > 0 $ denote the rate at which the $k$-th individual infects the $n$-th when they come into contact, assuming the former is infective and the latter is susceptible. One may consider a regression model for the logarithm of the pairwise individual-specific transmission rate, e.g. $\label{eq:beta_regression model}
\log \beta_{n k} = \log \beta + \bc_n^\top \bb_{S} + \bc_{k}^\top \bb_{I},
    $ where $\beta$ is the background infection rate, $\bc_n$ is a vector of observed covariates associated with the $n$-th individual, and $\bb_S, \bb_I$ are parameter vectors respectively determining susceptibility to infection and propensity to pass the infection on. Similarly, one may consider a regression model for the recovery rate $\gamma_n > 0$ at which an infective individual recovers and returns to being susceptible: $\log \gamma_n = \log \gamma + \bc_n^\top \bb_R$, where $\bb_R$ is a parameter vector.

\paragraph{Homogeneous- and heterogeneous-mixing dynamics.}

Under the assumption that the population mixes homogeneously in discrete time, the $n$-th individual is equally likely to contact any one individual, and the probability of the transition $S\to I$ at time $t>0$ is:
\begin{equation}
1-\exp\left(-\frac{h}{N}\sum_{k\in \mathcal{I}_{t-1}}\beta_{n k}\right) = 1-\exp\left(-h\beta\exp\{\bc_n^\top \bb_S\}\frac{1}{N}\sum_{k\in \mathcal{I}_{t-1}}
\exp\{\bc_{k}^\top \bb_I\} \right),\label{eq:homog_mixing}
\end{equation}
where $h>0$ is the time period length (set equal to $1$ unless stated otherwise) and $\mathcal{I}_{t-1}$ is the set of individuals who are infective at time $t-1$. Here $\exp\{\bc_n^\top \bb_S\}$ has the interpretation of the susceptibility of the $n$-th individual, while $\exp\{\bc_k^\top \bb_I\}$ is the infectivity. At each time step, the probability of the $n$-th individual transitioning $I\to S$ is:
$
1-\exp(-h\gamma_n).
$

In some cases, information may be available about the geographic location of individuals, network structure, or other factors that influence transmission rates between pairs of individuals. For example, if $\bl_n$ denotes the Euclidean position of the $n$-th individual, then spatial weighting could be introduced in the $S\to I$ transition probability \eqref{eq:homog_mixing} in  the form:
$$
1-\exp\left(-h\beta\exp\{\bc_n^\top \bb_{S}\}\frac{1}{N}\sum_{k\in\mathcal{I}_{t-1}}\frac{1}{\sqrt{2 \pi \phi^2}} \exp\left\{-\frac{\|\bl_n-\bl_{k}\|^2}{2\phi^2}+\bc_{k}^\top \bb_{I}\right\}\right),
$$
where $\phi>0$ is a parameter and $\norm{\cdot}$ is the Euclidean distance. Here, the term $\frac{1}{N}\frac{1}{\sqrt{2 \pi \phi^2}}\exp\left\{-\frac{\|\bl_n-\bl_{k}\|^2}{2\phi^2}\right\}$ can be interpreted as the rate at which the $n$-th individual contacts the $k$-th individual, whilst the other terms are as in the homogeneous-mixing case. One could similarly model heterogeneity arising from a known network rather than spatial structure, by choosing the $S\to I$ transition probability for individual $n$ to reflect its connectivity to other members of the population. 

\paragraph{Observations.} At each time step $t$, an individual is either reported in their true state, misreported in some other state, or not reported at all: in our SIS model, the observed state of each individual is therefore one of  $\{U,S,I\}$, where $U$ represents being unreported (e.g. missing test results). For any individual, let $q_S$ (resp. $q_I$) be the probability of either correctly reporting or misreporting their state, given that they are susceptible (resp. infected). Conditional that the state of the individual is either reported or misreported, let $q_{Se}$ (resp. $q_{Sp}$) be the probability of observing $I$ if $I$ (resp. $S$ if $S$) is the true state. In the context of testing  $q_{Se}$ and $q_{Sp}$ are the sensitivity and specificity.

\paragraph{Inference challenges.} 

A typical inference task would be to calibrate the model by estimating the parameters, $\beta, \bb_S,\bb_I,\gamma, \bb_R,\phi,q_S,q_I,q_{Se},q_{Sp}$ or some subset thereof, allowing us to understand how the covariates $\bc_n$ and spatial or network interactions contribute to the dynamics of the disease in question \citep{rimella2022inference,seymour2022bayesian}. 
Due to the partial observation structure of the above model, for population size and time horizon $N,T\in \mathbb{N}$ exact likelihood evaluation would involve marginalizing over $2^{NT}$ possible latent states. 

\subsection{Related work}
The literature on IBMs is vast, and a full review would be impossible within the length constraints of the present work, here we present a brief summary. Inference for partially observed stochastic epidemic models is difficult, and, even when homogeneity assumptions are permitted, simplifications \citep{king2015avoidable} or simulation-based procedures \citep{ionides2006inference} are required. Analogously, many studies of partially observed IBMs make simplifications or approximations, e.g. by deterministic modeling \citep{sharkey2008deterministic}, mean-field approximations \citep{sherborne2018mean}, or noiseless observation mechanisms \citep{deardon2010inference}. Sophisticated simulation-based techniques have also been developed for inference in IBMs: bespoke proposals for sequential Monte Carlo \citep{rimella2022approximating}, approximate Bayesian computation procedures \citep{mckinley2018approximate}, composite likelihood methods \citep{rimella2025simulation}, data augmentation schemes \citep{bu2022likelihood, bu2024stochastic}, neural posterior estimation \citep{chatha2024neural}, and Markov chain Monte Carlo samplers \citep{touloupou2020scalable}; alongside Bayesian non-parametric approaches \citep{seymour2022bayesian} and kernel-linearization with imputation based techniques \citep{deardon2010inference}. Typically, when applied to noisy observations, it is hard to effectively scale the above methods to large population sizes.

The approximation techniques in the present work extend ideas for homogeneous population compartmental models  \citep{whiteley2021inference, whitehouse2023consistent, whitehouse2025accelerated} to the case of individual-based models. This approach of making distributional approximations is related to assumed density filtering \citep{sorenson1968non} and expectation propagation algorithms \cite[Ch.1]{minka2001family}, but the details are different and specially designed to exploit the structure of the individual-based model. Furthermore, as far as the authors are aware, this is the first work to provide results on the consistency of parameter estimates for IBMs under analysis of the large population regime.

\section{Individual-based compartmental model}\label{sec:model_dynamic_obs}

\subsection{Notation}
Given $M \in \mathbb{N}$, we use  $x_{0:M} \coloneqq x_0,\dots,x_M$ for indexing sequences, $[M]\coloneqq \left \{1,\dots, M \right \}$ for the set of the first $M$ integers, and $\bx \coloneqq \left [\bx^{(1)},\dots,\bx^{(M)} \right ]^\top$ for an $M$-dimensional vector. Given two $M$-dimensional vectors $\bx_1,\bx_2$, or $M\times M$-dimensional matrices, we denote with $\bx_1 \odot \bx_2$ the element-wise product and with $\bx_1 \oslash \bx_2$ the element-wise division, and we use $[\bx_1,\bx_2]$ for the vector stacking together $\bx_1,\bx_2$, i.e. $[\bx_1,\bx_2]\coloneqq \left [\bx_1^{(1)}, \dots, \bx_1^{(M)},\bx_2^{(1)},\dots,\bx_2^{(M)}\right ]^\top$. We write $1_M$ for the $M$-dimensional vector of all ones, $\Delta_M$ for the $M$-dimensional probability simplex, i.e. $\Delta_M \coloneqq \left \{\bx \in [0,1]^M \text{ : } \sum_{i=1}^M \bx^{(i)}=1 \right \}$, and $\mathbb{O}_M$ for the set of one-hot encoding vectors with dimension $M$, i.e. $\mathbb{O}_M \coloneqq  \{\bx \in \{0,1\}^M \text{ : } \exists j \in [M]:\bx^{(j)}=1 \text{ and } \bx^{(i)}=0 \text{ if } i\neq j  \}$, with $\mathbb{O}_M \subset \Delta_M$. Given $\bpi \in \Delta_M$ we denote with $\mbox{Cat}(\cdot|\bpi)$ the categorical distribution over $\mathbb{O}_M$ which assigns probability $\bpi^{(i)}$ to the vector $\bx \in \mathbb{O}_M$ with $\bx^{(i)}=1$ and $\bx^{(j)}=0$ for $j\neq i$. 

\subsection{Model}\label{sec:model}

We now introduce a generic form of individual-based model. We consider a population of $N \in \mathbb{N}$ individuals and assume that a vector of known covariates $\bw_n \in \mathbb{W}$ is associated with individual $n \in [N]$, where $\mathbb{W}$ is a subset of Euclidean space. We denote by $\mathbf{W}$ the collection of covariates of the entire population, $\mathbf{W} \coloneqq (\bw_1, \dots, \bw_N)$. These covariates allow us to express heterogeneity in how the disease propagates through the population. Each individual assumes any one of $M \in \mathbb{N}$ latent disease states at any one discrete time step $t\geq 0$. We use one-hot encoding vectors to represent the states of the individuals; this is a little non-standard but it will simplify mathematical expressions. The state of the $n$-th individual at time $t$ is denoted  $\bx_{n,t} \in \mathbb{O}_M$, meaning that the $i$-th component of $\bx_{n,t}$ is $1$ if and only if the $n$ individual is in state $i$ at time $t$. The state of the entire population is written $\bX_t \coloneqq (\bx_{1,t},\dots,\bx_{N,t})$.  With the covariates $\mathbf{W}$ fixed, the process $(\bX_t)_{t \geq 0}$ is a Markov chain and the individual disease states $\bx_{n,t}$ are distributed as follows.

\paragraph{Latent dynamics} At time step $t=0$, the state of each individual is drawn independently from an initial distribution which is a function of the individual-specific covariates $\bx_{n,0} | \bw_n \sim \mbox{Cat} \left (\,\cdot\,|p_0( \bw_n) \right )$,
for some probability vector $p_0(\bw_n)$.  At time steps $t\geq 1$, conditional on the population state $\bX_{t-1}$, the $n$-th individual evolves according to an $M\times M$ row-stochastic transition matrix $K_{\boeta_{n,t}}(\bw_n)$:
\begin{align}
	&\bx_{n,t}|\bX_{t-1}, \bW  \sim \mbox{Cat} \left ( \,\cdot\,\left| \left [ \bx_{n,t-1}^\top K_{\boeta_{n,t-1}}(\bw_n) \right ]^\top \right.\right ).
\end{align}
Here $\boeta_{n,t} \coloneqq \eta \left ( \bw_n, \bW, \bX_t \right )$ where $\eta:\mathbb{W} \times \mathbb{W}^{N} \times \Delta_M^N \to [0,C]$ with $C \in \mathbb{R}_+$ is a function which will allow us to express how individuals interact with the population in terms of their respective covariates and disease states.  In this formulation, the transition matrix $K_{\boeta_{n,t}}(\bw_n)$ depends on $t$ only via $\bX_t$. We consider this case for ease of presentation; our model, algorithm, and theory can be extended to transition matrices that evolve over time.

\paragraph{Observations} At time $t\geq 1$, we observe a collection of vectors $\bY_t \coloneqq (\by_{1,t},\dots,\by_{N,t})$, where each $\by_{n,t}$ is a $\mathbb{O}_{M+1}$-valued random  measurement associated with the $n$-th individual. Given $\bX_t$, $\by_{1,t},\dots,\by_{N,t}$ are conditionally independent and distributed:
\begin{align}
    \by_{n,t}|\bx_{n,t},\bw_n \sim \mbox{Cat} \left ( \,\cdot\,\left| \left [ \bx_{n,t}^\top  G(\bw_n) \right.\right ]^\top \right ),
\end{align}
where $G(\bw_n)$ is a $M \times (M+1)$-dimensional row-stochastic matrix. This matrix allows probabilities to be assigned to the $n$-th individual being: unreported, representing the extra compartment; correctly reported as in the disease state specified $\bx_{n,t}$; erroneously reported as assuming one of the other $M-1$ disease states. The matrix $G(\bw_n)$ could be allowed to depend on $t$ with only notational changes to our algorithm and theory needed. We also assume the observations are evenly spaced in time, i.e. a period length of $h=1$, but, once again, this is just for presentation purposes.

\subsection{Motivating example} \label{sec:motivating_example_second}

We now show how the motivating example from Section \ref{sec:motivating_example_first} can be cast as an instance of the generic model described in Section \ref{sec:model}. As it is an SIS model, $M=2$, $\bx_{n,t} \in \{[1,0]^\top,[0,1]^\top\}$ is a $2$-dimensional one-hot encoding vector representing disease states $\{S,I\}$. The individual-specific covariates are $\bw_n=[\bc_n,\bl_n]$, where the latter are as in Section \ref{sec:motivating_example_first}. As initial infection probabilities, we consider: $p_0(\bw_n) = [ 1-p_0,\;  p_0 ]^\top$, i.e. each individual has the same probability of being infected at the beginning of the epidemic.

\paragraph{Homogeneous- and heterogeneous-mixing dynamics.}

Following the formulation from Section \ref{sec:model}, we can express homogeneous- and heterogeneous-mixing dynamics by reformulating the interaction term $\boeta_{n,t-1}$. For the homogeneous-mixing case we can write $
\boeta_{n,t-1}
    =
    \frac{1}{N} \sum_{k \in [N]}
    \exp\{\bc_{k}^\top \bb_I\} \bx_{k,t-1}^{(2)},
$
while for the heterogeneous-mixing case we have $
    \boeta_{n,t-1}
    =
    \frac{1}{N} \sum_{k \in [N]} 
    \exp\{\bc_{k}^\top \bb_I\} \frac{\exp\left\{-\frac{\|\bl_n-\bl_{k}\|^2}{2\phi^2}\right\}}{\sqrt{2 \pi \phi^2}} \bx_{k,t-1}^{(2)}.$
Then in either case we can write:
\begin{equation}\label{eq:HetMix_K}
K_{\boeta_{n,t-1}}(\bw_n) 
    = 
    \begin{bmatrix} 
    \exp{ \left (-h\beta\exp\{\bc_n^\top \bb_{S}\} \boeta_{n,t-1} \right ) }
    &
    1 - \exp{ \left (-h\beta\exp\{\bc_n^\top \bb_{S}\} \boeta_{n,t-1} \right ) }
    \\ 
    1 - \exp{ \left ( -h\gamma_n \right )}
    &
    \exp{ \left ( -h\gamma_n  \right )}
    \end{bmatrix}.
\end{equation}

\paragraph{Observation model.} 
The observation for each individual $\by_{n,t}$ is a $3$-dimensional one-hot encoding vector representing the states: not reported (e.g. missing test results), reported as $S$, and reported as $I$. In this  scenario, a stochastic matrix for our observation model is:
$$
    G(\bw_n) 
    =
    \begin{bmatrix}
    1 - q_S 
    & 
    q_S q_{Sp}  
    & 
    q_S (1-q_{Sp})\\
    1 - q_I 
    & 
    q_I (1- q_{Se}) 
    & 
    q_I q_{Se}
    \end{bmatrix},
$$
where $q_S,q_I,q_{Se},q_{Sp} \in [0,1]$ represent the reporting probabilities when $S$ and when $I$, and the sensitivity and specificity of the test. More generally, these probabilities could be a function of covariates $\mathbf{w}_n$ and time-varying.

\subsection{Exact likelihood} \label{sec:exact_likelihood}

Under the definitions in Section \ref{sec:model}, with the covariates $\bW$ fixed, the joint process of population states $(\bX_t)_{t\geq 0}$ and observations $(\bY_t)_{t\geq 1}$ is a Hidden Markov Model (HMM) \citep{Chopin2020}. Over a time horizon $T$, the marginal likelihood of $\bY_{1:T}$ is:
\begin{equation} \label{eq:truelikelihood}
    \begin{split}
        p(\bY_{1:T}|\bW)
        &= 
        \sum_{\bX_{0:T} \in \mathbb{O}_M^{NT}} p(\bX_0|\bW) \prod_{t=1}^T p(\bX_t|\bX_{t-1},\bW) p(\bY_t|\bX_t,\bW),
    \end{split}
 \end{equation}
 where, using some HMM terminology, the initial distribution is $p(\bX_0|\bW) \coloneqq \prod_{n \in [N]} \bx_{n,0}^\top p_0( \bw_n)$, the transition kernel is $p(\bX_t|\bX_{t-1},\bW) 
    \coloneqq 
  \prod_{n \in [N]} \bx_{n,t-1}^\top K_{\boeta_{n,t-1}}(\bw_n) \bx_{n,t}$, and the emission distribution is $p(\bY_t|\bX_t,\bW) 
    \coloneqq 
 \prod_{n \in [N]} \bx_{n,t}^\top  G(\bw_n) \by_{n,t}$. The computation of the marginal likelihood requires a summation over the set $\mathbb{O}_M^{NT}$. The forward algorithm \citep{Chopin2020} computes the sum at a cost linear in $T$, by recursively computing the prediction distributions $p(\bX_t|\bY_{1:t-1},\bW)$, the filtering distributions $p(\bX_t|\bY_{1:t},\bW)$, and the marginal likelihood increments $p(\bY_t|\bY_{1:t-1},\bW)$, via the so-called  ``prediction'' and ``correction'' steps of the filtering recursion. Indeed, given $p(\bX_0|\bY_{1:0},\bW) \coloneqq p(\bX_0|\bW)$:
\begin{align}
    \textbf{Prediction: }&
    p(\bX_t|\bY_{1:t-1},\bW) \coloneqq \sum\limits_{\bX_{t-1} \in \mathbb{O}_M^N} p(\bX_t|\bX_{t-1},\bW) p(\bX_{t-1}|\bY_{1:t-1},\bW);\\
    \textbf{Correction: }&
    p(\bY_t|\bY_{1:t-1},\bW)\coloneqq \sum\limits_{\bX_{t} \in \mathbb{O}_M^N} p(\bY_t|\bX_{t},\bW) p(\bX_{t}|\bY_{1:t-1},\bW) \text{ and }\\
    &p(\bX_t|\bY_{1:t},\bW)\coloneqq \dfrac{p(\bY_t|\bX_t,\bW) p(\bX_{t}|\bY_{1:t-1},\bW)}{p(\bY_t|\bY_{1:t-1},\bW)};
\end{align}
from which get $p(\bY_{1:T}|\bW) = p(\bY_1 |\bW)\prod_{t=2}^T p(\bY_t|\bY_{1:t-1},\bW)
$. Whilst the forward algorithm simplifies the likelihood computation with respect to time, it still requires summation over the set $\mathbb{O}_M^N$, making it infeasible for even small population sizes.

\section{CAL: Categorical Approximate Likelihood} \label{sec:CAL}

Given the state at time $t-1$, the transitions and observations of each individual at time $t$ are independent of one another. The challenge arises from uncertainty in $\bX_{t-1}$, which introduces dependence. To address this, we substitute $\bX_{t-1}$ with its expectation during the prediction step and propagate the resulting approximation of $p(\bX_t \mid \bY_{1:t-1}, \bW)$ through the correction step. This leads to approximations of the predictive distributions, filtering distributions, and marginal likelihood as products of categorical distributions.

\subsection{Approximate filtering algorithm}\label{sec:approx_filtering}

In this section we propose the approximations $p(\bX_t|\bY_{1:t-1}, \bW) \approx \prod_{n \in [N]}\text{Cat}({\bx_{n,t}} \mid \bpi_{n,t|t-1})$, and $p(\bY_t|\bY_{1:t-1}, \bW) \approx \prod_{n \in [N]}\text{Cat}({\by_{n,t}} \mid \bmu_{n,t})$, and $p(\bX_t|\bY_{1:t}, \bW) \approx \prod_{n \in [N]}\text{Cat}({\bx_{n,t}} \mid \bpi_{n,t})$. We next explain how the probability vectors $\bpi_{n,t|t-1}, \bpi_{n,t} \in \Delta_M$ and $\bmu_{n,t} \in \Delta_{M+1}$ are computed for $n = 1, \dots, N$ and $t \geq 1$.

Starting from $p(\bX_0 \mid \bY_{1:0}, \bW)$, we set, for each $n \in [N]$, $\bpi_{n,0}$ to be the length-$M$ probability vector associated with the initial distribution $p_0(\bw_n)$. Hence, at the initial time, no approximation is required, as the distribution of $\bX_0$ already factorizes as a product of categorical distributions. Now consider a general time $t \geq 1$. Suppose we have computed $\bPi_{t-1} \coloneqq (\bpi_{1,t-1}, \dots, \bpi_{N,t-1})$, so that $p(\bX_{t-1}|\bY_{1:t-1} \bW) \approx \prod_{n \in [N]}\text{Cat}({\bx_{n,t-1}} \mid \bpi_{n,t-1})$. In the \textbf{prediction} step, we approximate the transition probability $p(\bX_t \mid \bX_{t-1}, \bW)$ by substituting $\bPi_{t-1}$ in place of $\bX_{t-1}$ in the quantity $\boeta_{n,t-1} = \eta(\bw_n, \bW, \bX_{t-1})$. Under the categorical approximation, $\bPi_{t-1}$ coincides with the expectation of $\bX_{t-1}$. After this substitution, the \textbf{prediction} step admits a closed-form expression. Precisely:
\begin{equation}\label{eq:CAL_prediction}
    \bpi_{n,t|t-1} \coloneqq  \left [ \bpi_{n,t-1}^{\top} K_{\widetilde{\boeta}_{n,t-1}}(\bw_n) \right ]^{\top}, \quad \text{for } n \in [N],
\end{equation}
where $\widetilde{\boeta}_{n,t-1}\coloneqq \eta(\bw_n,\bW,\bPi_{t-1})$. The \textbf{correction} step is then applied to the categorical approximation $\bigotimes_{n \in [N]} \text{Cat}(\cdot \mid \bpi_{n,t|t-1})$ of $p(\bX_t \mid \bY_{1:t-1}, \bW)$, using the exact observation model $p(\bY_t \mid \bX_t, \bW)$. This model factorizes across individuals under the conditional independence structure specified in Section~\ref{sec:model}, giving us:
\begin{equation}\label{eq:CAL_update}
    \bmu_{n,t} \coloneqq \left [ \bpi_{n,t|t-1}^\top  G(\bw_n) \right ]^\top 
    \text{ and }
    \bpi_{n,t} \coloneqq \bpi_{n,t|t-1} \odot \left \{  \left [  G(\bw_n) \oslash  \left ( 1_M \bmu_{n,t}^\top \right ) \right ] \by_{n,t} \right \} \quad \text{for } n \in [N],
\end{equation}
where we use the convention $\frac{0}{0}=0$. By sequentially combining these approximate prediction and correction steps we get Algorithm \ref{alg:CAL}, which computes all the aforementioned quantities. Algorithm \ref{alg:CAL} could output all the categorical approximations, but for the sake of presentation we make it output only the Categorical Approximate Likelihood (CAL):
\begin{equation}\label{eq:CAL}
p(\bY_{1:T}|\bW) \approx\prod_{t=1}^T \prod_{n \in [N]} \text{Cat}({\by_{n,t}} \mid \bmu_{n,t})  =\prod_{t=1}^T \prod_{n \in [N]}  \by_{n,t}^\top \bmu_{n,t}.
\end{equation}

We refer the reader to Section~\ref{sec:closed_form_CAL} of the supplementary material for complete derivations.

\begin{algorithm}[t!]
\caption{Categorical Approximate Likelihood} \label{alg:CAL}
    \begin{algorithmic}
    \Require $\bW, \bY_{1:T}, p_0(\cdot), K_{\cdot}(\cdot), G(\cdot)$
    \State Initialize $\bpi_{n,0}$ with $p_0(\bw_n)$ for all $n \in [N]$
    \For{$t \in 1,\dots,T$}
        \State $\bPi_{t-1} = (\bpi_{1,t-1}, \dots, \bpi_{N,t-1})$
        \For{$n \in [N]$}
            \State $\widetilde{\boeta}_{n,t-1} = \eta(\bw_n,\bW,\bPi_{t-1})$
            \State $\bpi_{n,t|t-1} =  \left [ \bpi_{n,t-1}^{\top} K_{\widetilde{\boeta}_{n,t-1}}(\bw_n) \right ]^{\top}$ 
            \State $\bmu_{n,t} = \left [ \bpi_{n,t|t-1}^\top  G(\bw_n) \right ]^\top$
            \State $\bpi_{n,t} = \bpi_{n,t|t-1} \odot \left \{  \left [  G(\bw_n) \oslash  \left ( 1_M \bmu_{n,t}^\top \right ) \right ] \by_{n,t} \right \}$  
        \EndFor
    \EndFor
    
    \State Return the approximate likelihood $\prod_{t=1}^T \prod_{n \in [N]}  \by_{n,t}^\top \bmu_{n,t}$
    \end{algorithmic}
\end{algorithm}

\subsection{CAL as an exact likelihood in an approximate model}\label{sec:CAL_as_exact}

Although the CAL is derived as an approximation to the marginal likelihood for the model in Section \ref{sec:model}, it can also be interpreted as the exact marginal likelihood for the approximate model where $\widetilde{\boeta}_{n,t}$ is used instead of $\boeta_{n,t}$. Here, as in Section \ref{sec:approx_filtering}, $\widetilde{\boeta}_{n,t-1}=\eta(\bw_n,\bW,\bPi_{t-1})$, and $\bPi_{t-1}$ is computed as in Algorithm \ref{alg:CAL}. Indeed, we can define the state process $\widetilde{\bX}_t = (\widetilde{\bx}_{1,t},\dots, \widetilde{\bx}_{N,t})$ and the observation process $\widetilde{\bY}_t = (\widetilde{\by}_{1,t},\dots, \widetilde{\by}_{N,t})$ of the approximate model as follows:
 \begin{align}
    &\widetilde{\bx}_{n,0}|\bw_n \sim \mbox{Cat} \left ( \cdot|p_0(\bw_n) \right ), \quad \widetilde{\bx}_{n,t}|\widetilde{\bX}_{t-1}, \widetilde{\bY}_{1:t-1}, \bW \sim \mbox{Cat} \left ( \cdot| \left [ \widetilde{\bx}_{n,t-1}^\top K_{\widetilde{\boeta}_{n,t-1}}(\bw_n) \right ]^\top \right ),\\
    &\widetilde{\by}_{n,t}|\widetilde{\bx}_{n,t},\bw_n \sim \mbox{Cat} \left ( \cdot| \left [ \widetilde{\bx}_{n,t}^\top  G(\bw_n) \right ]^\top \right ).
\end{align}
Under the approximate model above, the marginal likelihood of $\widetilde{\bY}_{1:T}$ can be computed in closed-form using Algorithm \ref{alg:CAL}. Indeed, we have $p(\widetilde{\bY}_{1:T}|\bW) = \prod_{n \in [N]} \prod_{t=1}^T \widetilde{\by}_{n,t}^\top \bmu_{n,t}$, which coincides with the CAL when the observations are $\widetilde{\bY}_1,\dots,\widetilde{\bY}_T$. 

\section{Consistency of the maximum CAL estimator}\label{sec:theory_main}

In this section, we state our results about consistency of the maximum CAL estimator of model parameters, when data are generated from the exact model from Section \ref{sec:model}. Proofs and supporting results are given in Section  \ref{suppsec:consistency_section} of the supplementary materials. This theory has links to mean-field approximations \citep{sherborne2018mean} and propagation of chaos \citep{sharrock2023online,le2007generic} as it relies on the construction of what we call a ``saturated'' system of independently evolving individuals.

\subsection{Notation, definitions and assumptions}

We denote with $\Theta$ the parameter space, with $\mathbb{W}$ the covariate space. We assume all the random variables appearing in our theory to be defined on a common probability space $(\Omega, \mathcal{F}, \mathbb{P})$. We augment our notation from Section \ref{sec:model} by writing $\bW^N, \bx_{n,t}^N, \bX^N_t,\by_{n,t}^N, \bY^N_t$ for $\bW, \bx_{n,t}, \bX_t$, $\by_{n,t}, \bY_t$; given a parameter vector $\theta \in \Theta$, the initial distribution, transition matrix, and emission matrix are denoted: $p_0(\bw_n,\theta)$, $K_{\cdot}(\bw_n,\theta)$, and $G(\bw_n,\theta)$, with $\eta(\bw_n,\bW,\bX)$ becoming $\eta^N(\bw_n,\theta, \bW^N,\bX^N)$, and $\boeta_{n,t}$ becoming $\boeta^N_{t}(\bw_n,\theta)$. Similarly, the CAL quantities $\bpi_{n,t|t-1}, \bmu_{n,t}, \bpi_{n,t}, \widetilde{\boeta}_{n,t}$ in Algorithm \ref{alg:CAL} become $\bpi^N_{n,t|t-1}(\bw_n,\theta), \bmu^N_{n,t}(\bw_n,\theta)$, $\bpi^N_{n,t}(\bw_n,\theta), \widetilde{\boeta}_{t}^N(\bw_n,\theta)$.  We denote by $\theta^\star \in \Theta$ the data-generating parameter (DGP) value, which determines the distributions of $(\bX_{t})_{t\geq 0}$ and $(\bY_{t})_{t\geq 1}$ under $\mathbb{P}$ conditional on $\bW$. For an $M$-dimensional vector $\pi$, an $M\times M$-dimensional matrix $K$, and an $\mathbb{R}$-valued random variable $\bx$ we define the following norms:
$$
    \norm{\pi}_\infty \coloneqq \max\limits_{i\in[M]} \abs{\pi^{(i)}}, \quad 
    \norm{K}_\infty \coloneqq \max_{i \in [M]} \sum_{j = 1}^M \abs{K^{(i,j)}}, \quad \text{and} \quad
    \normiii{\bx}_4 \coloneqq \left ( \mathbb{E} \left[ |\bx|^4 \right ]\right )^{\frac{1}{4}}.
$$
We refer to the non-zero elements of probability vectors and matrices as the support, which we define as $\textbf{supp}(\bpi) \coloneqq \{i \in [M]: \bpi^{(i)} \neq 0\}$ for an $M$-dimensional vector $\bpi$, and $\textbf{supp}(\bPi) \coloneqq \{(i,j) \in [M]^2: \bPi^{(i,j)} \neq 0\}$ for a $M \times M$ matrix $\bPi$.

We next state our assumptions, which we first comment on and then list. Assumption \ref{ass:main_compactness_continuity} collects standard compactness and continuity assumptions. Assumption \ref{ass:main_w_iid} is a random design assumption on the covariate vectors, which is inspired by classical theoretical analysis of regression models. Assumption \ref{ass:main_HMM_support} represents a technical assumption on the support of initial distribution, transition matrix, and emission matrix, which guarantees the invariance of the support when considering different parameters and covariates. We conclude with Assumption \ref{ass:main_eta_structure} on the structure of $\eta$, which ensures a law of large numbers for this interaction term, and Assumption \ref{ass:main_kernel_continuity} about the Lipschitz continuity of the transition matrix in $\eta$.

\begin{assumption}\label{ass:main_compactness_continuity}
    The parameter space $\Theta$ and the covariate space $\mathbb{W}$ are compact subsets of Euclidean spaces. Moreover, the initial distribution $p_0(w,\theta)$, the transition matrix $K_{\eta}(w,\theta)$, and the emission matrix $G(w,\theta)$ are all continuous functions in their arguments $w,\theta$.
\end{assumption}
 
\begin{assumption}\label{ass:main_w_iid}
   The covariates $\bw_1,\bw_2,\dots, $ are independent and identically distributed according to a distribution $\Gamma$ on $\mathbb{W}$. 
\end{assumption}

\begin{assumption}\label{ass:main_HMM_support}
    The following hold: for any $w \in \mathbb{W}$ and $\theta,\theta' \in \Theta$ we have that $\textbf{supp}(p_0(w, \theta)) = \textbf{supp}(p_0(w, \theta'))$;  for any $w \in \mathbb{W}$, $\eta,\eta' \in [0,C]$ and $\theta,\theta' \in \Theta$ we have that $\textbf{supp}(K_{\eta}(w,\theta)) = \textbf{supp}(K_{\eta'}(w,\theta'))$; for any $w \in \mathbb{W}$ and $\theta,\theta' \in \Theta$ we have that $\textbf{supp}( G(w,\theta)) = \textbf{supp}( G(w,\theta'))$.
\end{assumption}
 
\begin{assumption}\label{ass:main_eta_structure}
    For any $\theta \in \Theta, w \in \mathbb{W}, N\in \mathbb{N}$, and for any $W^N = (w_1,\dots, w_N),\Pi^N = (\pi_1,\dots,\pi_N)$ with $w_n \in \mathbb{W},\pi_n \in \Delta_M$ for all $n \in [N]$, we have:
    \begin{equation}
        \eta^N(w, \theta, W^N, \Pi^N) = \frac{1}{N} \sum_{n \in [N]} d(w, w_{n}, \theta)^\top \pi_{n},
    \end{equation}
    where $d:\mathbb{W} \times \mathbb{W} \times \Theta \to [0,C]^M$ is a bounded function, i.e. $\norm{d(w,\widetilde{w},\theta)}_\infty \leq C < \infty$ for all $w,\widetilde{w} \in \mathbb{W}$ and $\theta \in \Theta$. This assumption also implies $\eta^N(w, \theta, W^N, \Pi^N) \in [0,C]$ for any $N, w,\pi, W^N, \Pi^N$.
\end{assumption}

\begin{assumption}\label{ass:main_kernel_continuity}
    For any $\theta \in \Theta$ and $w \in \mathbb{W}$, the matrix $K_{\eta}(w,\theta)$ is Lipschitz continuous in $\eta$ with Lipschitz constant $L$, that is for any $\eta, \eta^\prime \in [0,C]$ we have:
    \begin{equation}
        \norm{K_{\eta}(w,\theta) - K_{\eta^\prime}(w,\theta)}_\infty \leq L \abs{\eta - \eta^\prime}.
    \end{equation}
\end{assumption}

\subsection{Main consistency theorem and outline of the proof}

Consider a fixed time horizon $T \geq 1$, and the log-CAL evaluated at $\theta \in \Theta$:
\begin{equation}\label{eq:main_sup_log_cal}
\ell_{1:T}^N(\theta) \coloneqq \sum_{t=1}^T \sum_{n \in [N]} \log \left [ \left ( \by_{n,t}^N \right )^\top \bmu_{n,t}^N(\bw_n,\theta) \right ].
\end{equation}
This section outlines the proof that the maximum CAL estimator $\hat{\theta}^N \coloneqq \argmax_{\theta \in \Theta} \ell_{1:T}^N(\theta)$ is consistent in the large population limit. All the details are available in Section \ref{suppsec:consistency_section} of the supplementary material. The main challenge is to prove that $N^{-1}(\ell_{1:T}^N(\theta)-\ell_{1:T}^N(\theta^\star))$ converges uniformly $\mathbb{P}$-almost surely to a contrast function which is maximised by $\theta^\star$. This then allows standard continuity arguments to be used in proving almost sure convergence of the maximizer $\hat{\theta}^N$ to some equivalence set containing $\theta^\star$. Due to the presence of covariates, the details of the analysis are substantially richer than those of \cite{whitehouse2023consistent}.

\paragraph{Saturated process and saturated CAL.}

From Section \ref{sec:model} it is clear that all the individuals are interacting via the interaction term $\boeta^N_{t-1}(\bw_n,\theta^\star)$ in the transition matrix. We can prove that under Assumptions \ref{ass:main_w_iid},\ref{ass:main_eta_structure},\ref{ass:main_kernel_continuity} for any $t\geq 0$ there exists a deterministic function $w \mapsto \boeta^\infty_{t}(w,\theta^\star)$ from $\mathbb{W}$ to $[0,C]$ such that for any $n \in [N]$:
\begin{equation}\label{eq:deterministic_effect_saturated}
    \normiii[\Bigg]{\boeta^N_{t}(\bw_n,\theta^\star) - \boeta^\infty_{t}(\bw_n,\theta^\star)}_4 = \mathcal{O}\left ( N^{-\frac{1}{2}} \right ),
\end{equation}
see Section \ref{sec:asympt_DGP} of the supplementary material for  details. From \eqref{eq:deterministic_effect_saturated} we observe that when the system becomes ``saturated'' with individuals, i.e. $N \to \infty$, the effect from the population has a deterministic behavior. Substituting $\boeta^\infty_{t}(\cdot,\theta^\star)$ in our latent dynamic defines a saturated process at the individual level. Specifically, for an individual with covariate $\bw^\infty \sim \Gamma$ the saturated process is:
\begin{equation}\label{eq:main_limiting_process}
    \begin{split}
    &\bx_0^\infty|\bw^\infty \sim \mbox{Cat}\left(\, \cdot\,|\,p_0(\bw^\infty,\theta^\star)\right),\\
    &\bx_t^\infty|\bx_{t-1}^\infty,\bw^\infty \sim \mbox{Cat}\left(\, \cdot\,\left|\left [ (\bx_{t-1}^\infty)^\top K_{ \boeta_{t-1}^{\infty} (\bw^\infty,\theta^\star ) }(\bw^\infty,\theta^\star) \right ]^\top\right.\right),\\
    &\by_t^\infty|\bx_{t}^\infty,\bw^\infty \sim \mbox{Cat}\left( \cdot\left|\left [ (\bx_{t}^\infty)^\top G(\bw^\infty,\theta^\star) \right ]^\top\right.\right).
    \end{split}
\end{equation}
We can observe that in the saturated process the individuals evolve independently, providing an asymptotic justification for the CAL prediction step.

Similarly, we can prove that under Assumptions \ref{ass:main_w_iid},\ref{ass:main_eta_structure},\ref{ass:main_kernel_continuity} for any $t\geq 0$ there exists a deterministic function $w \mapsto \bar{\boeta}^\infty_{t}(w,\theta)$ from $\mathbb{W}$ to $[0,C]$, which is such that for any $w \in \mathbb{W}$ we have $\bar{\boeta}^\infty_{t-1}(w,\theta^\star) = \boeta^\infty_{t-1}(w,\theta^\star)$ and for any $n \in [N]$:
\begin{equation}
    \normiii[\Bigg]{\widetilde{\boeta}^N_{t}(\bw_n,\theta) - \bar{\boeta}^\infty_{t}(\bw_n,\theta)}_4 = 
    \mathcal{O} \left ( N^{-\frac{1}{2}} \right ),
\end{equation}
see Section \ref{sec:asympt_CAL} of the supplementary material. We can then define the saturated CAL recursion by substituting individual saturated process observations from \eqref{eq:main_limiting_process} and $\bar{\boeta}^\infty_{t}(\bw_n,\theta)$ into Algorithm \ref{alg:CAL}. Precisely, set $\bpi_{0}^\infty(\bw^\infty,\theta) \coloneqq p_{0}(\bw^\infty,\theta)$ and then for $t\geq 1$:
\begin{equation}\label{rec:main_limiting_CAL}
    \begin{split}
    &\bpi_{t|t-1}^\infty(\bw^\infty,\theta)  \coloneqq  \left [ \bpi_{t-1}^\infty(\bw^\infty,\theta)^{\top} K_{\bar{\boeta}_{t-1}^\infty(\bw^\infty,\theta)} (\bw^\infty, \theta)\right ]^\top, \\
    &\bmu_{t}^\infty(\bw^\infty,\theta)  \coloneqq  \left [ \bpi_{t|t-1}^\infty(\bw^\infty,\theta)^{\top} G(\bw^\infty,\theta) \right ]^\top,\\
    &\bpi_{t}^\infty(\bw^\infty,\theta)  \coloneqq \bpi_{t|t-1}^\infty(\bw^\infty,\theta) \odot \left \{ \left [   G(\bw^\infty,\theta) \oslash \left ( 1_M \bmu_{t}^\infty(\bw^\infty,\theta)^\top \right ) \right ] \by_{t}^\infty  \right \}.
    \end{split}
\end{equation} 
As, conditional on $\bw^\infty$, the joint process $(\bx_t^\infty)_{t\geq 0}$, $(\by_t^\infty)_{t\geq 1}$ in \eqref{eq:main_limiting_process} is a HMM,  Recursion \eqref{rec:main_limiting_CAL} becomes the forward algorithm associated with this HMM when $\theta=\theta^\star$, i.e. at the DGP.

\paragraph{Contrast function and set of maximizers.} 

Under Assumptions \ref{ass:main_compactness_continuity},\ref{ass:main_w_iid},\ref{ass:main_HMM_support},\ref{ass:main_eta_structure},\ref{ass:main_kernel_continuity}, for any $\theta \in \Theta$ we prove that $N^{-1}({\ell_{1:T}^N(\theta)} -  {\ell_{1:T}^N(\theta^\star)})$ converges $\mathbb{P}$-almost surely to a contrast function $\mathcal{C}_T(\theta,\theta^\star)$ as $N \to \infty$, which takes the form of an expected Kullback-Leibler (KL) divergence:
\begin{equation}\label{eq:main_contrast_function}
    \mathcal{C}_T(\theta,\theta^\star)
    \coloneqq -\sum_{t=1}^T \mathbb{E} \left \{ \mathbf{KL} \left [ \mbox{Cat} \left (\cdot | \bmu_t^\infty(\bw^\infty,\theta^\star) \right )|| \mbox{Cat} \left (\cdot | \bmu_t^\infty(\bw^\infty,\theta) \right ) \right ] \right \}.
\end{equation}
Moreover, we use properties of the KL divergence to show the DGP belongs to the set of maximizers of the contrast function, i.e. $\theta^\star \in \Theta^\star \coloneqq \argmax_{\theta \in \Theta} \mathcal{C}_T(\theta,\theta^\star)$. Full proof is available in Section \ref{sec:constrast_function_pointwise_conv} and Section \ref{sec:limiting_quantities_support} of the supplementary material.

\paragraph{Convergence of the maximum CAL estimator and identifiability.} After proving some technical results, we can complete the proof of Theorem \ref{thm:main_consistency}, which states the consistency of the maximum CAL estimator, see Section \ref{sec:constrast_function_pointwise_conv} of the supplementary material.

\begin{theorem}\label{thm:main_consistency}
    Let Assumptions \ref{ass:main_compactness_continuity},\ref{ass:main_w_iid},\ref{ass:main_HMM_support},\ref{ass:main_eta_structure},\ref{ass:main_kernel_continuity} hold and let $\hat{ {\theta}}_N$ be a maximizer of $  \ell_{1:T}^N({\theta})$. Then $\hat{ {\theta}}_N$ converges to $\Theta^\star$ as $N \to \infty$,  $\mathbb{P}$-almost surely.
\end{theorem}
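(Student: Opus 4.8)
The plan is to follow the classical M-estimation (Wald-type) consistency template, taking as inputs the two facts already established above: the pointwise almost-sure convergence of the normalized contrast $Q_N(\theta) \coloneqq N^{-1}\left(\ell_{1:T}^N(\theta) - \ell_{1:T}^N(\theta^\star)\right)$ to $\mathcal{C}_T(\theta,\theta^\star)$, and the fact that $\mathcal{C}_T(\cdot,\theta^\star)$ attains its maximum, equal to $0$, exactly on $\Theta^\star \ni \theta^\star$. The only gap between these facts and the statement $\hat{\theta}_N \to \Theta^\star$ is that pointwise convergence must be strengthened to convergence uniform in $\theta \in \Theta$, after which a short compactness argument closes the proof.

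First I would upgrade pointwise to uniform convergence. Since $\Theta$ is compact (Assumption \ref{ass:main_compactness_continuity}), it suffices to combine pointwise convergence on a countable dense subset of $\Theta$ with an asymptotic equicontinuity bound on $\theta \mapsto Q_N(\theta)$ holding uniformly in $N$, almost surely. To obtain such a bound I would show that each summand $\log\left[(\by_{n,t}^N)^\top \bmu_{n,t}^N(\bw_n,\theta)\right]$ is Lipschitz in $\theta$ with a random constant whose moments are bounded uniformly in $n,N$, built up recursively over the finitely many time steps $t \leq T$: Assumption \ref{ass:main_kernel_continuity} gives Lipschitz dependence of $K$ on $\eta$; Assumption \ref{ass:main_eta_structure} gives that $\tilde{\boeta}_{n,t}^N(\bw_n,\theta)$ is an average of the bounded continuous functions $d(\bw_n,\bw_k,\theta)$, hence equicontinuous in $\theta$ uniformly over the population; and Assumption \ref{ass:main_compactness_continuity} gives continuity of $p_0,K,G$. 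Here Assumption \ref{ass:main_HMM_support} is essential: invariance of the supports across $\theta$ ensures the entries of $\bmu_{n,t}^N$ that multiply a realized $\by_{n,t}^N$ stay bounded away from $0$, so $\log$ is Lipschitz with a deterministic constant. Averaging the per-individual Lipschitz constants and invoking a law of large numbers (the $\bw_n$ are i.i.d. by Assumption \ref{ass:main_w_iid}) yields an almost-surely finite uniform modulus of continuity for $Q_N$, which together with the countable-dense pointwise convergence gives $\sup_{\theta\in\Theta}\abs{Q_N(\theta) - \mathcal{C}_T(\theta,\theta^\star)} \to 0$ almost surely.

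With uniform convergence in hand I would finish by a standard argmax argument. The limit $\mathcal{C}_T(\cdot,\theta^\star)$ is continuous on the compact $\Theta$ (continuity of $\theta\mapsto\bmu_t^\infty(\bw^\infty,\theta)$ plus dominated convergence, with Assumption \ref{ass:main_HMM_support} keeping the KL integrand finite), so $\Theta^\star$ is a nonempty compact set. Fix any subsequence of $(\hat{\theta}_N)$; by compactness it has a further subsequence $\hat{\theta}_{N_k} \to \theta_0$. Optimality gives $Q_{N_k}(\hat{\theta}_{N_k}) \geq Q_{N_k}(\theta^\star) = 0$, while uniform convergence and continuity give $Q_{N_k}(\hat{\theta}_{N_k}) \to \mathcal{C}_T(\theta_0,\theta^\star)$; hence $\mathcal{C}_T(\theta_0,\theta^\star) \geq 0 = \mathcal{C}_T(\theta^\star,\theta^\star) = \max_\theta \mathcal{C}_T(\theta,\theta^\star)$, so $\theta_0 \in \Theta^\star$. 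Since every subsequence of $(\hat{\theta}_N)$ admits a further subsequence converging into the closed set $\Theta^\star$, we conclude $\mathrm{dist}(\hat{\theta}_N,\Theta^\star)\to 0$ almost surely, which is the claim.

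I expect the uniform convergence step to be the main obstacle, for two reasons. First, the finite-$N$ quantities $\bmu_{n,t}^N$ couple all individuals through $\tilde{\boeta}_{n,t}^N$, so the summands are not independent; the controlling equicontinuity constant must itself be an average that concentrates, which is where the i.i.d. design (Assumption \ref{ass:main_w_iid}) together with propagation-of-chaos bounds such as $\normiii{\tilde{\boeta}_t^N(\bw_n,\theta)-\bar{\boeta}_t^\infty(\bw_n,\theta)}_4 = \mathcal{O}(N^{-1/2})$ do the work. Second, the data $\by_{n,t}^N$ are generated by the exact model at $\theta^\star$ rather than by the saturated model defining $\mathcal{C}_T$; one transfers the analysis onto the saturated observations via the same $\mathcal{O}(N^{-1/2})$ bounds and Borel--Cantelli, which is precisely why the $L^4$ (rather than merely $L^2$) control is convenient for obtaining almost-sure, as opposed to in-probability, statements.
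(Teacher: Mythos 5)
Your proposal follows essentially the same route as the paper: pointwise almost-sure convergence of $N^{-1}(\ell_{1:T}^N(\theta)-\ell_{1:T}^N(\theta^\star))$ to $\mathcal{C}_T(\theta,\theta^\star)$, an upgrade to uniform convergence via stochastic equicontinuity on the compact $\Theta$ (the paper invokes the generic result of \cite{andrews1992generic}), and then a standard compactness argmax argument (your subsequence extraction versus the paper's extreme-value-theorem contradiction on $B_\delta(\Theta^\star)^c$ are interchangeable).

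The one place where you genuinely diverge is the equicontinuity step, and it is worth comparing. You propose a direct propagation of a Lipschitz-in-$\theta$ bound for each summand $\log[(\by_{n,t}^N)^\top\bmu_{n,t}^N(\bw_n,\theta)]$ through the CAL recursion. Note, however, that Assumption \ref{ass:main_compactness_continuity} only gives \emph{continuity} of $p_0,K,G$ in $\theta$ (and Assumption \ref{ass:main_eta_structure} only boundedness of $d$), not Lipschitz continuity, so the "Lipschitz with a random constant of bounded moments" claim is stronger than what the hypotheses deliver; you would have to downgrade to a uniform modulus of continuity obtained from uniform continuity on compacts, which still suffices for equicontinuity but should be stated as such. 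The paper avoids propagating any modulus through the finite-$N$ recursion altogether: it bounds $|\mathcal{C}_t^N(\theta_1)-\mathcal{C}_t^N(\theta_2)|$ by two terms of the form $|\mathcal{C}_t^N(\theta_i)-\mathcal{C}_t^\infty(\theta_i)|$, controlled by the already-proved convergence to the saturated limit, plus $|\mathcal{C}_t^\infty(\theta_1)-\mathcal{C}_t^\infty(\theta_2)|$, controlled by continuity of the deterministic limit. That decomposition buys a shorter argument at the price of leaning on the limiting objects; your direct route is self-contained at finite $N$ but needs the continuity hypotheses restated more carefully. The rest of your argument, including the role of Assumption \ref{ass:main_HMM_support} in keeping the relevant entries of $\bmu_{n,t}^N$ bounded below so that $\log$ is well behaved, matches the paper's Propositions on well-definedness and lower-boundedness of the CAL.
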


The theorem states that the maximum CAL estimator converges to a set of maximizers $\Theta^\star$, which is a set of parameters that define statistically indistinguishable one-individual saturated processes. More formally, denote with $\mathbb{P}_{\infty}^{\theta^\star,w}$ the law of $(\by_{t}^\infty)_{t\geq 1}$ conditional on $\bw^\infty = w$ and with DGP $\theta^\star$. We can show that for any $\theta_1^\star,\theta_2^\star \in \Theta^\star$ we have $\mathbb{P}_{\infty}^{\theta_1^\star, w} = \mathbb{P}_{\infty}^{\theta_2^\star,w}$ for $\Gamma$-almost all $w\in \mathbb{W}$, see Section \ref{sec:constrast_function_pointwise_conv} of the supplementary material.

\section{Experiments} \label{sec:exper}

We now implement the CAL over a range of IBMs with heterogeneous attributes and different heterogeneous-mixing behaviors. The results can be reproduced following the GitHub repository \href{https://github.com/LorenzoRimella/CAL}{LorenzoRimella/CAL}. All the experiments were run on a 32GB Tesla V100 GPU available on “The High-End Computing” (HEC) facility at Lancaster University.

\paragraph{Computational considerations}
The CAL is ``embarrassingly parallel'' in $N$ at each time step, making it well-suited for parallel architectures such as GPUs. Furthermore, the CAL does not rely on simulations from the model or permutations of indices, which makes it particularly amenable to just-in-time (JIT) compilation \citep{aycock2003brief}, enabling efficient execution without complex code design. Because of its simplicity, the gradient of the log-CAL with respect to the model parameters can be computed via automatic differentiation (AD). As a result, popular AD libraries \citep{tensorflow2015} can be used for efficient optimization, or the CAL can be embedded within a Hamiltonian Monte Carlo (HMC) sampler for Bayesian inference via probabilistic programming languages \citep{carpenter2017stan}. Both approaches are used and combined in our experiments, with the former implemented using the Adam optimizer \citep{kingma2014Adam}. More computational considerations can be found in Section \ref{sec:supp_computational_cost} of the supplementary materials.

\subsection{CAL-posterior inference using HMC in TensorFlow} \label{sec:exp_HMC}

We demonstrate Bayesian inference using an HMC sampler in TensorFlow \citep{tensorflow2015} to target a posterior distribution defined in terms of the CAL. The appeal of this approach is that once the model is formulated as in Section \ref{sec:model_dynamic_obs}, evaluating the CAL involves no tuning parameters and can be readily embedded within an ``off the shelf'' HMC program. 

\begin{figure}[httb!]
    \centering
    \includegraphics[width=\linewidth]{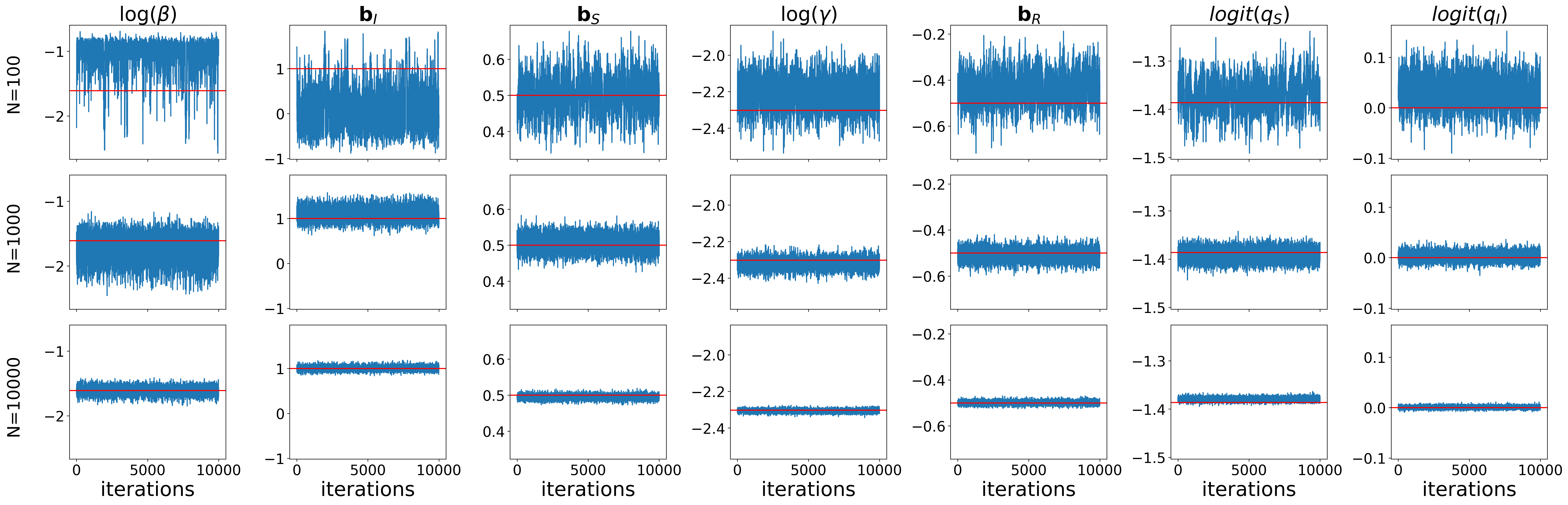}
    \caption{Trace plots for HMC under different population sizes. Solid red lines denote the DGP.}
    \label{fig:HMC}
\end{figure}

Consider the homogeneous-mixing SIS model from Section \ref{sec:motivating_example_first} and Section \ref{sec:motivating_example_second} with synthetic covariates $\bc_n \sim \mathcal{N}(\cdot|0,1)$. We simulate data from models with increasing population sizes $N=100, 1000, 10000$ and time horizon $T=200$. The full parameter settings can be found in Section \ref{sec:app_exp_HMC} of the supplementary materials. The chains show no signs of poor mixing and recovery of the DGP, with a posterior distribution that becomes increasingly concentrated as the population size grows, complementing our consistency theory. The running time for the experiment with $N=10000$ was around $0.5$s for each iteration, see Section \ref{sec:app_exp_HMC} of the supplementary material for full details of the HMC scheme.

We further assess the coverage of the credible intervals by considering $N = 1000$ and rerunning the experiment 100 times. Specifically, we simulate 100 epidemics and replicate the previous HMC procedure for each of them. This results in 95\% marginal credible interval coverages of 0.83 for $\log(\beta)$, 0.90 for $\bb_S$, 0.80 for $\bb_I$, 0.91 for $\bb_R$, 0.93 for $ logit(q_S)$, and 0.95 for $logit(q_I)$. Overall, the coverages are close to the target value of 0.95, except for $\log(\beta)$ and $\bb_I$, for which uncertainty is underestimated.

\subsection{Gradient-based calibration for heterogeneous-mixing SIS} \label{sec:exp_gradient}

We consider two heterogeneous-mixing IBMs: one with a continuous spatial interpretation, and the other with a network interpretation. We demonstrate recovery of the DGP by optimizing the CAL with Adam \citep{kingma2014Adam} and with an accuracy that increases in $N$. 

\paragraph{Model 1.} Consider the heterogeneous-mixing SIS model from Section \ref{sec:motivating_example_first} and Section \ref{sec:motivating_example_second}. Precisely, we have initial infection probabilities and $\boeta_{n,t-1}$ as in Section \ref{sec:motivating_example_second}, and a transition matrix as in \eqref{eq:HetMix_K}, where we also include the term  $\epsilon$ to represent a constant rate of infection from the environment. The covariates $\bw_n = [\bl_n, \bc_n]$ are synthetic, such that $\bl_n$ is the location in space of each individual and $\bc_n \sim \mathcal{N}(\cdot|0,1)$. The location $\bl_n$ is drawn from a mixture of 10 bivariate Gaussian distributions, each component of which can be interpreted as a geographic hub, e.g. a city. A full mathematical description of the model can also be found in Section \ref{sec:app_exp_grad} of the supplementary material.

\paragraph{Model 2.} We now group individuals into communities. All the quantities are as in Model 1 except the individuals' location $\bl_n$ which is now replaced by $\bom_n$ the mean of the mixture component the $n$-th individual was assigned to in Model 1. It is important to note that the computational cost of computing all interaction terms for Model 1 is $N^2$, while this cost can be reduced to $N$ times the number of communities for Model 2. More details on the model are available in Section \ref{sec:app_exp_grad} of the supplementary material.

\begin{figure}[httb!]
    \centering
    \includegraphics[width=0.75\linewidth]{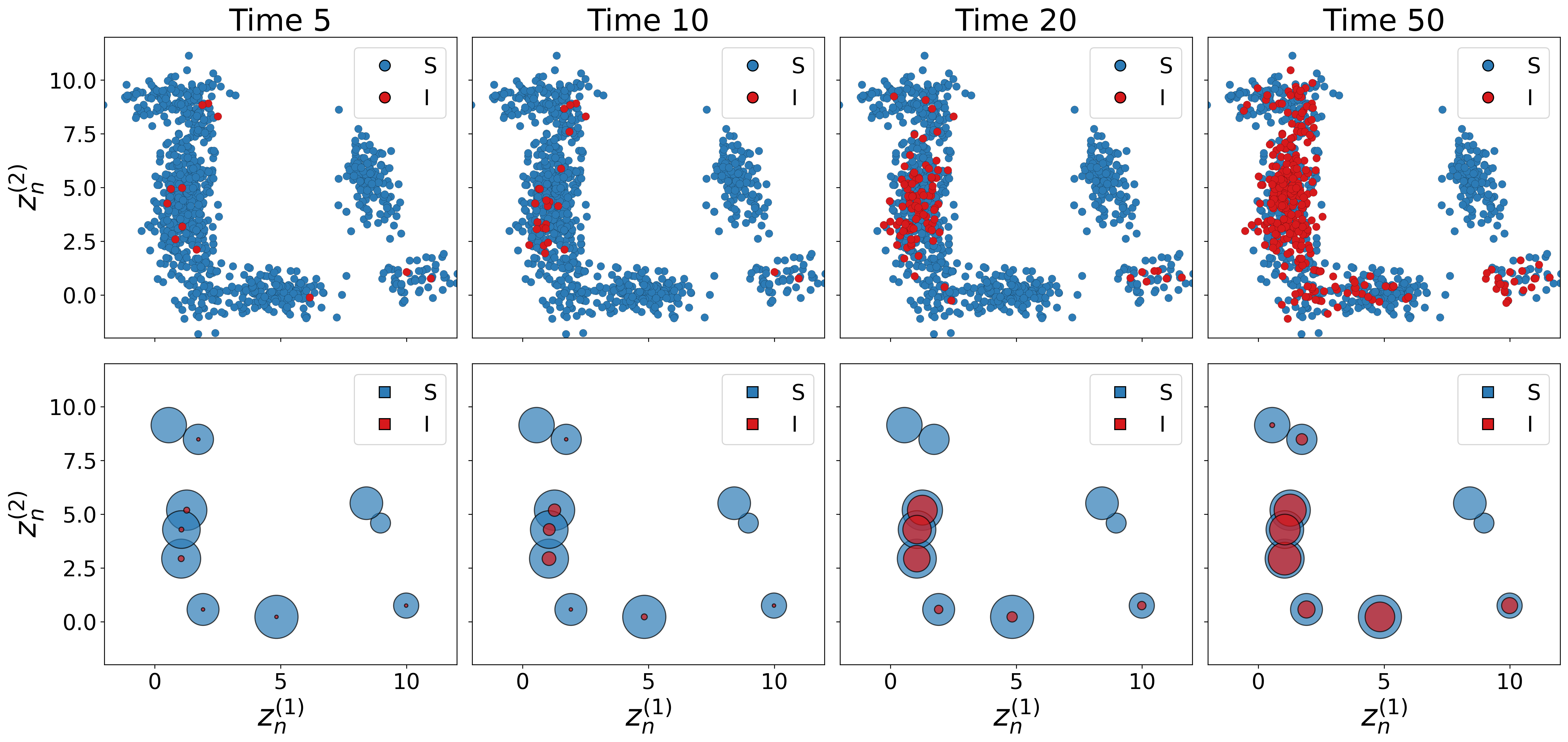}
    \caption{A realization of the latent process from Model 1 (first row) and Model 2 (second row) when $N=1000$. Different columns are associated with different time steps. For Model 1, blue and red dots refer to susceptible and infected individuals, respectively. For Model 2, the communities are blue circles with a radius that is proportional to their population, while the red circles are proportional to the number of infected inside the communities.}
    \label{fig:spatial_illustration}
\end{figure}

\paragraph{} Even though the two models have different heterogeneous-mixing properties, they share the same parameters and we set the DGP to the same values for both models. Precisely, $p_0=0.01, \beta = 2.0, \bb_I = 1.0, \bb_S = 0.5, \gamma = 0.1, \bb_R= -0.5, \phi = 1, \epsilon = 0.0001, q_S=0.2, q_I=0.5, q_{Se}=0.9, q_{Sp}=0.95$. Given the DGP we can simulate from the two models for a fixed time horizon and population, see Figure \ref{fig:spatial_illustration} for a graphical representation of the disease's spread for $t=5,10,20,50$ and when $N=1000$. Here we can observe the effect of the spatial component in both models, with the disease spreading faster in regions\slash communities with a higher number of infected, while isolated regions\slash communities are difficult to reach, and remain untouched by the disease, see Figure \ref{fig:spatial_illustration}.

\begin{table}[h!]
\centering
\caption{The effect of increasing $N$ on the maximum CAL estimator for Model 1 and Model 2. The first column shows the DGP. In brackets the standard deviation of the maximum CAL estimator computed over $100$ simulations.}\label{tab:tab_inference}
\resizebox{\textwidth}{!}{
\begin{tabular}{l|ccc|ccc}
        & \multicolumn{3}{c|}{{Model 1}} & \multicolumn{3}{c}{{Model 2}} \\ 
        \cmidrule(lr){2-4} \cmidrule(lr){5-7}
     Parameter                & $N=500$       & $N=1000$      & $N=2000$       & $N=500$      & $N=5000$       & $N=50000$      \\
      \midrule
     $\log(\beta)$=0.69       & 0.71(0.21)  & 0.71(0.17)  & 0.72(0.085)  & 0.72(0.2)   & 0.69(0.08)   & 0.7(0.03)    \\
     $\mathbf{b}_{S}$=0.5     & 0.45(0.06)  & 0.48(0.03)  & 0.49(0.022)  & 0.42(0.08)  & 0.5(0.01)    & 0.5(0.003)   \\
     $\mathbf{b}_{I}$=1.0     & 0.95(0.16)  & 0.97(0.15)  & 0.98(0.063)  & 0.91(0.2)   & 1.0(0.06)    & 1.0(0.025)   \\
     $\log(\gamma)$=-2.3      & -2.31(0.04) & -2.29(0.03) & -2.3(0.016)  & -2.32(0.05) & -2.3(0.01)   & -2.3(0.003)  \\
     $\mathbf{b}_{R}$=-0.5    & -0.51(0.05) & -0.5(0.03)  & -0.5(0.018)  & -0.55(0.07) & -0.5(0.01)   & -0.5(0.004)  \\
     $\log(\phi)$=0.0         & 0.01(0.05)  & -0.0(0.05)  & -0.0(0.026)  & -0.02(0.07) & 0.0(0.02)    & -0.0(0.007)  \\
     $logit(q_{\cdot})$=-1.39 & -1.39(0.01) & -1.39(0.01) & -1.39(0.005) & -1.39(0.01) & -1.39(0.004) & -1.39(0.001) \\
     $logit(q_{\cdot})$=0.0   & 0.0(0.02)   & 0.0(0.01)   & 0.0(0.007)   & 0.01(0.02)  & 0.0(0.004)   & 0.0(0.001)  \\
     \bottomrule
\end{tabular}
}
\end{table}

For the experiment, we consider $N= 500,1000,2000$ for Model 1 and $N= 500,5000,50000$ for Model 2, where we can use a larger population for Model 2 because of the reduced computational cost. We then simulate for each population size and for each model $100$ realizations according to the considered dynamics and observation model, with the covariates fixed.

For both models, we treat $p_0,\epsilon,q_{Se},q_{Sp}$ as known, and infer for each simulated dataset $p_0,\beta,$$\bb_S,$\\$\bb_R, \phi, q_S, q_I$ by running Adam with $10$ different initial conditions for $1000$ gradients steps. At the end of the optimization, we choose the best out of the $10$ in terms of CAL log-likelihood for each dataset. We report the results of this optimization in Table \ref{tab:tab_inference}. Here, we can observe that, for both models, the mean of our estimator is close to the true value of the DGP and that the variance of the maximum CAL estimator is shrinking with the population size.

\subsection{Calibration and filtering for heterogeneous-mixing SIR} \label{sec:exp_mispec}

In this section, we present a simple pipeline explaining how the CAL can be used to track individuals' disease states within the population when considering an individual-based susceptible-infected-removed (SIR) model. We analyze both a scenario where the model is well-specified and a scenario where the model is misspecified. 

\paragraph{Well-specified model.} Consider an individual-based SIR where the individuals have the same covariates as Model 1 from Section \ref{sec:exp_gradient}, including the same spatial locations. We consider a $p_0(\bw_n)$ such that the individuals in the top-left of the spatial region become infected with probability $p_0$, while the others are susceptible with probability $1$. The transition matrix is now $3\times 3$ with transition probabilities that are governed by the same parameters as in Model 1 from Section \ref{sec:exp_gradient}. The observation model now needs three parameters $q_S,q_I,q_R$ which represent the probability of reporting $S$ as $S$, $I$ as $I$, $R$ as $R$, respectively. We do not allow for misreporting and we force half of the population to always remain unreported. Full details are available in Section \ref{sec:app_exp_mispec} of the supplementary material. We consider this as the model that generates the data and we set the DGP to $p_0=0.5, \beta = 3.0, \bb_I = 1.0, \bb_S = 0.5, \bb_R= -0.1, \phi = 1.5, \epsilon = 0.0001, q_S=0.1, q_I=0.2, q_R = 0.5$. 

\paragraph{Misspecified model.} We now present a misspecified model, with the same initial distribution as the well-specified model, but a transition kernel with an interaction term $\boeta_{n,t}$ that groups individuals into communities as in Model 2 of Section \ref{sec:exp_gradient}. Precisely, we use the same formulation as in Model 2 but an interaction term that considers $\bar{\bl}_n$, the mean distance between all pairs of individuals within the community of individual $n$:
\begin{equation}
    \boeta_{n,t-1}
    =
    \frac{1}{N} \sum_{k \in [N]}
    \exp\{\bc_{k}^\top \bb_I\} \frac{1}{\sqrt{2 \pi \phi^2}}\exp\left\{-\frac{\|\bom_n-\bom_{k}\|^2 B_{n,k} + \bar{\bl}_n^2 (1-B_{n,k})}{2\phi^2}\right\}\bx_{k,t-1}^{(2)},
\end{equation}
where $B_{n,k}\coloneqq \mathbb{I}(\|\bom_n-\bom_{k}\| \neq 0)$. The covariates of the misspecified model are then $\bw_n = [\bom_n,\bar{\bl}_n,\bc_n]$. Similarly to Model 2, the computational cost of computing all the interaction terms is $N$ times the number of communities, making it significantly cheaper to fit compared to the well-specified model. More details are available in Section \ref{sec:app_exp_mispec} of the supplementary material.

\begin{figure}[httb!]
    \centering
    \includegraphics[width=0.75\linewidth]{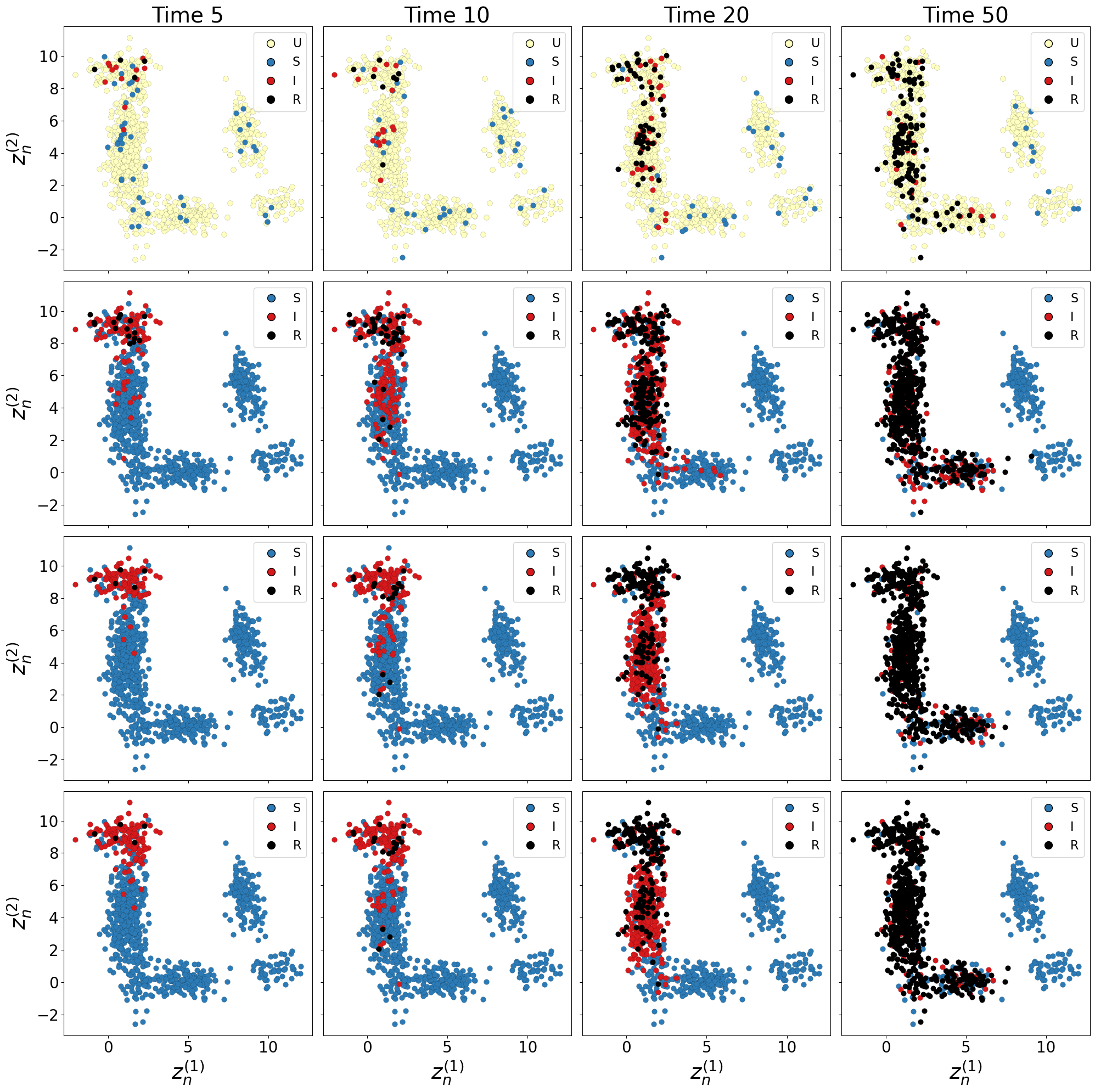}
    \caption{CAL filtering for $t=5,10,20,50$ under the well-specified and misspecified scenario. Rows from top to bottom: observed data, true latent disease states, and inferred latent disease states from CAL filtering under the well-specified and misspecified models. The yellow dots are used for unreported individuals, while blue, red, black are susceptible, infected, removed.}
    \label{fig:filter_spec_misspec}
\end{figure}

We generate an epidemic from the well-specified model, and we then optimize the parameters of both the well-specified and the misspecified model by running Adam with a learning rate of $0.1$ for $500$ iterations. After optimization, we set $\theta$ to the maximum CAL estimator and run Algorithm \ref{alg:CAL} for both models, where the CAL filter $\bpi_{n,t}$ is stored and used as an approximation of the true state. Figure \ref{fig:filter_spec_misspec} reports for $t=5,10,20,50$ the observations in the first row, the true states in the second row, the CAL estimate under the well-specified model and the misspecified model in the third and fourth rows. The CAL estimate on the state of the $n$-th individual at time $t$ is obtained as the argmax of  $\bpi_{n,t}$. It can be observed that the CAL filter is able to reproduce the spread of the epidemic from top to bottom, and even track infected individuals that are unreported for both models. 

\begin{table}[httb!]
    \centering
    \caption{Cross-entropy loss (the lower the better) and accuracy (the higher the better) for the CAL well-specified and misspecified, along with some baselines. The predicted state for accuracy is the argmax of the probability vector.}
    \label{tab:well_mis_accuracy}
    \begin{tabular}{l|ccccc}
     Metric                     &   Random &   Prev. uncertain &   Prev. certain &     CAL &   CAL missp. \\
    \midrule
     Cross-entropy & 1.10 &          1.09 &       0.67 & 0.28 &          0.29 \\
     Accuracy                   & 34.85\%  &          65.28\%  &       65.28\%   & 88.48\% &          88.08\%   \\
    \bottomrule
    \end{tabular}
\end{table}

Graphically, it seems we do not lose much with the misspecified model. To verify this, we consider the prediction performance on $\bx_{n,t}$ of $\bpi_{n,t}$ for both models. This is measured via two metrics: the cross-entropy loss \citep{de2005tutorial}, or equivalently minus the mean categorical log-likelihood; the accuracy, which is the percentage of correct estimates of $\bx_{n,t}$. We also consider three baselines: ``Random'' where we predict individual $n$ randomly unless we report their state; ``Prev. uncertain'' where we predict individual $n$ with their latest reported state with probability $0.34$ (and the other states with $(1-0.34) \slash 2$) unless we report their state; and ``Prev. certain'' where we predict individual $n$ with their latest reported state with probability $0.99$ (and the other with $(1-0.99) \slash 2$) unless we report their state. As in Figure \ref{fig:filter_spec_misspec}, the estimate is obtained as the argmax of the probabilities. Table \ref{tab:well_mis_accuracy} reports the results and shows that the CAL methods perform better than the baselines in both metrics. We observe that, as expected from the graphical interpretation, little accuracy is lost when switching to the misspecified model. All the mathematical definitions of the metrics and baselines are provided in Section \ref{sec:app_exp_mispec} of the supplementary material, which also discusses how to run the same experiment with different parameter values.

\subsection{Comparing CAL with sequential Monte Carlo} \label{sec:exp_SMC_only}

In this section, we compare the runtime and marginal likelihood values obtained by the CAL with those produced by two sequential Monte Carlo (SMC) algorithms: the Auxiliary Particle Filter (APF) \citep{johansen2008note} and the Block Auxiliary Particle Filter (Block APF) \citep{rebeschini2015can}, as both the number of particles and the population size increase. We also consider a just-in-time (JIT) compiled version of the CAL \citep{aycock2003brief} and a batched version of the Block APF. The CAL can be JIT-compiled without any additional effort because of the simplicity of its operations. It is worth noting that SMC variants exist that can be both JIT-compiled and automatically differentiated \citep{corenflos2021differentiable,tan2024accelerated}. For the batched Block APF, both individuals and particles are split into batches to reduce memory consumption; see Section \ref{sec:app_exp_SMC_only} of the supplementary materials for further details.

We consider the homogeneous-mixing SIS model from Section \ref{sec:motivating_example_first} and Section \ref{sec:motivating_example_second} with synthetic covariates $\bc_n \sim \mathcal{N}(\cdot|0,1)$. We simulate data from models with increasing population sizes $N=10, 100, 1000, 10000$ and time horizon $T=100$. The full parameter settings can again be found in Section \ref{sec:app_exp_SMC_only} of the supplementary materials.

\begin{table*}[httb!]
    \centering
    \caption{Log-marginal likelihood means and standard deviations, over $100$ runs, for the SIS model from Section \ref{sec:motivating_example_first}. Running times are reported in seconds for a single run and as averages across the $100$ runs. }
    \label{tab:SMC_comparison_table}
    \resizebox{\textwidth}{!}{
     \begin{tabular}{ll|llll|llllll}
    \hline
     N     & P    & CAL       & time  & CAL compiled & time  & APF                & time  & Block APF         & time & Block APF batched & time    \\
     \hline
     10    & 256  & -54.16    & 1.34 & -54.16       & 0.001 & -54.09(0.03)       & 0.11  & -54.14(0.06)      & 0.13 & -54.14(0.06)      & 2.43    \\
     10    & 512  &           &       &              &       & -54.09(0.03)       & 0.12  & -54.14(0.04)      & 0.13 & -54.14(0.04)      & 2.43    \\
     10    & 1024 &           &       &              &       & -54.09(0.02)       & 0.12  & -54.14(0.03)      & 0.13 & -54.14(0.03)      & 2.43    \\
     10    & 2048 &           &       &              &       & -54.09(0.01)       & 0.12  & -54.14(0.02)      & 0.13 & -54.14(0.02)      & 2.43    \\
     10    & 4096 &           &       &              &       & -54.09(0.01)       & 0.12  & -54.14(0.01)      & 0.16 & -54.14(0.01)      & 2.43    \\
     10    & 8192 &           &       &              &       & -54.09(0.01)       & 0.12  & -54.14(0.01)      & 0.3  & -54.14(0.01)      & 2.44    \\
     \hline
     100   & 256  & -5275.99  & 1.77 & -5275.99     & 0.001 & -5275.15(0.12)     & 1.12  & -5275.94(0.2)     & 1.31 & -5275.91(0.25)    & 24.0   \\
     100   & 512  &           &       &              &       & -5275.13(0.07)     & 1.13  & -5275.95(0.16)    & 1.31 & -5275.96(0.2)     & 24.1    \\
     100   & 1024 &           &       &              &       & -5275.14(0.05)     & 1.14  & -5275.97(0.14)    & 1.47 & -5275.95(0.14)    & 24.2   \\
     100   & 2048 &           &       &              &       & -5275.14(0.04)     & 1.18  & -5275.96(0.1)     & 2.39 & -5275.96(0.09)    & 24.6   \\
     100   & 4096 &           &       &              &       & -5275.14(0.03)     & 1.2   & -5276.0(0.07)     & 6.01 & -5276.0(0.07)     & 24.5   \\
     100   & 8192 &           &       &              &       & -5275.14(0.02)     & 1.28  & Out of memory     &      & -5275.97(0.05)    & 36.2   \\
     \hline
     1000  & 256  & -71208.4 & 1.87 & -71208.4    & 0.001 & -74890.3(87.9)   & 1.23  & -71247.7(10.2)  & 1.47 & -71248.4(9.0)   & 25.8   \\
     1000  & 512  &           &       &              &       & -74736.0(72.0)    & 1.26  & -71228.8(6.3)   & 2.09 & -71229.1(6.5)    & 25.8   \\
     1000  & 1024 &           &       &              &       & -74563.2(66.8)   & 1.33  & -71218.3(4.0)   & 4.55 & -71217.7(4.8)   & 26.2    \\
     1000  & 2048 &           &       &              &       & -74432.7(85.9)   & 1.48  & Out of memory     &      & -71213.1(2.9)    & 31.3    \\
     1000  & 4096 &           &       &              &       & -74322.0(74.5)   & 1.81  & Out of memory     &      & -71210.8(2.1)     & 67.6   \\
     1000  & 8192 &           &       &              &       & -74178.2(71.1)   & 2.63  & Out of memory     &      & -71209.7(1.7)   & 211  \\
     \hline
     10000 & 256  & -731878 & 1.81 & -731877    & 0.001 & -786077(294)  & 1.44  & -732295(31) & 3.8  & -732294(28)  & 223  \\
     10000 & 512  &           &       &              &       & -785395(327)  & 1.99  & Out of memory     &      & -732085(18)  & 231  \\
     10000 & 1024 &           &       &              &       & -784872(328) & 3.05  & Out of memory     &      & -731983(16)   & 231  \\
     10000 & 2048 &           &       &              &       & -784332(261)  & 5.16  & Out of memory     &      & -731928(10)  & 287  \\
     10000 & 4096 &           &       &              &       & -783742(313) & 9.78  & Out of memory     &      & -731904(7)   & 649  \\
     10000 & 8192 &           &       &              &       & -783216(297)  & 19.3 & Out of memory     &      & -731891(5)   & 2097 \\
    \hline
    \end{tabular}
    }
\end{table*}

For each population size, we run the considered algorithms on the DGP to estimate the log-marginal likelihood, which we report in Table~\ref{tab:SMC_comparison_table} as the mean and standard deviation over $100$ runs (the CAL requires only a single run, as it is a deterministic algorithm). Recall that the APF provides unbiased estimates of the marginal likelihood \citep{johansen2008note}, but it is affected by the curse of dimensionality: typically the number of particles must increase exponentially with the dimension to ensure reliable (i.e. low-variance) estimates. Moreover, although the marginal likelihood estimates are unbiased, the corresponding log-marginal likelihood estimates are negatively biased due to Jensen’s inequality. 

For $N = 10$ and $N = 100$, the APF exhibits low variance, giving us a suitable proxy for the ground truth. Both the CAL and the Block APF produce results close to those of the APF, with the Block APF being slightly closer. In terms of running times, the CAL (not JIT-compiled) is slower than both the APF and the Block APF, this is likely because sampling is more efficient than linear-algebra operations when the population is small. For $N = 1000$ and $N = 10000$, two issues arise for the SMC algorithms. First, the variance of the APF is large and the estimated log-marginal likelihood increases with the number of particles, suggesting more particles are needed. This is also confirmed by the fact that the variance does not decrease with the number of particles. Second, the Block APF requires more memory to run in parallel and must be converted into a sequential version (the batched Block APF). In this large-population regime, the CAL outperforms the SMC algorithms in terms of running time, and, being deterministic, it avoids noisy log-marginal likelihood estimates. It also appears that the (batched) Block APF has a similar asymptotic behavior, as $N, P \to \infty$, to the CAL, when $N \to \infty$, which may be of independent interest, see also Section \ref{sec:app_exp_SMC_only} of the supplementary materials.

It is important to note that if we consider more compartments, and models in which not all movements across compartments are possible (e.g. in an SIR model individuals cannot move from $S$ to $R$, and viceversa) the APF and Block APF will become degenerate as $N$ increases. Indeed, the proposal distributions of the APF and Block APF rely only on the observation at the current time step and cannot prevent future mismatches. In such situations, it may be necessary to use lookahead filters \citep{rimella2022approximating}, see Section \ref{sec:app_exp_SMC} of the supplementary materials for some experiments.

\subsection{2001 UK foot-and-mouth disease outbreak}\label{sec:exp_FM}

In 2001 a foot-and-mouth disease outbreak infected 2026 out of 188361 farms in the United Kingdom, resulting in damages and costs of about 8 billion pounds. This dataset has been studied in the context of individual-based models \citep{jewell2009bayesian,FergusonFM2009,deardon2010inference}, where farms are considered as individuals with the location and the number of animals forming the individual-specific covariates. A farm can be susceptible if no animals are infected, infected if at least one animal is infected, and removed if the farm exits the epidemic due to quarantine or culling. For this study, we consider $162775$ farms, from England, Wales, and the south of Scotland, see Figure \ref{fig:FM_spread}. A fully dense spatial kernel model would be costly to run for all farms, hence we adopt the network model strategy of Section \ref{sec:exp_mispec}, and assign each farm to a local authority, full details are in Section \ref{sec:app_exp_FM} of the supplementary materials. 

\begin{figure}[httb!]
    \centering
    \includegraphics[width=0.8\textwidth]{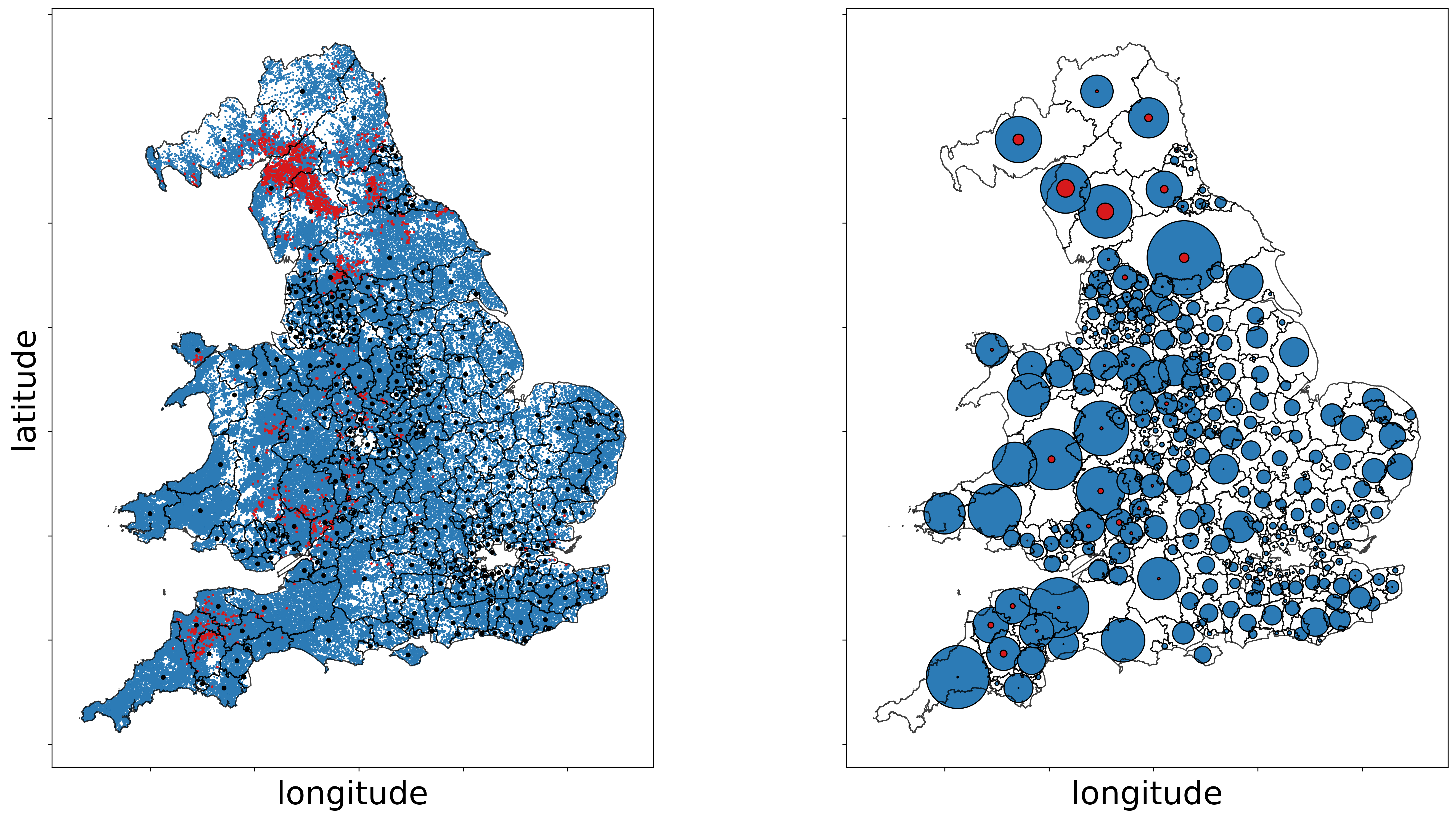}
     \caption{The farms and the local authorities included in the study. On the left, dots represent farms, with red indicating that the farm was reported infected at some point in time. On the right, the local authorities are blue circles with a radius proportional to the number of farms. Red inner circles are proportional to the number of farms within the local authority that were reported infected during the outbreak. Black contours represent the geometries of the local authorities.}
    \label{fig:FM_spread}
\end{figure}

We consider a heterogeneous-mixing individual-based SIR model as in Section \ref{sec:exp_mispec}, where transitions from $S$ to $R$ are also allowed, representing the culling\slash quarantine of healthy farms to create a containment zone around infected farms. In this model, we have two interaction terms: one controlling the spread of the disease, the other controlling the intensity of culling\slash quarantining. For the observation model, we do not allow for misreporting and we assume that susceptible and removed are always unreported. 

We optimized the parameters with Adam in two steps. In the first step, we performed a pre-optimization over different initial conditions, with each optimization taking around 70 minutes (10000 gradient steps). This resulted in a log-CAL of $–17947.27$ for the best combination of parameters. In the second step, we selected the best log-CAL parameters from the first step and used them as a warm start for a longer optimization with a varying step size, which took about 48 hours (400000 gradient steps). This produced a log-CAL of $–17946.734$. Note that the improvement is marginal, and we could probably have used fewer gradient steps. After the optimization we further used the resulting parameters as a warm start for an HMC to provide credible bands. Full details on the model, optimization and HMC are provided in Section \ref{sec:app_exp_FM} of the supplementary materials.

\begin{figure}[httb!]
    \centering
    \includegraphics[width=\textwidth]{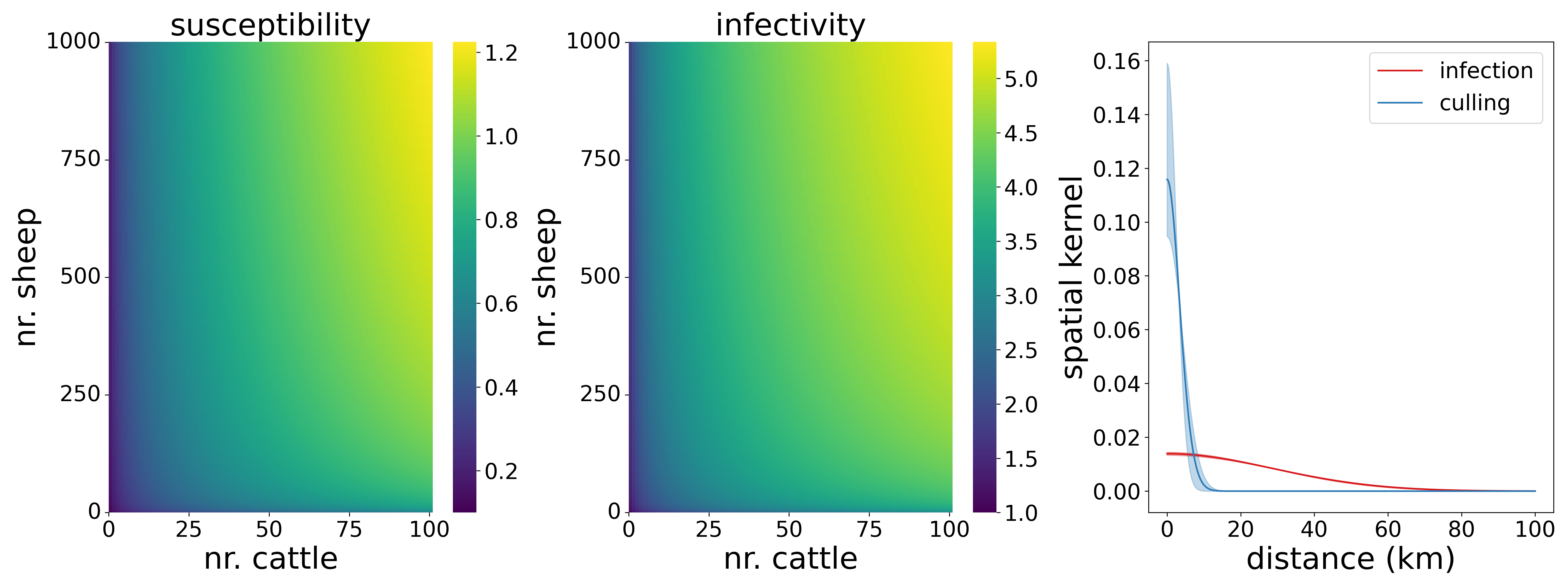}
    \caption{On the left and in the center are the heat maps of the inferred susceptibility and infectivity. On the right is the inferred spatial kernel effect on both infection and culling as a function of the distance in kilometers. The solid lines represent posterior means and the shaded bands represent $95\%$ credible intervals.}
    \label{fig:inf_susc}
\end{figure}

\begin{figure}[httb!]
    \centering
    \includegraphics[width=\linewidth]{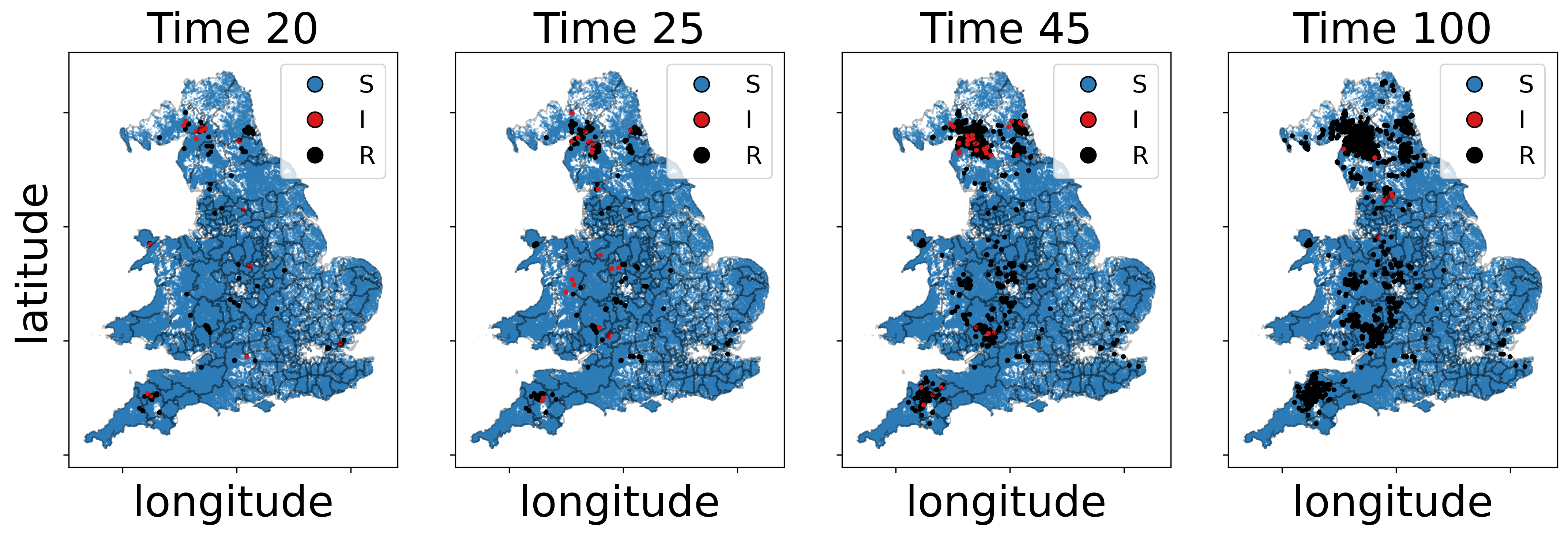}
    \caption{The CAL prediction over time of susceptible farms (blue), infected farms (red), and removed farms (black), over $t=20,25,45,100$. Parameters are set to the posterior mean of the HMC.}
    \label{fig:FM_spat_track}
\end{figure}

After the HMC run, we can study the inferred mean effects of owning cattle\slash sheep on susceptibility\slash infectivity. The plots on the left and in the center of Figure \ref{fig:inf_susc} show that owning cattle affects both susceptibility and infectivity more than owning sheep, which is a similar conclusion to \cite{jewell2009bayesian} and \cite{rimella2025simulation}. We can also analyze how distance affects both the infection and the culling\slash quarantine processes. No data about culled farms are included in the calibration, meaning that the culling\slash quarantine process is inferred from the infected farms only. The right plot of Figure \ref{fig:inf_susc} shows that the infection process travels on the order of 60km, with a culling\slash quarantine process that is applied up to a radius of 10km, which is also in line with the ``surveillance zone'' protocol considered by the government \citep{SZ_foot}. We further included $95\%$ credible interval as shaded bands. Following the reasoning of Section \ref{sec:exp_mispec}, we can plot over time the predicted state via the CAL filter, see Figure \ref{fig:FM_spat_track}, when the parameters are set to the posterior mean of the HMC. We consider $t=20, 25$ to show the spatial spread and culling\slash quarantine process in the short term, $t=45$ to show the spread when we are close to the peak of the epidemic, and $t=100$ to show when we are close to the end. We observe that the epidemic rapidly spread in the north of England, Cumbria in particular, and Cornwall, to then affect Wales on a lower scale. Even though no information about culling\slash quarantine was fed to the model, the CAL is able to spatially detect them and match the locations of confirmed and suspected premises towards the end of the outbreak (October 2001) as reported graphically on the UK government website \citep{locFM}.

\subsubsection{Benchmarking, overdispersion and extensions}\label{sec:FM_bench_misp}

A natural question that arises is whether the considered model is correctly specified. This can be assessed by comparing its log-likelihood values with those of simple benchmarks that are much easier to fit. With this in mind, we consider $\widetilde{\by}_{n,t}$ which takes values $1,0$ depending on whether the farm is reported as infected or not. We then model $p(\widetilde{\by}_{n,0}=1),p(\widetilde{\by}_{n,t}|\widetilde{\by}_{n,t-1})$ with an AR logistic regression:
$$
p(\widetilde{\by}_{n,1}=1) = \frac{\gamma}{1+e^{ -\bb^\top \bw_n}}, \quad p(\widetilde{\by}_{n,t}=1|\widetilde{\by}_{n,t-1}) = \frac{\gamma}{1+e^{-\beta \widetilde{\by}_{n,t-1} -\bb^\top \bw_n}},
$$
where $\gamma\in[0,1]$, $\beta \in \mathbb{R}$ and $\bb \in \mathbb{R}^{C+1}$, with $C$ covariates for each individual (plus one for the intercept). We optimize $\gamma, \beta, \bb$ using gradient ascent on the log-likelihood with the Adam optimizer, obtaining a log-likelihood of $-20858.99$. We further consider another AR logistic regression benchmark in which $\gamma, \beta, \bb$ are specific to each local authority, and fit the parameters in the same way. This second benchmark yields a log-likelihood of $-18866.53$. Both benchmarks provide sensible log-likelihood values that are not far from the $-17946.73$ achieved by the CAL, but they remain lower, suggesting that the individual-based model approach explains the data better. The usefulness of such benchmarks extends beyond the foot-and-mouth application and the CAL: these AR logistic regressions can be easily generalized to multiple states and can be used to check whether an individual-based model is misspecified. 

In our study, we focused on models that may underestimate uncertainty and therefore produce overconfident parameter estimates. Evaluating overdispersed models is thus crucial for assessing the quality of the inferential procedure and avoiding model misspecification \citep{stocks2020model,whitehouse2023consistent,li2024inference}. 

We next consider the same SIR model as in Section \ref{sec:exp_FM} but replace the transmission parameter $\beta$ with $\beta \xi_t$, where $\log \xi_t$ is distributed according to a Gaussian distribution with unknown mean $\mu_o$ and unknown standard deviation $\sigma_o$. We examine two cases: a model with ``shared'' overdispersion, where $\xi_t$ is common to all farms, and a model with ``local authority'' overdispersion, where a separate $\xi_{B,t}$ is defined for each local authority $B$ and shared across the farms within that authority.

Following the procedure of \cite{whitehouse2023consistent}, we nest the CAL within an SMC algorithm to estimate the marginal likelihood of the overdispersed models. Here the CAL marginalizes over the latent individuals' states, while the SMC marginalizes over the stochastic parameters. We perform a grid search over $(\mu_o, \sigma_o)$ for the ``shared'' overdispersion model, where the values on the grid are given by the CAL within SMC algorithm. This yields $\mu_o = 0$, $\sigma_o = 0.25$, and a log-marginal likelihood of $-17926.99(0.46)$ (the standard deviation is estimated over $100$ runs of the SMC). We then estimate the log-marginal likelihood for the ``local authority'' model and obtain a value of $-17907.55(0.84)$ (the standard deviation is estimated over $100$ runs of the SMC). As the ``local authority'' model considers 289 stochastic parameters (one for each local authority) we considered a block SMC approach \citep{rebeschini2015can}. More details on the models, the SMC algorithms, and the grid search are available in Section \ref{sec:app_exp_FM} of the supplementary materials.

We observe that both overdispersed models achieve better log-marginal likelihoods than the other model, but this improvement comes at a high computational cost: a single run of CAL within SMC takes approximately 7 minutes for the ``shared'' overdispersion model with 1024 particles and about 43 minutes for the ``local authority'' overdispersion model with 512 particles. Moreover, neither JIT compilation nor automatic differentiation is straightforward in this context. It is therefore important to consider alternative SMC methods that preserve these properties \citep{corenflos2021differentiable,tan2024accelerated} and/or leverage parallel-in-time implementations \citep{corenflos2022sequentialized}.

From a theoretical perspective, we know that the CAL performs well when individuals decouple in the large-population limit. In this regime, neighboring individuals being infected have negligible predictive consequences for a considered individual as long as we know the regional force of infection. In the case of the foot-and-mouth disease, the spreading process is not necessarily determined by adjacent contacts but can also involve animal or insect vectors and the livestock transport network \citep{keeling2001dynamics,diggle2006spatio,jewell2009bayesian}. These considerations align with the assumptions and modeling framework of the CAL, we leave to future work the development of modeling approaches in which farms do not decouple.

\section{Discussion, limitations and future work} \label{sec:discussion_future}

We have proposed a computationally and mathematically simple algorithm to enable approximate likelihood-based inference for a broad class of individual-based models of epidemics, supported by both theoretical foundations and practical implementations.

At first glance, the CAL and block particle filters \citep{rebeschini2015can} are similar, as they both exploit certain factorization properties of the model. However, there is a key difference in how their approximations are constructed. The CAL replaces the state of the system at time $t-1$ with its expectation under the previous filtering distribution, whereas the block particle filter applies a blocking approximation after prediction, in the form of independent resampling for each block. Hence, for finite $N$, the CAL targets the likelihood of the approximate model described in Section \ref{sec:CAL_as_exact}, while the block particle filter targets the likelihood of the approximate model obtained by marginalizing over the blocks at each time step. Their theoretical justifications also differ: the CAL is motivated by a large-population limit (i.e. the dimension goes to infinity), while the block particle filter relies on large-block asymptotics (i.e. the partition on the dimensions becomes coarser and coarser). Table \ref{tab:SMC_comparison_table} and the graphical illustration in Section \ref{sec:app_exp_SMC_only} of the supplementary material both suggest that the marginal likelihood approximation from the block particle filter behaves, as $N,P\to\infty$, similarly to the log-CAL, as $N\to\infty$. This indicates that the CAL theory may provide a justification for the use of block particle filters in high-dimensional settings \citep{li2024inference, wheeler2024informing}, which is an interesting direction for further investigation.

One limitation of the CAL is that individuals are updated independently, without accounting for correlations between them. This becomes problematic when the individuals do not decouple in the large population limit, and hence when the state of one individual provides information about others. Some clear examples of such models are household models \citep{rimella2022inference}, where individuals are assigned to households and as the number of individuals grows so does the number of households. In such a scenario, we expect the CAL to provide a bad approximation of the likelihood as it does not account for the dependencies within the households. One possible solution might be to model the state of each household instead of the individuals, hence considering the households to decouple in the large population limit. We are confident that the CAL can be extended to this modeling framework as the categorical representation can be used in any discrete state-space. It is important to note that this will trade off the quality of the approximation with the performance as the computational cost will deteriorate exponentially in the households sizes.


Another key consideration is the implicit geometric distribution assumed for compartmental waiting periods, e.g. infectious periods. Negative binomial waiting times can be approximated by introducing additional compartments, though this approach is not always satisfactory. A promising avenue for future research would be to extend our model to higher-order Markovian or semi-Markovian dynamics \citep{jewell2009bayesian,touloupou2020scalable}.


Throughout our examples, we assume $\bW$ to be static over time, though this assumption could be relaxed to incorporate dynamic covariates, and even integrate a time-evolving contact network process \citep{bu2022likelihood} within the dynamics. Additionally, there are interesting avenues for extending our methodology to continuous observation spaces, with applications in areas such as target tracking \citep{whiteley2010auxiliary} and epidemic modeling using continuous serological data \citep{hay2024serodynamics}.


\section*{Funding}

MW acknowledges funding from the MRC Centre for Global Infectious Disease Analysis (Reference No. MR/X020258/1) funded by the UK Medical Research Council. This UK-funded grant is carried out in the frame of the Global Health EDCTP3 Joint Undertaking. PF acknowledges funding from the EPSRC grant Prob\_AI (Reference No. EP/Y028783/1). CJ acknowledges funding from MRC (MR/S004793/1), BBSRC (BB/T004312/1), EPSRC (EP/V042866/1), and Research England as part of the E3: Expanding Excellence in England program.

\section*{Acknowledgements}

The authors thank ``The High-End Computing'' (HEC) facility at Lancaster University for providing the computational resources to run the experiments. The authors thank Dr. Sam Power for the useful discussion on ``exact inference for an approximate model'' and Dr. Louis Sharrock for providing helpful references on the propagation of chaos. 

\bibliographystyle{chicago}
\bibliography{references.bib}

\appendix

\section{Notation and assumptions}

\subsection{General notation}
Given an integer $M \in \mathbb{N}$, we use $x_{0:M} \coloneqq x_0,\dots,x_M$ for indexing sequences, $[M]\coloneqq \left \{1,\dots, M \right \}$ for the set of the first $M$ positive integers, and $\bx \coloneqq \left (\bx^{(1)},\dots,\bx^{(M)} \right )$ for an $M$-dimensional vector. Given two $M$-dimensional vectors $\bx_1$ and $\bx_2$ we denote with $\bx_1 \odot \bx_2$ the element-wise product and with $\bx_1 \oslash \bx_2$ the element-wise division. We write $1_M$ for the $M$-dimensional vector of all ones, $\Delta_M$ for the $M$-dimensional probability simplex, i.e. $\Delta_M \coloneqq \left \{\bx \in [0,1]^M \text{ : } \sum_{i=1}^M \bx^{(i)}=1 \right \}$, and $\mathbb{O}_M$ for the set of one-hot encoding vectors with dimension $M$, i.e. the set $\mathbb{O}_M \coloneqq \left \{\bx \in \{0,1\}^M \text{ : } \exists j \in [M]: \bx^{(j)}=1 \text{ and } \bx^{(i)}=0 \text{ if } i\neq j \right \}$. Note that $\mathbb{O}_M \subset \Delta_M$. Given $\bpi \in \Delta_M$ we denote with $\mbox{Cat}(\cdot|\bpi)$ the categorical distribution over $\mathbb{O}_M$ which assigns probability $\bpi^{(i)}$ to the vector $\bx \in \mathbb{O}_M$ with $\bx^{(i)}=1$ and $\bx^{(j)}=0$ for $j\neq i$. 

We shall work with the following norms.
    For an $M$-dimensional vector $\pi$ and an $M\times M$-matrix $K$ we define 
    $$
    \norm{\pi}_\infty \coloneqq \max\limits_{i\in[M]} \abs{\pi^{(i)}}, \quad 
    \norm{K}_\infty \coloneqq \max_{i \in [M]} \sum_{j = 1}^M \abs{K^{(i,j)}}.
    $$
    For a vector-valued function $f: \mathbb{S} \to \mathbb{R}^M$, we write 
    $$
    \norm{f}_\infty \coloneqq  \sup_{s \in \mathbb{S}}\max_{i \in [M]} \abs{f(s)^{(i)}},
    $$
    For an $\mathbb{R}$-valued random variable $\bx$ the $L^4$ norm is written:
    $$
    \normiii{\bx}_4 \coloneqq \left ( \mathbb{E} \left[ |\bx|^4 \right ]\right )^{\frac{1}{4}}.
    $$

We need a notation for the support of probability vectors and matrices. For an $M$-dimensional vector $\bpi$ we define $\textbf{supp}(\bpi) \coloneqq \{i \in [M]: \bpi^{(i)} \neq 0\}$. Similarly for a $M \times M$ matrix $\bPi$ we define $\textbf{supp}(\bPi) \coloneqq \{(i,j) \in [M]^2: \bPi^{(i,j)} \neq 0\}$.

\subsection{Model and CAL notation}

We consider the individual-based model and the CAL algorithm described in the main paper.  To formulate and prove our theoretical results, we need to make explicit the dependence of various quantities on the parameter vector $\theta \in \Theta$, covariates $\bw_n$, and/or the population size $N$. 

We write $\bW^N=(\bw_1,...,\bw_N)$ for the first $N$ population covariate vectors, $\bx_{n,t}^N$ for the state at time $t$ of the $n$-th individual, $\bX^N_t = (\bx_{1,t}^N,\ldots,\bx_{N,t}^N)$ for the state of the population at time $t$, $\by_{n,t}^N$ and $\bY^N_t$ for the observations at an individual and population level.

We use for the initial distribution, transition matrix, and emission matrix the notation: $p_0(\bw_n,\theta)$, $K_{\cdot}(\bw_n,\theta)$, and $G(\bw_n,\theta)$, with $\eta(\bw_n,\bW,\bX)$ becoming $\eta^N(\bw_n,\theta, \bW^N,\bX^N)$, and $\boeta_{n,t}$ becoming $\boeta^N_{t}(\bw_n,\theta)$. For the quantities computed in the  CAL algorithm, we highlight the functional dependence on the covariates and the considered parameter value, with $\bpi_{n,t|t-1}, \bmu_{n,t}, \bpi_{n,t}, \widetilde{\boeta}_{n,t}$ becoming $\bpi^N_{n,t|t-1}(\bw_n,\theta)$, $\bmu^N_{n,t}(\bw_n,\theta)$, $\bpi^N_{n,t}(\bw_n,\theta), \widetilde{\boeta}_{t}^N(\bw_n,\theta)$. Note that for $t\geq1$ the quantities $\bpi^N_{n,t|t-1}(\bw_n,\theta), \bmu^N_{n,t}(\bw_n,\theta)$ also depend on the population covariates $\bW^N$ and the observations $\bY_{1:t-1}^N$, and similarly the quantities $\bpi^N_{n,t}(\bw_n,\theta),\widetilde{\boeta}_{t}^N(\bw_n,\theta)$ depend on $\bW^N$ and $\bY_{1:t}^N$, but these dependencies are not shown in the notation. 

For completeness we write the CAL recursion as in Algorithm \ref{alg:CAL} with the dependence on $\bw_n$ and $\theta$ explicit in the notation, for any $n \in [N]$:
\begin{equation}\label{rec:CAL_rec}
\begin{aligned}
\bpi^N_{n,0}(\bw_n,\theta) &\coloneqq p_0(\bw_n, \theta),\\
    \bPi_{t-1}^N &\coloneqq \left ( \bpi^N_{n,t-1}(\bw_1, \theta), \dots, \bpi_{n,t-1}^N(\bw_N, \theta) \right ), \\
    \widetilde{\boeta}_{t-1}^N(\bw_n,\theta) &\coloneqq \eta^N(\bw_n, \theta, \bW^N, \bPi^N_{t-1}),\\
    \bpi^N_{n,t|t-1}(\bw_n,\theta)  &\coloneqq  \left [ \bpi^N_{n,t-1}(\bw_n,\theta)^\top K_{\widetilde{\boeta}_{t-1}^N(\bw_n,\theta)}(\bw_n,\theta) \right ]^\top,\\
    \bmu^N_{n,t}(\bw_n,\theta)   &\coloneqq \left [ \bpi^N_{n,t|t-1}(\bw_n,\theta)^\top  G(\bw_n,\theta)  \right ]^\top,\\
    \bpi^N_{n,t}(\bw_n,\theta)  &\coloneqq \bpi_{n,t|t-1}^N(\bw_n,\theta) \odot \left \{  \left [  G(\bw_n,\theta) \oslash  \left ( 1_M \bmu_{n,t}^N(\bw_n,\theta)^\top \right ) \right ] \by_{n,t}^N \right \}.
\end{aligned}
\end{equation}

\subsection{Assumptions}
The compactness and continuity in the following assumption are standard conditions in the consistency theory of maximum likelihood estimators.
\begin{assumption}\label{ass:compactness_continuity}
    The parameter space $\Theta$ and the covariate space $\mathbb{W}$ are compact subsets of Euclidean spaces. Moreover, the initial distribution $p_0(w,\theta)$, the transition matrix $K_{\eta}(w,\theta)$, and the emission matrix $G(w,\theta)$ are all continuous functions in their arguments $w,\theta$.
\end{assumption}

For the purposes of our theory, we shall treat the individual-specific covariate vectors as random, independent, and identically distributed across the population. This can be interpreted as a random design assumption of a sort, which is commonly adopted in the asymptotic studies of regression and classification methods. 

\begin{assumption}\label{ass:w_iid}
    The covariates $\bw_1,\bw_2,\dots$ are independent and identically distributed according to a distribution $\Gamma$ on $\mathbb{W}$. 
\end{assumption}

The following assumption will be used to establish that the logarithm of the CAL is well-defined across all possible values of parameters, covariates, and almost all realizations of the data.
\begin{assumption}\label{ass:HMM_support}
    The following hold:
    \begin{itemize}
        \item for any $w \in \mathbb{W}$ and $\theta,\theta' \in \Theta$ we have that $\textbf{supp}(p_0(w, \theta)) = \textbf{supp}(p_0(w, \theta'))$;
        \item for any $w \in \mathbb{W}$, $\eta,\eta' \in [0,C]$ and $\theta,\theta' \in \Theta$ we have that $\textbf{supp}(K_{\eta}(w,\theta)) = \textbf{supp}(K_{\eta'}(w,\theta'))$;
        \item for any $w \in \mathbb{W}$ and $\theta,\theta' \in \Theta$ we have that $\textbf{supp}( G(w,\theta)) = \textbf{supp}( G(w,\theta'))$.
    \end{itemize}
\end{assumption}

The following assumption constrains the form of the function $\eta$ which determines the mechanism of interaction amongst the population.
\begin{assumption}\label{ass:eta_structure}
    For any $\theta \in \Theta, w \in \mathbb{W}, N\in \mathbb{N}$, and for any $W^N = (w_1,\dots, w_N),\Pi^N = (\pi_1,\dots,\pi_N)$ with $w_n \in \mathbb{W},\pi_n \in \Delta_M$ for all $n \in [N]$, we have:
    \begin{equation}
        \eta^N(w, \theta, W^N, \Pi^N) = \frac{1}{N} \sum_{n \in [N]} d(w, w_{n}, \theta)^\top \pi_{n},
    \end{equation}
    where $d:\mathbb{W} \times \mathbb{W} \times \Theta \to [0,C]^M$ is a bounded function, i.e. $\norm{d}_\infty \leq C < \infty$, from which we also obtain $\eta^N(w, \theta, W^N, \Pi^N) \in [0,C]$ for any $N, w,\pi, W^N, \Pi^N$.
\end{assumption}

\begin{assumption}\label{ass:kernel_continuity}
    For any $\theta \in \Theta$ and $w \in \mathbb{W}$, the matrix $K_{\eta}(w,\theta)$ is Lipschitz continuous in $\eta$ with Lipschitz constant $L$, that is for any $\eta, \eta^\prime \in [0,C]$ we have:
    \begin{equation}
        \norm{K_{\eta}(w,\theta) - K_{\eta^\prime}(w,\theta)}_\infty \leq L \abs{\eta - \eta^\prime}.
    \end{equation}
\end{assumption}

\subsection{The data-generating process}\label{sec:data_generating_process}
All the random variables appearing in our theory are assumed to be defined on a common probability space $(\Omega, \mathcal{F}, \mathbb{P})$. Thus in accordance with Assumption \ref{ass:w_iid}, under $\mathbb{P}$, $\bw_1,\bw_2,\ldots$ are i.i.d. 

Let $\theta^\star$ be an arbitrarily chosen but a fixed member of $\Theta$, which will be referred to as the data-generating parameter (DGP). For each $N\geq1$, the DGP determines the  distributions of $(\bX_{t}^N)_{t\geq 0}$ and $(\bY_{t}^N)_{t\geq 1}$ under $\mathbb{P}$ conditional on $\bW^N$, in that for each $n\in[N]$,
\begin{align}
	&\bx^N_{n,0} | \bw_n \sim \mbox{Cat} \left (\, \cdot\,|\,p_0( \bw_n,\theta^\star) \right ), \\
	&\bx_{n,t}^N|\bX_{t-1}^N, \bW^N  \sim \mbox{Cat} \left(\, \cdot\,\left| \left [ (\bx_{n,t-1}^N)^\top K_{\boeta_{t-1}^N(\bw_n,\theta^\star)}(\bw_n,\theta^\star) \right ]^\top \right.\right ),\\
        &\by_{n,t}^N|\bx_{n,t}^N,\bw_n \sim \mbox{Cat} \left (\, \cdot\,\left| \left [ (\bx_{n,t}^N)^\top  G(\bw_n,\theta^\star) \right ]^\top \right.\right).
\end{align}

\section{Closed-forms under categorical approximations} \label{sec:closed_form_CAL}

Proposition \ref{prop:CAL_prediction_conj} and Proposition \ref{prop:CAL_update_conj} below show how the formulae appearing in the CAL algorithm in Section \ref{sec:approx_filtering} of the main paper are derived and can be interpreted as prediction and correction operations associated with the approximate model specified in Section \ref{sec:CAL_as_exact} of the main paper.

\begin{proposition} \label{prop:CAL_prediction_conj}
    If $\widetilde{\bX}_{t-1}^N| \bW^N \sim \bigotimes_{n \in [N]} \mbox{Cat}(\cdot|\bpi_{n,t-1}^N(\bw_n, \theta))$ and:
    $$
    \widetilde{\bx}_{n,t}^N|\widetilde{\bX}_{t-1}^N,\bW^N \sim \mbox{Cat} \left ( \cdot | \left[ (\widetilde{\bx}_{n,t-1}^N)^\top K_{\widetilde{\boeta}_{t-1}^N(\bw_n,\theta)}(\bw_n,\theta) \right ]^\top \right )  \quad \text{for } n \in [N],
    $$ 
    then $\widetilde{\bX}_{t}^N|\bW^N \sim \bigotimes_{n \in [N]} \mbox{Cat}(\cdot|\bpi_{n,t|t-1}^N(\bw_n,\theta))$ with:
    $$
    \bpi_{n,t|t-1}^N(\bw_n,\theta) =  \left (\bpi_{n,t-1}^N(\bw_n, \theta)^{\top} K_{\widetilde{\boeta}_{t-1}^N(\bw_n,\theta)}(\bw_n,\theta) \right )^{\top} \quad \text{for } n \in [N].
    $$
\end{proposition}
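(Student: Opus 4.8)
The plan is to compute the probability mass function of $\tilde{\bX}_t^N$ directly, by marginalizing the one-step transition over $\tilde{\bX}_{t-1}^N$, and to show that the result factorizes across individuals. First I would fix an arbitrary target configuration $\bX_t = (\bx_{1,t},\dots,\bx_{N,t}) \in \mathbb{O}_M^N$ and write, by the law of total probability,
\begin{equation}
\mathbb{P}(\tilde{\bX}_t^N = \bX_t \mid \bW^N) = \sum_{\bX_{t-1} \in \mathbb{O}_M^N} \mathbb{P}(\tilde{\bX}_t^N = \bX_t \mid \tilde{\bX}_{t-1}^N = \bX_{t-1}, \bW^N)\, \mathbb{P}(\tilde{\bX}_{t-1}^N = \bX_{t-1} \mid \bW^N).
\end{equation}
By hypothesis the second factor equals $\prod_{n \in [N]} \bx_{n,t-1}^\top \bpi_{n,t-1}^N(\bw_n,\theta)$. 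The conceptual crux is that the interaction term $\tilde{\boeta}_{t-1}^N(\bw_n,\theta) = \eta^N(\bw_n,\theta,\bW^N,\bPi_{t-1}^N)$ is a deterministic function of the non-random probability vectors $\bPi_{t-1}^N$ rather than of the random states $\tilde{\bX}_{t-1}^N$; this is precisely what decouples the individuals in the approximate model. Consequently, given $\tilde{\bX}_{t-1}^N$ and $\bW^N$, the transitions are conditionally independent across $n$, so the first factor equals $\prod_{n \in [N]} \bx_{n,t-1}^\top K_{\tilde{\boeta}_{t-1}^N(\bw_n,\theta)}(\bw_n,\theta)\, \bx_{n,t}$.

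Next I would substitute both products into the sum and interchange the sum over $\mathbb{O}_M^N$ with the product over $n$ by the distributive law: since the summand is a product of per-individual terms and each $\bx_{n,t-1}$ ranges over $\mathbb{O}_M$ independently,
\begin{equation}
\mathbb{P}(\tilde{\bX}_t^N = \bX_t \mid \bW^N) = \prod_{n \in [N]} \sum_{\bx_{n,t-1} \in \mathbb{O}_M} \left( \bx_{n,t-1}^\top K_{\tilde{\boeta}_{t-1}^N(\bw_n,\theta)}(\bw_n,\theta)\, \bx_{n,t} \right) \left( \bx_{n,t-1}^\top \bpi_{n,t-1}^N(\bw_n,\theta) \right).
\end{equation}
For each fixed $n$, I would evaluate the inner sum using the one-hot structure of $\mathbb{O}_M$: writing $\bx_{n,t-1}$ as the $i$-th standard basis vector and summing over $i \in [M]$ collapses the expression to $\bpi_{n,t-1}^N(\bw_n,\theta)^\top K_{\tilde{\boeta}_{t-1}^N(\bw_n,\theta)}(\bw_n,\theta)\, \bx_{n,t}$, which is exactly $\bx_{n,t}^\top \bpi_{n,t|t-1}^N(\bw_n,\theta)$ with $\bpi_{n,t|t-1}^N(\bw_n,\theta)$ as defined in the statement. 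Recognizing each factor as $\mbox{Cat}(\bx_{n,t} \mid \bpi_{n,t|t-1}^N(\bw_n,\theta))$ then yields the claimed product form for the law of $\tilde{\bX}_t^N$.

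Finally, to confirm the statement is well-posed I would note that $\bpi_{n,t|t-1}^N(\bw_n,\theta)$ is a genuine element of $\Delta_M$: since $\bpi_{n,t-1}^N(\bw_n,\theta) \in \Delta_M$ and $K_{\tilde{\boeta}_{t-1}^N(\bw_n,\theta)}(\bw_n,\theta)$ is row-stochastic, the vector $\bpi_{n,t-1}^N(\bw_n,\theta)^\top K_{\tilde{\boeta}_{t-1}^N(\bw_n,\theta)}(\bw_n,\theta)$ is a convex combination of the rows of $K$ and hence a probability vector. I do not expect a genuine obstacle here: the argument is essentially routine marginalization and bookkeeping, and the single point worth emphasizing is that replacing $\boeta$ by the deterministic $\tilde{\boeta}$ is what removes the inter-individual coupling present in the exact model, so that the product-of-categoricals form is preserved through the prediction step.
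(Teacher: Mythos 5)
Your proposal is correct and follows essentially the same route as the paper's proof: marginalize over $\tilde{\bX}_{t-1}^N$, exploit the factorization of both the prior and the conditionally independent transitions to interchange sum and product, and collapse the per-individual sum over one-hot vectors to obtain $\bpi_{n,t-1}^N(\bw_n,\theta)^\top K_{\tilde{\boeta}_{t-1}^N(\bw_n,\theta)}(\bw_n,\theta)\,\bx_{n,t}$. Your added remarks on the deterministic nature of $\tilde{\boeta}_{t-1}^N$ and the row-stochasticity guaranteeing $\bpi_{n,t|t-1}^N \in \Delta_M$ are sound but not needed beyond what the paper records.
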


\begin{proof}
    We just want to compute the marginal at time $t$ after applying the approximate transition kernel:
    \begin{equation}
            \begin{split}
            &\mathbb{P}(\widetilde{\bX}_t^N = \widetilde{X}_t^N|\bW^N) 
                \\
                &= 
                \sum_{X^N = (x_{1}, \dots, x_{N})} \mathbb{P}(\widetilde{\bX}_{t-1}^N=X^N|\bW^N)\mathbb{P}(\widetilde{\bX}_t^N = \widetilde{X}_t^N |\bW^N,\widetilde{\bX}_{t-1}^N = X^N) \\
                &= 
                \sum_{X^N = (x_{1}, \dots, x_{N})} \prod_{n \in [N]} \left [ \left( x_{n}^\top \bpi_{n,t-1}^N(\bw_n, \theta) \right) \left ( x_{n}^\top K_{\widetilde{\boeta}_{t-1}^N(\bw_n,\theta)}(\bw_n,\theta) \widetilde{x}_{n,t}\right )\right ]\\
                &=
                \prod_{n \in [N]} \sum_{x_{n}} \left [ \left( x_{n}^\top \bpi_{n,t-1}^N(\bw_n, \theta) \right) \left ( x_{n}^\top K_{\widetilde{\boeta}_{t-1}^N(\bw_n,\theta)}(\bw_n,\theta) \widetilde{x}_{n,t}^N \right )\right ] 
                \\
                &=
                \prod_{n \in [N]} \sum_{i=1}^M \left [\bpi_{n,t-1}^N(\bw_n, \theta)^{(i)} K_{\widetilde{\boeta}_{t-1}^N(\bw_n,\theta)}(\bw_n,\theta)^{(i,\cdot)} \widetilde{x}_{n,t}^N \right ] \\
                & = \prod_{n \in [N]} \left [ \bpi_{n,t-1}^N(\bw_n, \theta)^{\top}  K_{\widetilde{\boeta}_{t-1}^N(\bw_n,\theta)}(\bw_n,\theta) \right ]^\top \widetilde{x}_{n,t}^N.
            \end{split}
    \end{equation}
\end{proof}

The next proposition considers the correction step, where we perform a Bayes update with a factorized categorical distribution prior and a likelihood that is given by the emission distribution of the individual-based model. As the posterior distribution is in the family of the prior distribution, we say that the factorized categorical distribution is a conjugate prior for the individual-based model observation model.

\begin{proposition}\label{prop:CAL_update_conj}
    If $\widetilde{\bX}_{t}^N|\bW^N \sim \bigotimes_{n \in [N]} \mbox{Cat}(\cdot|\bpi_{n,t|t-1}^N(\bw_n,\theta))$ and:
    $$
    \widetilde{\by}_{n,t}^N|\widetilde{\bx}_{n,t}^N,\bw_n \sim \mbox{Cat}\left (\cdot| \left [ (\widetilde{\bx}_{n,t}^N)^\top  G(\bw_n,\theta) \right ]^\top \right ) \quad \text{for } n \in [N],
    $$
    then $\widetilde{\bY}_{t}^N|\bW^N \sim \bigotimes_{n \in [N]} \mbox{Cat}(\cdot|\bmu_{n,t}^N(\bw_n,\theta))$ with:
    \begin{equation}
        \bmu_{n,t}^N(\bw_n,\theta) = \left [ \bpi_{n,t|t-1}^N(\bw_n,\theta)^\top  G(\bw_n,\theta) \right ]^\top \quad \text{for } n \in [N],
    \end{equation}
    and $\widetilde{\bX}_{t}^N|\widetilde{\bY}_{t}^N, \bW^N \sim \bigotimes_{n \in [N]} \mbox{Cat}(\cdot|\bpi_{n,t}^N(\bw_n,\theta))$ with:
    \begin{equation}
        \bpi_{n,t}^N(\bw_n,\theta) = \bpi_{n,t|t-1}^N(\bw_n,\theta) \odot \left \{  \left [  G(\bw_n,\theta) \oslash  \left ( 1_M \bmu_{n,t}^N(\bw_n,\theta)^\top \right ) \right ] \widetilde{\by}_{n,t}^N \right \} \quad \text{for } n \in [N].
    \end{equation} 
\end{proposition}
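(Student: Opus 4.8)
The plan is to mirror the proof of Proposition \ref{prop:CAL_prediction_conj} and reduce everything to a single-individual Bayes update by exploiting the factorized structure. Since the prior $\bigotimes_{n \in [N]} \mbox{Cat}(\cdot|\bpi_{n,t|t-1}^N(\bw_n,\theta))$ on $\tilde{\bX}_t^N$ factorizes across $n$, and the emission model is conditionally independent across individuals given $\tilde{\bX}_t^N$, the joint law of $(\tilde{\bX}_t^N,\tilde{\bY}_t^N)$ given $\bW^N$ is a product over $n$. Consequently both the marginal of $\tilde{\bY}_t^N$ and the posterior of $\tilde{\bX}_t^N$ given $\tilde{\bY}_t^N$ factorize, so it suffices to establish the two formulae for a single, fixed $n$.

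First, for the marginal, I would compute the law of $\tilde{\by}_{n,t}^N$ by summing over the one-hot latent state $\tilde{\bx}_{n,t}^N \in \mathbb{O}_M$:
$$
\mathbb{P}(\tilde{\by}_{n,t}^N = y \mid \bw_n) = \sum_{x \in \mathbb{O}_M} \left( x^\top \bpi_{n,t|t-1}^N(\bw_n,\theta) \right) \left( x^\top G(\bw_n,\theta)\, y \right).
$$
Because each $x \in \mathbb{O}_M$ selects a single row, the sum collapses to $\sum_{i=1}^M \bpi_{n,t|t-1}^N(\bw_n,\theta)^{(i)} G(\bw_n,\theta)^{(i,\cdot)} y = \left[\bpi_{n,t|t-1}^N(\bw_n,\theta)^\top G(\bw_n,\theta)\right] y = \bmu_{n,t}^N(\bw_n,\theta)^\top y$, which is exactly $\mbox{Cat}(y \mid \bmu_{n,t}^N(\bw_n,\theta))$ with the claimed $\bmu_{n,t}^N$.

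Next, for the posterior, I would apply Bayes' rule. Writing the latent one-hot $x$ with $x^{(i)}=1$ and the observed one-hot $y$ with $y^{(j)}=1$, the numerator and the denominator computed above give
$$
\mathbb{P}(\tilde{\bx}_{n,t}^N = x \mid \tilde{\by}_{n,t}^N = y,\bw_n) = \frac{\bpi_{n,t|t-1}^N(\bw_n,\theta)^{(i)}\, G(\bw_n,\theta)^{(i,j)}}{\bmu_{n,t}^N(\bw_n,\theta)^{(j)}}.
$$
It then remains to verify that this componentwise expression coincides with the stated vectorized update. The matrix $1_M\,\bmu_{n,t}^N(\bw_n,\theta)^\top$ repeats the row $\bmu_{n,t}^N(\bw_n,\theta)^\top$ in each of its $M$ rows, so $G(\bw_n,\theta)\oslash\big(1_M\,\bmu_{n,t}^N(\bw_n,\theta)^\top\big)$ has $(i,j)$-entry $G(\bw_n,\theta)^{(i,j)}/\bmu_{n,t}^N(\bw_n,\theta)^{(j)}$; right-multiplying by the one-hot $\tilde{\by}_{n,t}^N$ selects column $j$, and the final Hadamard product with $\bpi_{n,t|t-1}^N(\bw_n,\theta)$ reproduces the $i$-th component above, giving $\bpi_{n,t}^N(\bw_n,\theta)$. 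Factorization of the joint posterior across $n$ then follows since both prior and likelihood factorize.

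The computation is elementary conjugacy, so the main thing to watch is the bookkeeping of the vectorized notation — in particular unpacking $G\oslash(1_M\,\bmu^\top)$ and checking that right-multiplication by the one-hot $\tilde{\by}_{n,t}^N$ extracts precisely the selected column — together with well-definedness under the $\tfrac{0}{0}=0$ convention. Here I would observe that whenever $\bmu_{n,t}^N(\bw_n,\theta)^{(j)}=0$ the observation $y^{(j)}=1$ has zero probability, so on the support of $\tilde{\by}_{n,t}^N$ the selected column has a strictly positive denominator; the convention only governs the unselected columns, which are annihilated by the one-hot $\tilde{\by}_{n,t}^N$. Hence the posterior is well-defined $\mathbb{P}$-almost surely.
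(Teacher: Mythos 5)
Your proposal is correct and follows essentially the same route as the paper's proof: marginalize the one-hot latent state to obtain the categorical marginal $\bmu_{n,t}^N(\bw_n,\theta)$, then apply Bayes' rule componentwise and unpack the vectorized $G(\bw_n,\theta)\oslash\bigl(1_M\,\bmu_{n,t}^N(\bw_n,\theta)^\top\bigr)$ notation; reducing to a single individual up front rather than carrying the product over $n$ through the computation is only a presentational difference. Your added remark on well-definedness under the $\tfrac{0}{0}=0$ convention is a sensible touch that the paper defers to its separate well-definedness results.
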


\begin{proof}
    We start by computing the marginal likelihood:
    \begin{equation}
        \begin{split}
            \mathbb{P}(\widetilde{\bY}_t^N=\widetilde{Y}_t^N|\bW^N) 
            &= 
            \sum_{X^N=(x_1,\dots,x_N)} \mathbb{P}(\widetilde{\bY}_t^N=\widetilde{Y}_t^N |\bW^N, \widetilde{\bX}_t^N = X^N) \mathbb{P}(\widetilde{\bX}_t^N = X^N|\bW^N)\\
            &= 
            \sum_{X^N=(x_1,\dots,x_N)} \prod_{n \in [N]}  \left ( x_{n}^\top  G(\bw_n,\theta) \widetilde{y}_{n,t}^N \right ) \left ( x_{n}^\top \bpi_{n,t|t-1}^N(\bw_n,\theta) \right ) \\
            &= 
            \prod_{n \in [N]} \sum_{x_{n}}   \left ( x_{n}^\top G(\bw_n,\theta) \widetilde{y}_{n,t}^N \right ) \left ( x_{n}^\top \bpi_{n,t|t-1}^N(\bw_n,\theta) \right )\\
            &= 
            \prod_{n \in [N]} \sum_{i=1}^M   \left (  G(\bw_n,\theta)^{(i,\cdot)} \widetilde{y}_{n,t}^N \right )\bpi_{n,t|t-1}^N(\bw_n,\theta)^{(i)}\\
            &= \prod_{n \in [N]} \left [ \bpi_{n,t|t-1}^N(\bw_n,\theta)^\top  G(\bw_n,\theta)  \right ] \widetilde{y}_{n,t}^N,
        \end{split}
    \end{equation}
    showing that the observations are marginally distributed as categorical distributions with the desired parameters.  We can now compute the posterior distribution using Bayes theorem:
    \begin{equation}
        \begin{split}
            \mathbb{P}&(\widetilde{\bX}_t^N=\widetilde{X}_t^N|\bW^N,\widetilde{\bY}_t^N=\widetilde{Y}_t^N) 
            = 
            \frac{\mathbb{P}(\widetilde{\bY}_t^N=\widetilde{Y}_t^N|\bW^N,\widetilde{\bX}_t^N=\widetilde{X}_t^N) \mathbb{P}(\widetilde{\bX}_t^N=\widetilde{X}_t^N|\bW^N)}{\mathbb{P}(\widetilde{\bY}_t^N=\widetilde{Y}_t^N|\bW^N)}\\
            &= 
            \prod_{n \in [N]} \frac{  \left((\widetilde{x}_{n,t}^N)^\top  G(\bw_n,\theta) \widetilde{y}_{n,t}^N \right ) \left ( (\widetilde{x}_{n,t}^N)^\top \bpi_{n,t|t-1}^N(\bw_n,\theta) \right )}{\bmu_{n,t}^N(\bw_n,\theta)^\top \widetilde{y}_{n,t}^N} \\
            &= 
            \prod_{n \in [N]} \frac{(\widetilde{x}_{n,t}^N)^\top\left( G(\bw_n,\theta) \widetilde{y}_{n,t}^N \odot\ \bpi_{n,t|t-1}^N(\bw_n,\theta) \right )}{\bmu_{n,t}^N(\bw_n,\theta)^\top \widetilde{y}_{n,t}^N}\\
            &= 
            \prod_{n \in [N]} (\widetilde{x}_{n,t}^N)^\top\left [ \left (  G(\bw_n,\theta) \widetilde{y}_{n,t}^N \odot\ \bpi_{n,t|t-1}^N(\bw_n,\theta) \right ) \oslash \left ( \bmu_{n,t}^N(\bw_n,\theta)^\top \widetilde{y}_{n,t}^N \right ) \right ]\\
            &= 
            \prod_{n \in [N]} \sum_{i =1}^M \sum_{j = 1}^M (\widetilde{x}_{n,t}^N)^{(i)}  \frac{G(\bw_n,\theta)^{(i,j)} (\widetilde{y}_{n,t}^N)^{(j)} \bpi_{n,t|t-1}^N(\bw_n,\theta)^{(i)}}{\sum_{k \in [M]} \bmu_{n,t}^N(\bw_n,\theta)^{(k)} (\widetilde{y}_{n,t}^N)^{(k)}}\\
            &=
            \prod_{n \in [N]} \sum_{i =1}^M \sum_{j = 1}^M (\widetilde{x}_{n,t}^N)^{(i)} \bpi_{n,t|t-1}^N(\bw_n,\theta)^{(i)} \frac{G(\bw_n,\theta)^{(i,j)}}{\bmu_{n,t}^N(\bw_n,\theta)^{(j)}}(\widetilde{y}_{n,t}^N)^{(j)}\\
            &=
            \prod_{n \in [N]} \left \{ \bpi_{n,t|t-1}^N(\bw_n,\theta) \odot \left \{  \left [  G(\bw_n,\theta) \oslash  \left ( 1_M \bmu_{n,t}^N(\bw_n,\theta)^\top \right ) \right ] \widetilde{y}_{n,t}^N \right \} \right \}^\top \widetilde{x}_{n,t}^N.
        \end{split}
    \end{equation}
\end{proof}

As a consequence of Proposition \ref{prop:CAL_prediction_conj} and \ref{prop:CAL_update_conj}, we can see the CAL as an exact marginal likelihood for the approximate model:
\begin{align}    
&\widetilde{\bx}^N_{n,0}|\bw_n \sim \mbox{Cat} \left ( \cdot|p_0(\bw_n,\theta) \right ), \\
    &\widetilde{\bx}^N_{n,t}|\widetilde{\bX}_{t-1}^N, \widetilde{\bY}^N_{1:t-1}, \bW^N \sim \mbox{Cat} \left ( \cdot| \left [ (\widetilde{\bx}_{n,t-1}^N)^\top K_{\widetilde{\boeta}_{t-1}^N(\bw_n,\theta)}(\bw_n,\theta) \right ]^\top \right ),\\
    &\widetilde{\by}^N_{n,t}|\widetilde{\bx}^N_{n,t},\bw_n \sim \mbox{Cat} \left ( \cdot| \left [ (\widetilde{\bx}_{n,t}^N)^\top  G(\bw_n,\theta) \right ]^\top \right ),
\end{align}
and we can compute both its joint likelihood and marginal likelihood, see the next proposition. 

\begin{proposition} \label{prop:joint_CAL_distribution}
    Over a time horizon $T$, the marginal likelihood of $\widetilde{\bY}_{1:T}^N$ is given by:
    \begin{equation}      
    p(\widetilde{\bY}_{1:T}^N|\bW^N,\theta) \coloneqq  \prod_{t=1}^T \prod_{n \in [N]} \mbox{Cat}\left (\widetilde{\by}_{n,t}^N| \bmu_{n,t}^N(\bw_n,\theta) \right ),
    \end{equation}
which is the marginal of:
    \begin{equation}
    \begin{split}        
    p(\widetilde{\bX}_{0:T}^N, \widetilde{\bY}_{1:T}^N|\bW^N,\theta) &\coloneqq 
        \prod_{n \in [N]}      (\widetilde{\bx}_{n,t}^N)^\top p_0(\bw_n,\theta) 
        \prod_{t=1}^T \left [ (\widetilde{\bx}_{n,t-1}^N)^\top K_{\widetilde{\boeta}_{t-1}^N(\bw_n, \theta)}(\bw_n, \theta) \right ]\\
        &\qquad \cdot \prod_{t=1}^T \left [ (\widetilde{\bx}_{n,t}^N)^\top G(\bw_n,\theta)\widetilde{\by}_{n,t}^N \right ],
    \end{split}
    \end{equation}
    i.e. $\sum_{\widetilde{\bX}_{0:T}^N} p(\widetilde{\bX}_{0:T}^N, \widetilde{\bY}_{1:T}^N|\bW^N,\theta) = p(\widetilde{\bY}_{1:T}^N|\bW^N,\theta)$.
\end{proposition}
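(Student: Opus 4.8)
The plan is to establish the joint factorization first, and then obtain the marginal by a forward (filtering) induction that invokes the two conjugacy results already proved. The essential observation, which I would make at the outset, is that although the transition kernel $K_{\tilde{\boeta}_{t-1}^N(\bw_n,\theta)}(\bw_n,\theta)$ depends on $\tilde{\boeta}_{t-1}^N(\bw_n,\theta)=\eta^N(\bw_n,\theta,\bW^N,\bPi_{t-1}^N)$ and hence on the past, the array $\bPi_{t-1}^N$ is produced \emph{deterministically} by the recursion \eqref{rec:CAL_rec} from $\bW^N$ and $\tilde{\bY}_{1:t-1}^N$ alone. Consequently, once we condition on $\tilde{\bY}_{1:t-1}^N$ and $\bW^N$, the scalar $\tilde{\boeta}_{t-1}^N(\bw_n,\theta)$ is fixed and $K_{\tilde{\boeta}_{t-1}^N(\bw_n,\theta)}(\bw_n,\theta)$ is a fixed row-stochastic matrix. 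This is exactly the setting in which Proposition \ref{prop:CAL_prediction_conj} and Proposition \ref{prop:CAL_update_conj} were stated, so those results apply verbatim after conditioning.

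For the joint density, I would simply apply the chain rule to the conditional specification of the approximate model, writing
\[
p(\tilde{\bX}_{0:T}^N,\tilde{\bY}_{1:T}^N\mid \bW^N,\theta)= p(\tilde{\bX}_0^N\mid\bW^N,\theta)\prod_{t=1}^T p(\tilde{\bX}_t^N\mid \tilde{\bX}_{t-1}^N,\tilde{\bY}_{1:t-1}^N,\bW^N,\theta)\, p(\tilde{\bY}_t^N\mid \tilde{\bX}_t^N,\bW^N,\theta).
\]
Each conditional factorizes across individuals by the conditional-independence structure of Section \ref{sec:model}, and substituting the categorical densities $(\tilde{\bx}_{n,0}^N)^\top p_0(\bw_n,\theta)$, $(\tilde{\bx}_{n,t-1}^N)^\top K_{\tilde{\boeta}_{t-1}^N(\bw_n,\theta)}(\bw_n,\theta)\tilde{\bx}_{n,t}^N$, and $(\tilde{\bx}_{n,t}^N)^\top G(\bw_n,\theta)\tilde{\by}_{n,t}^N$ reproduces the stated product form.

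For the marginal, I would run a forward induction on $t$ establishing that $\tilde{\bX}_t^N\mid\tilde{\bY}_{1:t}^N,\bW^N\sim\bigotimes_{n}\mbox{Cat}(\cdot\mid\bpi_{n,t}^N(\bw_n,\theta))$ and that $\tilde{\bY}_t^N\mid\tilde{\bY}_{1:t-1}^N,\bW^N\sim\bigotimes_{n}\mbox{Cat}(\cdot\mid\bmu_{n,t}^N(\bw_n,\theta))$. The base case $t=0$ is immediate since $\tilde{\bX}_0^N\mid\bW^N\sim\bigotimes_n\mbox{Cat}(\cdot\mid p_0(\bw_n,\theta))=\bigotimes_n\mbox{Cat}(\cdot\mid\bpi_{n,0}^N(\bw_n,\theta))$. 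For the inductive step, conditioning on $\tilde{\bY}_{1:t-1}^N$ fixes $\tilde{\boeta}_{t-1}^N$ as above, so Proposition \ref{prop:CAL_prediction_conj} gives the prediction law $\bigotimes_n\mbox{Cat}(\cdot\mid\bpi_{n,t|t-1}^N)$, and Proposition \ref{prop:CAL_update_conj} then yields both the one-step observation marginal $\bigotimes_n\mbox{Cat}(\cdot\mid\bmu_{n,t}^N)$ and the updated filter $\bigotimes_n\mbox{Cat}(\cdot\mid\bpi_{n,t}^N)$, matching \eqref{rec:CAL_rec} and closing the induction. Finally, the chain rule $p(\tilde{\bY}_{1:T}^N\mid\bW^N,\theta)=\prod_{t=1}^T p(\tilde{\bY}_t^N\mid\tilde{\bY}_{1:t-1}^N,\bW^N,\theta)$ assembles the per-step marginals into $\prod_{t=1}^T\prod_{n}\mbox{Cat}(\tilde{\by}_{n,t}^N\mid\bmu_{n,t}^N(\bw_n,\theta))$, which is the claimed marginal likelihood; since the filtering marginal is by definition the sum over $\tilde{\bX}_{0:T}^N$ of the joint density, this also gives the asserted identity $\sum_{\tilde{\bX}_{0:T}^N} p(\tilde{\bX}_{0:T}^N,\tilde{\bY}_{1:T}^N\mid\bW^N,\theta)=p(\tilde{\bY}_{1:T}^N\mid\bW^N,\theta)$.

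The main obstacle I anticipate is the bookkeeping around the observation-dependence of $\tilde{\boeta}_{t-1}^N$, so that the two conjugacy propositions — proved treating $\tilde{\boeta}_{t-1}^N$ as a given constant — may be invoked legitimately. The key is a measurability argument: $\tilde{\boeta}_{t-1}^N$ is $\sigma(\bW^N,\tilde{\bY}_{1:t-1}^N)$-measurable via the deterministic recursion \eqref{rec:CAL_rec}, so every distributional statement is to be read as holding conditionally on $\tilde{\bY}_{1:t-1}^N$, after which the kernel is non-random and the conjugacy arguments carry through unchanged. Everything else is routine substitution and telescoping.
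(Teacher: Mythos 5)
Your proposal is correct. The decomposition of the joint via the chain rule is exactly the paper's definition of the approximate model, and your forward induction on the filter/predictive/one-step-marginal laws, assembled by $p(\tilde{\bY}_{1:T}^N\mid\bW^N,\theta)=\prod_{t=1}^T p(\tilde{\bY}_t^N\mid\tilde{\bY}_{1:t-1}^N,\bW^N,\theta)$, delivers both claims. The route differs from the paper's in a meaningful way, though. The paper proves Proposition \ref{prop:joint_CAL_distribution} by a self-contained induction on $T$ that never invokes Propositions \ref{prop:CAL_prediction_conj} and \ref{prop:CAL_update_conj}: it expands $\bmu_{n,T}^N$ back to $\bpi_{n,T-1}^N$, writes $\bpi_{n,T-1}^N$ explicitly as a ratio of sums over latent paths $\tilde{\bx}_{n,0:T-1}^N$ supplied by the inductive hypothesis, and telescopes the resulting products. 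Your version instead conditions on $(\bW^N,\tilde{\bY}_{1:t-1}^N)$, observes that $\tilde{\boeta}_{t-1}^N$ is $\sigma(\bW^N,\tilde{\bY}_{1:t-1}^N)$-measurable so the kernel is a fixed row-stochastic matrix under that conditioning, and then applies the two conjugacy propositions verbatim at each step. That measurability remark is precisely the point that licenses reusing those propositions (which are stated conditionally on $\bW^N$ alone) inside the filtering recursion of a model whose kernel depends on past observations, and it is the one ingredient the paper's direct computation handles implicitly by carrying the full path sums. Your argument is more modular and arguably more transparent about why the CAL is the exact forward algorithm of the approximate model; the paper's is longer but fully explicit and does not rely on a conditional reinterpretation of earlier results. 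Both are valid.
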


\begin{proof}
For ease of presentation throughout the proof we omit conditioning on $\bW^N$ from the notation.  We prove the statement of the proposition by induction on $T$ and start by showing that $p(\widetilde{\bY}_1^N|\theta)$ satisfies the statement:
    \begin{equation}
        \begin{split}
            p(\widetilde{\bY}_1^N|\theta) 
            &= 
            \prod_{n \in [N]} \bmu_{n,1}^N(\bw_n,\theta)^\top \widetilde{\by}_{n,1}^N
            = 
            \prod_{n \in [N]} \bpi_{n,0}^N(\bw_n, \theta)^\top K_{\widetilde{\boeta}_0^N(\bw_n,\theta)}(\bw_n,\theta) G(\bw_n,\theta) \widetilde{\by}_{n,1}^N \\
            &= 
            \prod_{n \in [N]}  \sum_{i, j, k} \bpi_{n,0}^N(\bw_n, \theta)^{(i)} K_{\widetilde{\boeta}_0^N(\bw_n,\theta)}(\bw_n,\theta)^{(i,j)} G(\bw_n,\theta)^{(j,k)} (\widetilde{\by}_{n,1}^N)^{(k)}\\
            &= 
            \prod_{n \in [N]} \sum_{\widetilde{\bx}_{n,0}^N, \widetilde{\bx}_{n,1}^N} \left ( (\widetilde{\bx}_{n,0}^N)^\top \bpi_{n,0}^N(\bw_n, \theta) \right ) \left ( (\widetilde{\bx}_{n,0}^N)^\top K_{\widetilde{\boeta}_0^N(\bw_n,\theta)}(\bw_n,\theta) \widetilde{\bx}_{n,1}^N \right ) \\
            &\qquad\qquad\qquad\qquad \cdot \left ( (\widetilde{\bx}_{n,1}^N)^\top G(\bw_n,\theta) \widetilde{\by}_{n,1}^N \right )\\
            &= 
            \sum_{\widetilde{\bX}_{0}^N, \widetilde{\bX}_{1}^N} \prod_{n \in [N]} \left ( (\widetilde{\bx}_{n,0}^N)^\top \bpi_{n,0}^N(\bw_n, \theta) \right ) \left ( (\widetilde{\bx}_{n,0}^N)^\top K_{\widetilde{\boeta}_0^N(\bw_n,\theta)}(\bw_n,\theta) \widetilde{\bx}_{n,1}^N \right ) \\
            &\qquad\qquad\qquad\qquad \cdot \left ( (\widetilde{\bx}_{n,1}^N)^\top G(\bw_n,\theta) \widetilde{\by}_{n,1}^N \right )\\
            &=
            \sum_{\widetilde{\bX}_{0}^N, \widetilde{\bX}_{1}^N} p(\widetilde{\bX}_{0:1}^N, \widetilde{\bY}_{1}^N|\theta)
        \end{split}
    \end{equation}
    for $\widetilde{\bX}_{t}^N = (\widetilde{\bx}^N_{1,t},\dots,\widetilde{\bx}^N_{N,t})$ with $t=0,1$, which complete the proof for the first time step. From the above we also get:
    \begin{equation}
    \begin{split}
        \bpi_{n,1}^N(\bw_n,
        \theta) 
        &=
        \left [ \sum\limits_{\widetilde{\bx}_{n,0}^N} \left ( (\widetilde{\bx}_{n,0}^N)^\top \bpi_{n,0}^N(\bw_n, \theta) \right ) \odot \left ( (\widetilde{\bx}_{n,0}^N)^\top K_{\widetilde{\boeta}_0^N(\bw_n,\theta)}(\bw_n,\theta) \right )^\top \right ]\\
        &\qquad \oslash
        \left [ \sum\limits_{\widetilde{\bx}_{n,0}^N, \widetilde{\bx}_{n,1}^N} \left ( (\widetilde{\bx}_{n,0}^N)^\top \bpi_{n,0}^N(\bw_n, \theta) \right ) \odot \left ( (\widetilde{\bx}_{n,0}^N)^\top K_{\widetilde{\boeta}_0^N(\bw_n,\theta)}(\bw_n,\theta) \right )^\top \right ]\\
        &\qquad \odot \left ( G(\bw_n,\theta) \widetilde{\by}_{n,1}^N \right ).
    \end{split}
    \end{equation}
    
    Now assume that the statement is valid for $T-1$:
    \begin{equation}
        \begin{split}
            p(\widetilde{\bY}_{1:T-1}^N|\theta) 
            &=
            \sum_{\widetilde{\bX}_{0:T-1}^N} \prod_{n \in [N]} \left [ (\widetilde{\bx}_{n,0}^N)^\top \bpi_{n,0}^N(\bw_n, \theta) \prod_{t=1}^{T-1} \left ( (\widetilde{\bx}_{n,t-1}^N)^\top K_{\widetilde{\boeta}_{t-1}^N(\bw_n,\theta)}(\bw_n,\theta) \widetilde{\bx}_{n,t}^N \right )\right.\\
            & \left. \qquad\qquad\qquad\qquad \cdot \left ( (\widetilde{\bx}_{n,t}^N)^\top G(\bw_n,\theta) \widetilde{\by}_{n,t}^N \right ) \right ]\\
            &= 
            \prod_{n \in [N]} \sum_{\widetilde{\bx}_{n,0:T-1}^N} (\widetilde{\bx}_{n,0}^N)^\top \bpi_{n,0}^N(\bw_n, \theta) \prod_{t=1}^{T-1} \left ( (\widetilde{\bx}_{n,t-1}^N)^\top K_{\widetilde{\boeta}_{t-1}^N(\bw_n,\theta)}(\bw_n,\theta) \widetilde{\bx}_{n,t}^N \right )\\
            & \qquad\qquad\qquad\qquad \cdot  \left ( (\widetilde{\bx}_{n,t}^N)^\top G(\bw_n,\theta) \widetilde{\by}_{n,t}^N \right ),
        \end{split}
    \end{equation}
    from which we also get:
    \begin{equation}
        \begin{split}
            &\bpi_{n,T-1}^N(\bw_n, \theta) = 
            \left \{ \sum_{\widetilde{\bx}_{n,0:T-2}^N} \left [ (\widetilde{\bx}_{n,0}^N)^\top \bpi_{n,0}^N(\bw_n, \theta) \prod_{t=1}^{T-2} \left ( (\widetilde{\bx}_{n,t-1}^N)^\top K_{\widetilde{\boeta}_{t-1}^N(\bw_n,\theta)}(\bw_n,\theta) \widetilde{\bx}_{n,t}^N \right )\right.\right.\\
            & \left. \left. \qquad\qquad\qquad\qquad \cdot  \left ( (\widetilde{\bx}_{n,t}^N)^\top G(\bw_n,\theta) \widetilde{\by}_{n,t}^N \right ) \right ] \left ( (\widetilde{\bx}_{n,T-2}^N)^\top K_{\widetilde{\boeta}^N_{T-2}(\bw_n,\theta)}(\bw_n,\theta) \right )^\top  \right \}\\
            &\qquad \oslash
            \left [ \sum_{\widetilde{\bx}_{n,0:T-1}^N} (\widetilde{\bx}_{n,0}^N)^\top \bpi_{n,0}^N(\bw_n, \theta) \prod_{t=1}^{T-1} \left ( (\widetilde{\bx}_{n,t-1}^N)^\top K_{\widetilde{\boeta}_{t-1}^N(\bw_n,\theta)}(\bw_n,\theta) \widetilde{\bx}_{n,t}^N \right ) \right. \\
            & \left. \qquad\qquad\qquad\qquad \cdot  \left ( (\widetilde{\bx}_{n,t}^N)^\top G(\bw_n,\theta) \widetilde{\by}_{n,t}^N \right ) \right ] \odot G(\bw_n,\theta) \widetilde{\by}_{n,T-1}^N.
        \end{split}
    \end{equation}
    Consider now time $T$:
    \begin{equation}
        \begin{split}
            p(\widetilde{\bY}_{1:T}^N|\theta) 
            &= 
            p(\widetilde{\bY}_{1:T-1}^N|\theta) \prod_{n \in [N]} \bmu_{n,t}^N(\bw_n,\theta)^\top \widetilde{\by}_{n,T}^N\\
            &= 
            p(\widetilde{\bY}_{1:T-1}^N|\theta) \prod_{n \in [N]} \bpi_{n,t-1}^N(\bw_n, \theta)^\top K_{\widetilde{\boeta}_{t-1}^N(\bw_n,\theta)}(\bw_n,\theta) G(\bw_n,\theta) \widetilde{\by}_{n,T}^N\\
            &= 
            p(\widetilde{\bY}_{1:T-1}^N|\theta) \prod_{n \in [N]} \sum_{\widetilde{\bx}_{n,T-1}^N, \widetilde{\bx}_{n,T}^N} \left ( (\widetilde{\bx}_{n,T-1}^N)^\top \bpi_{n,t-1}^N(\bw_n, \theta) \right ) \\
            & \qquad\qquad\qquad\qquad \cdot   \left ( (\widetilde{\bx}_{n,T-1}^N)^\top K_{\widetilde{\boeta}_{t-1}^N(\bw_n,\theta)}(\bw_n,\theta) \widetilde{\bx}_{n,T}^N \right ) \left ( (\widetilde{\bx}_{n,T}^N)^\top G(\bw_n,\theta) \widetilde{\by}_{n,T}^N\right ),
        \end{split}
    \end{equation}
    from which we remark that:
    \begin{equation}
        \begin{split}
            &(\widetilde{\bx}_{n,T-1}^N)^\top \bpi_{n,T-1}^N(\bw_n, \theta)\\
            &=
            \left \{ \sum_{\widetilde{\bx}_{n,0:T-2}^N} \left [ (\widetilde{\bx}_{n,0}^N)^\top \bpi_{n,0}^N(\bw_n, \theta) \prod_{t=1}^{T-2} \left ( (\widetilde{\bx}_{n,t-1}^N)^\top K_{\widetilde{\boeta}_{t-1}^N(\bw_n,\theta)}(\bw_n,\theta) \widetilde{\bx}_{n,t}^N \right )\right.\right.\\
            & \left. \left. \qquad\qquad\qquad\qquad \cdot  \left ( (\widetilde{\bx}_{n,t}^N)^\top G(\bw_n,\theta) \widetilde{\by}_{n,t}^N \right ) \right ] \left ( (\widetilde{\bx}_{n,T-2}^N)^\top K_{\widetilde{\boeta}^N_{T-2}(\bw_n,\theta)}(\bw_n,\theta)  \widetilde{\bx}_{n,T-1}^N \right )  \right \}\\
            &\qquad \slash
            \left [ \sum_{\widetilde{\bx}_{n,0:T-1}^N} (\widetilde{\bx}_{n,0}^N)^\top \bpi_{n,0}^N(\bw_n, \theta) \prod_{t=1}^{T-1} \left ( (\widetilde{\bx}_{n,t-1}^N)^\top K_{\widetilde{\boeta}_{t-1}^N(\bw_n,\theta)}(\bw_n,\theta) \widetilde{\bx}_{n,t}^N \right ) \right. \\
            & \left. \qquad\qquad\qquad\qquad \cdot  \left ( (\widetilde{\bx}_{n,t}^N)^\top G(\bw_n,\theta) \widetilde{\by}_{n,t}^N \right ) \right ]\\
            &\qquad\cdot (\widetilde{\bx}_{n,T-1}^N)^\top G(\bw_n,\theta) \widetilde{\by}_{n,T-1}^N.
        \end{split}
    \end{equation}
    As from our inductive hypothesis we have:
    \begin{equation}
        \begin{split}
            p(\widetilde{\bY}_{1:T-1}^N|\theta) 
            &= 
            \prod_{n \in [N]} \sum_{\widetilde{\bx}_{n,0:T-1}^N} (\widetilde{\bx}_{n,0}^N)^\top \bpi_{n,0}^N(\bw_n, \theta) \prod_{t=1}^{T-1} \left ( (\widetilde{\bx}_{n,t-1}^N)^\top K_{\widetilde{\boeta}_{t-1}^N(\bw_n,\theta)}(\bw_n,\theta) \widetilde{\bx}_{n,t}^N \right )\\
            & \qquad\qquad\qquad\qquad \cdot  \left ( (\widetilde{\bx}_{n,t}^N)^\top G(\bw_n,\theta) \widetilde{\by}_{n,t}^N \right ),
        \end{split}
    \end{equation}
    which is the denominator in $(\widetilde{\bx}_{n,T-1}^N)^\top \bpi_{n,t-1}^N(\bw_n, \theta)$, we can conclude:
    \begin{equation}
        \begin{split}
            p(\widetilde{\bY}_{1:t}^N|\theta) &= 
            p(\widetilde{\bY}_{1:T-1}^N|\theta)\\
            &\quad \cdot \prod_{n \in [N]} \sum_{\widetilde{\bx}_{n,T-1}^N, \widetilde{\bx}_{n,T}^N} \left ( (\widetilde{\bx}_{n,T-1}^N)^\top \bpi_{n,t-1}^N(\bw_n, \theta) \right ) \left ( (\widetilde{\bx}_{n,T-1}^N)^\top K_{\widetilde{\boeta}_{t-1}^N(\bw_n,\theta)}(\bw_n,\theta) \widetilde{\bx}_{n,T}^N \right )\\
            &\qquad\qquad
            \cdot \left ( (\widetilde{\bx}_{n,T}^N)^\top G(\bw_n,\theta) \widetilde{\by}_{n,T}^N\right )\\
            &= 
            \prod_{n \in [N]} \sum_{\widetilde{\bx}_{n,T-1}^N, \widetilde{\bx}_{n,T}^N} \sum_{\widetilde{\bx}_{n,0:T-2}^N}  (\widetilde{\bx}_{n,0}^N)^\top \bpi_{n,0}^N(\bw_n, \theta)\\
            &\qquad\qquad \cdot
            \prod_{t=1}^{T-2} \left ( (\widetilde{\bx}_{n,t-1}^N)^\top K_{\widetilde{\boeta}_{t-1}^N(\bw_n,\theta)}(\bw_n,\theta) \widetilde{\bx}_{n,t}^N \right ) \left ( (\widetilde{\bx}_{n,t}^N)^\top G(\bw_n,\theta) \widetilde{\by}_{n,t}^N \right )\\
            &\qquad\qquad \cdot 
            \left ( (\widetilde{\bx}_{n,T-2}^N)^\top K_{\widetilde{\boeta}^N_{T-2}(\bw_n,\theta)}(\bw_n,\theta)  \widetilde{\bx}_{n,T-1}^N \right ) \left ( (\widetilde{\bx}_{n,T-1}^N)^\top G(\bw_n,\theta) \widetilde{\by}_{n,T-1}^N \right ) \\
            &\qquad\qquad\cdot \left ( (\widetilde{\bx}_{n,T-1}^N)^\top K_{\widetilde{\boeta}_{t-1}^N(\bw_n,\theta)}(\bw_n,\theta) \widetilde{\bx}_{n,T}^N \right ) \left ( (\widetilde{\bx}_{n,T}^N)^\top G(\bw_n,\theta) \widetilde{\by}_{n,T}^N\right )\\
            &= 
            \prod_{n \in [N]} \sum_{\widetilde{\bx}_{n,0:T}^N}  (\widetilde{\bx}_{n,0}^N)^\top \bpi_{n,0}^N(\bw_n, \theta)\\
            &\qquad\qquad \cdot
            \prod_{t=1}^{T} \left ( (\widetilde{\bx}_{n,t-1}^N)^\top K_{\widetilde{\boeta}_{t-1}^N(\bw_n,\theta)}(\bw_n,\theta) \widetilde{\bx}_{n,t}^N \right ) \left ( (\widetilde{\bx}_{n,t}^N)^\top G(\bw_n,\theta) \widetilde{\by}_{n,t}^N \right )\\
            &=
            \sum_{\widetilde{\bX}_{0:T}^N} p(\widetilde{\bX}_{0:T}^N, \widetilde{\bY}_{1:T}^N|\theta),
        \end{split}
    \end{equation}
    which completes the proof.
\end{proof}

\section{Consistency of the maximum CAL estimator}  \label{suppsec:consistency_section}

With the definition:
\begin{equation}\label{eq:sup_log_cal}
\ell_t^N(\theta) \coloneqq \sum_{n \in [N]} \log \left [ \left ( \by_{n,t}^N \right )^\top \bmu_{n,t}^N(\bw_n,\theta) \right ],
\end{equation}
our ultimate goal is to prove Theorem \ref{thm:consistency}, which establishes consistency of the maximum CAL estimator:
\begin{equation} \label{eq:maxCAL_estimator}
\begin{split}
    \hat{\theta}^N \coloneqq 
    \argmax_{\theta \in \Theta} \sum_{t=1}^T \frac{1}{N} \ell_t^N(\theta),\\
\end{split}
\end{equation}
in the sense that $\hat{\theta}^N$ converges to $\Theta^\star$ as $N\to \infty$, $\mathbb{P}$-almost surely, where $\Theta^\star\subset\Theta$ is a set of parameter values which are, in a sense to be made precise, equivalent to the DGP $\theta^\star$.

The main steps are:
\begin{enumerate}
    \item prove almost sure pointwise in $\theta$ convergence of $\sum_{t=1}^T \frac{1}{N} \ell_t^N(\theta)-\frac{1}{N} \ell_t^N(\theta^\star)$ to a contrast function $\mathcal{C}(\theta,\theta^\star)$, this is the subject of Theorem \ref{thm:constrast_function_conv};
    \item Use stochastic equi-continuity \cite{andrews1992generic} to prove that the almost sure convergence $\sum_{t=1}^T \frac{1}{N} \ell_t^N(\theta)-\frac{1}{N} \ell_t^N(\theta^\star) \rightarrow \mathcal{C}(\theta,\theta^\star)$  is uniform in $\theta$, this is the subject of Lemma \ref{lem:stochequnfiorm};
    \item prove Theorem \ref{thm:consistency}, and so the convergence of the maximum CAL estimator to a set of maximizers $\Theta^\star$;
    \item Characterize the set of maximizers of the contrast function $\Theta^\star$ and prove that $\Theta^\star$ contains $\theta^\star$; this is the subject of Theorem \ref{thm:constrast_function_conv} and Lemma \ref{lem:identifiability}.
\end{enumerate}

Step 1. is by far the most complicated of the three. It involves proving $L^4$ bounds for averages across the population in the data-generating process (Section \ref{sec:asympt_DGP}),  averages of various quantities computed in the CAL algorithm (Section \ref{sec:asympt_CAL}), and comparison to averages across what we call the \emph{saturated processes} and \emph{saturated CAL algorithm} (Section \ref{sec:limiting_quantities_support}), which are processes we construct for purposes of our proofs in which members of the population are statistically decoupled. All of these ingredients are then combined to establish convergence to the contrast function (Theorem \ref{thm:constrast_function_conv}).

\subsection{Preliminaries}

\subsubsection{\texorpdfstring{$L^4$}{L4} bound for conditionally independent random variables}
We state and prove a result from \cite{whitehouse2023consistent}, which is useful to find $L^4$ bounds of averages of random variables that are conditionally independent, bounded, and mean zero. This is going to be one of the main building blocks in our proof strategy.

\begin{lemma}\label{lemma:mean_0_bound}
    Consider a collection of random variable $\boldsymbol{\delta}_n$ with $n \in [N]$. Assume that given a filtration $\mathcal{F}$ the random variables $\boldsymbol{\delta}_1,\dots,\boldsymbol{\delta}_N$ are conditionally independent, bounded by a constant $B < \infty$, i.e. $|\boldsymbol{\delta}_n| \leq B$ almost surely, and satisfy $\mathbb{E}[\boldsymbol{\delta}_n|\mathcal{F}] = 0$, then:
    \begin{equation}
        \normiii[\Bigg]{\frac{1}{N} \sum_{n \in [N]} \boldsymbol{\delta}_n}_4 \leq B \sqrt[4]{6}  N^{-\frac{1}{2}}.
    \end{equation}
\end{lemma}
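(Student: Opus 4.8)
The plan is to bound the unconditional fourth moment of the unnormalized sum $S \coloneqq \sum_{n\in[N]} \boldsymbol{\delta}_n$ and then rescale, since by the definition of the $L^4$ norm we have $\normiii{N^{-1}\sum_{n\in[N]}\boldsymbol{\delta}_n}_4^4 = N^{-4}\,\mathbb{E}[S^4]$. By the tower property it suffices to control the conditional fourth moment $\mathbb{E}[S^4\mid\mathcal{F}]$ and then take expectations, so I would first expand
\begin{equation}
\mathbb{E}[S^4\mid\mathcal{F}] = \sum_{i,j,k,l\in[N]} \mathbb{E}[\boldsymbol{\delta}_i\boldsymbol{\delta}_j\boldsymbol{\delta}_k\boldsymbol{\delta}_l\mid\mathcal{F}]
\end{equation}
and classify the summands according to the pattern of coincidences among the indices $i,j,k,l$.

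The key observation is that conditional independence together with $\mathbb{E}[\boldsymbol{\delta}_n\mid\mathcal{F}]=0$ forces any summand in which some index value occurs \emph{exactly once} to vanish: if the index attached to one factor differs from the other three, then conditioning on $\mathcal{F}$ lets me pull that factor's conditional expectation out of the product, and it equals zero. Hence the only surviving summands are those in which every distinct index appears at least twice; with four factors this leaves exactly two patterns, namely all four indices equal ($N$ such tuples) and two distinct indices each appearing twice ($3N(N-1)$ such tuples, the factor $3$ counting the ways to split four positions into two pairs). On each surviving term I bound using $\abs{\boldsymbol{\delta}_n}\le B$ almost surely and, again, conditional independence: the all-equal terms satisfy $\mathbb{E}[\boldsymbol{\delta}_i^4\mid\mathcal{F}]\le B^4$, and the two-pair terms factor as $\mathbb{E}[\boldsymbol{\delta}_a^2\mid\mathcal{F}]\,\mathbb{E}[\boldsymbol{\delta}_b^2\mid\mathcal{F}]\le B^4$.

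Combining these gives $\mathbb{E}[S^4\mid\mathcal{F}]\le \bigl(N+3N(N-1)\bigr)B^4 = (3N^2-2N)B^4 \le 6N^2 B^4$, the crude bound $6N^2$ being more than sufficient. Since this holds uniformly in the conditioning, taking expectations yields $\mathbb{E}[S^4]\le 6N^2B^4$; dividing by $N^4$ and taking fourth roots then produces $\normiii{N^{-1}\sum_{n\in[N]}\boldsymbol{\delta}_n}_4 \le \sqrt[4]{6}\,B\,N^{-\frac{1}{2}}$, as claimed.

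The only delicate point, and it is hardly an obstacle, is the bookkeeping in the second step: one must check that every term containing a singleton index genuinely drops out, which is precisely where both the conditional independence and the conditional mean-zero hypotheses are used, and that the classification and bounds hold pointwise in $\omega$ after conditioning so that the tower property applies cleanly at the end. Everything else reduces to elementary moment estimates.
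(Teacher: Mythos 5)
Your proposal is correct and follows essentially the same route as the paper's proof: expand the fourth moment of the sum, use conditional independence together with the conditional mean-zero property to eliminate every term containing a singleton index, bound the surviving all-equal and two-pair terms by $B^4$, and count them to obtain $\mathbb{E}[S^4\mid\mathcal{F}]\le 6N^2B^4$ before rescaling. The only cosmetic difference is that the paper organizes the expansion via the multinomial theorem rather than over $4$-tuples of indices; your count of $3N(N-1)$ two-pair tuples is in fact slightly sharper than the paper's bound, but both land on the same final constant.
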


\begin{proof}
    From the Multinomial theorem we can see that:
    \begin{equation}
        \left ( \sum_{n \in [N]} \boldsymbol{\delta}_n \right )^4 = \sum\limits_{k_1,\dots,k_N \in \mathbb{N}: k_1+\dots+k_N=4} \binom{4}{k_1,\dots,k_N} \prod_{n \in [N]} (\boldsymbol{\delta}_n)^{k_n},
    \end{equation}
    hence if we compute expectations with respect to the filtration we have:
    \begin{equation}
    \mathbb{E}\left [\left ( \sum_{n \in [N]} \boldsymbol{\delta}_n \right )^4 \Big{|} \mathcal{F} \right ] = \sum\limits_{k_1,\dots,k_N \in \mathbb{N}: k_1+\dots+k_N=4} \binom{4}{k_1,\dots,k_N} \prod_{n \in [N]} \mathbb{E}\left [ (\boldsymbol{\delta}_n)^{k_n} \big{|} \mathcal{F} \right ] ,
    \end{equation}
    because of the conditional independence assumption. Remark that we can combine a maximum of four terms as we are considering a power of 4 and we are constrained to $\sum_{n \in [N]} k_n=4$, meaning that we have only these possible combinations:
    \begin{itemize}
        \item $k_{i_1}+k_{i_2}+k_{i_3}+k_{i_4}=4$;
        \item $k_{i_1}+k_{i_2}+k_{i_3}=4$;
        \item $k_{i_1}+k_{i_2}=4$;
        \item $k_{i_1}=4$;
    \end{itemize}
    with all the other $k$'s being zero. Given that $\mathbb{E}[\boldsymbol{\delta}_n|\mathcal{F}] = 0$ then all the combinations which involve a $k_i=1$ can be safely removed, we then end up with:
    \begin{equation}
    \begin{split}
        \mathbb{E}\left [\left ( \sum_{n \in [N]} \boldsymbol{\delta}_n \right )^4 \Big{|} \mathcal{F} \right ]   &= \sum\limits_{n \in [N]} \mathbb{E}\left [ (\boldsymbol{\delta}_n)^{4} \Big{|} \mathcal{F} \right ]  + \binom{4}{2,2}\sum_{n,n^\prime \in [N], n\neq n^\prime} \mathbb{E}\left [ (\boldsymbol{\delta}_n)^{2} | \mathcal{F} \right ] \mathbb{E} \left [ (\boldsymbol{\delta}_{n^\prime})^{2} | \mathcal{F} \right ]\\
        &\leq \sum\limits_{n \in [N]} (B)^{4} + 6 \sum_{n,n^\prime \in [N], n\neq n^\prime} (B)^{2} (B)^{2}\\
        &= N B^4 + 6 N(N-1) B^4\\
        &= B^4 (N + 6 N^2- 6N) \leq 6 B^4 N^2,
    \end{split}
    \end{equation}
    from which the statement of the Lemma follows by the tower rule, dividing by $N^4$ and exponentiating by $\frac{1}{4}$.
\end{proof}

\subsubsection{Checking the CAL is almost surely well-defined} \label{sec:CAL_well_defined} 

If $(\by_{n,t}^N)^\top \bmu_{n,t}^N(\bw_n,\theta)=0$, then \eqref{eq:sup_log_cal} would evaluate to $\log(0)$. The aim of this section is to prove that the CAL is almost surely well-defined, in the sense that $(\by_{n,t}^N)^\top \bmu_{n,t}^N(\bw_n,\theta)=0$ happens with zero probability no matter the values of $N$ and $\theta$.

The main result is Theorem \ref{thm:CAL_well_definess}, and its proof builds upon propositions \ref{prop:initial_as} - \ref{prop:sequential_welldefined} below, which exploit the recursive nature of the CAL algorithm and the data-generating process. 

\begin{proposition}\label{prop:initial_as}
    Under Assumption \ref{ass:HMM_support}, for any $n \in [N]$ if there exists $\theta \in \Theta, i \in [M]$ such that $ p_0(w_n, \theta)^{(i)}=0$ then $(\bx_{n,0}^N)^{(i)}=0$, $\mathbb{P}(\cdot|\bW^N=W^N)$-almost surely.
\end{proposition}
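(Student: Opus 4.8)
The plan is to reduce the claim to a one-line consequence of the support-invariance Assumption \ref{ass:HMM_support} together with the definition of the data-generating process in Section \ref{sec:data_generating_process}. The key observation is that although the hypothesis supplies a zero entry of $p_0(w_n,\theta)$ for \emph{some} $\theta$, what actually governs the law of $\bx_{n,0}^N$ is the initial distribution evaluated at the data-generating parameter $\theta^\star$; Assumption \ref{ass:HMM_support} is precisely what lets us transfer the information from the former to the latter.

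First I would invoke the initial-distribution clause of Assumption \ref{ass:HMM_support}: for any $w \in \mathbb{W}$ and any $\theta,\theta' \in \Theta$ one has $\textbf{supp}(p_0(w,\theta)) = \textbf{supp}(p_0(w,\theta'))$. Applying this with $w = w_n$, with $\theta$ the parameter supplied by the hypothesis, and with $\theta' = \theta^\star$, the assumption $p_0(w_n,\theta)^{(i)}=0$ reads $i \notin \textbf{supp}(p_0(w_n,\theta))$, and hence $i \notin \textbf{supp}(p_0(w_n,\theta^\star))$, i.e. $p_0(w_n,\theta^\star)^{(i)}=0$.

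Then I would conclude from the definition of the DGP. Conditional on $\bW^N = W^N$, the state is distributed as $\bx_{n,0}^N \sim \mbox{Cat}(\,\cdot \mid p_0(w_n,\theta^\star))$, so by the meaning of the categorical distribution over $\mathbb{O}_M$ the event $(\bx_{n,0}^N)^{(i)}=1$ carries probability exactly $p_0(w_n,\theta^\star)^{(i)}=0$. Since $\bx_{n,0}^N \in \mathbb{O}_M$ has each coordinate valued in $\{0,1\}$, this forces $(\bx_{n,0}^N)^{(i)}=0$, $\mathbb{P}(\,\cdot \mid \bW^N = W^N)$-almost surely.

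There is no genuine obstacle here: all the content lies in recognising that Assumption \ref{ass:HMM_support} is what performs the transfer from an arbitrary $\theta$ to $\theta^\star$, after which the conclusion is immediate from the one-hot encoding convention. The only point requiring a little care is notational, namely ensuring that the fixed covariate value $w_n$ appearing in the hypothesis is understood as the $n$-th coordinate of the conditioning event $\bW^N = W^N$, so that the initial law being evaluated is the correct one.
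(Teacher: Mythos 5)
Your proof is correct and follows essentially the same route as the paper: use the support-invariance clause of Assumption \ref{ass:HMM_support} to pass from $p_0(w_n,\theta)^{(i)}=0$ to $p_0(w_n,\theta^\star)^{(i)}=0$, then read off that the categorical law of $\bx_{n,0}^N$ conditional on $\bW^N=W^N$ assigns zero probability to $(\bx_{n,0}^N)^{(i)}=1$. If anything, your phrasing is slightly cleaner than the paper's, which computes the probability of the event $(\bx_{n,0}^N)^{(i)}=0$ where it plainly means the complementary event.
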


\begin{proof}
    Under Assumption \ref{ass:HMM_support} we have that $p_0(w_n, \theta)^{(i)}=0$ implies $p_0(w_n, \theta^\star)^{(i)}=0$, hence:
    $$
    \mathbb{P} \left ( (\bx_{n,0}^N)^{(i)}=0|\bW^N=W^N \right ) = p_0(w_n, \theta^\star)^{(i)}=0,
    $$
    which concludes the proof. 
\end{proof}

\begin{proposition}\label{prop:prediction_as}
    Under Assumption \ref{ass:HMM_support}, for any $t \geq 1$ and $n \in [N]$ if there exists $\theta \in \Theta, i \in [M]$ such that $\bpi_{n,t-1}^N(w_n, \theta)^{(i)}=0$ implies $(\bx_{n,t-1}^N)^{(i)}=0$,  $\mathbb{P}(\cdot|\bW^N=W^N,\bY_{1:t-1}^N=Y_{1:t-1}^N)$-almost surely, then, there exists $\theta \in \Theta,i \in [M]$ such that $\bpi_{n,t|t-1}^N(w_n,\theta)^{(i)}=0$ implies $(\bx_{n,t}^N)^{(i)}=0$, $\mathbb{P}(\cdot|\bW^N=W^N,\bY_{1:t-1}^N=Y_{1:t-1}^N)$-almost surely.
\end{proposition}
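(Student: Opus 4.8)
The plan is to propagate the zero pattern through a single transition step, exploiting the non-negativity of all entries in the CAL prediction recurrence together with the support-invariance of the transition matrix guaranteed by Assumption \ref{ass:HMM_support}. Fix $n \in [N]$, a parameter $\theta \in \Theta$, and a state index $i \in [M]$ for which $\bpi_{n,t|t-1}^N(w_n,\theta)^{(i)}=0$; I would work throughout conditional on $\{\bW^N = W^N,\, \bY_{1:t-1}^N = Y_{1:t-1}^N\}$, under which $\bpi_{n,t-1}^N(w_n,\theta)$, $\tilde{\boeta}_{t-1}^N(w_n,\theta)$ and $\bpi_{n,t|t-1}^N(w_n,\theta)$ are deterministic functions of $W^N$, $Y_{1:t-1}^N$ and $\theta$.

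First I would unpack the prediction step of the CAL recurrence \eqref{rec:CAL_rec}. Since $\bpi_{n,t|t-1}^N(w_n,\theta)^{(i)} = \sum_{j=1}^M \bpi_{n,t-1}^N(w_n,\theta)^{(j)}\, K_{\tilde{\boeta}_{t-1}^N(w_n,\theta)}(w_n,\theta)^{(j,i)}$ is a sum of non-negative terms, it vanishes if and only if $K_{\tilde{\boeta}_{t-1}^N(w_n,\theta)}(w_n,\theta)^{(j,i)}=0$ for every $j \in \supp{\bpi_{n,t-1}^N(w_n,\theta)}$. Invoking the support-invariance of the transition matrix, namely $\supp{K_{\eta}(w,\theta)} = \supp{K_{\eta'}(w,\theta')}$ for all $\eta,\eta'$ and $\theta,\theta'$, this zero pattern transfers verbatim to the data-generating kernel, giving $K_{\boeta_{t-1}^N(w_n,\theta^\star)}(w_n,\theta^\star)^{(j,i)}=0$ for every $j \in \supp{\bpi_{n,t-1}^N(w_n,\theta)}$.

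Next I would locate the true latent state using the hypothesis of the proposition. Its contrapositive states that the index $j_0$ with $(\bx_{n,t-1}^N)^{(j_0)}=1$ satisfies $j_0 \in \supp{\bpi_{n,t-1}^N(w_n,\theta)}$ almost surely on the conditioning event, so the previous display yields $K_{\boeta_{t-1}^N(w_n,\theta^\star)}(w_n,\theta^\star)^{(j_0,i)}=0$ almost surely. The point worth isolating is that, although $\boeta_{t-1}^N(w_n,\theta^\star)$ depends on the whole population state $\bX_{t-1}^N$, support-invariance in $\eta$ forces the relevant entry to be zero regardless of that value, so no further control of the interaction term is needed.

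Finally, conditioning additionally on $\bX_{t-1}^N$ and using the Markov structure of the data-generating process, $\mathbb{P}\big((\bx_{n,t}^N)^{(i)}=1 \mid \bX_{t-1}^N, \bW^N\big) = K_{\boeta_{t-1}^N(w_n,\theta^\star)}(w_n,\theta^\star)^{(j_0,i)}$, which is zero almost surely on the conditioning event; the tower rule then gives $(\bx_{n,t}^N)^{(i)}=0$, $\mathbb{P}(\cdot \mid \bW^N=W^N, \bY_{1:t-1}^N=Y_{1:t-1}^N)$-almost surely. I do not anticipate a genuine obstacle, as this is a support-tracking argument; the only care required is the correct handling of the two nested conditionings and the observation that invariance of $\supp{K_\eta}$ in $\eta$ removes any dependence on the value of the population-level interaction term.
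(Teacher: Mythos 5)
Your proposal is correct and follows essentially the same route as the paper's proof: expand the prediction step as a non-negative sum, use Assumption \ref{ass:HMM_support} to transfer the vanishing kernel entries to the data-generating kernel uniformly in $\eta$ (so the realized interaction term is irrelevant), invoke the inductive hypothesis to place the true state in $\supp{\bpi_{n,t-1}^N(w_n,\theta)}$, and conclude via the tower rule. The only cosmetic difference is that you argue through the contrapositive restricted to the support, whereas the paper runs an equivalent case split over all $j\in[M]$.
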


\begin{proof}
    Note that for $\theta \in \Theta, i \in [M]$ the following are equivalent:
    \begin{equation}
        \begin{split}
            \bpi_{n,t|t-1}^N(w_n,\theta)^{(i)}=0
            &\iff 
            \sum_{j } \bpi_{n,t-1}^N(w_n, \theta)^{(j)} K_{\widetilde{\boeta}^N_{t-1}(w_n,\theta)}(w_n,\theta)^{(j,i)}=0\\
            &\iff
            \forall j \in [M] \quad \bpi_{n,t-1}^N(w_n, \theta)^{(j)} K_{\widetilde{\boeta}^N_{t-1}(w_n,\theta)}(w_n,\theta)^{(j,i)}=0.
        \end{split}
    \end{equation}
    We then have that for any $j$ either:
    \begin{enumerate}
        \item $\mathbb\bpi_{n,t-1}^N(w_n, \theta)^{(j)}=0$ which implies $(\bx_{n,t-1}^N)^{(j)}=0$ almost surely in $\mathbb{P}(\cdot|\bW^N=W^N,\bY_{1:t-1}^N=Y_{1:t-1}^N)$ by assumption, or
        \item $K_{\widetilde{\boeta}^N_{t-1}(w_n,\theta)}(w_n,\theta)^{(j,i)}=0$, which implies that there exists $\eta = \widetilde{\boeta}^N_{t-1}(w_n,\theta)$ such that $K_{\eta}(w_n,\theta)^{(j,i)}=0$. Then by Assumption \ref{ass:HMM_support} we have that $K_{\eta}(w_n,\theta)^{(j,i)}=0$ for all $\eta \in [0,C]$ and $\theta \in \Theta$ meaning $K_{\boeta_{t-1}^N(w_n,\theta^\star)}(w_n,\theta^\star)^{(j,i)}=0$, $\mathbb{P}(\cdot|\bW^N=W^N,\bY_{1:t-1}^N=Y_{1:t-1}^N)$-almost surely as it holds for almost any realization of ${\bX}_{t-1}^N \in \mathbb{O}_M^N \subset \Delta_M^N$, where we remark that ${\bX}_{t-1}^N$ appears in $\boeta_{t-1}^N(w_n,\theta^\star) = \eta^N(w_n,\theta^\star,W^N, {\bX}_{t-1}^N)$.
    \end{enumerate}
    Now given that: 
    \begin{equation}
        \begin{split}
           \mathbb{P}\left ((\bx_{n,t}^N)^{(i)}=0|\bW^N=W^N,\bY_{1:t-1}^N=Y_{1:t-1}^N \right ) = 1 - \mathbb{P}\left ((\bx_{n,t}^N)^{(i)}=1|\bW^N=W^N,\bY_{1:t-1}^N=Y_{1:t-1}^N \right ),
        \end{split}
    \end{equation}
    and by defining ${\bX}_{t-1}^{N\setminus n}$ as ${\bX}_{t-1}^N$ with the $n$-th individual removed, we can notice that:
    \begin{equation}
        \begin{split}
            &\mathbb{P}\left ( (\bx_{n,t}^N)^{(i)}=1|\bW^N=W^N,\bY_{1:t-1}^N=Y_{1:t-1}^N \right )\\
            &= 
            \sum_{X_{t-1}^{N\setminus n}} \mathbb{P}\left ({\bX}_{t-1}^{N\setminus n} = X_{t-1}^{N\setminus n}|\bW^N=W^N,\bY_{1:t-1}^N=Y_{1:t-1}^N \right)\\
            &\quad \sum_{x_{n,t-1}^N} \mathbb{P}\left (\bx_{n,t-1}^N = x_{n,t-1}^N|\bW^N=W^N,{\bX}_{t-1}^{N\setminus n} = X_{t-1}^{N\setminus n},\bY_{1:t-1}^N=Y_{1:t-1}^N \right)\\
            & \quad \cdot \mathbb{P}\left ((\bx_{n,t}^N)^{(i)}=1|\bW^N=W^N,{\bX}_{t-1}^N = X_{t-1}^N,\bY_{1:t-1}^N=Y_{1:t-1}^N \right)\\
            &= \sum_{X_{t-1}^{N\setminus n}} \mathbb{P}\left ({\bX}_{t-1}^{N\setminus n} = X_{t-1}^{N\setminus n} | \bW^N=W^N,\bY_{1:t-1}^N=Y_{1:t-1}^N\right )\\
            &\quad \sum_{j} \mathbb{P}\left ((\bx_{n,t-1}^N)^{(j)} = 1|\bW^N=W^N,{\bX}_{t-1}^{N\setminus n} = X_{t-1}^{N\setminus n},\bY_{1:t-1}^N=Y_{1:t-1}^N \right ) K_{\boeta_{t-1}^N(w_n,\theta^\star)}(w_n,\theta^\star)^{(j,i)}\\
            &= 0,
        \end{split}
    \end{equation}
    where the last step follows from the fact that for all $j$ we either have $(\bx_{n,t-1}^N)^{(j)} = 0$ almost surely in $\mathbb{P}(\cdot|\bW^N=W^N,\bY_{1:t-1}^N=Y_{1:t-1}^N)$, or $K_{\eta}(w_n,\theta^\star)^{(j,i)} = 0$ for any $\eta \in [0,C]$ and so $K_{\boeta_{t-1}^N(w_n,\theta^\star)}(w_n,\theta^\star)^{(j,i)} = 0$ for any $X_{t-1}^N$. As $\mathbb{P}\left ((\bx_{n,t}^N)^{(i)}=1 |\bW^N=W^N,\bY_{1:t-1}^N=Y_{1:t-1}^N\right)= 0$ we can conclude $\mathbb{P}\left ((\bx_{n,t}^N)^{(i)}=0 |\bW^N=W^N,\bY_{1:t-1}^N=Y_{1:t-1}^N\right)= 1$, which concludes the proof.
\end{proof}

\begin{proposition}\label{prop:correction_as}
    Under Assumption \ref{ass:HMM_support}, for any $t \geq 1$ and $n \in [N]$ if there exists $\theta \in \Theta,i \in [M]$ such that $\bpi_{n,t|t-1}^N(w_n,\theta)^{(i)}=0$ implies $(\bx_{n,t}^N)^{(i)}=0$ almost surely in $\mathbb{P}(\cdot|\bW^N=W^N,\bY_{1:t-1}^N=Y_{1:t-1}^N)$, then: 
    \begin{itemize}
        \item there exists $\theta \in \Theta, i \in [M]$ such that $\bmu_{n,t}^N(w_n,\theta)^{(i)}=0$ implies $(\by_{n,t}^N)^{(i)}=0$ almost surely in $\mathbb{P}(\cdot|\bW^N=W^N,\bY_{1:t-1}^N=Y_{1:t-1}^N)$;
        \item there exists $\theta \in \Theta, i \in [M]$ such that $\bpi_{n,t}^N(w_n,\theta)^{(i)}=0$ implies $(\bx_{n,t}^N)^{(i)}=0$ almost surely in $\mathbb{P}(\cdot|\bW^N=W^N,\bY_{1:t}^N=Y_{1:t}^N)$.
    \end{itemize}
\end{proposition}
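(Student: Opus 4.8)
The plan is to mirror the recursive, support-tracking argument used for Proposition \ref{prop:initial_as} and Proposition \ref{prop:prediction_as}, handling the two assertions in turn and invoking Assumption \ref{ass:HMM_support} to pass vanishing entries of $G$ from an arbitrary $\theta$ to the data-generating $\theta^\star$. Throughout I read the hypothesis exactly as it is used in the proof of Proposition \ref{prop:prediction_as}: for every $\theta$ and every index, $\bpi_{n,t|t-1}^N(w_n,\theta)^{(i)}=0$ forces $(\bx_{n,t}^N)^{(i)}=0$ almost surely under $\mathbb{P}(\cdot\mid\bW^N=W^N,\bY_{1:t-1}^N=Y_{1:t-1}^N)$.

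For the first assertion I expand $\bmu_{n,t}^N(w_n,\theta)^{(i)}=\sum_{j\in[M]}\bpi_{n,t|t-1}^N(w_n,\theta)^{(j)}G(w_n,\theta)^{(j,i)}$. As every summand is nonnegative, $\bmu_{n,t}^N(w_n,\theta)^{(i)}=0$ means that for each $j$ either $\bpi_{n,t|t-1}^N(w_n,\theta)^{(j)}=0$ or $G(w_n,\theta)^{(j,i)}=0$; in the latter case Assumption \ref{ass:HMM_support} gives $G(w_n,\theta^\star)^{(j,i)}=0$. Marginalising over $\bx_{n,t}^N$ and using conditional independence of the observation then yields $\mathbb{P}((\by_{n,t}^N)^{(i)}=1\mid\bW^N=W^N,\bY_{1:t-1}^N=Y_{1:t-1}^N)=\sum_{j\in[M]}\mathbb{P}((\bx_{n,t}^N)^{(j)}=1\mid\cdots)\,G(w_n,\theta^\star)^{(j,i)}$, and each term vanishes, by the hypothesis in the first case and by $G(w_n,\theta^\star)^{(j,i)}=0$ in the second, exactly as in the closing display of Proposition \ref{prop:prediction_as}. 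Hence $(\by_{n,t}^N)^{(i)}=0$ almost surely.

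For the second assertion I read off the $i$-th entry of the update: under the conditioning $\bY_{1:t}^N=Y_{1:t}^N$ the one-hot vector $\by_{n,t}^N$ selects a single column $k$, so $\bpi_{n,t}^N(w_n,\theta)^{(i)}=\bpi_{n,t|t-1}^N(w_n,\theta)^{(i)}\,G(w_n,\theta)^{(i,k)}\big/\bmu_{n,t}^N(w_n,\theta)^{(k)}$ with the convention $0/0=0$. The device that neutralises this convention is the bound $\bmu_{n,t}^N(w_n,\theta)^{(k)}\ge\bpi_{n,t|t-1}^N(w_n,\theta)^{(i)}G(w_n,\theta)^{(i,k)}$: if both factors in the numerator are strictly positive then so is the denominator, whence $\bpi_{n,t}^N(w_n,\theta)^{(i)}>0$. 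Contrapositively, $\bpi_{n,t}^N(w_n,\theta)^{(i)}=0$ implies $\bpi_{n,t|t-1}^N(w_n,\theta)^{(i)}=0$ or $G(w_n,\theta)^{(i,k)}=0$, reducing matters to the same two structural zeros as before. In the first case the hypothesis gives $(\bx_{n,t}^N)^{(i)}=0$ almost surely under the coarser conditioning on $\bY_{1:t-1}^N$, and this persists under the finer conditioning on $\bY_{1:t}^N$ since a $\mathbb{P}(\cdot\mid\bW^N=W^N,\bY_{1:t-1}^N=Y_{1:t-1}^N)$-null event $A$ stays null after also conditioning on $C=\{\bY_t^N=Y_t^N\}$, as $\mathbb{P}(A\mid B,C)\le\mathbb{P}(A\mid B)/\mathbb{P}(C\mid B)=0$ whenever $\mathbb{P}(C\mid B)>0$. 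In the second case Assumption \ref{ass:HMM_support} gives $G(w_n,\theta^\star)^{(i,k)}=0$, so by conditional independence $\mathbb{P}((\bx_{n,t}^N)^{(i)}=1,\by_{n,t}^N=y_{n,t}\mid\cdots)=\mathbb{P}((\bx_{n,t}^N)^{(i)}=1\mid\cdots)\,G(w_n,\theta^\star)^{(i,k)}=0$; since we condition on observing category $k$, state $i$ carries zero posterior mass, i.e. $(\bx_{n,t}^N)^{(i)}=0$ almost surely under $\mathbb{P}(\cdot\mid\bW^N=W^N,\bY_{1:t}^N=Y_{1:t}^N)$.

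I expect the main obstacle to be the careful handling of the two conditioning levels together with the $0/0$ convention in the correction step. The crux is the monotonicity bound $\bmu_{n,t}^N(w_n,\theta)^{(k)}\ge\bpi_{n,t|t-1}^N(w_n,\theta)^{(i)}G(w_n,\theta)^{(i,k)}$, which guarantees that the only routes to a vanishing posterior entry are the two structural zeros already controlled either by the hypothesis or by support invariance, together with the elementary fact that almost-sure vanishing is preserved when passing from the conditioning on $\bY_{1:t-1}^N$ to the finer conditioning on $\bY_{1:t}^N$.
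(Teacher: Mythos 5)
Your proof is correct and follows essentially the same route as the paper's: the same case analysis on the structural zeros of $\bpi_{n,t|t-1}^N$ and $G$, the same use of Assumption \ref{ass:HMM_support} to transfer vanishing entries of $G$ to $\theta^\star$, and the same conditional-probability computations for both assertions. The only (harmless) variation is in neutralising the $0/0$ convention for the correction step: you use the pointwise bound $\bmu_{n,t}^N(w_n,\theta)^{(k)}\geq\bpi_{n,t|t-1}^N(w_n,\theta)^{(i)}G(w_n,\theta)^{(i,k)}$, whereas the paper invokes the almost-sure non-vanishing of the denominator established in the first part; both close the argument.
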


\begin{proof}
    Let us start with the results regarding the observations. Note that for $\theta \in \Theta, i \in [M]$ the following are equivalent:
    \begin{equation}
        \begin{split}
            \bmu_{n,t}^N(w_n,\theta)^{(i)}=0 
            &\iff 
            \sum_{j } \bpi_{n,t|t-1}^N(w_n,\theta)^{(j)}  G(w_n,\theta)^{(j,i)}=0 \\
            &\iff
            \forall j \in [M] \quad \bpi_{n,t|t-1}^N(w_n,\theta)^{(j)}  G(w_n,\theta)^{(j,i)}=0.
        \end{split}
    \end{equation}
    We then have that for all $j$ either:
    \begin{enumerate}
        \item $\mathbb\bpi_{n,t|t-1}^N(w_n,\theta)^{(j)}=0$ which implies $(\bx_{n,t}^N)^{(j)}=0$ almost surely in $\mathbb{P}(\cdot|\bW^N=W^N,\bY_{1:t-1}^N=Y_{1:t-1}^N)$ by assumption, or
        \item $G(w_n,\theta)^{(j,i)}=0$ which implies $ G(w_n,\theta^\star)^{(j,i)}=0$ because of Assumption \ref{ass:HMM_support}. 
    \end{enumerate}
    Now given that: 
    \begin{equation}
        \begin{split}
           \mathbb{P}\left ((\by_{n,t}^N)^{(i)}=0 |\bW^N=W^N,\bY_{1:t-1}^N=Y_{1:t-1}^N \right ) = 1 - \mathbb{P}\left ((\by_{n,t}^N)^{(i)}=1 |\bW^N=W^N,\bY_{1:t-1}^N=Y_{1:t-1}^N \right ),
        \end{split}
    \end{equation}
    and:
    \begin{equation}
        \begin{split}
            &\mathbb{P}\left ((\by_{n,t}^N)^{(i)}=1 |\bW^N=W^N,\bY_{1:t-1}^N=Y_{1:t-1}^N \right)\\
            &= 
            \sum_{x_{n,t}^N} \mathbb{P}\left (\bx_{n,t}^N= x_{n,t}^N |\bW^N=W^N,\bY_{1:t-1}^N=Y_{1:t-1}^N \right)\\
            &\quad \cdot \mathbb{P}\left ((\by_{n,t}^N)^{(i)}=1|\bW^N=W^N, \bx_{n,t}^N = x_{n,t}^N, \bY_{1:t-1}^N=Y_{1:t-1}^N \right)\\
            &= 
            \sum_{j} \mathbb{P}\left ((\bx_{n,t}^N)^{(j)} = 1 |\bW^N=W^N,\bY_{1:t-1}^N=Y_{1:t-1}^N \right)  G(w_n,\theta^\star)^{(j,i)} = 0,
        \end{split}
    \end{equation}
    where the last step follows from the fact that for all $j$ we either have $(\bx_{n,t-1}^N)^{(j)} = 0$ almost surely in $\mathbb{P}(\cdot|\bW^N=W^N,\bY_{1:t-1}^N=Y_{1:t-1}^N)$ or $ G(w_n,\theta^\star)^{(j,i)} = 0$. As 
    $$\mathbb{P}\left ( (\by_{n,t}^N)^{(i)}=1 |\bW^N=W^N,\bY_{1:t-1}^N=Y_{1:t-1}^N \right )= 0,$$ 
    we can conclude 
    $$\mathbb{P}\left ( (\by_{n,t}^N)^{(i)}=0 |\bW^N=W^N,\bY_{1:t-1}^N=Y_{1:t-1}^N \right )= 1,$$
    which concludes the proof of the first part. As a consequence if there exists $j\in[M]$ such that $\mathbb{P}((\by_{n,t}^N)^{(j)}=1|\bW^N=W^N,\bY_{1:t-1}^N=Y_{1:t-1}^N)>0$ then $\bmu_{n,t}^N(w_n,\theta)^{(j)} \neq 0$, meaning that 
    \begin{equation} \label{eq:as_non_zero_denom}
        \mathbb{P}\left (\sum_{j} (\by_{n,t}^N)^{(j)}\bmu_{n,t}^N(w_n,\theta)^{(j)} \neq 0|\bW^N=W^N,\bY_{1:t-1}^N=Y_{1:t-1}^N
        \right )=1.
    \end{equation}

    Consider now $\bpi_{n,t}^N(w_n,\theta)^{(i)}$, and observe that:
    \begin{equation}
        \begin{split}
            &\bpi_{n,t}^N(w_n,\theta)^{(i)}= 0  
            \iff
            \bpi_{n,t|t-1}^N(w_n,\theta)^{(i)} \frac{\sum_{j}  G(w_n,\theta)^{(i,j)} (\by_{n,t}^N)^{(j)} }{\sum_{j} (\by_{n,t}^N)^{(j)}\bmu_{n,t}^N(w_n,\theta)^{(j)}}=0\\
            &\iff
            \bpi_{n,t|t-1}^N(w_n,\theta)^{(i)} \sum_{j}  G(w_n,\theta)^{(i,j)}(\by_{n,t}^N)^{(j)}=0 \text{ and } \sum_{j} (\by_{n,t}^N)^{(j)}\bmu_{n,t}^N(w_n,\theta)^{(j)} \neq 0.
        \end{split}
    \end{equation}

    From the \eqref{eq:as_non_zero_denom}, we know that under $\mathbb{P}(\cdot|\bW^N=W^N,\bY_{1:t-1}^N=Y_{1:t-1}^N)$ we have that the denominator $\sum_{j} (\by_{n,t}^N)^{(j)}\bmu_{n,t}^N(w_n,\theta)^{(j)}$ is almost surely different from $0$ hence:
    \begin{equation}
        \begin{split}
            &\mathbb{P}\left (\bpi_{n,t|t-1}^N(w_n,\theta)^{(i)} \sum_{j}  G(w_n,\theta)^{(i,j)}(\by_{n,t}^N)^{(j)}=0\right.\\ 
            &\left. \qquad\qquad \text{ and } \sum_{j} (\by_{n,t}^N)^{(j)}\bmu_{n,t}^N(w_n,\theta)^{(j)}\neq 0 |\bW^N=W^N,\bY_{1:t-1}^N=Y_{1:t-1}^N \right )\\
            &=
            \mathbb{P}\left (\sum_{j} (\by_{n,t}^N)^{(j)}\bmu_{n,t}^N(w_n,\theta)^{(j)}=0 |\bW^N=W^N,\bY_{1:t-1}^N=Y_{1:t-1}^N \right ),
        \end{split}
    \end{equation}
    as we are considering an intersection with an almost sure event.

    We then just need to prove $\bpi_{n,t|t-1}^N(w_n,\theta)^{(i)} \sum_{j}  G(w_n,\theta)^{(i,j)}(\by_{n,t}^N)^{(j)}=0$ almost surely. We have either:
    \begin{itemize}
        \item $\bpi_{n,t|t-1}^N(w_n,\theta)^{(i)}=0$, implying $(\bx_{n,t}^N)^{(i)}=0$ almost surely in $\mathbb{P}(\cdot |\bW^N=W^N,\bY_{1:t-1}^N=Y_{1:t-1}^N)$ because we consider this statement to be true, or
        \item $\sum_{j}  G(w_n,\theta)^{(i,j)}(\by_{n,t}^N)^{(j)}=0$, which tells us that there exists $k \in [M]$ such that\\ $\mathbb{P}\left ((\by_{n,t}^N)^{(k)}=1 |\bW^N=W^N,\bY_{1:t-1}^N=Y_{1:t-1}^N \right)>0$ and $ G(w_n,\theta)^{(i,k)}=0$, note that under $\mathbb{P}(\cdot|\bW^N=W^N,\bY_{1:t}^N=Y_{1:t}^N)$ we know $k$ as we are conditioning on $\by_{n,t}^N$.
    \end{itemize}
    Now given that: 
    \begin{equation}
        \begin{split}
           \mathbb{P}((\bx_{n,t}^N)^{(i)}=0|\bW^N=W^N,\bY_{1:t}^N=Y_{1:t}^N) = 1 - \mathbb{P}((\bx_{n,t}^N)^{(i)}=1|\bW^N=W^N,\bY_{1:t}^N=Y_{1:t}^N),
        \end{split}
    \end{equation}
    we can notice that:
    \begin{equation}
        \begin{split}
            &\mathbb{P}((\bx_{n,t}^N)^{(i)}=1|\bW^N=W^N,\bY_{1:t}^N=Y_{1:t}^N)\\
            &\propto
            \mathbb{P}((\bx_{n,t}^N)^{(i)}=1, (\by_{n,t}^N)^{(k)}=1 |\bW^N=W^N,\bY_{1:t-1}^N=Y_{1:t-1}^N )\\
            &=
            G(w_n,\theta)^{(i,k)} \mathbb{P}((\bx_{n,t}^N)^{(i)}=1|\bW^N=W^N,\bY_{1:t-1}^N=Y_{1:t-1}^N)=0,
        \end{split}
    \end{equation}
    because we have either $(\bx_{n,t}^N)^{(i)}=0$ almost surely in $\mathbb{P}(\cdot|\bW^N=W^N,\bY_{1:t-1}^N=Y_{1:t-1}^N)$ or $ G(w_n,\theta)^{(i,k)}=0$, which concludes the proof.
\end{proof}

We can now combine Proposition \ref{prop:initial_as}, Proposition\ref{prop:prediction_as}, and Proposition \ref{prop:correction_as} in the following proposition.

\begin{proposition}\label{prop:sequential_welldefined}
    Under Assumption \ref{ass:HMM_support}, for any $t \geq 1$, $n \in [N]$, the following hold:
    \begin{itemize}
        \item if there exist $\theta \in \Theta, i \in [M]$ such that $\bpi_{n,t|t-1}^N(w_n,\theta)^{(i)}=0$, then $(\bx_{n,t}^N)^{(i)}=0$ almost surely in $\mathbb{P}(\cdot|\bW^N=W^N,\bY_{1:t-1}^N=Y_{1:t-1}^N)$;
        \item if there exist $\theta \in \Theta, i \in [M]$ such that $\bmu_{n,t}^N(w_n,\theta)^{(i)}=0$, then $(\by_{n,t}^N)^{(i)}=0$ almost surely in $\mathbb{P}(\cdot|\bW^N=W^N,\bY_{1:t-1}^N=Y_{1:t-1}^N)$;
        \item if there exist $\theta \in \Theta, i \in [M]$ such that $\bpi_{n,t}^N(w_n,\theta)^{(i)}=0$, then $(\bx_{n,t}^N)^{(i)}=0$ almost surely in $\mathbb{P}(\cdot|\bW^N=W^N,\bY_{1:t}^N=Y_{1:t}^N)$.
    \end{itemize}
\end{proposition}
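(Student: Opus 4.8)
The plan is to proceed by induction on $t$, using the three preceding propositions as the engine of the induction and treating the filtering support statement (the third bullet) as the quantity that is propagated from one time step to the next. The key structural observation is that the hypotheses and conclusions of Propositions \ref{prop:prediction_as} and \ref{prop:correction_as} dovetail: the filtering support conclusion of Proposition \ref{prop:correction_as} at time $t-1$ — namely that $\bpi_{n,t-1}^N(w_n,\theta)^{(i)}=0$ forces $(\bx_{n,t-1}^N)^{(i)}=0$ almost surely under $\mathbb{P}(\cdot|\bW^N=W^N,\bY_{1:t-1}^N=Y_{1:t-1}^N)$ — is precisely the hypothesis required to invoke Proposition \ref{prop:prediction_as} at time $t$. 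The entire argument is therefore a matter of chaining the three results in the right order and tracking the conditioning level.

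For the base case I would establish the filtering support statement at $t=0$. Since the CAL recursion \eqref{rec:CAL_rec} initializes $\bpi_{n,0}^N(w_n,\theta) = p_0(w_n,\theta)$, this is exactly the content of Proposition \ref{prop:initial_as}: whenever $p_0(w_n,\theta)^{(i)}=0$ we have $(\bx_{n,0}^N)^{(i)}=0$ almost surely under $\mathbb{P}(\cdot|\bW^N=W^N)$, which is the correct conditioning since $\bY_{1:0}^N$ is empty. For the inductive step, assume the filtering support statement holds at time $t-1$ with respect to $\mathbb{P}(\cdot|\bW^N=W^N,\bY_{1:t-1}^N=Y_{1:t-1}^N)$. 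Feeding this into Proposition \ref{prop:prediction_as} yields the first bullet at time $t$, the prediction support statement controlling $(\bx_{n,t}^N)^{(i)}$ through $\bpi_{n,t|t-1}^N(w_n,\theta)^{(i)}$, still under the conditioning $\bY_{1:t-1}^N$. Applying Proposition \ref{prop:correction_as}, whose hypothesis is exactly this prediction support statement, then delivers simultaneously the observation support statement (second bullet) under $\mathbb{P}(\cdot|\bW^N=W^N,\bY_{1:t-1}^N=Y_{1:t-1}^N)$ and the filtering support statement (third bullet) under the enlarged conditioning $\mathbb{P}(\cdot|\bW^N=W^N,\bY_{1:t}^N=Y_{1:t}^N)$. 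This last statement is precisely the hypothesis needed to re-enter Proposition \ref{prop:prediction_as} at time $t+1$, so the induction closes.

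I do not expect a genuine obstacle here, since all of the analytical work was already carried out in the three constituent propositions; the present statement is their assembly. The one point deserving care is the bookkeeping of the conditioning level: the prediction and observation support statements are asserted under $\bY_{1:t-1}^N$, whereas the filtering support statement is upgraded to conditioning on $\bY_{1:t}^N$, and it is exactly this upgraded statement — matching the hypothesis of Proposition \ref{prop:prediction_as} with its time index advanced by one — that makes the recursion self-sustaining. I would simply verify at each handoff that the filtration indices align, after which the three bullets at every $t\geq1$ follow by the induction described above.
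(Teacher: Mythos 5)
Your proof is correct and follows exactly the paper's argument: induction on $t$ with the filtering support statement as the propagated hypothesis, base case from Proposition \ref{prop:initial_as}, and the inductive step obtained by chaining Proposition \ref{prop:prediction_as} (first bullet) into Proposition \ref{prop:correction_as} (second and third bullets), with the third bullet closing the induction. The attention you pay to the conditioning level at each handoff is the right bookkeeping and matches what the paper does implicitly.
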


\begin{proof}
    Suppose that the third statement is true at time $t-1$, which is valid at $t-1=0$ because of Proposition \ref{prop:initial_as}, which guarantees that if $\exists \theta \in \Theta,i \in [M]:\bpi_{n,0}^N(w_n, \theta)^{(i)}=0 \implies (\bx_{n,0}^N)^{(i)}=0$ almost surely in $\mathbb{P}(\cdot| \bW^N = W^N)$. We can in turn apply Proposition \ref{prop:prediction_as} to prove the first statement and Proposition \ref{prop:correction_as} to prove the second and the third for $t$. As the third statement is our inductive hypothesis at the next time step $t$ we can then close the induction and conclude that the three statements are valid for an arbitrary $t$. 
\end{proof}

We finally prove the main result.

\begin{theorem} \label{thm:CAL_well_definess}
    Under assumptions \ref{ass:w_iid},\ref{ass:HMM_support}, for any $N \in \mathbb{N}, t\geq 1, n \in [N]$ and $\theta \in \Theta$ we have that $\sum_i \bmu_{n,t}^N(\bw_n,\theta)^{(i)}(\by_{n,t}^N)^{(i)}\neq 0$ almost surely in $\mathbb{P}$.
\end{theorem}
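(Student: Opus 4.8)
The plan is to reduce the claim to a statement already established, up to conditioning, inside the proof of Proposition \ref{prop:correction_as}, and then to remove the conditioning by an application of the tower property.

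First I would exploit the one-hot structure of the observation. Since $\by_{n,t}^N \in \mathbb{O}_{M+1}$, there is a (random) index $J$ with $(\by_{n,t}^N)^{(J)} = 1$ and all other entries zero, so that
\begin{equation}
\sum_i \bmu_{n,t}^N(\bw_n,\theta)^{(i)}(\by_{n,t}^N)^{(i)} = \bmu_{n,t}^N(\bw_n,\theta)^{(J)}.
\end{equation}
Hence the quantity vanishes precisely on the event that the observed category $J$ is one to which the approximate filter assigns zero mass, i.e. $\bmu_{n,t}^N(\bw_n,\theta)^{(J)} = 0$. The entire content of the claim is therefore that, almost surely, we never observe a category killed by $\bmu$.

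Second, I would invoke the conditional form of exactly this statement. Fix $N, t, n$ and $\theta$. Proposition \ref{prop:sequential_welldefined} (second bullet), whose hypotheses are discharged by the induction closing that proposition, guarantees that whenever $\bmu_{n,t}^N(\bw_n,\theta)^{(i)} = 0$ for some $i$ one has $(\by_{n,t}^N)^{(i)} = 0$, $\mathbb{P}(\,\cdot\mid \bW^N = W^N, \bY_{1:t-1}^N = Y_{1:t-1}^N)$-almost surely. Equivalently, as recorded in equation \eqref{eq:as_non_zero_denom} within the proof of Proposition \ref{prop:correction_as},
\begin{equation}
\mathbb{P}\!\left(\sum_{j}(\by_{n,t}^N)^{(j)}\bmu_{n,t}^N(\bw_n,\theta)^{(j)} \neq 0 \ \middle|\ \bW^N = W^N, \bY_{1:t-1}^N = Y_{1:t-1}^N\right) = 1
\end{equation}
for $\mathbb{P}$-almost every realization $W^N, Y_{1:t-1}^N$. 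I would then integrate out the conditioning: by the tower property,
\begin{equation}
\mathbb{P}\!\left(\sum_i \bmu_{n,t}^N(\bw_n,\theta)^{(i)}(\by_{n,t}^N)^{(i)} = 0\right) = \mathbb{E}\!\left[\mathbb{P}\!\left(\sum_i \bmu_{n,t}^N(\bw_n,\theta)^{(i)}(\by_{n,t}^N)^{(i)} = 0 \ \middle|\ \bW^N, \bY_{1:t-1}^N\right)\right],
\end{equation}
and the inner conditional probability is $\mathbb{P}$-almost surely zero by the preceding step, so the expectation is zero, yielding the unconditional almost sure statement.

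The main obstacle is the careful bookkeeping of the conditioning rather than any new analytic estimate. One must ensure the conditional almost sure statements of Proposition \ref{prop:sequential_welldefined} hold for $\mathbb{P}$-almost all realizations of the conditioning pair $(\bW^N, \bY_{1:t-1}^N)$: the observation history is discrete, so conditioning on it is unambiguous on positive-probability atoms, but $\bW^N$ is continuous and one should work with a regular conditional distribution, invoking Assumption \ref{ass:w_iid} to make sense of the random design. One must also keep track that the support-invariance in $\theta$ supplied by Assumption \ref{ass:HMM_support} is what allows a statement about the data-generating law at $\theta^\star$ to control $\bmu_{n,t}^N(\bw_n,\theta)$ at the arbitrary fixed $\theta$ appearing in the theorem; this is precisely the mechanism threaded through Propositions \ref{prop:initial_as}--\ref{prop:correction_as}, and the present proof only has to assemble it.
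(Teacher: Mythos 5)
Your proposal is correct and follows essentially the same route as the paper's proof: both reduce the event to "the observed one-hot category is one where $\bmu_{n,t}^N$ vanishes," invoke the second statement of Proposition \ref{prop:sequential_welldefined} to kill that event conditionally on $(\bW^N,\bY_{1:t-1}^N)$, and then integrate out the conditioning (the paper sums explicitly over $Y_{1:t-1}^N$ and integrates against $\Gamma(dw_1)\cdots\Gamma(dw_N)$ via Assumption \ref{ass:w_iid}, which is exactly your tower-property step). Your remarks on regular conditional distributions for $\bW^N$ and on the role of Assumption \ref{ass:HMM_support} in transferring support control from $\theta^\star$ to arbitrary $\theta$ accurately reflect the mechanism the paper relies on.
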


\begin{proof}
    We can see that:
    \begin{equation}
	\begin{split}
	&\mathbb{P}\left (  \exists \theta \in \Theta, t\geq1,n \in [N] : \quad \sum_i \bmu_{n,t}^N(\bw_n,\theta)^{(i)}(\by_{n,t}^N)^{(i)}=0 | \bW^N = W^N \right) 
		\\
		&=
		\mathbb{P}\left ( \exists \theta \in \Theta, t\geq1,n \in [N] : \quad \forall i \in [M] \quad \bmu_{n,t}^N(\bw_n,\theta)^{(i)}(\by_{n,t}^N)^{(i)}=0 | \bW^N = W^N \right) 
		\\
		&=
		\mathbb{P}\left ( \exists \theta \in \Theta, t\geq1,n \in [N],k \in [M] : \quad \bmu_{n,t}^N(\bw_n,\theta)^{(k)}=0 \text{ and }(\by_{n,t}^N)^{(k)}=1 | \bW^N = W^N \right),
	\end{split}
    \end{equation}
where the second equality uses the fact that $\by_{n,t}^N$ is a one-hot encoding vector, i.e. the $k$th component is $1$ while the others are $0$. Moreover:
    \begin{equation}
	\begin{split}
        &\mathbb{P}\left ( \left. \exists \theta \in \Theta, t\geq1,n \in [N] : \quad \sum_i \bmu_{n,t}^N(\bw_n,\theta)^{(i)}(\by_{n,t}^N)^{(i)}=0 \right| \bW^N = W^N \right)\\
		&=
		\sum_{Y_{1:t-1}^N} \mathbb{P}\left ( \exists \theta \in \Theta, t\geq1,n \in [N],k \in [M] : \right.\\
        &\left. \qquad \qquad\qquad \bmu_{n,t}^N(\bw_n,\theta)^{(k)}=0 \text{ and } (\by_{n,t}^N)^{(k)}=1 |\bW^N=W^N,\bY_{1:t-1}^N=Y_{1:t-1}^N \right)\\
		&\qquad\quad \cdot \mathbb{P}\left ( \bY_{1:t-1}^N= Y_{1:t-1}^N | \bW^N = W^N \right )=0,
	\end{split}
    \end{equation}
    as from the second statement of Proposition \ref{prop:sequential_welldefined} we know that event $\left \{\bmu_{n,t}^N(\bw_n,\theta)^{(k)}=0 \right \} \cap \left \{ (\by_{n,t}^N)^{(k)}=1 \right \}$ is probability zero conditionally on $\bW^N=W^N,\bY_{1:t-1}^N=Y_{1:t-1}^N$, indeed:
    \begin{equation}
    \begin{split}
        &\mathbb{P}\left ( \exists \theta \in \Theta, t\geq1,n \in [N],k \in [M] : \right. \\ 
        &\left. \qquad\qquad \bmu_{n,t}^N(\bw_n,\theta)^{(k)}=0 \text{ and } (\by_{n,t}^N)^{(k)}=1 |\bW^N=W^N,\bY_{1:t-1}^N=Y_{1:t-1}^N \right)\\
        &=
        1 - \mathbb{P}\left ( \forall \theta \in \Theta, t\geq1,n \in [N],k \in [M] : \right. \\ 
        &\left. \qquad\qquad\qquad \text{if } \bmu_{n,t}^N(\bw_n,\theta)^{(k)}=0 \text{ then } (\by_{n,t}^N)^{(k)}=0 |\bW^N=W^N,\bY_{1:t-1}^N=Y_{1:t-1}^N \right)\\
        &=0.
    \end{split}
    \end{equation}
    
    We can then conclude the proof of the first statement as: 
    \begin{equation}
    \begin{split}
        &\mathbb{P}\left (\left. \forall \theta \in \Theta, t\geq1, n \in [N] \quad \sum_i \bmu_{n,t}^N(\bw_n,\theta)^{(i)}(\by_{n,t}^N)^{(i)}\neq0 \right| \bW^N = W^N \right)\\
        &= 1 - \mathbb{P}\left (\left. \exists \theta \in \Theta, t\geq1,n \in [N] : \quad \sum_i \bmu_{n,t}^N(\bw_n,\theta)^{(i)}(\by_{n,t}^N)^{(i)}=0 \right| \bW^N = W^N \right)=1.
    \end{split}
    \end{equation}
    To conclude the proof we need:
    \begin{equation}
        \mathbb{P}\left ( \forall \theta \in \Theta, t\geq1, n \in [N] \quad \sum_i \bmu_{n,t}^N(\bw_n,\theta)^{(i)}(\by_{n,t}^N)^{(i)}\neq0 \right)=1,
    \end{equation}
    which can be proven by observing that:
    \begin{equation}
        \begin{split}
            &\mathbb{P}\left ( \forall \theta \in \Theta, t\geq1, n \in [N] \quad \sum_i \bmu_{n,t}^N(\bw_n,\theta)^{(i)}(\by_{n,t}^N)^{(i)}\neq0 \right)\\
            &=
            \int \mathbb{P}\left (\left.  \forall \theta \in \Theta, t\geq1, n \in [N] \quad \sum_i \bmu_{n,t}^N(w_n,\theta)^{(i)}(\by_{n,t}^N)^{(i)}\neq0 \right| \bW^N = W^N \right)\\
            &\qquad \qquad \Gamma(dw_1) \dots \Gamma(dw_N)\\
            &= 1,
        \end{split}
    \end{equation}
    where we applied Assumption \ref{ass:w_iid} and where the last step follows from what we have just proven.
\end{proof}

\subsubsection{Checking the CAL is almost surely bounded}

In this section, we want to prove that all the non-zero elements of $\bmu_{n,t}^N(\bw_n,\theta)$ are almost surely bounded below by a quantity ${m}_t>0$ that does not depend on $N$, this will be put to use in establishing \texorpdfstring{$L^4$}{L4} bounds in Section \ref{sec:asympt_CAL}.

\begin{proposition} \label{prop:CAL_as_bounded}
    Under assumptions \ref{ass:compactness_continuity},\ref{ass:HMM_support},\ref{ass:kernel_continuity}, for $t\geq 1$ there exists ${m}_{t}>0$ such that for any $N \in \mathbb{N}$ and $n \in [N]$ we have:
    $$
    \mathbb{P} \left ( \bmu_{n,t}^N(\bw_n,\theta)^{(i)}
        \geq 
        {m}_{t}\quad \forall i \in \supp{\bmu_{n,t}^N(\bw_n,\theta)} \right )=1 \quad \forall \theta \in \Theta.
    $$
\end{proposition}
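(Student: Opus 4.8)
The plan is to propagate a deterministic, strictly positive lower bound on the non-zero entries through the three operations of the CAL recursion \eqref{rec:CAL_rec}, using that by Assumption \ref{ass:HMM_support} the supports of $p_0$, $K_\eta$ and $G$ do not move with $\eta$, $w$ or $\theta$. The first step is to record uniform lower bounds on the elementary building blocks. Fix an index $(i,j)\in\supp{K}$; by Assumption \ref{ass:HMM_support} this index lies in $\supp{K_\eta(w,\theta)}$ for every $(\eta,w,\theta)\in[0,C]\times\mathbb{W}\times\Theta$, so the map $(\eta,w,\theta)\mapsto K_\eta(w,\theta)^{(i,j)}$ is continuous (continuity in $w,\theta$ is Assumption \ref{ass:compactness_continuity}, continuity in $\eta$ follows from the Lipschitz bound of Assumption \ref{ass:kernel_continuity}) and strictly positive on the compact set $[0,C]\times\mathbb{W}\times\Theta$, hence attains a positive minimum. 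Minimising over the finitely many indices in $\supp{K}$ yields $\kappa>0$ with $K_\eta(w,\theta)^{(i,j)}\geq\kappa$ whenever $(i,j)\in\supp{K}$. The same argument applied to $G$ and to $p_0$ produces constants $g>0$ and $p>0$ with $G(w,\theta)^{(i,j)}\geq g$ on $\supp{G}$ and $p_0(w,\theta)^{(i)}\geq p$ on $\supp{p_0}$, uniformly in $w,\theta$.

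Next I would run an induction on $t$ establishing that the non-zero entries of $\bpi_{n,t}^N(\bw_n,\theta)$ are bounded below by $r_t\coloneqq p(\kappa g)^t$ and those of $\bmu_{n,t}^N(\bw_n,\theta)$ by $m_t\coloneqq p(\kappa g)^t$, for every $N$, $n$ and $\theta$. The base case is immediate since $\bpi_{n,0}^N(\bw_n,\theta)=p_0(\bw_n,\theta)$, giving $r_0=p$. For the \textbf{prediction} step, $\bpi_{n,t|t-1}^N(\bw_n,\theta)^{(j)}=\sum_i\bpi_{n,t-1}^N(\bw_n,\theta)^{(i)}K_{\tilde{\boeta}_{t-1}^N(\bw_n,\theta)}(\bw_n,\theta)^{(i,j)}$ is a sum of non-negative terms, so $j$ lies in its support exactly when some summand is non-zero; selecting such a term and applying the inductive bound $r_{t-1}$ together with $\kappa$ gives $\bpi_{n,t|t-1}^N(\bw_n,\theta)^{(j)}\geq r_{t-1}\kappa$. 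The \textbf{correction} for $\bmu$ is identical with $G$ in place of $K$: since $\bmu_{n,t}^N(\bw_n,\theta)=[\bpi_{n,t|t-1}^N(\bw_n,\theta)^\top G(\bw_n,\theta)]^\top$, a non-zero entry is at least $(r_{t-1}\kappa)g=m_t$, which is exactly the bound claimed in the proposition.

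Finally I would treat the filtering update, which is the only delicate point. Since $\by_{n,t}^N$ is one-hot, say with its $k$-th coordinate equal to one, the update reduces to $\bpi_{n,t}^N(\bw_n,\theta)^{(i)}=\bpi_{n,t|t-1}^N(\bw_n,\theta)^{(i)}\,G(\bw_n,\theta)^{(i,k)}/\bmu_{n,t}^N(\bw_n,\theta)^{(k)}$. The denominator equals $\sum_j(\by_{n,t}^N)^{(j)}\bmu_{n,t}^N(\bw_n,\theta)^{(j)}$, which is almost surely non-zero by Theorem \ref{thm:CAL_well_definess}, so the expression is $\mathbb{P}$-a.s. well-defined; moreover $\bmu_{n,t}^N(\bw_n,\theta)\in\Delta_{M+1}$, so this denominator is at most $1$. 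Hence on the support of $\bpi_{n,t}^N(\bw_n,\theta)$, where both $\bpi_{n,t|t-1}^N(\bw_n,\theta)^{(i)}\neq0$ and $G(\bw_n,\theta)^{(i,k)}\neq0$, the division can only increase the bound, giving $\bpi_{n,t}^N(\bw_n,\theta)^{(i)}\geq (r_{t-1}\kappa)g/1=r_t$ and closing the induction. The main obstacle is precisely this division: it is what forces us to invoke the almost-sure positivity of the denominator from Theorem \ref{thm:CAL_well_definess} and to use the normalisation $\bmu_{n,t}^N\in\Delta_{M+1}$ to keep the quotient bounded below, whereas all other steps are deterministic consequences of the support-invariance and compactness assumptions.
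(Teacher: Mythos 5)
Your proposal is correct and follows essentially the same route as the paper's proof: uniform positive lower bounds on the supported entries of $p_0$, $K_\eta$ and $G$ via compactness, continuity and support-invariance (Weierstrass), an induction propagating $\bar m_{t-1}\mapsto \bar m_{t-1}m_K m_G$ through prediction, emission and correction, and the observation that the Bayes-update denominator lies in $(0,1]$ ($\mathbb{P}$-a.s.\ nonzero by Theorem \ref{thm:CAL_well_definess}) so the division can only increase the bound. The only caveat, which the paper's own argument shares, is that Assumption \ref{ass:HMM_support} guarantees support invariance only in $\eta$ and $\theta$ (not in $w$), so the step where you take a single minimum over all of $[0,C]\times\mathbb{W}\times\Theta$ implicitly relies on the minimum over the $w$-dependent support being bounded away from zero uniformly in $w$, exactly as in the paper's invocation of Weierstrass.
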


\begin{proof}

    For a fixed $N\in \mathbb{N}$, consider the following inductive hypothesis. There exists $\bar{m}_{t-1} > 0$ such that:
    $$
    \mathbb{P} \left ( \bpi_{n,t-1}^N(\bw_n,\theta)^{(i)}
        \geq 
        \bar{m}_{t-1} \quad \forall i \in \supp{\bpi_{n,t-1}^N(\bw_n,\theta)} \right )=1 \quad \forall n \in [N], \theta \in \Theta.
    $$
    
    We start by proving that the inductive hypothesis is true when $t-1=0$. From Assumption \ref{ass:compactness_continuity} we have that $p_0(w,\theta)$ is continuous in $w,\theta$ and both $\mathbb{W}$ and $\Theta$ are compact, we then get from Weierstrass theorem that there exists a minimum $m_0$ such that for any realization of $\bW^N$ and for any $i \in \supp{p_0(\bw_n,\theta)^{(i)}}$:
    $$
    \bpi_{n,0}^N(\bw_n,\theta)^{(i)}= p_0(\bw_n,\theta)^{(i)} \geq \min_{w \in \mathbb{W},\theta \in \Theta} \min_{j \in \supp{p_0(w,\theta)^{(j)}}} p_0(w,\theta)^{(j)} \eqqcolon m_0,
    $$
    with $m_0>0$ as we are considering a minimum over $j \in \supp{p_0(w,\theta)^{(j)}}$ which excludes all the zeros. As $m_0$ does not depend on $\bW^N$ we conclude that there exists $m_{0} > 0$ such that:
    $$
    \mathbb{P} \left ( \bpi_{n,0}^N(\bw_n,\theta)^{(i)}
        \geq 
        m_{0} \quad \forall i \in \supp{\bpi_{n,0}^N(\bw_n,\theta)} \right )=1 \quad \forall n \in [N], \theta \in \Theta.
    $$
    
    Let us now work on a general time step $t$. For $i \in \supp{\bpi_{n,t|t-1}^N(\bw_n,\theta)}$,
    \begin{equation}
    \begin{split}
        \bpi_{n,t|t-1}^N(\bw_n,\theta)^{(i)}
        &=
        \sum_{j} \bpi_{n,t-1}^N(\bw_n,\theta)^{(j)} K_{\widetilde{\boeta}^N_{t-1}(\bw_n,\theta)}(\bw_n,\theta)^{(j,i)}\\
        &=
        \sum_{j \in \supp{\bpi_{n,t-1}^N(\bw_n,\theta)}} \bpi_{n,t-1}^N(\bw_n,\theta)^{(j)} K_{\widetilde{\boeta}^N_{t-1}(\bw_n,\theta)}(\bw_n,\theta)^{(j,i)}\\
        &\geq 
        \bar{m}_{t-1} \sum_{j \in \supp{\bpi_{n,t-1}^N(\bw_n,\theta)}} K_{\widetilde{\boeta}^N_{t-1}(\bw_n,\theta)}(\bw_n,\theta)^{(j,i)},
    \end{split}
    \end{equation}
    where the inequality holds $\mathbb{P}$-almost surely by the inductive hypothesis. Several other inequalities in the remainder of the proof hold $\mathbb{P}$-almost surely, but to avoid repetition we do not state this explicitly. As:
    \begin{align}
    i \in \supp{\bpi_{n,t|t-1}^N(\bw_n,\theta)}
    &\Longleftrightarrow
    \sum_{j \in \supp{\bpi_{n,t-1}^N(\bw_n,\theta)}} K_{\widetilde{\boeta}^N_{t-1}(\bw_n,\theta)}(\bw_n,\theta)^{(j,i)} \neq 0,
    \end{align}
    we can conclude that there exists at least one component of $K_{\widetilde{\boeta}^N_{t-1}(\bw_n,\theta)}(\bw_n,\theta)^{(\cdot,i)}$ which is different from zero, hence: 
    \begin{equation}\label{eq:strict_pos_support}
    \sum_{j \in \supp{\bpi_{n,t-1}^N(\bw_n,\theta)}} K_{\widetilde{\boeta}^N_{t-1}(\bw_n,\theta)}(\bw_n,\theta)^{(j,i)} \geq \min_{j \in \supp{K_{\widetilde{\boeta}^N_{t-1}(\bw_n,\theta)}(\bw_n,\theta)^{(\cdot,i)}} } K_{\widetilde{\boeta}^N_{t-1}(\bw_n,\theta)}(\bw_n,\theta)^{(j,i)}
    >0.
    \end{equation}
    Because of Assumption \ref{ass:HMM_support} we have:
    $$
    K_{\widetilde{\boeta}^N_{t-1}(\bw_n,\theta)}(\bw_n,\theta)^{(j,i)}=0
    \Longleftrightarrow
    K_{\eta}(\bw_n,\theta)^{(j,i)}=0 \quad \forall \eta \in [0,C],
    $$ 
    meaning that
    \begin{equation}
    \begin{split}
    \min_{j \in \supp{K_{\widetilde{\boeta}^N_{t-1}(\bw_n,\theta)}(\bw_n,\theta)^{(\cdot,i)}} } & K_{\widetilde{\boeta}^N_{t-1}(\bw_n,\theta)}(\bw_n,\theta)^{(j,i)} 
    \geq \min_{\eta \in [0,C]} \min_{j \in \supp{K_{\eta}(\bw_n,\theta)^{(\cdot,i)}} } K_{\eta}(\bw_n,\theta)^{(j,i)}.
    \end{split}
    \end{equation}
    We can then conclude:
    \begin{equation}
    \begin{split}
        \bpi_{n,t|t-1}^N(\bw_n,\theta)^{(i)}
        &\geq 
        \bar{m}_{t-1} \min_{\theta \in \Theta, w \in \mathbb{W},\eta \in [0,C]} \min_{j \in \supp{K_{\eta}(w,\theta)^{(\cdot,i)}} } K_{\eta}(w,\theta)^{(j,i)}\\
        &\geq 
        \bar{m}_{t-1} \min_{\theta \in \Theta, w \in \mathbb{W},\eta \in [0,C]} \min_{(i,j) \in \supp{K_{\eta}(w,\theta)} } K_{\eta}(w,\theta)^{(j,i)}.
    \end{split}
    \end{equation}
    Note that $K_{\eta}(w,\theta)$ is continuous in $\eta$ because of Assumption \ref{ass:kernel_continuity} and also in $w,\theta$ because of Assumption \ref{ass:compactness_continuity}. Furthermore, the domain of $\eta$ is $[0,C]$, which does not depend on $N$ and it is compact. Additionally, $\mathbb{W},\Theta$ are compact by Assumption \ref{ass:compactness_continuity}. Hence we can conclude by the Weirstrass theorem that there exist $\eta_{min}\in [0,C]$, $w_{min}\in \mathbb{W}, (i,j) \in \supp{K_{\eta}(w,\theta)}$, and $\theta_{min} \in \Theta$ such that:
\begin{equation}
 m_{K} \coloneqq K_{\eta_{min}}(w_{min},\theta_{min})^{(j,i)} =  \min_{\theta \in \Theta, w \in \mathbb{W},\eta \in [0,C]} \min_{(i,j) \in \supp{K_{\eta}(w,\theta)} } K_{\eta}(w,\theta)^{(j,i)}>0,
\end{equation}
  where strict positivity follows from the observation made in Equation \eqref{eq:strict_pos_support}. Hence: 
    \begin{equation}
    \begin{split}
        \bpi_{n,t|t-1}^N(\bw_n,\theta)^{(i)}
        &\geq 
        \bar{m}_{t-1} m_{K}>0,
    \end{split}
    \end{equation}
    where the lower bounding constants do not depend on $\bW^N,\bY_{1:t-1}^N$, $\theta$ or $N$. Therefore we can conclude that there exist $\bar{m}_{t-1}, m_{K}>0$ such that:
    $$
    \mathbb{P} \left ( \bpi_{n,t|t-1}^N(\bw_n,\theta)^{(i)}
        \geq 
        \bar{m}_{t-1} m_{K} \quad \forall i \in \supp{\bpi_{n,t|t-1}^N(\bw_n,\theta)} \right )=1 \quad \forall \theta \in \Theta.
    $$
    
    Similarly, for $i \in \supp{\bmu_{n,t}^N(\bw_n,\theta)}$,
    \begin{equation}
    \begin{split}
        \bmu_{n,t}^N(\bw_n,\theta)^{(i)}
        &=
        \sum_{j} \bpi_{n,t|t-1}^N(\bw_n,\theta)^{(j)} G(\bw_n,\theta)^{(j,i)}\\
        &\geq 
        \bar{m}_{t-1} m_{K} \sum_{j \in \supp{\bpi_{n,t|t-1}^N(\bw_n,\theta)}} G(\bw_n,\theta)^{(j,i)},
    \end{split}
    \end{equation}
    where the inequality follows from what we have proven above. Moreover:
    \begin{align}
    i \in \supp{\bmu_{n,t}^N(\bw_n,\theta)}
    &\Longleftrightarrow
    \sum_{j \in \supp{\bpi_{n,t|t-1}^N(\bw_n,\theta)}} G(\bw_n,\theta)^{(j,i)}\neq 0,
    \end{align}
    meaning that there exists at least one component of $G(\bw_n,\theta)^{(\cdot,i)}$ which is different from zero. Hence following the same reasoning as above:
    \begin{equation}
    \begin{split}
        \bmu_{n,t}^N(\bw_n,\theta)^{(i)}
        &\geq 
        \bar{m}_{t-1} m_{K} \min_{w \in \mathbb{W},\theta \in \Theta } \min_{j \in \supp{G(w,\theta)^{(\cdot,i)}}} G(w,\theta)^{(j,i)}\\
        &\geq 
        \bar{m}_{t-1} m_{K} \min_{w \in \mathbb{W},\theta \in \Theta } \min_{(i,j) \in \supp{G(w,\theta)}} G(w,\theta)^{(j,i)},
    \end{split}
    \end{equation}
    and as $G(w,\theta)$ is continuous in $w,\theta$ because of Assumption \ref{ass:compactness_continuity} and $\mathbb{W}, \Theta$ are compact because of Assumption \ref{ass:compactness_continuity} we can conclude by Weirstrass theorem that there exists a minimum $m_{G}$ such that:
    \begin{equation}
    \begin{split}
        \bmu_{n,t}^N(\bw_n,\theta)^{(i)}
        &\geq 
        \bar{m}_{t-1} m_{K} m_{G}>0,
    \end{split}
    \end{equation}
    where the strict inequality follows from considering a minimum on the support of matrix $G$. 
    
    We conclude that there exist $\bar{m}_{t-1}, m_{K}, m_{G}>0$ such that:
    $$
    \mathbb{P} \left ( \bmu_{n,t}^N(\bw_n,\theta)^{(i)}
        \geq 
        \bar{m}_{t-1} m_{K} m_{G} \quad \forall i \in \supp{\bmu_{n,t}^N(\bw_n,\theta)} \right )=1 \quad \forall \theta \in \Theta.
    $$

    Consider $i \in \supp{\bpi_{n,t}^N(\bw_n,\theta)}$ then:
    \begin{equation}
        \begin{split}
            \bpi_{n,t}^N(\bw_n,\theta)^{(i)}
            &= 
            \bpi_{n,t|t-1}^N(\bw_n,\theta)^{(i)} \frac{\sum_{j}  G(\bw_n,\theta)^{(i,j)} (\by_{n,t}^N)^{(j)} }{\sum_{j} (\by_{n,t}^N)^{(j)}\bmu_{n,t}^N(\bw_n,\theta)^{(j)}}\\
            &\geq 
            \bar{m}_{t-1} m_{K} \sum_{j}  G(\bw_n,\theta)^{(i,j)} (\by_{n,t}^N)^{(j)},
        \end{split}
    \end{equation}
    where the inequality follows from what we have proven above, from $\sum_{j} (\by_{n,t}^N)^{(j)}\bmu_{n,t}^N(\bw_n,\theta)^{(j)} \leq 1$ by definition and $\sum_{j} (\by_{n,t}^N)^{(j)}\bmu_{n,t}^N(\bw_n,\theta)^{(j)} \neq 0$ $\mathbb{P}$-almost surely because of Theorem \ref{thm:CAL_well_definess}. We now observe that:
    \begin{align}
    i \in \supp{\bpi_{n,t}^N(\bw_n,\theta)}
    &\Longleftrightarrow
    \sum_{j}  G(\bw_n,\theta)^{(i,j)} (\by_{n,t}^N)^{(j)}\neq 0,
    \end{align}
    meaning that there is at least one element of the $G(\bw_n,\theta)^{(i,\cdot)}$ which is different from zero. Following the same reasoning of $\bmu_{n,t}^N(\bw_n,\theta)^{(i)}$ we can conclude:
    \begin{equation}
        \begin{split}
            \bpi_{n,t}^N(\bw_n,\theta)^{(i)}
            &\geq 
            \bar{m}_{t-1} m_{K} \min_{j \in \supp{G(\bw_n,\theta)^{(i,\cdot)}}}  G(\bw_n,\theta)^{(i,j)}\\
            &\geq 
            \bar{m}_{t-1} m_{K} \min_{w \in \mathbb{W}, \theta \in \Theta} \min_{(i,j) \in \supp{G(w,\theta)}}  G(w,\theta)^{(i,j)}\\
            &\geq
            \bar{m}_{t-1} m_{K} m_{G}>0.
        \end{split}
    \end{equation}
    
    We can conclude that there exist constants $\bar{m}_{t-1}, m_{K}, m_{G}>0$ such that:
    $$
    \mathbb{P} \left ( \bpi_{n,t}^N(\bw_n,\theta)^{(i)}
        \geq 
        \bar{m}_{t-1} m_{K} m_{G} \quad \forall i \in \supp{\bpi_{n,t}^N(\bw_n,\theta)} \right )=1, \quad \forall \theta \in \Theta.
    $$
    
    We can then set $\bar{m}_t \coloneqq \bar{m}_{t-1} m_{K} m_{G}$ and conclude that there exists $\bar{m}_t >0$ such that:
    $$
    \mathbb{P} \left ( \bpi_{n,t}^N(\bw_n,\theta)^{(i)}
        \geq 
        \bar{m}_{t} \quad \forall i \in \supp{\bpi_{n,t}^N(\bw_n,\theta)} \right )=1, \quad \forall \theta \in \Theta,
    $$
    which closes the induction, meaning that the above equality holds for an arbitrary $t$. As a consequence, we also have that for any $t\geq 1$ there exists ${m}_{t}>0$ such that:
    $$
    \mathbb{P} \left ( \bmu_{n,t}^N(\bw_n,\theta)^{(i)}
        \geq 
        {m}_{t}\quad \forall i \in \supp{\bmu_{n,t}^N(\bw_n,\theta)} \right )=1, \quad \forall \theta \in \Theta,
    $$
    with ${m}_{t}\coloneqq \bar{m}_{t-1} m_{K} m_{G}$, which concludes the proof.
    
\end{proof}

\subsection{\texorpdfstring{$L^4$}{L4} bounds for averages of the data-generating process} \label{sec:asympt_DGP}

In Section \ref{sec:asympt_DGP} we establish $L^4$ bounds for averages across the population of disease states and observations.

\paragraph{Initial condition.} We start by considering the population at $t=0$.

\begin{proposition} \label{prop:bound_initial_cond}
    Under Assumption \ref{ass:w_iid}, there exists $\alpha_0\geq 0$ such that for any bounded function $f: \mathbb{W} \to [0,B]^M$, i.e. $\norm{f}_\infty \leq B < \infty$,
    \begin{equation}
        \normiii[\Bigg]{\frac{1}{N} \sum_{n \in [N]} 
        f(\bw_n)^\top \bx_{n,0}^N - \int f(w)^\top p_0\left( w, \theta^\star \right) \Gamma(d w)}_4 \leq 2 B \sqrt[4]{6} N^{-\frac{1}{2}} \alpha_0.
    \end{equation}
    Moreover, there exists $\widetilde{\alpha}_0\geq 0$ such that for any bounded function $f: \mathbb{W} \to [0,B]^M$,
    \begin{equation}
        \normiii[\Bigg]{\frac{1}{N} \sum_{n \in [N]} 
        f(\bw_n)^\top \bx_{n,0}^N - f(\bw_n)^\top p_0\left( \bw_n, \theta^\star \right)}_4 \leq 2 B \sqrt[4]{6} N^{-\frac{1}{2}} \widetilde{\alpha}_0.
    \end{equation}
\end{proposition}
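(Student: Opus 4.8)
The plan is to reduce both inequalities to Lemma \ref{lemma:mean_0_bound} by writing each average as $\frac{1}{N}\sum_n\boldsymbol{\delta}_n$ for a family of conditionally independent, bounded, mean-zero summands, with the conditioning filtration chosen differently in each case. I would prove the second inequality first, since it is the cleaner of the two and serves as a building block for the first.

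For the second inequality, set $\boldsymbol{\delta}_n \coloneqq f(\bw_n)^\top\bigl(\bx_{n,0}^N - p_0(\bw_n,\theta^\star)\bigr)$ and take the filtration $\mathcal{F} \coloneqq \sigma(\bw_1,\bw_2,\dots)$ generated by all covariates. Three checks are then needed. First, conditional independence: given $\mathcal{F}$, the initial states $\bx_{n,0}^N$ are drawn independently across $n$ (each depends only on $\bw_n$ and its own randomness per Section \ref{sec:data_generating_process}), so the $\boldsymbol{\delta}_n$ are conditionally independent. Second, conditional mean zero: since $\mathbb{E}[\bx_{n,0}^N\mid\mathcal{F}] = \mathbb{E}[\bx_{n,0}^N\mid\bw_n] = p_0(\bw_n,\theta^\star)$, we get $\mathbb{E}[\boldsymbol{\delta}_n\mid\mathcal{F}] = 0$. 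Third, boundedness: because $\bx_{n,0}^N\in\mathbb{O}_M$ and $p_0(\bw_n,\theta^\star)\in\Delta_M$, both $f(\bw_n)^\top\bx_{n,0}^N$ and $f(\bw_n)^\top p_0(\bw_n,\theta^\star)$ lie in $[0,B]$, so $|\boldsymbol{\delta}_n|\le B$ almost surely. Lemma \ref{lemma:mean_0_bound} then yields $\normiii{\frac{1}{N}\sum_n\boldsymbol{\delta}_n}_4 \le B\sqrt[4]{6}\,N^{-1/2}$, so the claim holds with $\tilde{\alpha}_0 = 1/2$.

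For the first inequality I would add and subtract the population average of $f(\bw_n)^\top p_0(\bw_n,\theta^\star)$, splitting the left-hand side as $(A)+(B)$, where $(A)$ is precisely the quantity just bounded and $(B) \coloneqq \frac{1}{N}\sum_n f(\bw_n)^\top p_0(\bw_n,\theta^\star) - \int f(w)^\top p_0(w,\theta^\star)\,\Gamma(dw)$. The term $(B)$ is a centered average of the i.i.d. variables $Z_n \coloneqq f(\bw_n)^\top p_0(\bw_n,\theta^\star) \in [0,B]$, whose common mean is the integral by Assumption \ref{ass:w_iid}; applying Lemma \ref{lemma:mean_0_bound} with the trivial filtration $\mathcal{F} = \{\emptyset,\Omega\}$ (so that conditional independence becomes ordinary independence and centering supplies the mean-zero property) bounds $(B)$ by $B\sqrt[4]{6}\,N^{-1/2}$. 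Minkowski's inequality for $\normiii{\cdot}_4$ then gives $\normiii{(A)+(B)}_4 \le \normiii{(A)}_4 + \normiii{(B)}_4 \le 2B\sqrt[4]{6}\,N^{-1/2}$, establishing the claim with $\alpha_0 = 1$.

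I do not expect a serious obstacle; the argument is essentially bookkeeping around Lemma \ref{lemma:mean_0_bound}. The only points requiring care are the correct choice of filtration in each application — $\sigma(\bw_1,\bw_2,\dots)$ for the randomness in the states, and the trivial filtration for the purely covariate-driven fluctuation — and checking that the boundedness constant is genuinely $B$ rather than $2B$, which hinges on $f$ being nonnegative so that both $f(\bw_n)^\top\bx_{n,0}^N$ and $f(\bw_n)^\top p_0(\bw_n,\theta^\star)$ lie in $[0,B]$ rather than merely in $[-B,B]$.
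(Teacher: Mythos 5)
Your proof is correct and follows essentially the same route as the paper's: a Minkowski split into the state-fluctuation term and the covariate-fluctuation term, each handled by Lemma \ref{lemma:mean_0_bound} with the appropriate conditioning. The only difference is that you exploit the nonnegativity of $f$ to bound each summand by $B$ rather than $2B$, which merely yields smaller admissible constants ($\tilde{\alpha}_0=1/2$, $\alpha_0=1$ versus the paper's $1$ and $2$) and does not change the argument.
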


\begin{proof}
    For the first statement note that by Minkowski inequality:
    \begin{align}
        &\normiii[\Bigg]{\frac{1}{N} \sum_{n \in [N]} 
        f(\bw_n)^\top \bx_{n,0}^N - \int f(w)^\top p_0\left( w, \theta^\star \right) \Gamma(d w)}_4\\
        &\leq
        \normiii[\Bigg]{\frac{1}{N} \sum_{n \in [N]} 
        f(\bw_n)^\top \bx_{n,0}^N - f(\bw_n)^\top p_0\left( \bw_n, \theta^\star \right) }_4 \label{eq:initial_cond_bound_A}\\
        &\quad + 
        \normiii[\Bigg]{\frac{1}{N} \sum_{n \in [N]} 
        f(\bw_n)^\top p_0\left( \bw_n, \theta^\star \right) - \int f(w)^\top p_0\left( w, \theta^\star \right) \Gamma(d w)}_4. \label{eq:initial_cond_bound_B}
    \end{align}
    Starting from \eqref{eq:initial_cond_bound_A}, which is also the second statement, we observe that:
    \begin{equation}
        \begin{split}
        \mathbb{E} \left [ f(\bw_n)^\top \bx_{n,0}^N | \bW^N\right] 
        &= 
        \sum_{i =1}^M \mathbb{E} \left [ f(\bw_n)^{(i)} (\bx_{n,0}^N)^{(i)} | \bW^N \right ] \\
        &=
        \sum_{i =1}^M f(\bw_n)^{(i)} p_0\left( \bw_n, \theta^\star \right)^{(i)} 
        =
        f(\bw_n)^\top p_0\left( \bw_n, \theta^\star \right),
        \end{split}
    \end{equation}
    moreover:
    \begin{equation}
        \begin{split}
        &\abs{ f(\bw_n)^\top \bx_{n,0}^N- f(\bw_n)^\top p_0\left( \bw_n, \theta^\star \right)} \\
        &\leq 
        \sum_{i =1}^M \left [ \abs{ f(\bw_n)^{(i)}  (\bx_{n,0}^N)^{(i)}} + \abs{f(\bw_n)^{(i)} p_0\left( \bw_n, \theta^\star \right)^{(i)}} \right ]\\
        &\leq 
        B \sum_{i =1}^M \left [ (\bx_{n,0}^N)^{(i)} + p_0\left( \bw_n, \theta^\star \right)^{(i)} \right ] 
        = 
        2 B, \quad \mathbb{P}\text{-almost surely},
        \end{split}
    \end{equation}
    as $\bx_{n,0}^N$ is a one-hot encoding vector and $p_0\left( \bw_n, \theta^\star \right)$ is a probability distribution for any $\bw_n$. As $f(\bw_n)^\top \bx_{n,0}^N- f(\bw_n)^\top p_0\left( \bw_n, \theta^\star \right)$ are also conditionally independent across $n$ given the population covariates $\bW^N$ because of the factorization of the initial distribution, we can apply Lemma \ref{lemma:mean_0_bound} and conclude:
    \begin{equation}
    \normiii[\Bigg]{\frac{1}{N} \sum_{n \in [N]} f(\bw_n)^\top \bx_{n,0}^N -  f(\bw_n)^\top p_0\left( \bw_n, \theta^\star \right)}_4 \leq 2B \sqrt[4]{6}  N^{-\frac{1}{2}},
    \end{equation}
    which also proves the second statement for $\widetilde{\alpha}_{0} = 1$.
    
    Similarly for \eqref{eq:initial_cond_bound_B} we have:
    \begin{equation}
        \begin{split}
        \mathbb{E} \left [ f(\bw_n)^\top p_0\left( \bw_n, \theta^\star \right) \right] 
        &= 
        \int f(w)^\top p_0\left( w, \theta^\star \right) \Gamma(d w),
        \end{split}
    \end{equation}
    and also:
    $$
    \abs{f(\bw_n)^\top p_0\left( \bw_n, \theta^\star \right) - \int f(w)^\top p_0\left( w, \theta^\star \right) \Gamma(d w)} \leq 2 B, \quad \mathbb{P}\text{-almost surely}.
    $$
    As $f(\bw_n)^\top p_0\left( \bw_n, \theta^\star \right) - \int f(w)^\top p_0\left( w, \theta^\star \right) \Gamma(d w)$ are independent because functions of independent random variables, indeed $\bw_n$ are i.i.d. samples from $\Gamma$, we can apply Lemma \ref{lemma:mean_0_bound} and conclude:
    \begin{equation}
        \normiii[\Bigg]{\frac{1}{N} \sum_{n \in [N]} 
        f(\bw_n)^\top p_0\left( \bw_n, \theta^\star \right) - \int f(w)^\top p_0\left( w, \theta^\star \right) \Gamma(d w)}_4 \leq 2B \sqrt[4]{6}  N^{-\frac{1}{2}}.
    \end{equation}
    
    By putting everything together we conclude the proof by setting $\alpha_0=2$, indeed:
    \begin{align}
            &\normiii[\Bigg]{\frac{1}{N} \sum_{n \in [N]} 
        f(\bw_n)^\top \bx_{n,0}^N - \int f(w)^\top p_0\left( w, \theta^\star \right) \Gamma(d w)}_4\\
        &\leq
        \normiii[\Bigg]{\frac{1}{N} \sum_{n \in [N]} 
        f(\bw_n)^\top \bx_{n,0}^N - f(\bw_n)^\top p_0\left( \bw_n, \theta^\star \right) }_4\\
        &\quad + 
        \normiii[\Bigg]{\frac{1}{N} \sum_{n \in [N]} 
        f(\bw_n)^\top p_0\left( \bw_n, \theta^\star \right) - \int f(w)^\top p_0\left( w, \theta^\star \right) \Gamma(d w)}_4\\
        &\quad \leq
        2B \sqrt[4]{6}  N^{-\frac{1}{2}} + 2B \sqrt[4]{6}  N^{-\frac{1}{2}} = 2B \sqrt[4]{6}  N^{-\frac{1}{2}} 2.
    \end{align}
\end{proof}

\paragraph{Dynamics.}
We now turn to the behavior of $\frac{1}{N}\sum_{n \in [N]}f(\bw_n)^\top \bx_{n,t}^N$, for $t\geq1$. We need the following definitions, for $w \in \mathbb{W}$ and $t\geq 1$:
\begin{equation}\label{eq:recursion_lambda}
    \begin{aligned}    &\blambda_{0}^\infty(w,\theta^\star) \coloneqq p_0\left( w, \theta^\star \right) \\
    &\boeta_{t-1}^{\infty} (w,\theta^\star) \coloneqq \int d(w,\widetilde{w},\theta^\star)^\top \blambda_{t-1}^\infty (\widetilde{w},\theta^\star) \Gamma(d\widetilde{w})\\
    &\blambda_{t}^\infty (w, \theta^\star ) \coloneqq \left [ \blambda_{t-1}^\infty (w,\theta^\star)^\top K_{ \boeta_{t-1}^{\infty} (w,\theta^\star )  }(w,\theta^\star) \right ]^\top.
    \end{aligned}
\end{equation}
The quantity $\boeta_{t-1}^\infty$ can be loosely interpreted as being $\eta^N(\cdot,\cdot,\bW^N,\bX^N_{t-1})$ but with $\bW^N$ and $\bX^N_{t-1}$ integrated out in the $N\to\infty$ limit. 

\begin{proposition}\label{prop:bound_X}
    Under assumptions \ref{ass:w_iid},\ref{ass:eta_structure},\ref{ass:kernel_continuity}, for any $t \geq 1$ there exists $\alpha_t > 0$ such that for any bounded function $f: \mathbb{W} \to [0,B]^M$
    \begin{equation}
        \normiii[\Bigg]{\frac{1}{N} \sum_{n \in [N]} f(\bw_n)^\top  \bx_{n,t}^N - \int f(w)^\top \blambda_{t}^\infty(w,\theta^\star) \Gamma(d w)}_4 \leq 2 B \sqrt[4]{6} N^{-\frac{1}{2}}  \alpha_t.
    \end{equation}
    Moreover, under assumptions \ref{ass:eta_structure},\ref{ass:kernel_continuity}, for any $t\geq 1$ there exists $\widetilde{\alpha}_t > 0$ such that for any bounded function $f: \mathbb{W} \to [0,B]^M$,
    \begin{equation}
        \normiii[\Bigg]{\frac{1}{N} \sum_{n \in [N]} f(\bw_n)^\top  \bx_{n,t}^N - f(\bw_n)^\top \blambda_{t}^\infty (\bw_n, \theta^\star)}_4 \leq 2 B \sqrt[4]{6} N^{-\frac{1}{2}} \widetilde{\alpha}_t.
    \end{equation}
\end{proposition}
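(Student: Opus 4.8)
The plan is to prove both displayed bounds simultaneously by induction on $t$, with the base case $t=0$ furnished by Proposition \ref{prop:bound_initial_cond}. The inductive hypothesis is that both bounds hold at time $t-1$ for \emph{every} bounded $f$, with constants $\alpha_{t-1},\tilde{\alpha}_{t-1}$ that are uniform over the test function (only the prefactor depends on its bound). I would prove the second (``individual'') bound first and then recover the first by adding the pure law-of-large-numbers term $\frac1N\sum_n f(\bw_n)^\top\blambda_t^\infty(\bw_n,\theta^\star)-\int f(w)^\top\blambda_t^\infty(w,\theta^\star)\Gamma(dw)$, which is an average of i.i.d., mean-zero, $\abs{\cdot}\le 2B$ summands and hence $\mathcal{O}(N^{-1/2})$ by Lemma \ref{lemma:mean_0_bound}, exactly as \eqref{eq:initial_cond_bound_B} does at $t=0$.

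For the individual bound at time $t$, set $\mathcal{F}_{t-1}\coloneqq\sigma(\bX_{t-1}^N,\bW^N)$ and split the error by Minkowski into three pieces. The fluctuation piece $\frac1N\sum_n f(\bw_n)^\top(\bx_{n,t}^N-\mathbb{E}[\bx_{n,t}^N\mid\mathcal{F}_{t-1}])$ has summands that, conditional on $\mathcal{F}_{t-1}$, are independent, mean zero and bounded by $2B$, so Lemma \ref{lemma:mean_0_bound} gives $\mathcal{O}(N^{-1/2})$. The kernel-perturbation piece replaces $\boeta_{t-1}^N$ by $\boeta_{t-1}^\infty$ inside $K$; since $\bx_{n,t-1}^N$ is one-hot and $f$ is bounded by $B$, each summand is dominated by $B\,\norm{K_{\boeta_{t-1}^N(\bw_n,\theta^\star)}(\bw_n,\theta^\star)-K_{\boeta_{t-1}^\infty(\bw_n,\theta^\star)}(\bw_n,\theta^\star)}_\infty\le BL\,\abs{\boeta_{t-1}^N(\bw_n,\theta^\star)-\boeta_{t-1}^\infty(\bw_n,\theta^\star)}$ by Assumption \ref{ass:kernel_continuity}, reducing it (after a triangle inequality over $n$) to the interaction bound discussed below. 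The remaining drift piece equals $\frac1N\sum_n g(\bw_n)^\top(\bx_{n,t-1}^N-\blambda_{t-1}^\infty(\bw_n,\theta^\star))$ with $g(\bw_n)\coloneqq K_{\boeta_{t-1}^\infty(\bw_n,\theta^\star)}(\bw_n,\theta^\star)f(\bw_n)$, using the identity $f^\top[x^\top K]^\top=(Kf)^\top x$ together with the recursion \eqref{eq:recursion_lambda} for $\blambda_t^\infty$. Crucially, $\boeta_{t-1}^\infty$ is a \emph{deterministic} function of its covariate argument, so $g$ is a genuinely deterministic bounded function ($\norm{g}_\infty\le B$ as $K$ is row-stochastic), and the inductive hypothesis applies verbatim to give $\mathcal{O}(N^{-1/2})$.

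The one missing ingredient, and the main obstacle, is the interaction bound $\normiii{\boeta_{t-1}^N(\bw_n,\theta^\star)-\boeta_{t-1}^\infty(\bw_n,\theta^\star)}_4=\mathcal{O}(N^{-1/2})$ uniformly in $n$. Using Assumption \ref{ass:eta_structure} I would write $\boeta_{t-1}^N(\bw_n,\theta^\star)=\frac1N\sum_k d(\bw_n,\bw_k,\theta^\star)^\top\bx_{k,t-1}^N$ and decompose its deviation from $\boeta_{t-1}^\infty(\bw_n,\theta^\star)=\int d(\bw_n,\tilde{w},\theta^\star)^\top\blambda_{t-1}^\infty(\tilde{w},\theta^\star)\Gamma(d\tilde{w})$ as $P_n+Q_n$, where $Q_n$ replaces each state by $\blambda_{t-1}^\infty(\bw_k,\theta^\star)$ and $P_n=\frac1N\sum_k d(\bw_n,\bw_k,\theta^\star)^\top(\bx_{k,t-1}^N-\blambda_{t-1}^\infty(\bw_k,\theta^\star))$. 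The difficulty is that the test function $d(\bw_n,\cdot,\theta^\star)$ is itself random through $\bw_n$, and $\bw_n$ simultaneously enters the dynamics generating every $\bx_{k,t-1}^N$, so neither Lemma \ref{lemma:mean_0_bound} nor the inductive hypothesis (both stated for a fixed deterministic $f$) applies directly.

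I would resolve this by conditioning on $\bw_n=w$. For $Q_n$ the covariates $(\bw_k)_{k\neq n}$ remain i.i.d.\ $\Gamma$ and independent of $w$, so after isolating the self-interaction term $k=n$ (bounded by $C/N$ almost surely) the summands $d(w,\bw_k,\theta^\star)^\top\blambda_{t-1}^\infty(\bw_k,\theta^\star)$ are i.i.d., bounded by $C$, with mean $\boeta_{t-1}^\infty(w,\theta^\star)$, and Lemma \ref{lemma:mean_0_bound} applies with $B=C$ uniformly in $w$. For $P_n$ one invokes the inductive individual bound at $t-1$ with the now-frozen bounded function $d(w,\cdot,\theta^\star)$, whose constant $\tilde{\alpha}_{t-1}$ is uniform over all functions bounded by $C$; the only subtlety, which is precisely the propagation-of-chaos heart of the argument, is that this bound is needed under the law conditioned on $\bw_n=w$, and one must argue that conditioning on a single covariate perturbs the population dynamics only through $\mathcal{O}(1/N)$-sized interaction contributions, leaving the rate unchanged. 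Integrating the conditional bounds against $\Gamma$ preserves $\mathcal{O}(N^{-1/2})$; collecting the three pieces with the interaction bound, summing the constants, and appending the law-of-large-numbers term then yields $\tilde{\alpha}_t$ and $\alpha_t$ and closes the induction.
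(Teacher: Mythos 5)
Your proposal follows essentially the same route as the paper's proof of Proposition \ref{prop:bound_X}: induction on $t$ anchored at Proposition \ref{prop:bound_initial_cond}, a three-term Minkowski decomposition into a conditionally centred fluctuation term handled by Lemma \ref{lemma:mean_0_bound}, a kernel-perturbation term controlled via Assumption \ref{ass:kernel_continuity} together with the linear structure of $\eta$ from Assumption \ref{ass:eta_structure}, and a drift term handled by applying the inductive hypothesis to the bounded function $w\mapsto K_{\boeta^\infty_{t-1}(w,\theta^\star)}(w,\theta^\star)f(w)$. Your reorganization---proving the individual bound first and recovering the integral bound by adding an i.i.d.\ law-of-large-numbers term, and splitting the interaction error into your $P_n+Q_n$ (with the $k=n$ self-interaction isolated)---is just a re-bracketing of the paper's bookkeeping, which runs the two inductions in parallel and controls the interaction term \eqref{eq:etabound} by applying the integral-version hypothesis to $d(w_n,\cdot,\theta^\star)$ directly.

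The one genuine divergence is the treatment of the random test function $d(\bw_n,\cdot,\theta^\star)$ in the interaction bound. The paper disposes of this in one line: the inductive bound holds for every fixed $w_n\in\mathbb{W}$ with a constant not depending on $w_n$, and the paper then asserts that the same inequality holds $\mathbb{P}$-almost surely when the random $\bw_n$ is substituted. You correctly flag that this substitution is the delicate point, since an unconditional $L^4$ bound for each fixed $w$ does not by itself control $\mathbb{E}\left[\abs{Z_N(\bw_n)}^4\right]=\int\mathbb{E}\left[\abs{Z_N(w)}^4\mid\bw_n=w\right]\Gamma(dw)$ when $\bw_n$ is correlated with the population states; identifying this as the propagation-of-chaos heart of the argument is a fair reading of where the real work lies. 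However, your proposed patch---conditioning on $\bw_n=w$ and arguing that this perturbs the dynamics only through $\mathcal{O}(1/N)$ interaction contributions---is a coupling argument that you do not carry out, and it is not what the paper does. If you pursue it, you must track that freezing one covariate changes both the empirical covariate averages and every interaction term $\boeta^N_{t-1}(\bw_k,\theta^\star)$ by $\mathcal{O}(1/N)$, and propagate these perturbations through the recursion without degrading the $N^{-1/2}$ rate; as written, this step of your proposal is a sketch rather than a proof, whereas the remainder matches the paper's argument step for step.
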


\begin{proof}
    We prove the first statement by induction on $t$ and start by assuming that there exists $\alpha_{t-1} \in \mathbb{R}_+$ such that for any bounded function $f: \mathbb{W} \to [0,B]^M$ we have:
    \begin{equation}\label{ass:recursive_bound} 
        \normiii[\Bigg]{\frac{1}{N} \sum_{n \in [N]} f(\bw_n)^\top \bx_{n,t-1}^N - \int f(w)^\top \blambda_{t-1}^\infty(w,\theta^\star) \Gamma(d w)}_4 \leq 2 B \sqrt[4]{6} N^{-\frac{1}{2}} \alpha_{t-1},
    \end{equation}
    which is true at the initial time step under $\alpha_{0}=2$ because of the first statement of Proposition \ref{prop:bound_initial_cond}.
    
    We then look at time step $t$ and decompose the problem into three sub-problems:
    \begin{equation}
        \begin{split}
            &f(\bw_n)^\top \bx_{n,t}^N  - \int f(w)^\top \blambda_{t}^\infty(w,\theta^\star) \Gamma(d w)
            \\
            &= 
            f(\bw_n)^\top \bx_{n,t}^N - f(\bw_n)^\top \left [ (\bx_{n,t-1}^N)^\top K_{\boeta^N_{t-1}(\bw_n,\theta^\star)}( \bw_n,\theta^\star) \right ]^\top  \\
            &+
            f(\bw_n)^\top \left [ (\bx_{n,t-1}^N)^\top K_{\boeta^N_{t-1}(\bw_n,\theta^\star)}( \bw_n,\theta^\star) \right ]^\top - f(\bw_n)^\top \left [ (\bx_{n,t-1}^N)^\top K_{\boeta^\infty_{t-1}(\bw_n,\theta^\star)}( \bw_n, \theta^\star)  \right ]^\top \\
            &+
            f(\bw_n)^\top \left [ (\bx_{n,t-1}^N)^\top K_{\boeta^\infty_{t-1}(\bw_n,\theta^\star)}( \bw_n, \theta^\star)  \right ]^\top -  \int f(w)^\top \left [ \blambda_{t-1}^\infty(w,\theta^\star)^\top K_{\boeta^\infty_{t-1}(w,\theta^\star)}(w,\theta^\star)\right ]^\top \Gamma(d w).
        \end{split}
    \end{equation}
   By the Minkowski inequality, we can conclude that
    \begin{align}
            &\normiii[\Bigg]{\frac{1}{N}\sum_{n \in [N]} f(\bw_n)^\top \bx_{n,t}^N  - \int f(w)^\top \blambda_{t}^\infty(w,\theta^\star) \Gamma(d w)}_4
            \\
            &\leq 
            \normiii[\Bigg]{\frac{1}{N}\sum_{n \in [N]} f(\bw_n)^\top \bx_{n,t}^N - f(\bw_n)^\top \left [ (\bx_{n,t-1}^N)^\top K_{\boeta^N_{t-1}(\bw_n,\theta^\star)}( \bw_n,\theta^\star) \right ]^\top  }_4 \label{subeq:A}\\
            &+
            \normiii[\Bigg]{\frac{1}{N}\sum_{n \in [N]} f(\bw_n)^\top \left [ K_{\boeta^N_{t-1}(\bw_n,\theta^\star)}( \bw_n,\theta^\star) - K_{\boeta^\infty_{t-1}(\bw_n,\theta^\star)}( \bw_n, \theta^\star)  \right ]^\top \bx_{n,t-1}^N }_4
            \label{subeq:B}\\
            &+
            \Bigg{\vvvert}\frac{1}{N}\sum_{n \in [N]} \left [ K_{\boeta^\infty_{t-1}(\bw_n,\theta^\star)}( \bw_n, \theta^\star)  f(\bw_n) \right ]^\top \bx_{n,t-1}^N \\
            & \qquad \qquad - \int f(w)^\top \left [ \blambda_{t-1}^\infty(w,\theta^\star)^\top K_{\boeta^\infty_{t-1}(w,\theta^\star)}(w,\theta^\star) \right ]^\top \Gamma(d w) \Bigg{\vvvert}_4. \label{subeq:C}
    \end{align}
    
    Consider \eqref{subeq:A}, we can notice that:
    \begin{equation}
        \begin{split}
            \mathbb{E} \left [ f(\bw_n)^\top \bx_{n,t}^N | \bX^N_{t-1} , \bW^N \right ] 
            &=
            \sum_{j = 1}^M \mathbb{E} \left [ f(\bw_n)^{(j)} (\bx_{n,t}^N)^{(j)} | \bX^N_{t-1} , \bW^N \right ]\\
            &= 
            \sum_{j = 1}^M \left [ (\bx_{n,t-1}^N)^\top K_{\boeta^N_{t-1}(\bw_n,\theta^\star)}( \bw_n,\theta^\star) \right ]^{(j)}  f(\bw_n)^{(j)} \\
            &= f(\bw_n)^\top \left [ (\bx_{n,t-1}^N)^\top K_{\boeta^N_{t-1}(\bw_n,\theta^\star)}( \bw_n,\theta^\star) \right ]^\top.
        \end{split}
    \end{equation}
    Given that we are considering differences of scalar products of $f(\bw_n)$ with probability\slash one-hot encoding vectors, we have:
    \begin{equation}
        \abs{f(\bw_n)^\top \bx_{n,t}^N - f(\bw_n)^\top \left [ (\bx_{n,t-1}^N)^\top K_{\boeta^N_{t-1}(\bw_n,\theta^\star)}( \bw_n,\theta^\star) \right ]^\top} \leq 2 B \quad \mathbb{P}-\text{almost surely}.
    \end{equation}
    As $\bx_{n,t}^N$ are conditionally independent across $n$ given the population covariates and the state of the population at the previous time step we can apply Lemma \ref{lemma:mean_0_bound} and get:
    \begin{equation}
        \normiii[\Bigg]{\frac{1}{N}\sum_{n \in [N]} f(\bw_n)^\top \bx_{n,t}^N - f(\bw_n)^\top \left [ (\bx_{n,t-1}^N)^\top K_{\boeta^N_{t-1}(\bw_n,\theta^\star)}( \bw_n,\theta^\star) \right ]^\top  }_4 \leq 2 B \sqrt[4]{6} N^{-\frac{1}{2}}.
    \end{equation}
    
    We now consider \eqref{subeq:B} and note that:
    \begin{equation}
        \begin{split}
            &\abs{f(\bw_n)^\top \left [ (\bx_{n,t-1}^N)^\top K_{\boeta^N_{t-1}(\bw_n,\theta^\star)}( \bw_n,\theta^\star) \right ]^\top - f(\bw_n)^\top \left [ (\bx_{n,t-1}^N)^\top K_{\boeta^\infty_{t-1}(\bw_n,\theta^\star)}( \bw_n, \theta^\star)  \right ]^\top }\\ &\leq \sum_{i =1}^M \sum_{j = 1}^M \abs{(\bx_{n,t-1}^N)^{(i)} \left [ K_{\boeta^N_{t-1}(\bw_n,\theta^\star)}( \bw_n,\theta^\star) - K_{\boeta^\infty_{t-1}(\bw_n,\theta^\star)}( \bw_n, \theta^\star)  \right ]^{(i,j)} f(\bw_n)^{(j)}}\\
            &\leq B \norm{K_{\boeta^N_{t-1}(\bw_n,\theta^\star)}( \bw_n,\theta^\star) - K_{\boeta^\infty_{t-1}(\bw_n,\theta^\star)}( \bw_n, \theta^\star) }_\infty\\
	    &\leq B L \abs{\boeta^N_{t-1}(\bw_n,\theta^\star) - \boeta^\infty_{t-1}(\bw_n,\theta^\star)},
        \end{split}
    \end{equation} 
    where the final steps follow from the boundedness of $f$, the definition of one-hot encoding vectors, and the Lipschitz-continuity assumption on the transition matrix stated in Assumption \ref{ass:kernel_continuity}. We now use the structure of $\boldsymbol{\eta}^N_{n,t-1}$ given in Assumption \ref{ass:eta_structure}:
    \begin{equation} \label{eq:etabound}
    \begin{split}
        &\abs{\boeta^N_{t-1}(\bw_n,\theta^\star) - \boeta^\infty_{t-1}(\bw_n,\theta^\star)}\\
        &= \abs{ \frac{1}{N} \sum_{k \in [N]} d(\bw_n,\bw_{k},\theta^\star)^\top \bx^N_{k,t-1} - \int d(\bw_n,\bw,\theta^\star)^\top \blambda_{t-1}^\infty(\bw,\theta^\star) \Gamma(d \bw) },     
    \end{split}
    \end{equation} 
    hence we get the following bound for \eqref{subeq:B}:
    \begin{equation}
        \begin{split}
            &\normiii[\Bigg]{\frac{1}{N}\sum_{n \in [N]} f(\bw_n)^\top \left [ K_{\boeta^N_{t-1}(\bw_n,\theta^\star)}( \bw_n,\theta^\star) - K_{\boeta^\infty_{t-1}(\bw_n,\theta^\star)}( \bw_n, \theta^\star)  \right ]^\top \bx_{n,t-1}^N }_4\\
            &\leq
            \frac{1}{N} \sum_{n \in [N]} \normiii[\Bigg]{f(\bw_n)^\top \left [ K_{\boeta^N_{t-1}(\bw_n,\theta^\star)}( \bw_n,\theta^\star) - K_{\boeta^\infty_{t-1}(\bw_n,\theta^\star)}( \bw_n, \theta^\star)  \right ]^\top \bx_{n,t-1}^N }_4\\
            &\leq
            \frac{B L}{N} \sum_{n \in [N]} \normiii[\Bigg]{\frac{1}{N} \sum_{k \in [N]} d(\bw_n,\bw_{k},\theta^\star)^\top \bx^N_{k,t-1} - \int d(\bw_n,w, \theta^\star) \blambda_{t-1}^\infty(w,\theta^\star) \Gamma(d w)}_4.
        \end{split}
    \end{equation} 
    As for almost any realization $w_n \in \mathbb{W}$ of $\bw_n$ the function $d(w_n,\cdot)$ is a bounded function because of Assumption \ref{ass:eta_structure} we can apply our inductive hypothesis and get:
    \begin{equation}
        \begin{split}
            &\normiii[\Bigg]{\frac{1}{N} \sum_{k \in [N]} d(w_n,\bw_{k},\theta^\star)^\top \bx^N_{k,t-1} - \int d(w_n,w, \theta^\star) \blambda_{t-1}^\infty(w,\theta^\star) \Gamma(d w)}_4 \leq 2 C \sqrt[4]{6} N^{-\frac{1}{2}} \alpha_{t-1}.
        \end{split}
    \end{equation} 
    Since the above bound holds for any $w_n\in \mathbb{W}$, the same inequality holds $\mathbb{P}$-almost surely when the random covariate $\bw_n$ is substituted in place of $w_n$. We then conclude:
    \begin{equation}
        \begin{split}
            &\normiii[\Bigg]{\frac{1}{N}\sum_{n \in [N]} f(\bw_n)^\top \left [ K_{\boeta^N_{t-1}(\bw_n,\theta^\star)}( \bw_n,\theta^\star) - K_{\boeta^\infty_{t-1}(\bw_n,\theta^\star)}( \bw_n, \theta^\star)  \right ]^\top \bx_{n,t-1}^N }_4\\
            &\leq
            \frac{B L}{N} \sum_{n \in [N]} \normiii[\Bigg]{\frac{1}{N} \sum_{k \in [N]} d(\bw_n,\bw_{k},\theta^\star)^\top \bx^N_{k,t-1} - \int d(\bw_n,w, \theta^\star) \blambda_{t-1}^\infty(w,\theta^\star) \Gamma(d w)}_4\\
            &\leq
            \frac{B L}{N} \sum_{n \in [N]} 2 C \sqrt[4]{6} N^{-\frac{1}{2}} \alpha_{t-1}
            =
            2 B L C \sqrt[4]{6} N^{-\frac{1}{2}} \alpha_{t-1}.
        \end{split}
    \end{equation} 
    where the first step follows from Minkowski inequality, the second one from our previous calculations, and the final one from the inductive assumption and the fact that $\norm{d(\bw_n,\cdot)}_\infty \leq C$,  $\mathbb{P}$-almost surely from Assumption \ref{ass:eta_structure}. 
    
    The last term that is left to bound is \eqref{subeq:C}. Given that it can be easily proven that for any row-stochastic matrix $K$ the function $Kf(\cdot)$ given by $\bw \mapsto K f(\bw)$ is such that $\norm{Kf(\cdot)}_\infty \leq \norm{f}_\infty$, we can conclude that $\norm{K_{\boeta^\infty_{t-1}(\cdot,\theta^\star)}(\cdot,\theta^\star)f(\cdot)}_\infty \leq B$. We can then apply our inductive assumption and conclude:    
    \begin{equation}
        \begin{split}
            &\Bigg{\vvvert}
            \frac{1}{N}\sum_{n \in [N]} \left [ K_{\boeta^\infty_{t-1}(\bw_n,\theta^\star)}( \bw_n, \theta^\star)  f(\bw_n)  \right ]^\top \bx_{n,t-1}^N \\
            & \qquad \qquad - \int \left [ K_{\boeta^\infty_{t-1}(w,\theta^\star)}(w,\theta^\star) f(w) \right ]^\top \blambda_{t-1}^\infty(w,\theta^\star) \Gamma(d w) \Bigg{\vvvert}_4
            \leq
            2 B \sqrt[4]{6} N^{-\frac{1}{2}} \alpha_{t-1}.
        \end{split}
    \end{equation}
    
    By putting everything together we have:
    \begin{equation}
        \begin{split}
            &\normiii[\Bigg]{\frac{1}{N} \sum_{n \in [N]} f(\bw_n) \bx_{n,t}^N - \int f(w)^\top \blambda_{t}^\infty(w,\theta^\star) \Gamma(d w)}_4\\
            &\leq 
            2 B \sqrt[4]{6} N^{-\frac{1}{2}} + 2 B L C \sqrt[4]{6} N^{-\frac{1}{2}} \alpha_{t-1} + 2 B \sqrt[4]{6} N^{-\frac{1}{2}} \alpha_{t-1}\\
            &\leq 
            2 B \sqrt[4]{6} N^{-\frac{1}{2}} \left [ 1+ (1 + L C) \alpha_{t-1} \right ],
        \end{split}
    \end{equation}
    from which we can conclude the proof of the first statement by setting $\alpha_t = \left [ 1+ (1 + L C) \alpha_{t-1} \right ]$.

    The proof of the second statement follows similarly by induction on $t$. Start by assuming that there exists $\widetilde{\alpha}_{t-1} \in \mathbb{R}_+$ for any $f: \mathbb{W} \to [0,B]^M$ bounded function such that for time $t-1$ we have:
    \begin{equation}
        \normiii[\Bigg]{\frac{1}{N} \sum_{n \in [N]} f(\bw_n)^\top \bx_{n,t-1}^N - f(\bw_n)^\top \blambda_{t-1}^\infty(\bw_n,\theta^\star)}_4 \leq 2 B \sqrt[4]{6} N^{-\frac{1}{2}} \widetilde{\alpha}_{t-1},
    \end{equation}
    which is true at the initial time step under $\widetilde{\alpha}_{0}=1$ because of the second statement of Proposition \ref{prop:bound_initial_cond}. We then follow the same steps as the previous proof, but we substitute \eqref{subeq:C} with:
    \begin{equation}
        \begin{split}
             \normiii[\Bigg]{\frac{1}{N}\sum_{n \in [N]} \left [ K_{\boeta^\infty_{t-1}(\bw_n,\theta^\star)}( \bw_n, \theta^\star)  f(\bw_n) \right ]^\top \left [ \bx_{n,t-1}^N - \blambda_{t-1}^\infty(\bw_n, \theta^\star) \right ]}_4 
             \leq 
             2 B \sqrt[4]{6} N^{-\frac{1}{2}} \widetilde{\alpha}_{t-1},
        \end{split}
    \end{equation}
    which is bounded because of the induction hypothesis. We can then conclude:
    \begin{equation}
        \begin{split}
            &\normiii[\Bigg]{\frac{1}{N} \sum_{n \in [N]} f(\bw_n)^\top \bx_{n,t}^N - f(\bw_n)^\top \blambda_{t}^\infty (\bw_n, \theta^\star)^\top }_4\\
            &\leq 
            2 B \sqrt[4]{6} N^{-\frac{1}{2}} + 2 B C \sqrt[4]{6} N^{-\frac{1}{2}} \alpha_{t-1} + 2 B \sqrt[4]{6} N^{-\frac{1}{2}} \widetilde{\alpha}_{t-1}\\
            &\leq 
            2 B \sqrt[4]{6} N^{-\frac{1}{2}} \left [ 1+ C \alpha_{t-1} + \widetilde{\alpha}_{t-1} \right ],
        \end{split}
    \end{equation}
    from which we can conclude the proof of the second statement by setting $\widetilde{\alpha}_t = \left [ 1+ C \alpha_{t-1} + \widetilde{\alpha}_{t-1} \right ]$.
\end{proof}

The following corollary tells us that in the large population limit,  the interaction term $\boeta^N_{t-1}(\bw_n,\theta^\star)$ converges to the quantity $\boeta^\infty_{t-1}(\bw_n,\theta^\star)$ defined in \eqref{eq:recursion_lambda} which depends on individual-specific covariate $\bw_n$, but not on the covariates or diseases states of the rest of the population. This can be interpreted as meaning that individuals become statistically decoupled as the population size grows. 

\begin{corollary}\label{corol:eta_bound}
    Under assumptions \ref{ass:w_iid},\ref{ass:eta_structure},\ref{ass:kernel_continuity}, for any $t\geq 1$ there exists $\alpha_{t-1} > 0$ such that:
    \begin{equation}
        \normiii[\Bigg]{\boeta^N_{t-1}(\bw_n,\theta^\star) - \boeta^\infty_{t-1}(\bw_n,\theta^\star)}_4 \leq 2 C \sqrt[4]{6} N^{-\frac{1}{2}} \alpha_{t-1}.
    \end{equation}
\end{corollary}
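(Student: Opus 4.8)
The plan is to recognize this corollary as a direct extraction of a computation already performed inside the proof of Proposition~\ref{prop:bound_X}, specifically the bounding of the term \eqref{subeq:B}. First I would rewrite the two quantities in the form to which that proposition applies. By Assumption~\ref{ass:eta_structure} we have $\boeta^N_{t-1}(\bw_n,\theta^\star) = \frac{1}{N}\sum_{k \in [N]} d(\bw_n,\bw_k,\theta^\star)^\top \bx^N_{k,t-1}$, while the defining recursion \eqref{eq:recursion_lambda} gives $\boeta^\infty_{t-1}(\bw_n,\theta^\star) = \int d(\bw_n,w,\theta^\star)^\top \blambda^\infty_{t-1}(w,\theta^\star)\,\Gamma(dw)$. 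Hence their difference is exactly the empirical-average-minus-integral object already written in \eqref{eq:etabound}, so the corollary is precisely the quantity that had to be controlled when bounding \eqref{subeq:B}.

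The second step is to invoke the first statement of Proposition~\ref{prop:bound_X} at time $t-1$. For a fixed point $a \in \mathbb{W}$, the map $w \mapsto d(a,w,\theta^\star)$ is a function $\mathbb{W} \to [0,C]^M$ that is bounded by $C$ in virtue of Assumption~\ref{ass:eta_structure}. Taking $f = d(a,\cdot,\theta^\star)$ and $B=C$ in Proposition~\ref{prop:bound_X} yields
\begin{equation}
\normiii[\Big]{\frac{1}{N}\sum_{k\in[N]} d(a,\bw_k,\theta^\star)^\top \bx^N_{k,t-1} - \int d(a,w,\theta^\star)^\top \blambda^\infty_{t-1}(w,\theta^\star)\,\Gamma(dw)}_4 \leq 2C\sqrt[4]{6}\,N^{-\frac{1}{2}}\alpha_{t-1},
\end{equation}
with the same constant $\alpha_{t-1}$ as in Proposition~\ref{prop:bound_X}, and, crucially, with a bound that is uniform over all $a \in \mathbb{W}$ since neither the constant nor the rate depends on the chosen test point.

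The final step, which I expect to be the main obstacle, is to pass from a deterministic test point $a$ to the random covariate $\bw_n$. The delicacy is that $\bw_n$ enters both as the first argument of $d$ and, through the $k=n$ summand, inside the population average itself, whose states $\bx^N_{k,t-1}$ also depend on $\bw_n$; thus the substitution is not merely a composition with a fixed function. I would resolve this exactly as is done for \eqref{subeq:B} in the proof of Proposition~\ref{prop:bound_X}: because the displayed bound holds with a single constant $\alpha_{t-1}$ simultaneously for every $a \in \mathbb{W}$, it is inherited $\mathbb{P}$-almost surely when the random $\bw_n$ is substituted for $a$. As a sanity check that the self-coupling is harmless, the diagonal contribution $\frac{1}{N}d(\bw_n,\bw_n,\theta^\star)^\top \bx^N_{n,t-1}$ to the average is bounded by $C/N$ almost surely and hence is negligible at the $\mathcal{O}(N^{-1/2})$ scale. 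Combining these observations gives $\normiii{\boeta^N_{t-1}(\bw_n,\theta^\star) - \boeta^\infty_{t-1}(\bw_n,\theta^\star)}_4 \leq 2C\sqrt[4]{6}\,N^{-\frac{1}{2}}\alpha_{t-1}$, which is the claim.
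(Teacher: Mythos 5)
Your proposal is correct and follows essentially the same route as the paper: the paper's proof of this corollary is literally a pointer back to the passage bounding \eqref{subeq:B} in the proof of Proposition \ref{prop:bound_X}, where the difference is rewritten via Assumption \ref{ass:eta_structure} and \eqref{eq:recursion_lambda} as in \eqref{eq:etabound}, the first statement of Proposition \ref{prop:bound_X} at time $t-1$ is applied with test function $d(a,\cdot,\theta^\star)$ and $B=C$, and the random covariate $\bw_n$ is then substituted using the uniformity of the bound over $a\in\mathbb{W}$. Your extra remark that the diagonal $k=n$ contribution is $\mathcal{O}(C/N)$ is a harmless addition the paper does not make, but otherwise the arguments coincide, including the handling of the random test point.
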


\begin{proof}
    The result is a byproduct of the proof of Proposition \ref{prop:bound_X}. In particular, the passage proving the bound on Equation \eqref{eq:etabound} proves exactly the statement of the present corollary.
\end{proof}

\paragraph{Observations.}
Using Proposition \ref{prop:bound_X}, we next establish an $L^4$ bound for averages across the population of observations. We need the following definition, for $w \in \mathbb{W}$ and $t\geq 1$:
\begin{equation} \label{eq:recursion_nu}  \bnu_t^\infty(w,\theta^\star) \coloneqq \left [\blambda_t^\infty(w,\theta^\star)^\top G(w,\theta^\star) \right ]^\top.
\end{equation}

\begin{proposition}\label{prop:bound_Y}
    Under assumptions \ref{ass:w_iid},\ref{ass:eta_structure},\ref{ass:kernel_continuity}, for any $t\geq 1$ there exists $\alpha_t > 0$ such that for any bounded  $f: \mathbb{W} \to [0,B]^{M+1}$
    \begin{equation}
        \normiii[\Bigg]{\frac{1}{N} \sum_{n \in [N]} f(\bw_n)^\top \by_{n,t}^N - \int f(w)^\top \bnu_t^\infty(w,\theta^\star) \Gamma(d w)}_4 \leq 2 B \sqrt[4]{6} N^{-\frac{1}{2}} \left ( 1 + \alpha_t \right ).
    \end{equation}
    Moreover, under assumptions \ref{ass:eta_structure},\ref{ass:kernel_continuity}, for any $t\geq 1$ there exists $\widetilde{\alpha}_t > 0$ such that for bounded function $f: \mathbb{W} \to [0,B]^{M+1}$:
    \begin{equation}
        \normiii[\Bigg]{\frac{1}{N} \sum_{n \in [N]} f(\bw_n)^\top \by_{n,t}^N - f(\bw_n)^\top \bnu_t^\infty (\bw_n,\theta^\star) }_4 \leq 2 B \sqrt[4]{6} N^{-\frac{1}{2}} \left ( 1 + \widetilde{\alpha}_t \right ).
    \end{equation}
\end{proposition}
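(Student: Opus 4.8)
The plan is to mirror the two-part decomposition used in the proof of Proposition \ref{prop:bound_X}: split the quantity of interest into an \emph{observation-noise} term with conditional mean zero, handled directly by Lemma \ref{lemma:mean_0_bound}, and a term that collapses to an average of disease states already controlled by Proposition \ref{prop:bound_X}. The only genuinely new ingredient is an auxiliary bounded function carrying the emission matrix.

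First I would set $g(w) \coloneqq G(w,\theta^\star) f(w)$. Since $f$ takes values in $[0,B]^{M+1}$ and $G(w,\theta^\star)$ is an $M\times(M+1)$ row-stochastic matrix, each component of $g(w)$ is a convex combination of components of $f(w)$, so $g:\mathbb{W}\to[0,B]^M$ with $\norm{g}_\infty\leq B$. Two algebraic identities then drive everything: $f(\bw_n)^\top\big[(\bx_{n,t}^N)^\top G(\bw_n,\theta^\star)\big]^\top = g(\bw_n)^\top\bx_{n,t}^N$, and, recalling $\bnu_t^\infty(w,\theta^\star)=[\blambda_t^\infty(w,\theta^\star)^\top G(w,\theta^\star)]^\top$, also $f(w)^\top\bnu_t^\infty(w,\theta^\star)=g(w)^\top\blambda_t^\infty(w,\theta^\star)$.

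With these in hand I would write, for each $n$,
\begin{align}
    &f(\bw_n)^\top \by_{n,t}^N - \int f(w)^\top \bnu_t^\infty(w,\theta^\star)\,\Gamma(dw) \\
    &= \Big( f(\bw_n)^\top \by_{n,t}^N - g(\bw_n)^\top \bx_{n,t}^N \Big) + \Big( g(\bw_n)^\top \bx_{n,t}^N - \int g(w)^\top \blambda_t^\infty(w,\theta^\star)\,\Gamma(dw) \Big),
\end{align}
and apply Minkowski's inequality to the population average. For the first bracket, the emission model gives $\mathbb{E}[\by_{n,t}^N\mid \bX_t^N,\bW^N]=[(\bx_{n,t}^N)^\top G(\bw_n,\theta^\star)]^\top$, so each summand has conditional mean zero given $\mathcal{F}\coloneqq\sigma(\bX_t^N,\bW^N)$; the summands are conditionally independent across $n$ by the conditional-independence structure of the observations, and are bounded by $2B$ almost surely (a difference of scalar products of the bounded $g,f$ with a one-hot or probability vector). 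Lemma \ref{lemma:mean_0_bound} therefore bounds the first bracket's average by $2B\sqrt[4]{6}\,N^{-1/2}$. For the second bracket's average, the two identities above show it equals $\frac{1}{N}\sum_n g(\bw_n)^\top\bx_{n,t}^N-\int g(w)^\top\blambda_t^\infty(w,\theta^\star)\,\Gamma(dw)$, which is exactly the object bounded by the first statement of Proposition \ref{prop:bound_X} applied to the bounded function $g$, giving $2B\sqrt[4]{6}\,N^{-1/2}\alpha_t$. Summing yields the first claim with constant $1+\alpha_t$. The second (localized) claim is proven verbatim, except the second bracket becomes $g(\bw_n)^\top\bx_{n,t}^N-g(\bw_n)^\top\blambda_t^\infty(\bw_n,\theta^\star)$ and is controlled by the \emph{second} statement of Proposition \ref{prop:bound_X} with constant $\tilde{\alpha}_t$.

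I do not expect a serious obstacle here: all the propagation-of-chaos work — the $L^4$ control of averages of the latent states — is already carried out in Proposition \ref{prop:bound_X}, and this proposition merely pushes one emission step through it. The only point requiring care is the verification that $g=Gf$ remains in $[0,B]^M$, so that Proposition \ref{prop:bound_X} applies with the \emph{same} constant $B$; everything else is the mean-zero, conditionally-independent bookkeeping needed to invoke Lemma \ref{lemma:mean_0_bound} for the emission noise.
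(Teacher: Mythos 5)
Your proposal is correct and follows essentially the same route as the paper's proof: the same Minkowski split into a conditionally mean-zero emission-noise term (handled by Lemma \ref{lemma:mean_0_bound}) and a latent-state average to which Proposition \ref{prop:bound_X} is applied via the bounded function $g = G(\cdot,\theta^\star)f(\cdot)$, with the row-stochasticity of $G$ guaranteeing $\norm{g}_\infty \le B$. The only cosmetic difference is that you introduce $g$ explicitly up front, whereas the paper performs the same refactoring mid-proof.
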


\begin{proof}
    To prove the first statement, the first step is to note that by Minkowski inequality:
    \begin{equation}
        \begin{split}
        &\normiii[\Bigg]{\frac{1}{N} \sum_{n \in [N]} f(\bw_n)^\top \by_{n,t}^N -  \int f(w)^\top \bnu_t^\infty(w,\theta^\star) \Gamma(d w)}_4 \\
        &\leq 
        \normiii[\Bigg]{\frac{1}{N} \sum_{n \in [N]} f(\bw_n)^\top \by_{n,t}^N - f(\bw_n)^\top \left [ (\bx_{n,t}^N)^\top G(\bw_n,\theta^\star) \right ]^\top}_4 \\
        &\quad + 
        \normiii[\Bigg]{\frac{1}{N} \sum_{n \in [N]} f(\bw_n)^\top \left [ (\bx_{n,t}^N)^\top G(\bw_n,\theta^\star) \right ]^\top - \int f(w)^\top \left [ \blambda_t^\infty(w,\theta^\star)^\top G(w,\theta^\star) \right ]^\top \Gamma(d w)}_4,
        \end{split}
    \end{equation}
    as by definition $ \int f(w)^\top \bnu_t^\infty(w,\theta^\star) \Gamma(d w) = \int f(w)^\top \left [ \blambda_t^\infty(w,\theta^\star)^\top G(w,\theta^\star) \right ]^\top \Gamma(d w)$. For the first term, we notice that: 
    \begin{equation}
        \mathbb{E} \left [f(\bw_n)^\top \by_{n,t}^N \Big{|} \bX_{t}^N, \bW^N \right ] =  f(\bw_n)^\top \left [ (\bx_{n,t}^N)^\top G(\bw_n,\theta^\star)  \right ] = \left [ G(\bw_n,\theta^\star) f(\bw_n) \right ]^\top  \bx_{n,t}^N,
    \end{equation}
    meaning that the arguments of the sum are all conditionally independent and with mean zero. Moreover, given that $[(\bx_{n,t}^N)^\top G(\bw_n,\theta^\star)]^\top$ is a probability vector and $\by_{n,t}^N$ is a one-hot encoding vector, we can also conclude that the arguments of the sums are all bounded by $2B$, hence we can apply Lemma \ref{lemma:mean_0_bound} and conclude:
    \begin{equation}
        \begin{split}
        \normiii[\Bigg]{\frac{1}{N} \sum_{n \in [N]} f(\bw_n)^\top \by_{n,t}^N - f(\bw_n)^\top \left [ (\bx_{n,t}^N)^\top G(\bw_n,\theta^\star) \right ]^\top}_4 
        \leq 2 B \sqrt[4]{6} N^{-\frac{1}{2}}.
        \end{split}
    \end{equation}
    For the second term we just need some refactoring:
    \begin{equation}
        \begin{split}
        &\normiii[\Bigg]{\frac{1}{N} \sum_{n \in [N]} f(\bw_n)^\top \left [ (\bx_{n,t}^N)^\top G(\bw_n,\theta^\star) \right ]^\top - \int f(w)^\top \left [ \blambda_t^\infty(w,\theta^\star)^\top G(w,\theta^\star) \right ]^\top \Gamma(d w)}_4\\
        &=
        \normiii[\Bigg]{\frac{1}{N} \sum_{n \in [N]} \left [ G(\bw_n,\theta^\star) f(\bw_n) \right ]^\top \bx_{n,t}^N - \int \left [ G(w,\theta^\star) f(w) \right ]^\top\blambda_t^\infty(w,\theta^\star)  \Gamma(d w)}_4.
        \end{split}
    \end{equation}
    As $\norm{G(\cdot,\theta^\star)f(\cdot)}_\infty \leq \norm{f}_\infty$ because $G(\cdot,\theta^\star)$ is a row-stochastic matrix, we can just apply the first statement of Proposition \ref{prop:bound_X} and conclude:
    \begin{equation}
        \begin{split}
        &\normiii[\Bigg]{\frac{1}{N} \sum_{n \in [N]} f(\bw_n)^\top \left [ (\bx_{n,t}^N)^\top G(\bw_n,\theta^\star) \right ]^\top - \int f(w)^\top \left [ \blambda_t^\infty(w,\theta^\star)^\top G(w,\theta^\star) \right ]^\top \Gamma(d w)}_4\\
        &\leq 2 B \sqrt[4]{6} N^{-\frac{1}{2}} \alpha_t.
        \end{split}
    \end{equation}
    We then conclude the proof by putting everything together:
    \begin{equation}
        \begin{split}
        &\normiii[\Bigg]{\frac{1}{N} \sum_{n \in [N]} f(\bw_n)^\top \by_{n,t}^N -  \int f(w)^\top \bnu_t^\infty(w,\theta^\star) \Gamma(d w)}_4 
        \leq 2 B \sqrt[4]{6} N^{-\frac{1}{2}}\left ( 1 + \alpha_t \right ).
        \end{split}
    \end{equation}

    Given the above proof strategy, it is trivial to establish the second statement by simply substituting the last term of the Minkowski inequality with:
    \begin{equation}
        \begin{split}
        &\normiii[\Bigg]{\frac{1}{N} \sum_{n \in [N]} f(\bw_n)^\top \left [ (\bx_{n,t}^N)^\top G(\bw_n,\theta^\star) \right ]^\top - f(\bw_n)^\top \blambda_t^\infty(\bw_n,\theta^\star)}_4,
        \end{split}
    \end{equation}
    and by applying the second statement of Proposition \ref{prop:bound_X}.
\end{proof}

\subsection{\texorpdfstring{$L^4$}{L4} bounds for the CAL filtering algorithm} \label{sec:asympt_CAL}

The aim of this section is to establish $L^4$ bounds for averages across the population of the various probability vectors computed in the CAL filtering recursion, \eqref{rec:CAL_rec}. In Proposition \ref{prop:bound_CAL_initial} and Proposition \ref{prop:bound_CAL_correction} below, we shall see that the large-population behavior of these averages is determined by the following quantities.
Specifically, for $w \in \mathbb{W}$:
\begin{equation}\label{rec:asympt_CAL_rec_full}
    \begin{aligned}
    &\bar{\bpi}_{0}^\infty(w,\theta) \coloneqq p_{0}(w,\theta),\\
    &\bar{\boeta}_{t-1}^\infty(w,\theta) \coloneqq \int d(w,\widetilde{w},\theta)^\top \bar{\bpi}_{t-1}^\infty(\widetilde{w},\theta) \Gamma(d \widetilde{w}),\\
    &\bar{\bpi}_{t|t-1}^\infty(w,\theta)  \coloneqq  \left [ \bar{\bpi}_{t-1}^\infty(w,\theta)^{\top} K_{\bar{\boeta}_{t-1}^\infty(w,\theta)} (w, \theta)\right ]^\top,\\
  &\bar{\bmu}_{t}^\infty(w,\theta)  \coloneqq  \left [ \bar{\bpi}_{t|t-1}^\infty(w,\theta)^{\top} G(w,\theta) \right ]^\top,\\
    &\bar{\bpi}_{t}^\infty(w,\theta)  \coloneqq \bar{\bpi}_{t|t-1}^\infty(w,\theta) \odot \left \{ \left [   G(w,\theta) \oslash \left ( 1_M \bar{\bmu}_{t}^\infty(w,\theta)^\top \right ) \right ] \bnu_{t}^\infty(w,\theta^\star)  \right \},
    \end{aligned}
\end{equation}
where $\bnu_{t}^\infty(w,\theta^\star)$ is defined in \eqref{eq:recursion_nu}. 

\begin{proposition}\label{prop:bound_CAL_initial}
    Under Assumption \ref{ass:w_iid}, for any bounded function $f: \mathbb{W} \to [0,B]^M$ it holds that:
    \begin{equation}
        \normiii[\Bigg]{\frac{1}{N} \sum_{n \in [N]} f(\bw_n)^\top \bpi_{n,0}^N(\bw_n,\theta) - \int f(\bw)^\top \bar{\bpi}_{0}^\infty(\bw,\theta) \Gamma(d \bw)}_4 \leq 2 B \sqrt[4]{6} N^{-\frac{1}{2}}.
    \end{equation}
    Moreover, for any bounded function $f: \mathbb{W} \to [0,B]^M$:
    \begin{equation}
        \normiii[\Bigg]{\frac{1}{N} \sum_{n \in [N]} f(\bw_n)^\top \bpi_{n,0}^N(\bw_n,\theta) - f(\bw_n)^\top \bar{\bpi}_{0}^\infty(\bw_n,\theta)}_4 = 0.
    \end{equation}
\end{proposition}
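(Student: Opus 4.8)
The plan is to exploit the fact that at the initial time step no approximation takes place: both the CAL recursion \eqref{rec:CAL_rec} and the saturated recursion \eqref{rec:asympt_CAL_rec_full} initialize with the very same object, namely $\bpi_{n,0}^N(\bw_n,\theta) = p_0(\bw_n,\theta) = \bar{\bpi}_0^\infty(\bw_n,\theta)$. The second statement is then immediate, since each summand
$$
f(\bw_n)^\top \bpi_{n,0}^N(\bw_n,\theta) - f(\bw_n)^\top \bar{\bpi}_0^\infty(\bw_n,\theta) = f(\bw_n)^\top p_0(\bw_n,\theta) - f(\bw_n)^\top p_0(\bw_n,\theta)
$$
vanishes identically; hence the average is $0$ for every $\omega \in \Omega$ and its $L^4$ norm is exactly $0$.

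For the first statement I would substitute $\bpi_{n,0}^N(\bw_n,\theta) = p_0(\bw_n,\theta)$ and $\bar{\bpi}_0^\infty(w,\theta) = p_0(w,\theta)$ to reduce the claim to bounding $\normiii{\tfrac{1}{N}\sum_{n\in[N]} \boldsymbol{\delta}_n}_4$, where $\boldsymbol{\delta}_n \coloneqq f(\bw_n)^\top p_0(\bw_n,\theta) - \int f(w)^\top p_0(w,\theta)\,\Gamma(dw)$. By Assumption \ref{ass:w_iid} the covariates $\bw_n$ are i.i.d. from $\Gamma$, so the $\boldsymbol{\delta}_n$ are independent (each a function of $\bw_n$ alone) and have mean zero (by the definition of the integral against $\Gamma$). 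They are also almost surely bounded by $2B$: since $p_0(\bw_n,\theta)$ is a probability vector and $\norm{f}_\infty \leq B$, we have $f(\bw_n)^\top p_0(\bw_n,\theta) \leq B$, and the same bound applies to the integral term. Taking the filtration in Lemma \ref{lemma:mean_0_bound} to be trivial, that lemma then yields the stated bound $2B\sqrt[4]{6}\,N^{-\frac12}$.

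This argument is essentially the treatment of the term \eqref{eq:initial_cond_bound_B} in the proof of Proposition \ref{prop:bound_initial_cond}, with $\theta$ in place of $\theta^\star$. The only difference is that the sampling term \eqref{eq:initial_cond_bound_A}---which in Proposition \ref{prop:bound_initial_cond} captures the fluctuation of the random state $\bx_{n,0}^N$ about its conditional mean---is absent here, because the CAL initialization is deterministic given the covariates. Consequently there is no genuine obstacle: this proposition is merely the base case of the induction driving Section \ref{sec:asympt_CAL}, and the substantive difficulty is deferred to Proposition \ref{prop:bound_CAL_correction}, where one must propagate such bounds through the Lipschitz kernel $K$ and, in the correction step, through the nonlinear elementwise normalization $\oslash$ appearing in \eqref{rec:asympt_CAL_rec_full}.
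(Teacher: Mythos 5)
Your proof is correct and follows essentially the same route as the paper's: the second statement is immediate from $\bpi_{n,0}^N(\bw_n,\theta)=p_0(\bw_n,\theta)=\bar{\bpi}_0^\infty(\bw_n,\theta)$, and the first reduces to an average of i.i.d., mean-zero, $2B$-bounded variables handled by Lemma \ref{lemma:mean_0_bound}. Your closing remarks about the absence of the sampling term from Proposition \ref{prop:bound_initial_cond} and the role of this result as the induction base case are accurate but not needed for the proof itself.
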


\begin{proof}
    The first statement follows the same reasoning of Proposition \ref{prop:bound_initial_cond}, but we include it here for completeness. As: 
    $$
    \mathbb{E}\left [ f(\bw_n)^\top \bpi_{n,0}^N(\bw_n,\theta) \right ] = \int f(w)^\top \bar{\bpi}_{0}^\infty(w,\theta) \Gamma(d w),
    $$
    and we have a sequence of independent and bounded random variables:
    $$\abs{f(\bw_n)^\top \bpi_{n,0}^N(\bw_n,\theta) - \int f(w)^\top \bar{\bpi}_{0}^\infty(w,\theta) \Gamma(d w)}\leq 2B,$$
    we can conclude the proof by applying Lemma \ref{lemma:mean_0_bound}. The second statement follows trivially from the definition of the limiting process, indeed $\bpi_{n,0}^N(\bw_n,\theta) = \bar{\bpi}_{0}^\infty(\bw_n,\theta)$.
\end{proof}

\begin{proposition}\label{prop:bound_CAL_prediction}
    If there exists a constant $\gamma_{t-1} > 0$ such that for any $f: \mathbb{W} \to [0,B]^M$
    \begin{equation}
        \normiii[\Bigg]{\frac{1}{N} \sum_{n \in [N]} f(\bw_n)^\top \left [ \bpi_{n,t-1}^N(\bw_n,\theta) - \bar{\bpi}_{t-1}^\infty(\bw_n, \theta)  \right ]}_4 \leq 2 B \sqrt[4]{6} N^{-\frac{1}{2}} \gamma_{t-1},
    \end{equation}
    then under assumptions \ref{ass:eta_structure},\ref{ass:kernel_continuity} there exists $\gamma_{t|t-1} > 0$ such that for any bounded function  $f: \mathbb{W} \to [0,B]^M$,
    \begin{equation}
        \normiii[\Bigg]{\frac{1}{N} \sum_{n \in [N]} f(\bw_n)^\top \left [ \bpi_{n,t|t-1}^N(\bw_n,\theta) - \bar{\bpi}_{t|t-1}^\infty(\bw_n,\theta) \right ]}_4 \leq 2 B \sqrt[4]{6} N^{-\frac{1}{2}} \gamma_{t|t-1}.
    \end{equation}
    Moreover, under assumptions \ref{ass:w_iid},\ref{ass:eta_structure},\ref{ass:kernel_continuity}, for any  bounded function $f: \mathbb{W} \to [0,B]^M$ it also holds:
    \begin{equation}
        \normiii[\Bigg]{\frac{1}{N} \sum_{n \in [N]} f(\bw_n)^\top \bpi_{n,t|t-1}^N(\bw_n,\theta) - \int f(w)^\top \bar{\bpi}_{t|t-1}^\infty(w,\theta) \Gamma(d w) ) }_4 \leq 2 B \sqrt[4]{6} N^{-\frac{1}{2}} \left ( \gamma_{t|t-1} + 1 \right ).
    \end{equation}
\end{proposition}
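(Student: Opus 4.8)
The plan is to obtain the final (integral-form) bound from the preceding per-covariate bound of this same proposition by a single Minkowski step, reserving the genuinely new work for the prediction recursion and the interaction term. Writing $\bar{\bpi}_{t|t-1}^\infty(\bw_n,\theta)$ for the per-covariate limit, I would decompose
\begin{equation}
\frac{1}{N}\sum_{n\in[N]} f(\bw_n)^\top\bpi_{n,t|t-1}^N(\bw_n,\theta) - \int f(w)^\top\bar{\bpi}_{t|t-1}^\infty(w,\theta)\Gamma(dw) = A_N + B_N,
\end{equation}
where
\begin{align}
A_N &\coloneqq \frac{1}{N}\sum_{n\in[N]} f(\bw_n)^\top\left[\bpi_{n,t|t-1}^N(\bw_n,\theta) - \bar{\bpi}_{t|t-1}^\infty(\bw_n,\theta)\right],\\
B_N &\coloneqq \frac{1}{N}\sum_{n\in[N]} f(\bw_n)^\top\bar{\bpi}_{t|t-1}^\infty(\bw_n,\theta) - \int f(w)^\top\bar{\bpi}_{t|t-1}^\infty(w,\theta)\Gamma(dw),
\end{align}
and bound $\normiii{A_N+B_N}_4 \leq \normiii{A_N}_4 + \normiii{B_N}_4$.

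For $B_N$ I would invoke Assumption \ref{ass:w_iid}: the summands $f(\bw_n)^\top\bar{\bpi}_{t|t-1}^\infty(\bw_n,\theta)$ are i.i.d. (functions of the i.i.d. $\bw_n$), take values in $[0,B]$ since $\bar{\bpi}_{t|t-1}^\infty(\bw_n,\theta)$ is a probability vector and $\norm{f}_\infty\leq B$, and have common mean $\int f(w)^\top\bar{\bpi}_{t|t-1}^\infty(w,\theta)\Gamma(dw)$. By the triangle inequality the centred summands are mean-zero, independent and bounded by $2B$, so Lemma \ref{lemma:mean_0_bound} gives $\normiii{B_N}_4\leq 2B\sqrt[4]{6}N^{-\frac12}$, which supplies the ``$+1$'' in the stated constant. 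The term $A_N$ is exactly the quantity controlled by the first statement of this proposition, so $\normiii{A_N}_4\leq 2B\sqrt[4]{6}N^{-\frac12}\gamma_{t|t-1}$, and adding the two yields the claimed bound $2B\sqrt[4]{6}N^{-\frac12}(\gamma_{t|t-1}+1)$.

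It remains to recall why $A_N$ obeys that bound, since that is where the substance lies. Using the prediction step $\bpi_{n,t|t-1}^N=[\,\bpi_{n,t-1}^N{}^\top K_{\tilde{\boeta}_{t-1}^N}(\bw_n,\theta)\,]^\top$ and the corresponding $\bar{\bpi}_{t|t-1}^\infty=[\,\bar{\bpi}_{t-1}^\infty{}^\top K_{\bar{\boeta}_{t-1}^\infty}(\bw_n,\theta)\,]^\top$, I would add and subtract $\bpi_{n,t-1}^N{}^\top K_{\bar{\boeta}_{t-1}^\infty}$ to split the integrand into a \emph{same-kernel} part $[\bpi_{n,t-1}^N-\bar{\bpi}_{t-1}^\infty]^\top K_{\bar{\boeta}_{t-1}^\infty}$ and a \emph{same-weight} part $\bpi_{n,t-1}^N{}^\top[K_{\tilde{\boeta}_{t-1}^N}-K_{\bar{\boeta}_{t-1}^\infty}]$. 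The same-kernel part is handled by the induction hypothesis applied to the test function $g(\cdot)\coloneqq K_{\bar{\boeta}_{t-1}^\infty(\cdot,\theta)}(\cdot,\theta)f(\cdot)$, which satisfies $\norm{g}_\infty\leq B$ because $K$ is row-stochastic. The same-weight part is bounded, after using $\norm{f}_\infty\leq B$ and that $\bpi_{n,t-1}^N$ is a probability vector, by $BL\,\abs{\tilde{\boeta}_{t-1}^N(\bw_n,\theta)-\bar{\boeta}_{t-1}^\infty(\bw_n,\theta)}$ via the Lipschitz Assumption \ref{ass:kernel_continuity}.

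The main obstacle is controlling this interaction-term difference, because $\tilde{\boeta}_{t-1}^N(\bw_n,\theta)=\frac{1}{N}\sum_k d(\bw_n,\bw_k,\theta)^\top\bpi_{k,t-1}^N(\bw_k,\theta)$ (Assumption \ref{ass:eta_structure}) couples individual $n$ to the entire population. Mirroring Corollary \ref{corol:eta_bound}, I would fix an arbitrary $w_n\in\mathbb{W}$, treat $d(w_n,\cdot,\theta)$ as a test function bounded by $C$, and split $\tilde{\boeta}_{t-1}^N(w_n,\theta)-\bar{\boeta}_{t-1}^\infty(w_n,\theta)$ into the induction-hypothesis term $\frac{1}{N}\sum_k d(w_n,\bw_k,\theta)^\top[\bpi_{k,t-1}^N-\bar{\bpi}_{t-1}^\infty(\bw_k,\theta)]$ and an i.i.d. law-of-large-numbers term $\frac{1}{N}\sum_k d(w_n,\bw_k,\theta)^\top\bar{\bpi}_{t-1}^\infty(\bw_k,\theta)-\int d(w_n,\tilde w,\theta)^\top\bar{\bpi}_{t-1}^\infty(\tilde w,\theta)\Gamma(d\tilde w)$, each of order $N^{-\frac12}$ by the induction hypothesis and Lemma \ref{lemma:mean_0_bound} respectively. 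Since the resulting bound is uniform in $w_n$, it transfers to the random covariate $\bw_n$ $\mathbb{P}$-almost surely, and averaging over $n$ with a final Minkowski step gives $\normiii{A_N}_4\leq 2B\sqrt[4]{6}N^{-\frac12}\gamma_{t|t-1}$ for a constant $\gamma_{t|t-1}$ built from $\gamma_{t-1}$, $C$ and $L$. The delicate points throughout are tracking which vectors are probability vectors (to pin down the boundedness constants) and the fix-then-randomise manoeuvre for $\bw_n$.
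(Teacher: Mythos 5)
Your proposal is correct and follows essentially the same route as the paper's proof: the same add-and-subtract split of the prediction step into a same-kernel part (handled by the induction hypothesis applied to the test function $K_{\bar{\boeta}_{t-1}^\infty(\cdot,\theta)}(\cdot,\theta)f(\cdot)$) and a same-weight part (handled via the Lipschitz assumption and the interaction-term bound), with the interaction term itself controlled by combining the induction hypothesis with an i.i.d.\ application of Lemma \ref{lemma:mean_0_bound} and the fix-then-randomise step for $\bw_n$. The only difference is presentational — you derive the integral-form statement from the per-covariate one at the start rather than the end — and your constants assemble in the same way.
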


\begin{proof}
   As a preliminary notice that: 
    \begin{equation} \label{eq:minkowski_application}
    \begin{split}
        &\normiii[\Bigg]{\frac{1}{N} \sum_{n \in [N]} f(\bw_n)^\top \bpi_{n,t-1}^N(\bw_n,\theta) - \int f(w)^\top \bar{\bpi}_{t-1}^\infty(w, \theta) \Gamma( d w) }_4\\
        &\leq 
        \normiii[\Bigg]{\frac{1}{N} \sum_{n \in [N]} f(\bw_n)^\top \left [ \bpi_{n,t-1}^N(\bw_n,\theta) - \bar{\bpi}_{t-1}^\infty(\bw_n, \theta)  \right ]}_4\\
        &+
        \normiii[\Bigg]{\frac{1}{N} \sum_{n \in [N]} f(\bw_n)^\top \bar{\bpi}_{t-1}^\infty(\bw_n, \theta) - \int f(w)^\top \bar{\bpi}_{t-1}^\infty(w, \theta) \Gamma( d w)}_4\\
	&\leq
        2 B \sqrt[4]{6} N^{-\frac{1}{2}} (\gamma_{t-1} + 1),
    \end{split}
    \end{equation}
    as we can apply Minkowski inequality and then bound the first quantity with our assumption at time $t-1$ and the second quantity with Lemma \ref{lemma:mean_0_bound}, as we are dealing with an average of independent random variables that are mean zero and bounded (this follows the same steps as the proof of Proposition \ref{prop:bound_CAL_initial}). 

    As a consequence, we have the following implication of our assumption:
    \begin{equation} 
        \normiii[\Bigg]{\frac{1}{N} \sum_{n \in [N]} f(\bw_n)^\top \bpi_{n,t-1}^N(\bw_n,\theta) - \int f(w)^\top \bar{\bpi}_{t-1}^\infty(w, \theta) \Gamma( d w)}_4
        \leq
        2 B \sqrt[4]{6} N^{-\frac{1}{2}} (\gamma_{t-1} + 1),
    \end{equation}
    and we know that if we prove the first statement it is enough to follow the above reasoning to prove the second one.
    
    We now start the proof of the first statement by noting:
    \begin{equation}
    \begin{split}
        &f(\bw_n)^\top \bpi_{n,t|t-1}^N(\bw_n,\theta) - f(\bw_n)^\top \bar{\bpi}_{t|t-1}^\infty(\bw_n,\theta)  \\
        &= 
        f(\bw_n)^\top \left [ \bpi_{n,t-1}^N(\bw_n,\theta)^\top K_{\widetilde{\boeta}^N_{t-1}(\bw_n,\theta)}(\bw_n,\theta) \right ]^\top - f(\bw_n)^\top \left [ \bpi_{n,t-1}^N(\bw_n,\theta)^\top K_{\bar{\boeta}^\infty_{t-1}(\bw_n,\theta)}(\bw_n,\theta) \right ]^\top\\
        &+
        f(\bw_n)^\top \left [ \bpi_{n,t-1}^N(\bw_n,\theta)^\top K_{\bar{\boeta}^\infty_{t-1}(\bw_n,\theta)}(\bw_n,\theta) \right ]^\top - f(\bw_n)^\top \left [ \bar{\bpi}_{t-1}^\infty(\bw_n,\theta)^\top K_{\bar{\boeta}^\infty_{t-1}(\bw_n,\theta)}(\bw_n,\theta) \right ]^\top.
    \end{split}
    \end{equation}
    Hence we can apply Minkowski inequality:
    \begin{align}
        &\normiii[\Bigg]{\frac{1}{N} \sum_{n \in [N]} f(\bw_n)^\top \bpi_{n,t|t-1}^N(\bw_n,\theta) - f(\bw_n)^\top \bar{\bpi}_{t|t-1}^\infty(\bw_n,\theta)}_4\\
        &\leq
        \normiii[\Bigg]{\frac{1}{N} \sum_{n \in [N]} f(\bw_n)^\top \left [  K_{\widetilde{\boeta}^N_{t-1}(\bw_n,\theta)}(\bw_n,\theta) - K_{\bar{\boeta}^\infty_{t-1}(\bw_n,\theta)}(\bw_n,\theta) \right ]^\top \bpi_{n,t-1}^N(\bw_n,\theta)}_4 \label{eq:asympt_step_1_prediction}\\
        &\quad + 
        \normiii[\Bigg]{\frac{1}{N} \sum_{n \in [N]} \left [ K_{\bar{\boeta}^\infty_{t-1}(\bw_n,\theta)}(\bw_n,\theta) f(\bw_n) \right ]^\top \left [ \bpi_{n,t-1}^N(\bw_n,\theta) - \bar{\bpi}_{t-1}^\infty(\bw_n,\theta) \right ]}_4.\label{eq:asympt_step_2_prediction}
    \end{align}
    
    On \eqref{eq:asympt_step_1_prediction} we can apply Assumption \ref{ass:kernel_continuity} and get:
    \begin{equation}
    \begin{split}
        &\normiii[\Bigg]{\frac{1}{N} \sum_{n \in [N]} f(\bw_n)^\top \left [ K_{\widetilde{\boeta}^N_{t-1}(\bw_n,\theta)}(\bw_n,\theta) -  K_{\bar{\boeta}^\infty_{t-1}(\bw_n,\theta)}(\bw_n,\theta) \right ]^\top \bpi_{n,t-1}^N(\bw_n,\theta)}_4 \\
        &
        \leq 
        \frac{B}{N} \sum_{n \in [N]} \normiii[\Bigg]{\widetilde{\boeta}^N_{t-1}(\bw_n,\theta) -\bar{\boeta}^\infty_{t-1}(\bw_n,\theta)}_4,
    \end{split}
    \end{equation}
    which follows the same steps as the proof of Proposition \ref{prop:bound_X}. Again, similarly to Proposition \ref{prop:bound_X}, we can apply Assumption \ref{ass:kernel_continuity} on the Lipschitz continuity of the transition matrix, Assumption \ref{ass:eta_structure} on the structure of $\eta^N$ and, given the definition of $\widetilde{\boeta}_t^N$, $\bar{\boeta}^\infty_t$, we get:
    \begin{equation} \label{eq:etacal_bound}
        \begin{split}
            &\normiii[\Bigg]{\widetilde{\boeta}^N_{t-1}(\bw_n,\theta) -\bar{\boeta}^\infty_{t-1}(\bw_n,\theta)}_4\\
            &=
            \normiii[\Bigg]{\frac{1}{N} \sum_{k \in [N]} d(\bw_n, \bw_{k},\theta)^\top \bpi_{k,t-1}^N(\bw_{k},\theta) - \int d(\bw_n,w,\theta)^\top \bar{\bpi}_{t-1}^\infty(w,\theta) \Gamma(d w)}_4\\
            &\leq 
            2 L C \sqrt[4]{6} N^{-\frac{1}{2}} (\gamma_{t-1} +1),
        \end{split}
    \end{equation}
    because of the inductive assumption, and which holds $\mathbb{P}$-almost surely following the same reasoning of Proposition \ref{prop:bound_X}. Hence: 
    \begin{equation}\label{eq:bound_tilte_eta_bar_eta}
    \begin{split}
        &\normiii[\Bigg]{\frac{1}{N} \sum_{n \in [N]} f(\bw_n)^\top \left [ \bpi_{n,t-1}^N(\bw_n,\theta)^\top K_{\widetilde{\boeta}^N_{t-1}(\bw_n,\theta)}(\bw_n,\theta) - \bpi_{n,t-1}^N(\bw_n,\theta)^\top K_{\bar{\boeta}^\infty_{t-1}(\bw_n,\theta)}(\bw_n,\theta) \right ]^\top}_4 \\
        &
        \leq \frac{B}{N} \sum_{n \in [N]} \normiii[\Bigg]{\frac{1}{N} \sum_{k \in [N]} d(\bw_n, \bw_{k},\theta)^\top \bpi_{k,t-1}^N(\bw_{k},\theta) - \int d(\bw_n,w,\theta)^\top \bar{\bpi}_{t-1}^\infty(w,\theta) \Gamma(d w)}_4\\
        &\leq \frac{B}{N} \sum_{n \in [N]} 2 L C \sqrt[4]{6} N^{-\frac{1}{2}} (\gamma_{t-1} +1) = 2 B L C \sqrt[4]{6} N^{-\frac{1}{2}} (\gamma_{t-1} +1),
    \end{split}
    \end{equation}
    where the last step follows from \eqref{eq:minkowski_application} and from the same reasoning of Proposition \ref{prop:bound_X}, i.e. we prove the inequality $\mathbb{P}$-almost surely.

    On \eqref{eq:asympt_step_2_prediction}, we can observe that the vectorial function $\bw \mapsto K_{\bar{\boeta}^\infty_{t-1}(\bw,\theta)}(\bw,\theta) f(\bw)$ is bounded, i.e. $\norm{K_{\bar{\boeta}^\infty_{t-1}(\cdot,\theta)}(\cdot,\theta)f}_\infty\leq B$, as $K_{\bar{\boeta}^\infty_{t-1}(\bw,\theta)}$ is a row-stochastic matrix, hence we can apply our assumption on the time step $t-1$ and conclude:
    \begin{equation}
        \begin{split}
            &\normiii[\Bigg]{\frac{1}{N} \sum_{n \in [N]} \left [ K_{\bar{\boeta}^\infty_{t-1}(\bw_n,\theta)}(\bw_n,\theta) f(\bw_n) \right ]^\top \left [ \bpi_{n,t-1}^N(\bw_n,\theta) - \bar{\bpi}_{t-1}^\infty(\bw_n, \theta)  \right ] }_4\\
            & \leq 2 B \sqrt[4]{6} N^{-\frac{1}{2}} \gamma_{t-1}.
        \end{split}
    \end{equation}

    By putting everything together we get:
    \begin{equation}
        \begin{split}
        &\normiii[\Bigg]{\frac{1}{N} \sum_{n \in [N]} f(\bw_n)^\top \bpi_{n,t|t-1}^N(\bw_n,\theta) - f(\bw_n)^\top \bar{\bpi}_{t|t-1}^\infty(\bw_n,\theta)}_4\\
        &\leq
        2 B L C \sqrt[4]{6} N^{-\frac{1}{2}} (\gamma_{t-1}+1) + 2 B \sqrt[4]{6} N^{-\frac{1}{2}} \gamma_{t-1} = 2 B \sqrt[4]{6} N^{-\frac{1}{2}} \left[ L C (\gamma_{t-1}+1) + \gamma_{t-1} \right],
        \end{split}
    \end{equation}
    by setting $\gamma_{t|t-1}=\left[ C (\gamma_{t-1}+1) + \gamma_{t-1} \right]$ we conclude the proof of the first statement.

    Remark that as we already mentioned at the beginning of the proof we have:
    \begin{equation}
        \mathbb{E} \left [ f(\bw_n)^\top \bar{\bpi}_{t|t-1}^\infty(\bw_n,\theta) \right ] = \int f(w)^\top \bar{\bpi}_{t|t-1}^\infty(w,\theta) \Gamma(d w),
    \end{equation}
    and the random variables $f(\bw_n)^\top \bar{\bpi}_{t|t-1}^\infty(\bw_n,\theta) - \int f(w)^\top \bar{\bpi}_{t|t-1}^\infty(w,\theta) \Gamma(d w)$ are all bounded by $2B$ and independent, hence by Lemma \ref{lemma:mean_0_bound}:
    \begin{equation}
        \normiii[\Bigg]{\frac{1}{N} \sum_{n \in [N]} f(\bw_n)^\top \bar{\bpi}_{t|t-1}^\infty(\bw_n,\theta) - \int f(w)^\top \bar{\bpi}_{t|t-1}^\infty(w,\theta) \Gamma(d w)}_4 \leq 2 B \sqrt[4]{6} N^{-\frac{1}{2}},
    \end{equation}
    which proves the second statement.
\end{proof}

\begin{proposition}\label{prop:bound_CAL_correction}
    There exists $\gamma_{t|t-1} > 0$ for any $f: \mathbb{W} \to [0,B]^M$ bounded function  such that
    \begin{equation}
        \normiii[\Bigg]{\frac{1}{N} \sum_{n \in [N]} f(\bw_n)^\top \bpi_{n,t|t-1}^N(\bw_n,\theta) - f(\bw_n)^\top \bar{\bpi}_{t|t-1}^\infty(\bw_n,\theta)}_4 \leq 2 B \sqrt[4]{6} N^{-\frac{1}{2}} \gamma_{t|t-1},
    \end{equation}
    then under assumptions \ref{ass:compactness_continuity},\ref{ass:HMM_support},\ref{ass:eta_structure},\ref{ass:kernel_continuity} for any $f: \mathbb{W} \to [0,B]^M$ bounded function we have:
    \begin{equation}
        \normiii[\Bigg]{\frac{1}{N} \sum_{n \in [N]} f(\bw_n)^\top \bmu_{n,t}^N(\bw_n,\theta) - f(\bw_n)^\top \bar{\bmu}_{t}^\infty(\bw_n,\theta)}_4 \leq 2 B \sqrt[4]{6} N^{-\frac{1}{2}} \gamma_{t|t-1},
    \end{equation}
    and there exists $\gamma_{t} > 0$ such that for any bounded function $f: \mathbb{W} \to [0,B]^M$,
    \begin{equation}
        \normiii[\Bigg]{\frac{1}{N} \sum_{n \in [N]} f(\bw_n)^\top \bpi_{n,t}^N(\bw_n,\theta) - f(\bw_n)^\top \bar{\bpi}_{t}^\infty(\bw_n,\theta)}_4 \leq 2 B \sqrt[4]{6} N^{-\frac{1}{2}} \gamma_{t}.\label{stat:pi_N_infty}
    \end{equation}
    Moreover, under assumptions \ref{ass:compactness_continuity},\ref{ass:w_iid},\ref{ass:HMM_support},\ref{ass:eta_structure},\ref{ass:kernel_continuity}, for any $f: \mathbb{W} \to [0,B]^M$ bounded function
    \begin{equation}
        \normiii[\Bigg]{\frac{1}{N} \sum_{n \in [N]} f(\bw_n)^\top \bmu_{n,t}^N(\bw_n,\theta) - \int f(w)^\top \bar{\bmu}_{t}^\infty(w,\theta) \Gamma(d w) }_4 \leq 2 B \sqrt[4]{6} N^{-\frac{1}{2}} (1+\gamma_{t|t-1}),
    \end{equation}
    and:
    \begin{equation}
        \normiii[\Bigg]{\frac{1}{N} \sum_{n \in [N]} f(\bw_n)^\top \bpi_{n,t}^N(\bw_n,\theta) - \int f(w)^\top \bar{\bpi}_{t}^\infty(w,\theta) \Gamma( d w) }_4 \leq 2 B \sqrt[4]{6} N^{-\frac{1}{2}} (1+\gamma_{t}).
    \end{equation}
\end{proposition}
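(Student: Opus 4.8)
\emph{Reduction to \eqref{stat:pi_N_infty}.} My plan is to obtain the final display directly from the already-stated estimate \eqref{stat:pi_N_infty}, reserving all the real effort for \eqref{stat:pi_N_infty} itself. By a single Minkowski split,
\begin{align}
&\normiii[\Bigg]{\frac1N\sum_{n\in[N]}f(\bw_n)^\top\bpi_{n,t}^N(\bw_n,\theta)-\int f(w)^\top\bar{\bpi}_t^\infty(w,\theta)\,\Gamma(dw)}_4\\
&\leq\normiii[\Bigg]{\frac1N\sum_{n\in[N]}f(\bw_n)^\top\left[\bpi_{n,t}^N(\bw_n,\theta)-\bar{\bpi}_t^\infty(\bw_n,\theta)\right]}_4\\
&\quad+\normiii[\Bigg]{\frac1N\sum_{n\in[N]}f(\bw_n)^\top\bar{\bpi}_t^\infty(\bw_n,\theta)-\int f(w)^\top\bar{\bpi}_t^\infty(w,\theta)\,\Gamma(dw)}_4.
\end{align}
The first term is at most $2B\sqrt[4]{6}N^{-\frac12}\gamma_t$ by \eqref{stat:pi_N_infty}. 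In the second term the summands are i.i.d.\ functions of the i.i.d.\ covariates $\bw_n$ (Assumption \ref{ass:w_iid}), have mean zero, and are bounded by $2B$ almost surely (since $\bar{\bpi}_t^\infty$ is a probability vector and $\norm{f}_\infty\leq B$), so Lemma \ref{lemma:mean_0_bound} bounds it by $2B\sqrt[4]{6}N^{-\frac12}$; adding gives the claimed $2B\sqrt[4]{6}N^{-\frac12}(1+\gamma_t)$.

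\emph{Strategy for \eqref{stat:pi_N_infty}.} The companion $\bmu$-bound is immediate: since $\bmu_{n,t}^N=[G(\bw_n,\theta)^\top\bpi_{n,t|t-1}^N]$ and likewise for $\bar{\bmu}_t^\infty$, one has $f^\top(\bmu_{n,t}^N-\bar{\bmu}_t^\infty)=[G(\bw_n,\theta)f(\bw_n)]^\top(\bpi_{n,t|t-1}^N-\bar{\bpi}_{t|t-1}^\infty)$ and $\norm{Gf}_\infty\leq\norm{f}_\infty$ because $G$ is row-stochastic, so the hypothesis applied with $Gf$ in place of $f$ gives it. For $\bpi_{n,t}^N$ I write $g_j(\bw_n)^{(i)}:=f(\bw_n)^{(i)}G(\bw_n,\theta)^{(i,j)}$ and $G_j(\bw_n)^{(i)}:=G(\bw_n,\theta)^{(i,j)}$, so the correction formula gives $f^\top\bpi_{n,t}^N=\sum_j(\by_{n,t}^N)^{(j)}R_{n,j}^N$ and $f^\top\bar{\bpi}_t^\infty=\sum_j(\bnu_t^\infty(\bw_n,\theta^\star))^{(j)}\bar{R}_{n,j}^\infty$, with $R_{n,j}^N:=(g_j^\top\bpi_{n,t|t-1}^N)/(\bmu_{n,t}^N)^{(j)}$ and $\bar{R}_{n,j}^\infty:=(g_j^\top\bar{\bpi}_{t|t-1}^\infty)/(\bar{\bmu}_t^\infty)^{(j)}$. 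Subtracting and separating observation from ratio,
\begin{align}
f^\top(\bpi_{n,t}^N-\bar{\bpi}_t^\infty)=\sum_j(\by_{n,t}^N)^{(j)}\left[R_{n,j}^N-\bar{R}_{n,j}^\infty\right]+\sum_j\left[(\by_{n,t}^N)^{(j)}-(\bnu_t^\infty(\bw_n,\theta^\star))^{(j)}\right]\bar{R}_{n,j}^\infty.
\end{align}
By Proposition \ref{prop:CAL_as_bounded} (and its saturated analogue from Section \ref{sec:limiting_quantities_support}) every denominator is $\geq m_t>0$ on its support, so all ratios are bounded by $B/m_t$ and the convention $0/0=0$ discards off-support indices. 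The second sum equals $\psi(\bw_n)^\top[\by_{n,t}^N-\bnu_t^\infty(\bw_n,\theta^\star)]$ with $\psi^{(j)}=\bar{R}_{n,j}^\infty$ a \emph{deterministic} bounded function of $\bw_n$, so its population average is $\mathcal{O}(N^{-\frac12})$ by the second statement of Proposition \ref{prop:bound_Y}. For the first sum I use the exact identity $R_{n,j}^N-\bar{R}_{n,j}^\infty=\frac{1}{(\bmu_{n,t}^N)^{(j)}}\left[g_j(\bw_n)-\bar{R}_{n,j}^\infty G_j(\bw_n)\right]^\top(\bpi_{n,t|t-1}^N-\bar{\bpi}_{t|t-1}^\infty)$, licensed by $m_t$, which recasts the sum as a population average of $v_n:=\bpi_{n,t|t-1}^N-\bar{\bpi}_{t|t-1}^\infty$ against ratio weights.

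\emph{The main obstacle} is precisely this first sum, and it is the only genuinely new difficulty relative to the prediction step of Proposition \ref{prop:bound_CAL_prediction}. There the limiting kernel carries the \emph{deterministic} interaction $\bar{\boeta}_{t-1}^\infty(\bw_n,\theta)$, so the weight multiplying $v_{t-1}$ is a deterministic function of $\bw_n$ and the hypothesis applies directly; here the correction injects the data-dependent denominator $(\bmu_{n,t}^N)^{(j)}$ and the random one-hot observation $\by_{n,t}^N$ into the weight on $v_n$. The key subtlety is that each $v_n$ is only $\mathcal{O}(1)$ pointwise — a single observation at time $t-1$ displaces the filter by an order-one amount — and is small \emph{only} after averaging across the population, so the hypothesis (which bounds $\frac1N\sum_n h(\bw_n)^\top v_n$ for \emph{deterministic} $h$) cannot be invoked termwise, and one must never form a product of two such differences: replacing $(\bmu_{n,t}^N)^{(j)}$ by $(\bar{\bmu}_t^\infty)^{(j)}$, for instance, would leave an $\mathcal{O}(1)$ quadratic-in-$v_n$ remainder that destroys the rate.

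To circumvent this I would condition on the full history $\mathcal{H}_{t-1}:=\sigma(\bW^N,\bX_{0:t-1}^N,\bY_{1:t-1}^N)$, under which the ratio weights and $v_n$ are measurable while the fresh observations $\by_{n,t}^N$ remain conditionally independent across $n$. The fluctuation $\by_{n,t}^N-\mathbb{E}[\by_{n,t}^N\mid\mathcal{H}_{t-1}]$ then forms a conditionally independent, mean-zero, a.s.-bounded array controlled at rate $N^{-\frac12}$ by Lemma \ref{lemma:mean_0_bound}, whereas the conditional-mean part is tied back, through the $L^4$ bounds of Section \ref{sec:asympt_DGP} and the hypothesis on $v_n$, to deterministic-weight population averages, with $m_t$ keeping every ratio Lipschitz so that no order-one residual survives. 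This conditioning-plus-lower-bound bookkeeping is the delicate heart of the argument; assembling it with the Proposition \ref{prop:bound_Y} term yields \eqref{stat:pi_N_infty} for a suitable $\gamma_t$, and the reduction above then delivers the final statement.
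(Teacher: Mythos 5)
Your reduction of the two integral displays to \eqref{stat:pi_N_infty}, your treatment of the $\bmu$-bound via the row-stochastic weight $G(\bw_n,\theta)f(\bw_n)$, and your handling of the second sum (the $\by_{n,t}^N\to\bnu_t^\infty$ replacement against the deterministic bounded weight $\bar R_{n,j}^\infty$, via the second statement of Proposition \ref{prop:bound_Y}) all coincide with the paper's proof. Your exact ratio identity $R_{n,j}^N-\bar R_{n,j}^\infty=\frac{1}{(\bmu_{n,t}^N)^{(j)}}[g_j-\bar R_{n,j}^\infty G_j]^\top(\bpi_{n,t|t-1}^N-\bar\bpi_{t|t-1}^\infty)$ is correct and collapses into one step what the paper does in two (first swapping $\bpi_{n,t|t-1}^N\to\bar\bpi_{t|t-1}^\infty$ with the weight $f\odot G_{\bmu_{n,t}^N}\by_{n,t}^N$ held fixed, then swapping the denominator using the already-proved $\bmu$-bound). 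Up to this point the proposal is a clean variant of the paper's argument.

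The gap is in your resolution of the first sum. You correctly observe that it has the form $\frac1N\sum_n h_n^\top v_n$ with $v_n=\bpi_{n,t|t-1}^N-\bar\bpi_{t|t-1}^\infty$ and $h_n$ a random, history-dependent weight, and you declare that the induction hypothesis cannot be invoked with such a weight. But your proposed fix --- conditioning on $\mathcal{H}_{t-1}$ and splitting $\by_{n,t}^N$ into conditional mean plus fluctuation --- only removes the randomness coming from the fresh observation. The conditional-mean term is still $\frac1N\sum_n\bigl[\sum_j p_{n,j}(\bmu_{n,t}^N)^{-(j)}(g_j-\bar R_{n,j}^\infty G_j)\bigr]^\top v_n$ with $p_{n,j}=(\bx_{n,t-1}^N)^\top K_{\boeta^N_{t-1}(\bw_n,\theta^\star)}(\bw_n,\theta^\star)G(\bw_n,\theta^\star)^{(\cdot,j)}$: the weight remains random through $\bx_{n,t-1}^N$ and through the data-dependent denominators $(\bmu_{n,t}^N)^{(j)}$, so the difficulty is relocated rather than resolved. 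To finish you must either (a) apply the inductive $L^4$ bound with an almost-surely bounded \emph{random} test vector --- exactly the move you ruled out, and exactly what the paper does (it is the same device used for $d(\bw_n,\cdot)$ in Proposition \ref{prop:bound_X} and stated explicitly for random test vectors in Proposition \ref{prop:mu_muhat_bound}); or (b) linearize $1/(\bmu_{n,t}^N)^{(j)}$ around $1/(\bar\bmu_t^\infty)^{(j)}$ and replace $\bx_{n,t-1}^N$ by its saturated analogue, which produces precisely the product of two small-only-on-average differences you warned would destroy the rate. The paper avoids this trap by structuring each replacement so that exactly one such difference appears per step, always paired with an a.s.-bounded weight, and by accepting randomness in the test vector; your plan as written does neither, so the central estimate of \eqref{stat:pi_N_infty} is not established. (A minor additional point: $G_{\bmu}\by$ is bounded by $1/m_t$, not by $1$, so constants of the form $B/m_t$ are unavoidable in this term, as in the paper's $\gamma_t=\gamma_{t|t-1}+\gamma_{t|t-1}/m_t+(1+\alpha_t)$.)
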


\begin{proof}
    The first statement is straightforward, indeed:
    \begin{equation}
        \begin{split}
            &\normiii[\Bigg]{\frac{1}{N} \sum_{n \in [N]} f(\bw_n)^\top \bmu_{n,t}^N(\bw_n,\theta) - f(\bw_n)^\top \bar{\bmu}_{t}^\infty(\bw_n,\theta)}_4\\
            &= 
            \normiii[\Bigg]{\frac{1}{N} \sum_{n \in [N]} f(\bw_n)^\top \left [ \bpi_{n,t|t-1}^N(\bw_n,\theta)^\top G(\bw_n,\theta) \right ]^\top - f(\bw_n)^\top \left [ \bar{\bpi}_{t|t-1}^\infty(\bw_n,\theta)^\top G(\bw_n,\theta) \right ]^\top }_4\\
            &=
            \normiii[\Bigg]{\frac{1}{N} \sum_{n \in [N]} \left [ G(\bw_n,\theta) f(\bw_n) \right ]^\top \bpi_{n,t|t-1}^N(\bw_n,\theta) - \left [ G(\bw_n,\theta) f(\bw_n) \right ]^\top \bar{\bpi}_{t|t-1}^\infty(\bw_n,\theta)}_4.
        \end{split}
    \end{equation}
    As $G(\bw,\theta)$ is a row-stochastic matrix the function $\bw \mapsto G(\bw,\theta) f(\bw)$ is bounded, and precisely $\norm{G(\cdot,\theta) f(\cdot)}_\infty \leq B$, meaning that the proof of the statement follows simply by the assumption on step $t-1$:
    \begin{equation}\label{eq:first_statement_mu}
        \begin{split}
            &\normiii[\Bigg]{\frac{1}{N} \sum_{n \in [N]} f(\bw_n)^\top \bmu_{n,t}^N(\bw_n,\theta) - f(\bw_n)^\top \bar{\bmu}_{t}^\infty(\bw_n,\theta)}_4\leq 
            2 B \sqrt[4]{6} N^{-\frac{1}{2}} \gamma_{t|t-1}.
        \end{split}
    \end{equation}

    The proof of the second statement is more involved and we need to start by reformulating $f(\bw_n)^\top \bpi_{n,t}^N(\bw_n,\theta)$:
    \begin{equation}
        \begin{split}
            f(\bw_n)^\top \bpi_{n,t}^N(\bw_n,\theta) 
            &= 
            \sum_{i =1}^M f(\bw_n)^{(i)} \bpi_{n,t}^N(\bw_n,\theta)^{(i)} \\
            &= 
            \sum_{i =1}^M f(\bw_n)^{(i)} \bpi_{n,t|t-1}^N(\bw_n,\theta)^{(i)} \sum_{j = 1}^M \frac{ G(\bw_n,\theta)^{(i,j)}}{\bmu_{n,t}^N(\bw_n,\theta)^{(j)}}(\by_{n,t}^N)^{(j)}\\
            &= 
            \left [ f(\bw_n) \odot \bpi_{n,t|t-1}^N(\bw_n,\theta) \right ]^\top \left [ G_{\bmu_{n,t}^N(\bw_n,\theta)}(\bw_n,\theta) \by_{n,t}^N \right ], \label{eq:pi_t_reformulation}
        \end{split}
    \end{equation}
    where we define $G_{\bmu}(\bw,\theta)$ as the matrix with elements $G_{\bmu}(\bw,\theta)^{(i,j)} = \frac{ G(\bw,\theta)^{(i,j)}}{\bmu^{(j)}}$ where $\frac{0}{0}=0$ by convention. Remark that, because of Theorem \ref{thm:CAL_well_definess}, the CAL is well-defined  $\mathbb{P}$-almost surely. Also a similar reformulation to the one in \eqref{eq:pi_t_reformulation} can be done on $\bar{\bpi}_t^\infty(\bw,\theta)$ meaning:
    $
            f(\bw)^\top \bar{\bpi}_t^\infty(\bw,\theta) 
            = 
            \left [ f(\bw) \odot \bar{\bpi}_{t|t-1}^\infty(\bw,\theta) \right ]^\top \left [ G_{\bar{\bmu}_t^\infty(\bw,\theta)}(\bw,\theta) \bnu_t^\infty(\bw,\theta^\star) \right ]
    $.
    Then we note that:
    \begin{equation}
    \begin{split}
            &f(\bw_n)^\top \bpi_{n,t}^N(\bw_n,\theta) - f(\bw_n)^\top \bar{\bpi}_{t}^\infty(\bw_n,\theta) \\
            &= 
            \left [ f(\bw_n) \odot \bpi_{n,t|t-1}^N(\bw_n,\theta) \right ]^\top \left [ G_{\bmu_{n,t}^N(\bw_n,\theta)}(\bw_n,\theta) \by_{n,t}^N \right ]\\
            &\qquad-\left [ f(\bw_n) \odot \bar{\bpi}_{t|t-1}^\infty(\bw_n,\theta) \right ]^\top \left [ G_{\bmu_{n,t}^N(\bw_n,\theta)}(\bw_n,\theta) \by_{n,t}^N \right ] \\
            &\qquad+\left [ f(\bw_n) \odot \bar{\bpi}_{t|t-1}^\infty(\bw_n,\theta) \right ]^\top \left [ G_{\bmu_{n,t}^N(\bw_n,\theta)}(\bw_n,\theta) \by_{n,t}^N \right ]\\
            &\qquad- \left [ f(\bw_n) \odot \bar{\bpi}_{t|t-1}^\infty(\bw_n,\theta) \right ]^\top \left [ G_{\bar{\bmu}_t^\infty(\bw_n,\theta)}(\bw_n,\theta) \by_{n,t}^N \right ] \\
            &\qquad+ \left [ f(\bw_n) \odot \bar{\bpi}_{t|t-1}^\infty(\bw_n,\theta) \right ]^\top \left [ G_{\bar{\bmu}_t^\infty(\bw_n,\theta)}(\bw_n,\theta) \by_{n,t}^N \right ]\\
            &\qquad- \left [ f(\bw_n) \odot \bar{\bpi}_{t|t-1}^\infty(\bw_n,\theta) \right ]^\top \left [ G_{\bar{\bmu}_t^\infty(\bw_n,\theta)}(\bw_n,\theta) \bnu_t^\infty(\bw_n,\theta^\star) \right ].
        \end{split}
    \end{equation}
    With the above decomposition we can apply Minkowski inequality and conclude:
        \begin{align}
            &\normiii[\Bigg]{ \frac{1}{N} \sum_{n \in [N]} f(\bw_n)^\top \bpi_{n,t}^N(\bw_n,\theta) - f(\bw_n)^\top \bar{\bpi}_{t}^\infty(\bw_n,\theta)}_4 \\
            &\leq 
            \Bigg{\vvvert}\frac{1}{N} \sum_{n \in [N]} \left [ f(\bw_n) \odot \bpi_{n,t|t-1}^N(\bw_n,\theta) \right ]^\top \left [ G_{\bmu_{n,t}^N(\bw_n,\theta)}(\bw_n,\theta) \by_{n,t}^N \right ] \\
            &\qquad-
             \left [ f(\bw_n) \odot \bar{\bpi}_{t|t-1}^\infty(\bw_n,\theta) \right ]^\top \left [ G_{\bmu_{n,t}^N(\bw_n,\theta)}(\bw_n,\theta) \by_{n,t}^N \right ] \Bigg{\vvvert}_4 \label{eq:correction_A}\\
            &\qquad+
            \Bigg{\vvvert} \frac{1}{N} \sum_{n \in [N]} \left [ f(\bw_n) \odot \bar{\bpi}_{t|t-1}^\infty(\bw_n,\theta) \right ]^\top \left [ G_{\bmu_{n,t}^N(\bw_n,\theta)}(\bw_n,\theta) \by_{n,t}^N \right ] \\
            &\qquad- 
             \left [ f(\bw_n) \odot \bar{\bpi}_{t|t-1}^\infty(\bw_n,\theta) \right ]^\top \left [ G_{\bar{\bmu}_t^\infty(\bw_n,\theta)}(\bw_n,\theta) \by_{n,t}^N \right ] \Bigg{\vvvert}_4 \label{eq:correction_B}\\
            &\qquad+ 
            \Bigg{\vvvert} \frac{1}{N} \sum_{n \in [N]} \left [ f(\bw_n) \odot \bar{\bpi}_{t|t-1}^\infty(\bw_n,\theta) \right ]^\top \left [ G_{\bar{\bmu}_t^\infty(\bw_n,\theta)}(\bw_n,\theta) \by_{n,t}^N \right ] \\
            &\qquad- 
             \left [ f(\bw_n) \odot \bar{\bpi}_{t|t-1}^\infty(\bw_n,\theta) \right ]^\top \left [ G_{\bar{\bmu}_t^\infty(\bw_n,\theta)}(\bw_n,\theta) \bnu_t^\infty(\bw_n,\theta^\star) \right ]\Bigg{\vvvert}_4.\label{eq:correction_C}
    \end{align}
    
    Starting from \eqref{eq:correction_A}, we notice that:
    \begin{equation}
    \begin{split}
        &\Bigg{\vvvert}\frac{1}{N} \sum_{n \in [N]} \left [ f(\bw_n) \odot \bpi_{n,t|t-1}^N(\bw_n,\theta) \right ]^\top \left [ G_{\bmu_{n,t}^N(\bw_n,\theta)}(\bw_n,\theta) \by_{n,t}^N \right ] \\
        &\qquad-
       \left [ f(\bw_n) \odot \bar{\bpi}_{t|t-1}^\infty(\bw_n,\theta) \right ]^\top \left [ G_{\bmu_{n,t}^N(\bw_n,\theta)}(\bw_n,\theta) \by_{n,t}^N \right ]\Bigg{\vvvert}_4 \\
        &=
        \normiii[\Bigg]{\frac{1}{N} \sum_{n \in [N]} \left [ \bpi_{n,t|t-1}^N(\bw_n,\theta) - \bar{\bpi}_{t|t-1}^\infty(\bw_n,\theta) \right ]^\top \left [ f(\bw_n) \odot G_{\bmu_{n,t}^N(\bw_n,\theta)}(\bw_n,\theta) \by_{n,t}^N \right ]}_4\\
        &=
        \normiii[\Bigg]{\frac{1}{N} \sum_{n \in [N]} \left [ f(\bw_n) \odot G_{\bmu_{n,t}^N(\bw_n,\theta)}(\bw_n,\theta) \by_{n,t}^N \right ]^\top \left [ \bpi_{n,t|t-1}^N(\bw_n,\theta) - \bar{\bpi}_{t|t-1}^\infty(\bw_n,\theta) \right ]}_4,
    \end{split}
    \end{equation}
    as $G_{\bmu_{n,t}^N(\bw_n,\theta)}(\bw_n,\theta) \by_{n,t}^N$ is a vector of probabilities, i.e. elements that are less or equal than 1, we can conclude that:
    $$
    \norm{f(\cdot) \odot G_{\bmu_{n,t}^N(\cdot,\theta)}(\cdot,\theta) \by_{n,t}^N}_\infty\leq B, \quad \mathbb{P} \text{-almost surely},
    $$
    hence, similarly to what we do with $d(\bw_n,\cdot)$ in the proof of Proposition \ref{prop:bound_X}, we can apply our assumption on time step $t-1$ for almost any realization of $\by_{n,t}^N$, and conclude:
    \begin{equation}
    \begin{split}
        &\Bigg{\vvvert}\frac{1}{N} \sum_{n \in [N]} \left [ f(\bw_n) \odot \bpi_{n,t|t-1}^N(\bw_n,\theta) \right ]^\top \left [ G_{\bmu_{n,t}^N(\bw_n,\theta)}(\bw_n,\theta) \by_{n,t}^N \right ] \\
        &\qquad-
         \left [ f(\bw_n) \odot \bar{\bpi}_{t|t-1}^\infty(\bw_n,\theta) \right ]^\top \left [ G_{\bmu_{n,t}^N(\bw_n,\theta)}(\bw_n,\theta) \by_{n,t}^N \right ] \Bigg{\vvvert}_4 
        \leq 2 B \sqrt[4]{6} N^{-\frac{1}{2}} \gamma_{t|t-1}.
    \end{split}
    \end{equation}

    As we define $G_\mu$ as the matrix with elements $G^{(i,j)} \slash \mu^{(j)}$, we can notice that given two vectors $a,b$ with the same dimensions of $G_\mu$ we have:
    \begin{equation}
        x^\top G_\mu b - x^\top G_{\widetilde{\mu}} b = \sum_{i,j} x^{(i)} y^{(j)} \frac{G^{(i,j)} \widetilde{\mu}^{(j)} - G^{(i,j)} \mu^{(j)}}{\widetilde{\mu}^{(j)}{\mu}^{(j)}} = \sum_{i,j} x^{(i)} \frac{y^{(j)}}{\widetilde{\mu}^{(j)}{\mu}^{(j)}}G^{(i,j)} \left ( \widetilde{\mu}^{(j)} - \mu^{(j)} \right )
    \end{equation}
    
    Hence we can reformulate \eqref{eq:correction_B}:
    \begin{equation}
        \begin{split}
            &\Bigg{\vvvert}  \frac{1}{N} \sum_{n \in [N]} \left [ f(\bw_n) \odot \bar{\bpi}_{t|t-1}^\infty(\bw_n,\theta) \right ]^\top \left [ G_{\bmu_{n,t}^N(\bw_n,\theta)}(\bw_n,\theta) \by_{n,t}^N \right ]\\
            &\qquad- 
            \left [ f(\bw_n) \odot \bar{\bpi}_{t|t-1}^\infty(\bw_n,\theta) \right ]^\top \left [ G_{\bar{\bmu}_t^\infty(\bw_n,\theta)}(\bw_n,\theta) \by_{n,t}^N \right ]\Bigg{\vvvert}_4\\
            &=
            \Bigg{\vvvert} \frac{1}{N} \sum_{n \in [N]} \left \{ \left [ f(\bw_n) \odot \bar{\bpi}_{t|t-1}^\infty(\bw_n,\theta) \right ]^\top G(\bw_n,\theta) \right \}^\top \odot \left [ \by_{n,t}^N \oslash \bmu_{n,t}^N(\bw_n,\theta) \oslash \bar{\bmu}_t^\infty(\bw_n,\theta)\right ]^\top\\
            &\qquad \qquad 
            \left [\bar{\bmu}_t^\infty(\bw_n,\theta) - \bmu_{n,t}^N(\bw_n,\theta) \right ] \Bigg{\vvvert}_4,
        \end{split}
    \end{equation}
    from which we can notice that for any $\bw_n$:
    \begin{equation}
        \begin{split}
            &\norm{\left \{ \left [ f(\bw_n) \odot \bar{\bpi}_{t|t-1}^\infty(\bw_n,\theta) \right ]^\top G(\bw_n,\theta) \right \} \odot \left [ \by_{n,t}^N \oslash \bmu_{n,t}^N(\bw_n,\theta) \oslash \bar{\bmu}_t^\infty(\bw_n,\theta)\right ]}_\infty\\
            &\leq
            \norm{f}_\infty \norm{ \by_{n,t}^N \oslash \bmu_{n,t}^N(\bw_n,\theta)}_\infty, \quad \mathbb{P} \text{-almost surely},
        \end{split}
    \end{equation}
    where the first step follows from $\bar{\bmu}_t^\infty(\bw,\theta) = \left [ \bar{\bpi}_{t|t-1}^\infty(\bw,\theta)^\top G(\bw,\theta) \right ]^\top$ and the elementwise ratio $\by_{n,t}^N \oslash \bmu_{n,t}^N(\bw_n,\theta)$ is well-defined because of Theorem \ref{thm:CAL_well_definess}. 

    Because of Proposition \ref{prop:CAL_as_bounded} we know that there exists ${m}_{t}>0$ such that $\bmu_{n,t}^N(\bw_n,\theta) \geq {m}_{t}$ almost surely. We can then conclude that for any $n \in [N]$:
    \begin{equation}
        \begin{split}
        &\norm{\left \{ \left [ f(\bw_n) \odot \bar{\bpi}_{t|t-1}^\infty(\bw_n,\theta) \right ]^\top G(\bw_n,\theta) \right \} \odot \left [ \by_{n,t}^N \oslash \bmu_{n,t}^N(\bw_n,\theta) \oslash \bar{\bmu}_t^\infty(\bw_n,\theta)\right ]}_\infty\\ &\leq \frac{\norm{f}_\infty}{{m}_{t}}
        \leq \frac{B}{{m}_{t}}, \quad \mathbb{P} \text{-almost surely}. 
        \end{split}
    \end{equation}
    Hence we can apply to \eqref{eq:correction_B} the same reasoning that we do with $d(\bw_n,\cdot)$ in the proof of Proposition \ref{prop:bound_X} and apply \eqref{eq:first_statement_mu} for almost any realization of $\bY^N_{1:t-1},\by^N_{n,t}$, i.e. $\mathbb{P}$-almost surely, and conclude:
    \begin{equation}
        \begin{split}
            &\Bigg{\vvvert}  \frac{1}{N} \sum_{n \in [N]} \left [ f(\bw_n) \odot \bar{\bpi}_{t|t-1}^\infty(\bw_n,\theta) \right ]^\top \left [ G_{\bmu_{n,t}^N(\bw_n,\theta)}(\bw_n,\theta) \by_{n,t}^N \right ]\\
            &\qquad- 
            \left [ f(\bw_n) \odot \bar{\bpi}_{t|t-1}^\infty(\bw_n,\theta) \right ]^\top \left [ G_{\bar{\bmu}_t^\infty(\bw_n,\theta)}(\bw_n,\theta) \by_{n,t}^N \right ] \Bigg{\vvvert}_4 \leq 2 B \sqrt[4]{6} N^{-\frac{1}{2}} \frac{\gamma_{t|t-1}}{{m}_{t}}.
        \end{split}
    \end{equation}

    Finally, we notice that for \eqref{eq:correction_C}:
    \begin{equation}
        \begin{split}
            &\Bigg{\vvvert}  \frac{1}{N} \sum_{n \in [N]} \left [ f(\bw_n) \odot \bar{\bpi}_{t|t-1}^\infty(\bw_n,\theta) \right ]^\top \left [ G_{\bar{\bmu}_t^\infty(\bw_n,\theta)}(\bw_n,\theta) \by_{n,t}^N \right ] \\
            &\qquad- 
           \left [ f(\bw_n) \odot \bar{\bpi}_{t|t-1}^\infty(\bw_n,\theta) \right ]^\top \left [ G_{\bar{\bmu}_t^\infty(\bw_n,\theta)}(\bw_n,\theta) \bnu_t^\infty(\bw_n,\theta^\star) \right ]\Bigg{\vvvert}_4\\
            &\qquad = 
           \Bigg{\vvvert}\frac{1}{N} \sum_{n \in [N]} \left \{ \left [ f(\bw_n) \odot \bar{\bpi}_{t|t-1}^\infty(\bw_n,\theta) \right ]^\top  G_{\bar{\bmu}_t^\infty(\bw_n,\theta)}(\bw_n,\theta) \right \} \left [ \by_{n,t}^N - \bnu_t^\infty(\bw_n,\theta^\star) \right ]\Bigg{\vvvert}_4,
        \end{split}
    \end{equation}
    from which we can see that:
    \begin{equation}
        \begin{split}
            \norm{\left [ f(\cdot) \odot \bar{\bpi}_{t|t-1}^\infty(\cdot,\theta) \right ]^\top  G_{\bar{\bmu}_t^\infty(\cdot,\theta)}(\cdot,\theta)}_\infty 
            &\leq 
            \norm{f}_\infty
            \norm{\left [ \bar{\bpi}_{t|t-1}^\infty(\cdot,\theta) \right ]^\top  G_{\bar{\bmu}_t^\infty(\cdot,\theta)}(\cdot,\theta)}_\infty\\
            &\leq 
            \norm{f}_\infty
            \norm{ \sum_i \bar{\bpi}_{t|t-1}^\infty(\cdot,\theta)^{(i)} \frac{G(\cdot,\theta)^{(i,j)}}{\bar{\bmu}_t^\infty(\cdot,\theta)^{(j)}}  }_\infty\\
            &= 
            \norm{f}_\infty,
        \end{split}
    \end{equation}
    where the last step follows from $\bar{\bpi}_{t|t-1}^\infty(\cdot,\theta), G(\cdot,\theta), \bar{\bmu}_t^\infty(\cdot,\theta)$ being almost surely positive and from $[\bar{\bmu}_t^\infty(\cdot,\theta) = \bar{\bpi}_{t|t-1}^\infty(\cdot,\theta)^\top G(\cdot,\theta)]^\top$. We can then apply Proposition \ref{prop:bound_Y} and conclude
    \begin{equation}
        \begin{split}
             &\Bigg{\vvvert}  \frac{1}{N} \sum_{n \in [N]} \left [ f(\bw_n) \odot \bar{\bpi}_{t|t-1}^\infty(\bw_n,\theta) \right ]^\top \left [ G_{\bar{\bmu}_t^\infty(\bw_n,\theta)}(\bw_n,\theta) \by_{n,t}^N \right ] \\
            &\qquad- 
             \left [ f(\bw_n) \odot \bar{\bpi}_{t|t-1}^\infty(\bw_n,\theta) \right ]^\top \left [ G_{\bar{\bmu}_t^\infty(\bw_n,\theta)}(\bw_n,\theta) \bnu_t^\infty(\bw_n,\theta^\star) \right ]\Bigg{\vvvert}_4
            \leq 2 B \sqrt[4]{6} N^{-\frac{1}{2}} \left ( 1 + \alpha_t \right ).
        \end{split}
    \end{equation}
    By putting everything together we then have:
    \begin{equation}
        \begin{split}
            &\normiii[\Bigg]{ \frac{1}{N} \sum_{n \in [N]} f(\bw_n)^\top \bpi_{n,t}^N(\bw_n,\theta) - f(\bw_n)^\top \bar{\bpi}_{t}^\infty(\bw_n,\theta)}_4 \\
            &\leq 2 B \sqrt[4]{6} N^{-\frac{1}{2}} \gamma_{t|t-1} + 2 B \sqrt[4]{6} N^{-\frac{1}{2}} \frac{\gamma_{t|t-1}}{{m}_{t}} + 2 B \sqrt[4]{6} N^{-\frac{1}{2}} \left ( 1 + \alpha_t \right )\\
            &=
            2 B \sqrt[4]{6} N^{-\frac{1}{2}} \left [ \gamma_{t|t-1} + \frac{\gamma_{t|t-1}}{{m}_{t}} + \left ( 1 + \alpha_t \right ) \right ],
        \end{split}
    \end{equation}
    which concludes the proof for $\gamma_t = \left [ \gamma_{t|t-1} + \frac{\gamma_{t|t-1}}{{m}_{t}} + \left ( 1 + \alpha_t \right ) \right ]$.

    The third and the fourth statements are simple consequences of the first and the second statement, Minkowski inequality and Lemma \ref{lemma:mean_0_bound}, see the reasoning to prove the second statement of Proposition \ref{prop:bound_CAL_prediction}.
\end{proof}

We now combine propositions \ref{prop:bound_CAL_initial} - \ref{prop:bound_CAL_correction}.
\begin{proposition}\label{prop:CAL_bound}
    Under assumptions \ref{ass:compactness_continuity},\ref{ass:w_iid},\ref{ass:HMM_support},\ref{ass:eta_structure},\ref{ass:kernel_continuity}, for any $t\geq 1$ there exist $\gamma_{t|t-1}>0$ and $\gamma_t>0$ such that for any bounded function $f: \mathbb{W} \to [0,B]^M$
    \begin{equation}
    \begin{aligned}
       &\normiii[\Bigg]{\frac{1}{N} \sum_{n \in [N]} f(\bw_n)^\top \bpi_{n,t|t-1}^N(\bw_n,\theta) -  \int f(w)^\top \bar{\bpi}_{t|t-1}^\infty(w,\theta) \Gamma(d w) }_4 
        \leq 2 B \sqrt[4]{6} N^{-\frac{1}{2}} \gamma_{t|t-1},\\
        &\normiii[\Bigg]{\frac{1}{N} \sum_{n \in [N]} f(\bw_n)^\top \bmu_{n,t}^N(\bw_n,\theta) - \int f(w)^\top \bar{\bmu}_{t}^\infty(w,\theta) \Gamma(d w) }_4 \leq 2 B \sqrt[4]{6} N^{-\frac{1}{2}} \gamma_{t|t-1},\\
        &\normiii[\Bigg]{\frac{1}{N} \sum_{n \in [N]} f(\bw_n)^\top \bpi_{n,t}^N(\bw_n,\theta) - \int f(w)^\top \bar{\bpi}_{t}^\infty(w,\theta) \Gamma(d w) }_4 \leq 2 B \sqrt[4]{6} N^{-\frac{1}{2}} \gamma_{t}.
    \end{aligned}
    \end{equation}
\end{proposition}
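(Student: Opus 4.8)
The plan is to prove all three statements simultaneously by induction on $t$, carrying the \emph{centered} $L^4$ bounds as the induction hypothesis — that is, the bounds comparing the population average against $\frac{1}{N}\sum_{n\in[N]} f(\bw_n)^\top \bar{\bpi}_\bullet^\infty(\bw_n,\theta)$ rather than against the $\Gamma$-integral. The key observation is that Propositions \ref{prop:bound_CAL_prediction} and \ref{prop:bound_CAL_correction} are already stated in exactly this centered form, and that they chain so as to close a loop in $t$: the centered bound on $\bpi_{n,t-1}^N$ is precisely the premise of Proposition \ref{prop:bound_CAL_prediction}, whose first-statement conclusion (the centered bound on $\bpi_{n,t|t-1}^N$) is in turn precisely the premise of Proposition \ref{prop:bound_CAL_correction}, whose conclusion \eqref{stat:pi_N_infty} is the centered bound on $\bpi_{n,t}^N$ — which is what is needed to re-enter the induction at the next time step.

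For the base case $t=0$, the second statement of Proposition \ref{prop:bound_CAL_initial} gives the centered bound with constant $\gamma_0=0$ (in fact the $L^4$ norm vanishes, since $\bpi_{n,0}^N(\bw_n,\theta)=\bar{\bpi}_0^\infty(\bw_n,\theta)$ by the saturated initialization). For the inductive step, assuming the centered bound on $\bpi_{n,t-1}^N$ with constant $\gamma_{t-1}$, I would first apply Proposition \ref{prop:bound_CAL_prediction} to obtain the centered bound on $\bpi_{n,t|t-1}^N$ with the constant $\gamma_{t|t-1}$ furnished there, and then feed this into Proposition \ref{prop:bound_CAL_correction} to obtain the centered bounds on $\bmu_{n,t}^N$ (constant $\gamma_{t|t-1}$) and on $\bpi_{n,t}^N$ (constant $\gamma_t=\gamma_{t|t-1}+\gamma_{t|t-1}/m_t+(1+\alpha_t)$). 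Since the last of these is exactly the premise required at time $t+1$, the induction closes and the centered bounds hold for all $t\geq 1$.

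It then remains only to pass from the centered bounds to the $\Gamma$-integrated bounds in the statement, and this is immediate from the ``Moreover'' parts already established: the third statement of Proposition \ref{prop:bound_CAL_prediction} and the third and fourth statements of Proposition \ref{prop:bound_CAL_correction} give exactly the three integrated bounds. Each is obtained by a single Minkowski split in which the residual $\frac{1}{N}\sum_{n\in[N]} f(\bw_n)^\top \bar{\bpi}_\bullet^\infty(\bw_n,\theta)-\int f(w)^\top \bar{\bpi}_\bullet^\infty(w,\theta)\,\Gamma(dw)$ is an average of i.i.d., mean-zero, $2B$-bounded random variables under Assumption \ref{ass:w_iid}, hence controlled by Lemma \ref{lemma:mean_0_bound}. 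Collecting constants yields the three displayed inequalities with the stated $\gamma_{t|t-1}$ and $\gamma_t$.

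There is no new analytical difficulty here: all the substantive work — the Lipschitz control of $K_\eta$ in $\eta$, the propagation-of-chaos estimate on $\tilde{\boeta}_{t-1}^N-\bar{\boeta}_{t-1}^\infty$, and the almost-sure lower bound $m_t$ on $\bmu_{n,t}^N$ needed to tame the division by $\bmu$ in the correction step — has been absorbed into Propositions \ref{prop:bound_CAL_prediction}--\ref{prop:bound_CAL_correction} together with Proposition \ref{prop:CAL_as_bounded}. Accordingly, the only thing to get right is the bookkeeping: checking at each stage that the induction hypothesis matches verbatim the premise of the conditional proposition invoked, and that the constants $\gamma_{t|t-1}$ and $\gamma_t$ are the ones those propositions actually output, so that the two recursions (one across the prediction half of the step, one across the correction half) are mutually consistent and yield a single well-defined sequence of constants.
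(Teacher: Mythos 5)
Your proposal is correct and follows essentially the same route as the paper's own proof: an induction on $t$ that chains Proposition \ref{prop:bound_CAL_initial} (base case), Proposition \ref{prop:bound_CAL_prediction} (prediction half), and Proposition \ref{prop:bound_CAL_correction} (correction half), carrying the centered bound on $\bpi_{n,t}^N$ as the inductive hypothesis and reading off the $\Gamma$-integrated bounds from the ``Moreover'' parts. Your bookkeeping of the constants matches what those propositions output, so nothing further is needed.
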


\begin{proof}
    Assume that the third statement is true at time $t-1$, which is verified for $t-1=0$ because of Proposition \ref{prop:bound_CAL_initial}. Then we can in turn apply  Proposition \ref{prop:bound_CAL_prediction} to prove the first statement, and Proposition \ref{prop:bound_CAL_correction} to prove the second and the third. As the third statement is also our inductive hypothesis and we have proven it to hold for the next time step $t$ we can state that the three statements hold for an arbitrary time step $t$.
\end{proof}

The following corollary is analogous to Corollary \ref{corol:eta_bound}, but addresses the quantity $\widetilde{\boeta}_{t-1}^N(\cdot,\cdot)$ computed in the CAL algorithm (recall \ref{rec:CAL_rec}). Corollary \ref{corol:eta_bound_CAL} can be interpreted as meaning that the probability vectors computed in the CAL algorithm become decoupled across the population as $N\to\infty$. Recall from \eqref{eq:recursion_lambda} and \eqref{rec:asympt_CAL_rec_full} the definitions of $\boeta_{t-1}^\infty(\cdot,\cdot)$ and $\bar{\boeta}_{t-1}^\infty(\cdot,\cdot)$.

\begin{corollary}\label{corol:eta_bound_CAL}
    Under assumptions \ref{ass:w_iid},\ref{ass:eta_structure},\ref{ass:kernel_continuity}, for any $t\geq 1$ there exists $\gamma_{t-1} > 0$ such that for any $w\in\mathbb{W}$ and $\theta\in\Theta$,
    \begin{equation}
        \normiii[\Bigg]{\widetilde{\boeta}^N_{t-1}(w,\theta) - \bar{\boeta}^\infty_{t-1}(w,\theta)}_4 \leq 
        2 L C \sqrt[4]{6} N^{-\frac{1}{2}} (\gamma_{t-1} +1).
    \end{equation}
    Moreover, $\bar{\boeta}^\infty_{t-1}(w,\theta^\star) = \boeta^\infty_{t-1}(w,\theta^\star)$.
\end{corollary}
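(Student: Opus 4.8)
The plan is to treat the two statements separately, since they are of different character: the first is a quantitative $L^4$ bound that is essentially already contained in the earlier proofs, whereas the second is a deterministic algebraic identity to be proved by induction.

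For the first statement I would note that, for fixed $w\in\mathbb{W}$ and $\theta\in\Theta$, the definition of $\tilde{\boeta}^N_{t-1}$ in \eqref{rec:CAL_rec}, Assumption \ref{ass:eta_structure}, and the definition of $\bar{\boeta}^\infty_{t-1}$ in \eqref{rec:asympt_CAL_rec_full} give
$$\tilde{\boeta}^N_{t-1}(w,\theta)-\bar{\boeta}^\infty_{t-1}(w,\theta)=\frac{1}{N}\sum_{k\in[N]} d(w,\bw_k,\theta)^\top \bpi^N_{k,t-1}(\bw_k,\theta)-\int d(w,\tilde w,\theta)^\top \bar{\bpi}^\infty_{t-1}(\tilde w,\theta)\,\Gamma(d\tilde w).$$
This is exactly the quantity controlled by Proposition \ref{prop:CAL_bound} (third statement, at time index $t-1$) applied to the test function $f(\cdot)=d(w,\cdot,\theta)$, which satisfies $\norm{f}_\infty\leq C$ by Assumption \ref{ass:eta_structure}. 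Indeed it is the same computation carried out in \eqref{eq:etacal_bound} inside the proof of Proposition \ref{prop:bound_CAL_prediction}, whose inductive hypothesis is discharged unconditionally by Proposition \ref{prop:CAL_bound}. Thus the first statement is a byproduct of that argument, in complete analogy with how Corollary \ref{corol:eta_bound} follows from the proof of Proposition \ref{prop:bound_X}; I foresee no obstacle here.

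For the second statement I would prove, by induction on $t$, the stronger claim that $\bar{\bpi}^\infty_t(w,\theta^\star)=\blambda^\infty_t(w,\theta^\star)$ for all $w$ and $t\geq0$; the desired identity $\bar{\boeta}^\infty_{t-1}(w,\theta^\star)=\boeta^\infty_{t-1}(w,\theta^\star)$ then follows by comparing the integral definitions in \eqref{eq:recursion_lambda} and \eqref{rec:asympt_CAL_rec_full}. The base case is immediate as both equal $p_0(w,\theta^\star)$. Assuming $\bar{\bpi}^\infty_{t-1}=\blambda^\infty_{t-1}$ at $\theta^\star$, the integral step gives $\bar{\boeta}^\infty_{t-1}(w,\theta^\star)=\boeta^\infty_{t-1}(w,\theta^\star)$, the prediction step then yields $\bar{\bpi}^\infty_{t|t-1}(w,\theta^\star)=\blambda^\infty_t(w,\theta^\star)$, and the emission step yields $\bar{\bmu}^\infty_t(w,\theta^\star)=\bnu^\infty_t(w,\theta^\star)$ via \eqref{eq:recursion_nu}.

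The crux, and the main obstacle, is to show that the correction step in \eqref{rec:asympt_CAL_rec_full} acts as the identity at $\theta^\star$, rigorously under the $\tfrac{0}{0}=0$ convention. Since the vector fed into that step is $\bnu^\infty_t(w,\theta^\star)$ and, at $\theta^\star$, the normalizer $\bar{\bmu}^\infty_t(w,\theta^\star)$ equals the same $\bnu^\infty_t(w,\theta^\star)$, the step multiplies $\bar{\bpi}^\infty_{t|t-1}(w,\theta^\star)^{(i)}$ by $\sum_j G(w,\theta^\star)^{(i,j)}\,\bnu^\infty_t(w,\theta^\star)^{(j)}/\bnu^\infty_t(w,\theta^\star)^{(j)}$. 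I would argue that for any $i\in\supp{\blambda^\infty_t(w,\theta^\star)}$, positivity of $G(w,\theta^\star)^{(i,j)}$ forces $\bnu^\infty_t(w,\theta^\star)^{(j)}\geq\blambda^\infty_t(w,\theta^\star)^{(i)}G(w,\theta^\star)^{(i,j)}>0$, so each nonzero summand equals $G(w,\theta^\star)^{(i,j)}$ and the sum is $\sum_j G(w,\theta^\star)^{(i,j)}=1$ by row-stochasticity; for $i\notin\supp{\blambda^\infty_t(w,\theta^\star)}$ the predictive entry already vanishes. Hence $\bar{\bpi}^\infty_t(w,\theta^\star)=\blambda^\infty_t(w,\theta^\star)$, closing the induction.
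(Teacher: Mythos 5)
Your proposal is correct and follows essentially the same route as the paper: the first statement is obtained exactly as a byproduct of Proposition \ref{prop:CAL_bound} and the computation in \eqref{eq:etacal_bound}, and the second by the same induction showing $\bar{\bpi}^\infty_t(w,\theta^\star)=\blambda^\infty_t(w,\theta^\star)$, with the correction step collapsing because $\bar{\bmu}^\infty_t(w,\theta^\star)=\bnu^\infty_t(w,\theta^\star)$. Your explicit handling of the $\tfrac{0}{0}=0$ convention on the support of $\blambda^\infty_t$ is slightly more careful than the paper's, which asserts $\sum_j G^{(i,j)}\bnu^{(j)}/\bnu^{(j)}=1$ without comment, but the argument is otherwise identical.
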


\begin{proof}
    The first statement is a byproduct of Proposition \ref{prop:CAL_bound} and Equation \eqref{eq:etacal_bound} in the proof of Proposition \ref{prop:bound_CAL_prediction}. Indeed, Proposition \ref{prop:CAL_bound} ensures that the assumptions of Proposition \ref{prop:bound_CAL_prediction} hold.

     The second statement follows by induction, indeed $ \bar{\bpi}_{0}^\infty(w,\theta^\star) =  \blambda_{0}^\infty(w,\theta^\star)$ by definition hence $\bar{\bpi}_{t-1}^\infty(w,\theta^\star) =  \blambda_{t-1}^\infty(w,\theta^\star)$ holds for $t=1$. Suppose now $\bar{\bpi}_{t-1}^\infty(w,\theta^\star) =  \blambda_{t-1}^\infty(w,\theta^\star)$ is true, then we get that for any $i \in [M]$:
    \begin{equation}
        \begin{split}
            \bar{\bpi}_{t}^\infty(w,\theta^\star)^{(i)}
            &= 
            \bar{\bpi}_{t|t-1}^\infty(w,\theta^\star)^{(i)} \left \{ \left [   G(w,\theta^\star) \oslash \left ( 1_M \bar{\bmu}_{t}^\infty(w,\theta^\star)^\top \right ) \right ] \bnu_{t}^\infty(w,\theta^\star)  \right \}^{(i)}\\
            &= 
            \bar{\bpi}_{t|t-1}^\infty(w,\theta^\star)^{(i)} \left \{ \left [   G(w,\theta^\star) \oslash \left ( 1_M {\nu}_{t}^\infty(w,\theta^\star)^\top \right ) \right ] \bnu_{t}^\infty(w,\theta^\star)  \right \}^{(i)}\\
            &= 
            \bar{\bpi}_{t|t-1}^\infty(w,\theta^\star)^{(i)} \sum_{j = 1}^M  \frac{G(w,\theta^\star)^{(i,j)}}{\bnu_{t}^\infty(w,\theta^\star)^{(j)}} \bnu_{t}^\infty(w,\theta^\star)^{(j)}\\
            &= 
            \sum_{j = 1}^M \bar{\bpi}_{t-1}^\infty(w,\theta^\star)^{(j)} K_{\bar{\boeta}^\infty_{t-1}(w,\theta^\star)}(w,\theta^\star)^{(j,i)}\\
            &= 
            \sum_{j = 1}^M \blambda_{t-1}^\infty(w,\theta^\star)^{(j)} K_{\boeta^\infty_{t-1}(w,\theta^\star)}(w,\theta^\star)^{(j,i)} = \blambda_{t}^\infty(w,\theta^\star)^{(i)},
        \end{split}
    \end{equation}
    which concludes the proof.
\end{proof}

\subsection{The saturated processes and saturated CAL algorithm} \label{sec:limiting_quantities_support}

In Section \ref{sec:limiting_quantities_support} we introduce key objects which will allow us to understand the large-population behavior of the data-generating process and the CAL algorithm.  The first is what we call the \emph{one-individual saturated process}. This process can be interpreted as describing the evolution of a single individual in an infinite population, where the covariate vector associated with this one individual is drawn from the distribution $\Gamma$ appearing in Assumption \ref{ass:w_iid}. The term ``saturated'' refers to the fact that the law of the process is defined in terms of the limiting interaction terms $\boeta_t^\infty(\cdot,\cdot)$, $t\geq 0$, cf.  Corollary \ref{corol:eta_bound}.

We then introduce the \emph{population saturated process}, which consists of independent individuals, each of which is distributed in the same way as the one-individual saturated process, but with covariate vectors $(\bw_n)_{n\in[N]}$ which are shared with those in the data-generating process specified in Section \ref{sec:data_generating_process}. In Proposition \ref{thm:LLN_f_y_yinf} we shall show that averages across the population in the data-generating process approximate averages across the population saturated process.

Lastly, we introduce the \emph{saturated CAL algorithm}, which is very similar to the CAL algorithm, but uses a model that is decoupled across individuals. Proposition \ref{prop:mu_muhat_bound} below will tell us that the logarithm of the CAL approximates the logarithm of the corresponding quantity from the saturated CAL algorithm.
 
\paragraph{One-individual saturated process.}
We define the one-individual saturated process as:
\begin{equation}\label{eq:limiting_process}
    \begin{split}
    &\bw^\infty \sim \Gamma,\\
    &\bx_0^\infty|\bw^\infty \sim \mbox{Cat}\left(\, \cdot\,|\,p_0(\bw^\infty,\theta^\star)\right),\\
    &\bx_t^\infty|\bx_{t-1}^\infty,\bw^\infty \sim \mbox{Cat}\left(\, \cdot\,\left|\left [ (\bx_{t-1}^\infty)^\top K_{ \boeta_{t-1}^{\infty} (\bw^\infty,\theta^\star ) }(\bw^\infty,\theta^\star) \right ]^\top\right.\right),\\
    &\by_t^\infty|\bx_{t}^\infty,\bw^\infty \sim \mbox{Cat}\left( \cdot\left|\left [ (\bx_{t}^\infty)^\top G(\bw^\infty,\theta^\star) \right ]^\top\right.\right).
    \end{split}
\end{equation}

Conditional on $\bw^\infty$, the joint process $(\bx_t^\infty)_{t\geq 0}$, $(\by_t^\infty)_{t\geq 1}$ described in \eqref{eq:limiting_process} is a HMM. If in Recursion \eqref{rec:rand_asympt_CAL_rec_full} below $\bw^\infty$ and $\theta^\star$ are substituted in place of $w$ and $\theta$, then \eqref{rec:rand_asympt_CAL_rec_full} becomes the forward algorithm associated with this HMM.

Recall from \eqref{rec:asympt_CAL_rec_full} the definition of $\bar{\boeta}_{t-1}^\infty(\cdot,\cdot)$, and define for $w \in \mathbb{W}$ and $t\geq 1$:
\begin{equation}\label{rec:rand_asympt_CAL_rec_full}
    \begin{aligned}
    &\bpi_{0}^\infty(w,\theta) \coloneqq p_{0}(w,\theta),\\
    &\bpi_{t|t-1}^\infty(w,\theta)  \coloneqq  \left [ \bpi_{t-1}^\infty(w,\theta)^{\top} K_{\bar{\boeta}_{t-1}^\infty(w,\theta)} (w, \theta)\right ]^\top, \\
    &\bmu_{t}^\infty(w,\theta)  \coloneqq  \left [ \bpi_{t|t-1}^\infty(w,\theta)^{\top} G(w,\theta) \right ]^\top,\\
    &\bpi_{t}^\infty(w,\theta)  \coloneqq \bpi_{t|t-1}^\infty(w,\theta) \odot \left \{ \left [   G(w,\theta) \oslash \left ( 1_M \bmu_{t}^\infty(w,\theta)^\top \right ) \right ] \by_{t}^\infty  \right \}.
    \end{aligned}
\end{equation} 

Recall from \eqref{rec:rand_asympt_CAL_rec_full} the definition of $\bmu_t^\infty(\cdot,\cdot)$.

\begin{proposition} \label{prop:joint_rand_aymptoticCAL_distribution}
    Over a time horizon $T$, let $w \in \mathbb{W}$ and $y_{1:T} \in \mathbb{O}_{M+1}$:
    \begin{equation}
        p(\by_{1:T}^\infty|\bw^\infty,\theta^\star) \coloneqq 
        \prod_{t=1}^T \mbox{Cat}\left ( \by_{t}^\infty | \bmu_{t}^\infty(\bw^\infty, \theta^\star) \right ) = \prod_{t=1}^T (\by_{t}^\infty)^\top \bmu_t^\infty(\bw^\infty,\theta^\star),
    \end{equation}
    and it is obtained as a marginal distribution over $x_{0:T} \in \mathbb{O}_M$ of:
    \begin{equation}
    \begin{split}
        &p(\bx_{0:T}^\infty, \by_{1:T}^\infty|\bw^\infty,\theta^\star) \coloneqq \\ &\mbox{Cat}\left ( \bx^\infty_{0}| p_0(\bw^\infty, \theta^\star) \right ) \prod_{t=1}^T  \mbox{Cat}\left ( \bx^\infty_{t} | \left[ (\bx^\infty_{t-1})^\top K_{ \boeta_{t-1}^{\infty} (\bw^\infty, \theta^\star)}(\bw^\infty, \theta^\star) \right ]^\top \right )\\ 
        &\qquad\qquad\qquad\qquad \qquad \cdot \prod_{t=1}^T  \mbox{Cat}\left ( \by^\infty_t | \left [ ( \bx^\infty_{t})^\top G(\bw^\infty, \theta^\star) \right ]^\top \right )\\
        &= \left ( (\bx_{0}^\infty)^\top p_0(\bw^\infty, \theta^\star) \right ) \prod_{t=1}^T \left [ \left ( (\bx^\infty_{t-1})^\top K_{ \boeta_{t-1}^{\infty} (\bw^\infty, \theta^\star)}(\bw^\infty, \theta^\star) \bx^\infty_{t} \right ) \left (( \bx^\infty_{t})^\top G(\bw^\infty, \theta^\star) \by^\infty_{t} \right ) \right ].
    \end{split}
    \end{equation}
    Moreover, we have
    $$
    \begin{aligned}       &\bx_t^\infty|\by_{1:t-1}^\infty,\bw^\infty  \sim \mbox{Cat}\left ( \cdot | \bpi_{t|t-1}^\infty(\bw^\infty,\theta^\star) \right ),\\      &\bx_t^\infty|\by_{1:t}^\infty,\bw^\infty \sim \mbox{Cat}\left ( \cdot | \bpi_{t}^\infty(\bw^\infty,\theta^\star) \right ),\\
    &  \by_t^\infty|\by_{1:t-1}^\infty,\bw^\infty \sim \mbox{Cat}\left ( \cdot | \bmu_{t}^\infty(\bw^\infty,\theta^\star) \right ).
    \end{aligned}
    $$
\end{proposition}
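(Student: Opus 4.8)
The plan is to exploit the fact that, conditional on $\bw^\infty$, the joint process $(\bx_t^\infty)_{t\geq0},(\by_t^\infty)_{t\geq1}$ defined in \eqref{eq:limiting_process} is an ordinary hidden Markov model. The decisive simplification relative to the $N$-individual system is that the interaction term $\boeta_{t-1}^\infty(\bw^\infty,\theta^\star)$ is a \emph{deterministic} function of $\bw^\infty$ alone: it does not depend on the latent state $\bx_{t-1}^\infty$. Hence, conditional on $\bw^\infty$, the matrix $K_{\boeta_{t-1}^\infty(\bw^\infty,\theta^\star)}(\bw^\infty,\theta^\star)$ is a fixed row-stochastic transition matrix and the ``individuals'' are genuinely decoupled. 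Consequently, Recursion \eqref{rec:rand_asympt_CAL_rec_full} evaluated at $\theta=\theta^\star$ is precisely the forward (filtering) algorithm for this HMM, and every stated identity is the single-individual instance of the prediction and correction results already established in Proposition \ref{prop:CAL_prediction_conj}, Proposition \ref{prop:CAL_update_conj}, and Proposition \ref{prop:joint_CAL_distribution}. The very first move is to invoke the second statement of Corollary \ref{corol:eta_bound_CAL}, namely $\bar{\boeta}_{t-1}^\infty(\bw^\infty,\theta^\star)=\boeta_{t-1}^\infty(\bw^\infty,\theta^\star)$, so that the transition matrix driving \eqref{rec:rand_asympt_CAL_rec_full} at $\theta^\star$ coincides with the one generating \eqref{eq:limiting_process}.

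Next I would prove the three filtering identities by induction on $t$. The base case $\bpi_0^\infty(\bw^\infty,\theta^\star)=p_0(\bw^\infty,\theta^\star)$ is immediate from \eqref{eq:limiting_process} and \eqref{rec:rand_asympt_CAL_rec_full}. For the inductive step, assuming $\bx_{t-1}^\infty\mid\by_{1:t-1}^\infty,\bw^\infty\sim\mbox{Cat}(\cdot\mid\bpi_{t-1}^\infty(\bw^\infty,\theta^\star))$, the prediction relation $\bx_t^\infty\mid\by_{1:t-1}^\infty,\bw^\infty\sim\mbox{Cat}(\cdot\mid\bpi_{t|t-1}^\infty(\bw^\infty,\theta^\star))$ follows by applying the transition kernel and summing over $\bx_{t-1}^\infty$, which is exactly the $N=1$ specialization of the computation in Proposition \ref{prop:CAL_prediction_conj}. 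The correction and marginal-likelihood relations $\by_t^\infty\mid\by_{1:t-1}^\infty,\bw^\infty\sim\mbox{Cat}(\cdot\mid\bmu_t^\infty(\bw^\infty,\theta^\star))$ and $\bx_t^\infty\mid\by_{1:t}^\infty,\bw^\infty\sim\mbox{Cat}(\cdot\mid\bpi_t^\infty(\bw^\infty,\theta^\star))$ then follow from the conjugate Bayes update of Proposition \ref{prop:CAL_update_conj}, with emission matrix $G(\bw^\infty,\theta^\star)$; here the denominator $(\by_t^\infty)^\top\bmu_t^\infty(\bw^\infty,\theta^\star)$ is nonzero almost surely by the single-individual analogue of Theorem \ref{thm:CAL_well_definess}, so the elementwise division defining $\bpi_t^\infty$ is well posed.

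Finally, the marginal-likelihood factorization $p(\by_{1:T}^\infty\mid\bw^\infty,\theta^\star)=\prod_{t=1}^T (\by_t^\infty)^\top\bmu_t^\infty(\bw^\infty,\theta^\star)$, together with its identification as the marginal over $x_{0:T}$ of the stated joint $p(\bx_{0:T}^\infty,\by_{1:T}^\infty\mid\bw^\infty,\theta^\star)$, is obtained by chaining the per-step increments via the chain rule, i.e. by transcribing the induction on $T$ already carried out in Proposition \ref{prop:joint_CAL_distribution} into this degenerate $N=1$, deterministic-interaction setting. I do not expect a genuine obstacle: the entire content is the recognition that the saturated process is a decoupled HMM whose forward algorithm is \eqref{rec:rand_asympt_CAL_rec_full} at $\theta^\star$. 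The only points requiring care are (i) matching the interaction terms at $\theta^\star$ through Corollary \ref{corol:eta_bound_CAL}, and (ii) the almost-sure well-definedness of the division in $\bpi_t^\infty$, both of which are supplied by results already in hand, so the remaining work is purely the mechanical reindexing of the established $N$-individual arguments.
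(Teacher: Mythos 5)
Your proposal is correct and follows essentially the same route as the paper: invoke Corollary \ref{corol:eta_bound_CAL} to identify $\bar{\boeta}_{t}^{\infty}(\bw^\infty,\theta^\star)$ with $\boeta_{t}^{\infty}(\bw^\infty,\theta^\star)$, observe that conditional on $\bw^\infty$ the saturated process is an HMM with a fixed transition matrix, and then transcribe the calculations of Propositions \ref{prop:CAL_prediction_conj}, \ref{prop:CAL_update_conj}, and \ref{prop:joint_CAL_distribution} to this single-individual setting, with Recursion \eqref{rec:rand_asympt_CAL_rec_full} recognized as the forward algorithm at $\theta^\star$. Your added remark on the almost-sure well-definedness of the elementwise division matches what the paper handles via Proposition \ref{prop:rand_aympt_CAL_well_definess}.
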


\begin{proof}
    Note that because of Corollary \ref{corol:eta_bound_CAL} we have $\boeta_{t}^{\infty} (\bw^\infty, \theta^\star)=\bar{\boeta}_{t}^{\infty} (\bw^\infty, \theta^\star)$. The proof of the first statement follows then the same calculation of the proof of Proposition \ref{prop:joint_CAL_distribution}, with the infinite transition matrix $K_{\boeta_t^\infty(\bw^\infty,\theta^\star)}(\bw^\infty,\theta^\star)$. The proof of the remaining statements is trivial by noting that $(\bx_t^\infty)_{t \geq 0},(\by_t^\infty)_{t \geq 1}$ given $\bw^\infty$ is a HMM. Indeed, Recursion \ref{rec:rand_asympt_CAL_rec_full} can be interpreted as the forward algorithm for the HMM $(\bx_t^\infty)_{t \geq 0},(\by_t^\infty)_{t \geq 1}$ given $\bw^\infty$.
\end{proof}

\begin{proposition} \label{prop:rand_aympt_CAL_well_definess}
    Under assumptions \ref{ass:w_iid},\ref{ass:HMM_support}, for any $t\geq 1$ and $\theta \in \Theta$ we have that\\ $\sum_i {\bmu}_{t}^\infty(\bw^\infty,\theta)^{(i)}(\by_{t}^\infty)^{(i)}\neq 0$ \quad $\mathbb{P}-\text{almost surely}$. 
\end{proposition}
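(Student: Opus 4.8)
The plan is to mirror, in the simpler single-individual setting, the argument that established Theorem \ref{thm:CAL_well_definess} for the finite population. Conditional on $\bw^\infty$, the saturated observation $\by_t^\infty$ is a one-hot vector, so the sum $\sum_i \bmu_t^\infty(\bw^\infty,\theta)^{(i)}(\by_t^\infty)^{(i)}$ equals zero precisely when the observed coordinate $k$ (the index with $(\by_t^\infty)^{(k)}=1$) satisfies $\bmu_t^\infty(\bw^\infty,\theta)^{(k)}=0$. Hence it suffices to show that, $\mathbb{P}$-almost surely, $\bmu_t^\infty(\bw^\infty,\theta)^{(k)}=0$ implies $(\by_t^\infty)^{(k)}=0$ for every $\theta\in\Theta$, $t\geq 1$, and $k\in[M+1]$ --- exactly the single-individual counterpart of the second bullet of Proposition \ref{prop:sequential_welldefined}.

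First I would prove the initial-condition analogue of Proposition \ref{prop:initial_as}: by the support invariance of $p_0$ in Assumption \ref{ass:HMM_support}, $p_0(\bw^\infty,\theta)^{(i)}=0$ forces $p_0(\bw^\infty,\theta^\star)^{(i)}=0$, and since $\bx_0^\infty\mid\bw^\infty\sim\mbox{Cat}(\cdot\mid p_0(\bw^\infty,\theta^\star))$ this gives $(\bx_0^\infty)^{(i)}=0$ almost surely. I would then carry out the inductive prediction and correction steps as in Propositions \ref{prop:prediction_as} and \ref{prop:correction_as}. For the prediction step, $\bpi_{t|t-1}^\infty(\bw^\infty,\theta)^{(i)}=0$ holds iff for every $j$ either $\bpi_{t-1}^\infty(\bw^\infty,\theta)^{(j)}=0$ or $K_{\bar{\boeta}_{t-1}^\infty(\bw^\infty,\theta)}(\bw^\infty,\theta)^{(j,i)}=0$; by the $\eta$- and $\theta$-invariance of $\textbf{supp}(K)$ in Assumption \ref{ass:HMM_support}, the latter forces $K_{\boeta_{t-1}^\infty(\bw^\infty,\theta^\star)}(\bw^\infty,\theta^\star)^{(j,i)}=0$, and combined with the inductive hypothesis on $\bx_{t-1}^\infty$ this yields $(\bx_t^\infty)^{(i)}=0$ a.s. The correction step repeats the argument with $G$, delivering the statements about $\bmu_t^\infty$ and about $\bpi_t^\infty$.

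A key simplification is that this saturated version is cleaner than the finite-$N$ proof: because $\boeta_{t-1}^\infty(\bw^\infty,\theta^\star)$ is a deterministic function of $\bw^\infty$, and equals $\bar{\boeta}_{t-1}^\infty(\bw^\infty,\theta^\star)$ by Corollary \ref{corol:eta_bound_CAL}, the transition matrix of the data-generating HMM is deterministic given $\bw^\infty$. Consequently none of the marginalization over the rest of the population $\bX_{t-1}^{N\setminus n}$ that complicated Proposition \ref{prop:prediction_as} is required here, and the support bookkeeping reduces to a single HMM trajectory. Finally I would assemble the pieces as in Theorem \ref{thm:CAL_well_definess}: conditioning on $\bw^\infty$ and on $\by_{1:t-1}^\infty$, the event $\{\bmu_t^\infty(\bw^\infty,\theta)^{(k)}=0\}\cap\{(\by_t^\infty)^{(k)}=1\}$ has probability zero, so the observed coordinate always carries positive $\bmu_t^\infty$-mass; integrating over $\bw^\infty\sim\Gamma$ (Assumption \ref{ass:w_iid}) removes the conditioning and yields the claim $\mathbb{P}$-almost surely.

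The main obstacle to keep honest is the consistent use of support invariance across \emph{two} distinct substitutions: invoking it in the $\eta$-argument to pass from the CAL interaction $\bar{\boeta}_{t-1}^\infty(\bw^\infty,\theta)$ to the data-generating interaction $\boeta_{t-1}^\infty(\bw^\infty,\theta^\star)$, and invoking it across $\theta\neq\theta^\star$, so that a zero pattern of $\bmu_t^\infty$ or $\bpi_t^\infty$ computed at an arbitrary $\theta$ genuinely constrains the data-generating law, which is governed by $\theta^\star$. Once this invariance is applied at each of the initial, prediction, and correction stages, the induction closes and the conclusion follows.
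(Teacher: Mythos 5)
Your proposal is correct and follows essentially the same route as the paper, which likewise proves the single-individual analogues of the initial, prediction, and correction support propositions by induction and then integrates over $\bw^\infty\sim\Gamma$ and the observation paths. Your observation that the support invariance of $K_\eta$ in $\eta$ (Assumption \ref{ass:HMM_support}) makes the distinction between $\bar{\boeta}_{t-1}^\infty(\bw^\infty,\theta)$ and $\boeta_{t-1}^\infty(\bw^\infty,\theta^\star)$ immaterial for the zero-pattern bookkeeping is exactly the point the paper relies on.
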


\begin{proof}
    The proof follows the same steps as the proof of Theorem \ref{thm:CAL_well_definess}, but we work with the one-individual saturated process.
    
    We can note that because of Assumption \ref{ass:HMM_support} if there exists $\theta \in \Theta$, $i \in [M]$ such that $\bpi_0^\infty(w,\theta)^{(i)} = p_0(w,\theta)^{(i)}=0$, then $p_0(w,\theta^\star)^{(i)}=0$ and so:
    $$
    \mathbb{P}((\bx_0^\infty)^{(i)}=0|\bw^\infty=w)= p_0(w,\theta)= \bpi_0^\infty(w,\theta)^{(i)}=0,
    $$
    meaning that $(\bx_0^\infty)^{(i)}=0$ almost surely in $\mathbb{P}(\cdot|\bw^\infty=w)$.

    Assume now by induction that if there exists $\theta \in \Theta$, $i \in [M]$ such that $\bpi_{t-1}^\infty(w,\theta)^{(i)} =0$, then $(\bx_{t-1}^\infty)^{(i)}=0$ almost surely in $\mathbb{P}(\cdot|\bw^\infty=w, \by_{1:t-1}^\infty = y_{1:t-1})$, which is satisfied at time $t=1$ because of our previous reasoning. 

    We can note that:
    \begin{equation}
        \begin{split}
            \bpi_{t|t-1}^\infty(w,\theta)^{(i)} =0
            &\iff 
            \sum_{j } \bpi_{t-1}^\infty(w, \theta)^{(j)} K_{\bar{\boeta}^\infty_{t-1}(w,\theta)}(w_n,\theta)^{(j,i)}=0\\
            &\iff
            \forall j \in [M] \quad \bpi_{t-1}^\infty(w, \theta)^{(j)} K_{\bar{\boeta}^\infty_{t-1}(w,\theta)}(w_n,\theta)^{(j,i)}=0,
        \end{split}
    \end{equation}
    hence for all $j\in [M]$ we either have:
    \begin{enumerate}
        \item $\mathbb\bpi_{t-1}^\infty(w, \theta)^{(j)}=0$ which implies $(\bx_{t-1}^\infty)^{(j)}=0$ almost surely in $\mathbb{P}(\cdot|\bw^\infty=w, \by_{1:t-1}^\infty = y_{1:t-1})$ by our inductive assumption, or
        \item $K_{\bar{\boeta}^\infty_{t-1}(w,\theta)}(w,\theta)^{(j,i)}=0$, which implies that there exists $\eta$ such that $K_{\eta}(w,\theta)^{(j,i)}=0$ and so, by Assumption \ref{ass:HMM_support}, $K_{\eta}(w,\theta)^{(j,i)}=0$ for all $\eta \in [0,C]$ and $\theta \in \Theta$ meaning $K_{\bar{\boeta}^\infty_{t-1}(w,\theta^\star)}(w,\theta^\star)^{(j,i)}=0$.
    \end{enumerate}
    from which we conclude:
    \begin{equation}
        \begin{split}
            &\mathbb{P}((\bx_t^\infty)^{(i)}=0|\bw^\infty=w, \by_{1:t-1}^\infty = y_{1:t-1}) \\
            &=
            1- \sum_{x_{t-1}} \mathbb{P}((\bx_t^\infty)^{(i)}=1|\bw^\infty=w, \bx_{t-1}^\infty = x_{t-1},\by_{1:t}^\infty = y_{1:t})\\
            &\qquad \qquad \qquad \qquad \cdot \mathbb{P}(\bx_{t-1}^\infty = x_{t-1}|\bw^\infty=w, \by_{1:t-1}^\infty = y_{1:t-1})\\
            &=
            1- \sum_{j} K_{\bar{\boeta}^\infty_{t-1}(w,\theta)}(w,\theta)^{(j,i)} \mathbb{P}((\bx_{t-1}^\infty)^{(j)} = 1|\bw^\infty=w, \by_{1:t-1}^\infty = y_{1:t-1})=1,
        \end{split}
    \end{equation}
    hence if $\bpi_{t|t-1}^\infty(w,\theta)^{(i)} =0$ then $(\by_t^\infty)^{(i)}=0$ almost surely in $\mathbb{P}(\cdot|\bw^\infty=w, \by_{1:t-1}^\infty = y_{1:t-1})$.

    Moving to:
    \begin{equation}
        \begin{split}
            \bmu_{t}^\infty(w,\theta)^{(i)}=0
            &\iff 
            \sum_{j } \bpi_{t|t-1}^\infty(w,\theta)^{(j)}  G(w,\theta)^{(j,i)}=0 \\
            &\iff
            \forall j \in [M] \quad \bpi_{t|t-1}^\infty(w,\theta)^{(j)}  G(w,\theta)^{(j,i)}=0,
        \end{split}
    \end{equation}
    we then have that for all $j$ either:
    \begin{enumerate}
        \item $\mathbb\bpi_{t|t-1}^\infty(w,\theta)^{(j)}=0$ which implies $(\bx_{t}^\infty)^{(j)}=0$ almost surely in $\mathbb{P}(\cdot|\bw^\infty=w, \by_{1:t-1}^\infty = y_{1:t-1})$ because of what we have proven above, or
        \item $ G(w,\theta)^{(j,i)}=0$ which implies $ G(w,\theta^\star)^{(j,i)}=0$ because of Assumption \ref{ass:HMM_support}, 
    \end{enumerate}
    meaning that: 
    \begin{equation}
        \begin{split}
           &\mathbb{P}\left ((\by_{t}^\infty)^{(i)}=0 |\bw^\infty=w, \by_{1:t-1}^\infty = y_{1:t-1}\right ) 
           = 
           1 - \mathbb{P}\left ((\by_{t}^\infty)^{(i)}=1 |\bw^\infty=w, \by_{1:t-1}^\infty = y_{1:t-1}\right )\\
           &=
           1- \sum_{x_{t}} \mathbb{P}\left (\bx_{t}^\infty= x_{t} |\bw^\infty=w, \by_{1:t-1}^\infty = y_{1:t-1} \right)\\
           &\qquad \qquad \qquad \qquad \cdot \mathbb{P}\left ((\by_{t}^\infty)^{(i)}=1|\bw^\infty=w, \bx_{t}^\infty = x_{t}, \by_{1:t-1}^\infty = y_{1:t-1} \right)\\
           &= 
           1 - \sum_{j} \mathbb{P}\left ((\bx_{t}^\infty)^{(j)} = 1 |\bw^\infty=w, \by_{1:t-1}^\infty = y_{1:t-1} \right)  G(w,\theta^\star)^{(j,i)} = 1,
        \end{split}
    \end{equation}
    hence if $\bmu_{t}^\infty(w,\theta)^{(i)} =0$ then $(\by_t^\infty)^{(i)}=0$ almost surely in $\mathbb{P}(\cdot|\bw^\infty=w, \by_{1:t-1}^\infty = y_{1:t-1})$.

    Consider now:
    \begin{equation}
        \begin{split}
            &\bpi_{t}^\infty(w,\theta)^{(i)}= 0
            \iff
            \bpi_{t|t-1}^\infty(w,\theta)^{(i)} \frac{\sum_{j}  G(w,\theta)^{(i,j)} (\by_{t}^\infty)^{(j)} }{\sum_{j} (\by_{t}^\infty)^{(j)}\bmu_{t}^\infty(w,\theta)^{(j)}}=0\\
            &\iff
            \bpi_{t|t-1}^\infty(w,\theta)^{(i)} \sum_{j}  G(w,\theta)^{(i,j)}(\by_{t}^\infty)^{(j)}=0 \text{ and } \sum_{j} (\by_{t}^\infty)^{(j)}\bmu_{t}^\infty(w,\theta)^{(j)}\neq 0.
        \end{split}
    \end{equation}
    Similarly to Theorem \ref{thm:CAL_well_definess} we have $\sum_{j} (\by_{t}^\infty)^{(j)}\bmu_{t}^\infty(w,\theta)^{(j)}\neq 0$ being an almost sure event under $\mathbb{P}(\cdot|\bw^\infty=w, \by_{1:t-1}^\infty = y_{1:t-1})$, hence:
    \begin{equation}
    \begin{split}
        &\mathbb{P}\left(\bpi_{t|t-1}^\infty(w,\theta)^{(i)} \sum_{j}  G(w,\theta)^{(i,j)}(\by_{t}^\infty)^{(j)}=0\right.\\
        &\left. \qquad \qquad \text{ and } \sum_{j} (\by_{t}^\infty)^{(j)}\bmu_{t}^\infty(w,\theta)^{(j)}\neq 0|\bw^\infty=w, \by_{1:t-1}^\infty = y_{1:t-1}\right)\\
        &=
        \mathbb{P}\left (\bpi_{t|t-1}^\infty(w,\theta)^{(i)} \sum_{j}  G(w,\theta)^{(i,j)}(\by_{t}^\infty)^{(j)}=0|\bw^\infty=w, \by_{1:t-1}^\infty = y_{1:t-1} \right ).
    \end{split}
    \end{equation}
    We then just need to work on the event $\bpi_{t|t-1}^\infty(w,\theta)^{(i)} \sum_{j}  G(w,\theta)^{(i,j)}(\by_{t}^\infty)^{(j)}=0$.

    We then have either:
    \begin{itemize}
        \item $\bpi_{t|t-1}^\infty(w,\theta)^{(i)}=0$, implying $(\bx_{t}^\infty)^{(i)}=0$ almost surely in $\mathbb{P}(\cdot|\bw^\infty=w, \by_{1:t-1}^\infty = y_{1:t-1})$ because of what we have proven above, or
        \item $\sum_{j}  G(w,\theta)^{(i,j)}(\by_{t}^\infty)^{(j)}=0$, which tells us that there exists $k \in [M]$ such that\\ $\mathbb{P}\left ((\by_{t}^\infty)^{(k)}=1 |\bw^\infty=w, \by_{1:t-1}^\infty = y_{1:t-1} \right)>0$ and $ G(w,\theta)^{(i,k)}=0$, note that under $\mathbb{P}(\cdot |\bw^\infty=w, \by_{1:t}^\infty = y_{1:t})$ we know $k$ as we are conditioning on $\by_{t}^\infty$;
    \end{itemize}
    as:
    \begin{equation}
        \begin{split}
            &\mathbb{P}((\bx_{t}^\infty)^{(i)}=1|\bw^\infty=w, \by_{1:t}^\infty = y_{1:t})\\
            &\propto
            \mathbb{P}((\bx_{t}^\infty)^{(i)}=1, (\by_{t}^\infty)^{(k)}=1|\bw^\infty=w, \by_{1:t-1}^\infty = y_{1:t-1})\\
            &=
            G(w,\theta)^{(i,k)} \mathbb{P}((\bx_{t}^\infty)^{(i)}=1|\bw^\infty=w, \by_{1:t-1}^\infty = y_{1:t-1})=0,
        \end{split}
    \end{equation}
    we can then conclude $(\bx_{t}^\infty)^{(i)}=0$ almost surely in $\mathbb{P}(\cdot|\bw^\infty=w, \by_{1:t}^\infty = y_{1:t})$.

    With this final result we have shown that our inductive assumption is true at time $t$, hence by induction we can conclude that for any $t \geq 1$:
    \begin{itemize}
        \item if there exist $\theta \in \Theta, i \in [M]$ such that $\bpi_{t|t-1}^\infty(w,\theta)^{(i)}=0$, then $(\bx_{t}^\infty)^{(i)}=0$ almost surely in $\mathbb{P}(\cdot|\bw^\infty=w, \by_{1:t-1}^\infty = y_{1:t-1})$;
        \item if there exist $\theta \in \Theta, i \in [M]$ such that $\bmu_{t}^\infty(w,\theta)^{(i)}=0$, then $(\by_{t}^\infty)^{(i)}=0$ almost surely in $\mathbb{P}(\cdot|\bw^\infty=w, \by_{1:t-1}^\infty = y_{1:t-1})$;
        \item if there exist $\theta \in \Theta, i \in [M]$ such that $\bpi_{t}^\infty(w,\theta)^{(i)}=0$, then $(\bx_{t}^\infty)^{(i)}=0$ almost surely in $\mathbb{P}(\cdot|\bw^\infty=w, \by_{1:t}^\infty = y_{1:t})$.
    \end{itemize}

    To prove that the random asymptotic CAL is a well-defined algorithm in $\mathbb{P}$ we need to prove that:
    \begin{equation}
        \mathbb{P}\left ( \forall \theta \in \Theta, t\geq1 \quad \sum_i \bmu_{t}^\infty(\bw^\infty,\theta)^{(i)}(\by_{t}^\infty)^{(i)}\neq0 \right)=1,
    \end{equation}
    which can be proven by observing that:
    \begin{equation}
        \begin{split}
            &\mathbb{P}\left ( \forall \theta \in \Theta, t\geq1,  \quad \sum_i \bmu_{t}^\infty(\bw^\infty,\theta)^{(i)}(\by_{t}^\infty)^{(i)} \neq 0 \right)\\
            &=
            \mathbb{P}\left (  \forall \theta \in \Theta, t\geq1, \quad \sum_i \bmu_{t}^\infty(\bw^\infty,\theta)^{(i)}(\by_{t}^\infty)^{(i)}\neq0 \right)\\
            &=
            \int \mathbb{P}\left (  \forall \theta \in \Theta, t\geq1 \quad \sum_i \bmu_{t}^\infty(w,\theta)^{(i)}(\by_{t}^\infty)^{(i)}\neq0 |\bw^\infty=w\right) \Gamma(dw)\\
            &=
            \int \sum_{y_{1:t-1}} \mathbb{P}\left (  \forall \theta \in \Theta, t\geq1 \quad \sum_i \bmu_{t}^\infty(w,\theta)^{(i)}(\by_{t}^\infty)^{(i)}\neq0 |\bw^\infty=w, \by_{1:t-1}^\infty = y_{1:t-1} \right) \\
            & \quad \quad \cdot \mathbb{P} (\by^\infty_{1:t-1} = y_{1:t-1} |\bw^\infty=w) \Gamma(dw)\\
            &= 1.
        \end{split}
    \end{equation}
\end{proof}

\begin{proposition} \label{prop:infCAL_as_bounded}
    Under assumptions \ref{ass:compactness_continuity},\ref{ass:HMM_support},\ref{ass:kernel_continuity}, there exists ${m}_{t}>0$ as in Proposition \ref{prop:CAL_as_bounded} for any $t\geq 1$such that:
    $$
    \mathbb{P} \left ( \bmu_{t}^\infty(\bw^\infty,\theta)^{(i)}
        \geq 
        {m}_{t}\quad \forall i \in \supp{\bmu_{t}^\infty(\bw^\infty,\theta)} \right )=1 \quad \forall \theta \in \Theta.
    $$
\end{proposition}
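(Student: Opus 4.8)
The plan is to mirror the inductive argument used to establish Proposition \ref{prop:CAL_as_bounded}, but applied to the saturated CAL recursion \eqref{rec:rand_asympt_CAL_rec_full} in place of the finite-population recursion \eqref{rec:CAL_rec}. The crucial observation is that the three lower-bounding constants $m_0$, $m_K$, $m_G$ constructed in the proof of Proposition \ref{prop:CAL_as_bounded} via the Weierstrass theorem depend only on the functions $p_0$, $K_\cdot$, $G$ and the compact sets $\Theta$, $\mathbb{W}$, $[0,C]$; in particular $m_K$ is obtained by minimizing the supported entries of $K_\eta(w,\theta)$ over all $\eta\in[0,C]$ \emph{simultaneously}. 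Consequently the same constants will control the saturated quantities, and we recover exactly the same $\bar{m}_t$ and ${m}_t$, as the statement demands.

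First I would set up the inductive hypothesis that, for the same $\bar{m}_{t-1}>0$ as in Proposition \ref{prop:CAL_as_bounded}, one has $\bpi_{t-1}^\infty(\bw^\infty,\theta)^{(i)}\geq\bar{m}_{t-1}$ for all $i\in\supp{\bpi_{t-1}^\infty(\bw^\infty,\theta)}$, $\mathbb{P}$-almost surely and for all $\theta\in\Theta$. The base case $t-1=0$ is immediate from $\bpi_0^\infty(w,\theta)=p_0(w,\theta)$ and the Weierstrass minimum $m_0$, exactly as before. For the prediction step I would write $\bpi_{t|t-1}^\infty(w,\theta)^{(i)}=\sum_j\bpi_{t-1}^\infty(w,\theta)^{(j)}K_{\bar{\boeta}_{t-1}^\infty(w,\theta)}(w,\theta)^{(j,i)}$, restrict the sum to the support of $\bpi_{t-1}^\infty$, apply the inductive hypothesis, and then bound the surviving kernel entries below by $m_K$, using Assumption \ref{ass:HMM_support} to identify the support of $K_\eta$ uniformly in $\eta$. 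The emission and filter steps proceed identically, producing $\bmu_t^\infty(w,\theta)^{(i)}\geq\bar{m}_{t-1}m_Km_G$ and $\bpi_t^\infty(w,\theta)^{(i)}\geq\bar{m}_{t-1}m_Km_G=:\bar{m}_t$; in the filter step the denominator $\sum_j(\by_t^\infty)^{(j)}\bmu_t^\infty(w,\theta)^{(j)}$ is bounded above by $1$ and is nonzero $\mathbb{P}$-almost surely by Proposition \ref{prop:rand_aympt_CAL_well_definess}, the saturated analogue of Theorem \ref{thm:CAL_well_definess}. Setting ${m}_t:=\bar{m}_{t-1}m_Km_G$ then closes the induction and matches the constant in Proposition \ref{prop:CAL_as_bounded}.

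The one genuinely new verification — and the main thing to get right — is that the bound $m_K$ remains applicable, which requires the saturated interaction term to lie in the same range $[0,C]$ as the finite one. This holds because $\bar{\boeta}_{t-1}^\infty(w,\theta)=\int d(w,\tilde{w},\theta)^\top\bar{\bpi}_{t-1}^\infty(\tilde{w},\theta)\,\Gamma(d\tilde{w})\in[0,C]$, since $\norm{d}_\infty\leq C$ by Assumption \ref{ass:eta_structure} and $\bar{\bpi}_{t-1}^\infty(\tilde{w},\theta)$ is a probability vector. Because this is precisely the property that made $m_K$ an $\eta$-uniform bound in the first place, no part of the finite-population argument actually breaks, and the proof is in effect a transcription with $\tilde{\boeta}_{t-1}^N$ replaced by $\bar{\boeta}_{t-1}^\infty$ and the well-definedness input swapped from Theorem \ref{thm:CAL_well_definess} to Proposition \ref{prop:rand_aympt_CAL_well_definess}. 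I would also remark that the prediction and emission lower bounds in fact hold \emph{surely} and uniformly in $w\in\mathbb{W},\theta\in\Theta$, the almost-sure qualifier being needed only to guarantee positivity of the filter denominator.
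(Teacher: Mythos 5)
Your proposal is correct and follows essentially the same route as the paper's own proof: an induction transcribing the argument of Proposition \ref{prop:CAL_as_bounded} to the saturated recursion, with the same Weierstrass-derived constants $m_0, m_K, m_G$, and with Proposition \ref{prop:rand_aympt_CAL_well_definess} replacing Theorem \ref{thm:CAL_well_definess} to control the filter denominator. Your explicit check that $\bar{\boeta}_{t-1}^\infty(w,\theta)\in[0,C]$ (so that the $\eta$-uniform minimum $m_K$ still applies) is a detail the paper leaves implicit, and is a welcome addition.
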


\begin{proof}
    Consider the following inductive hypothesis. There exists $\bar{m}_{t-1} > 0$ such that:
    $$
    \mathbb{P} \left ( \bpi_{t-1}^\infty(\bw^\infty,\theta)^{(i)}
        \geq 
        \bar{m}_{t-1} \quad \forall i \in \supp{\bpi_{t-1}^\infty(\bw^\infty,\theta)} \right )=1 \quad \forall n \in [N], \theta \in \Theta.
    $$

    As for Proposition \ref{prop:CAL_as_bounded}, from Assumption \ref{ass:compactness_continuity} we have that $p_0(w,\theta)$ is continuous in $w,\theta$ and both $\mathbb{W}$ and $\Theta$ are compact, we then get from Weierstrass theorem that there exists a minimum $m_0$ such that for any realization of $\bw^\infty$ and for any $i \in \supp{p_0(\bw^\infty,\theta)^{(i)}}$:
    $$
    \bpi_{0}^\infty(\bw^\infty,\theta)^{(i)}= p_0(\bw^\infty,\theta)^{(i)} \geq \min_{w \in \mathbb{W},\theta \in \Theta} \min_{j \in \supp{p_0(w,\theta)^{(j)}}} p_0(w,\theta)^{(j)} \eqqcolon m_0,
    $$
    with $m_0>0$ as we are considering a minimum over $j \in \supp{p_0(w,\theta)^{(j)}}$ which excludes all the zeros. As $m_0$ does not depend on $\bw^\infty$ we conclude that the inductive hypothesis holds for $t-1=0$.
    
    We now move to $\bpi_{t|t-1}^\infty(\bw^\infty,\theta)$, 
     and let $i \in \supp{\bpi_{t|t-1}^\infty(\bw^\infty,\theta)}$ then:
    \begin{equation}
    \begin{split}
        \bpi_{t|t-1}^\infty(\bw^\infty,\theta)^{(i)}
        &\geq 
        \bar{m}_{t-1} \sum_{j \in \supp{\bpi_{t-1}^\infty(\bw^\infty,\theta)}} K_{\bar{\boeta}^\infty_{t-1}(\bw^\infty,\theta)}(\bw^\infty,\theta)^{(j,i)},
    \end{split}
    \end{equation}
    where the inequality holds with probability $1$ under the inductive hypothesis. Several of the remaining inequalities in the proof also hold with probability $1$, although to avoid repetition we do not state this explicitly. Similarly to Proposition \ref{prop:CAL_as_bounded}: 
    $$
    \sum_{j \in \supp{\bpi_{t-1}^\infty(\bw^\infty,\theta)}} K_{\bar{\boeta}^\infty_{t-1}(\bw^\infty,\theta)}(\bw^\infty,\theta)^{(j,i)} \geq \min_{j \in \supp{K_{\bar{\boeta}^\infty_{t-1}(\bw^\infty,\theta)}(\bw^\infty,\theta)^{(\cdot,i)}} } K_{\bar{\boeta}^\infty_{t-1}(\bw^\infty,\theta)}(\bw^\infty,\theta)^{(j,i)}.
    $$
    Because of Assumption \ref{ass:HMM_support} we have:
    $$
    K_{\bar{\boeta}^\infty_{t-1}(\bw^\infty,\theta)}(\bw^\infty,\theta)^{(j,i)}=0
    \Longleftrightarrow
    K_{\eta}(\bw^\infty,\theta)^{(j,i)}=0 \quad \forall \eta \in [0,C],
    $$ 
    meaning that
    \begin{equation}
    \begin{split}
    \min_{j \in \supp{K_{\bar{\boeta}^\infty_{t-1}(\bw^\infty,\theta)}(\bw^\infty,\theta)^{(\cdot,i)}} } & K_{\bar{\boeta}^\infty_{t-1}(\bw^\infty,\theta)}(\bw^\infty,\theta)^{(j,i)} 
    \geq 
    \min_{\eta \in [0,C]} \min_{j \in \supp{K_{\eta}(\bw^\infty,\theta)^{(\cdot,i)}} } K_{\eta}(\bw^\infty,\theta)^{(j,i)}.
    \end{split}
    \end{equation}
    We can then conclude:
    \begin{equation}
    \begin{split}
        \bpi_{t|t-1}^\infty(\bw^\infty,\theta)^{(i)}
        &\geq 
        \bar{m}_{t-1} \min_{\theta \in \Theta, w \in \mathbb{W},\eta \in [0,C]} \min_{j \in \supp{K_{\eta}(w,\theta)^{(\cdot,i)}} } K_{\eta}(w,\theta)^{(j,i)}\\
        &\geq 
        \bar{m}_{t-1} \min_{\theta \in \Theta, w \in \mathbb{W},\eta \in [0,C]} \min_{(i,j) \in \supp{K_{\eta}(w,\theta)} } K_{\eta}(w,\theta)^{(j,i)}.
    \end{split}
    \end{equation}
    As $K_{\eta}(w,\theta)$ is continuous in $\eta$ because of Assumption \ref{ass:kernel_continuity} and also in $w,\theta$ because of Assumption \ref{ass:compactness_continuity}, and both $[0,C]$ is compact by definition and $\mathbb{W},\Theta$ are compact because of Assumption \ref{ass:compactness_continuity}, we can conclude by Weirstrass theorem that there exists a minimum $m_{K}$, hence: 
    \begin{equation}
    \begin{split}
        \bpi_{t|t-1}^\infty(\bw^\infty,\theta)^{(i)}
        &\geq 
        \bar{m}_{t-1} m_{K}>0,
    \end{split}
    \end{equation}
    where the strictly greater than zero follows from considering the minimum on the support of the transition matrix. As there is no dependence on $\bw^\infty,\by_{1:t-1}^\infty$ we conclude that there exist $\bar{m}_{t-1}, m_{K}>0$ such that:
    $$
    \mathbb{P} \left ( \bpi_{t|t-1}^\infty(\bw^\infty,\theta)^{(i)}
        \geq 
        \bar{m}_{t-1} m_{K} \quad \forall i \in \supp{\bpi_{t|t-1}^\infty(\bw^\infty,\theta)} \right )=1 \quad \forall \theta \in \Theta.
    $$
    
    Following again the same steps as in Proposition \ref{prop:CAL_as_bounded}, we have that for an arbitrary realization of $\bw^\infty,\by_{1:t-1}^\infty$, and $i \in \supp{\bmu_{t}^\infty(\bw^\infty,\theta)}$:
    \begin{equation}
    \begin{split}
        \bmu_{t}^\infty(\bw^\infty,\theta)^{(i)}
        &\geq 
        \bar{m}_{t-1} m_{K} \sum_{j \in \supp{\bpi_{t|t-1}^\infty(\bw^\infty,\theta)}} G(\bw^\infty,\theta)^{(j,i)},
    \end{split}
    \end{equation}
    where the inequality follows from what we have proven above. Moreover:
    \begin{equation}
    \begin{split}
        \bmu_{t}^\infty(\bw^\infty,\theta)^{(i)}
        &\geq 
        \bar{m}_{t-1} m_{K} \min_{w \in \mathbb{W},\theta \in \Theta } \min_{(i,j) \in \supp{G(w,\theta)}} G(w,\theta)^{(j,i)},
    \end{split}
    \end{equation}
    and as $G(w,\theta)$ is continuous in $w,\theta$ because of Assumption \ref{ass:compactness_continuity} and $\mathbb{W}, \Theta$ are compact because of Assumption \ref{ass:compactness_continuity} we can conclude by Weirstrass theorem that there exists a minimum $m_{G}$:
    \begin{equation}
    \begin{split}
        \bmu_{t}^\infty(\bw^\infty,\theta)^{(i)}
        &\geq 
        \bar{m}_{t-1} m_{K} m_{G}>0,
    \end{split}
    \end{equation}
    where the strictly greater than zero follows from considering a minimum on the support of the emission matrix. As there is no dependence on $\bw^\infty,\by_{1:t-1}^\infty$ we can then conclude that there exist $\bar{m}_{t-1}, m_{K}, m_{G}>0$ such that:
    $$
    \mathbb{P} \left ( \bmu_{t}^\infty(\bw^\infty,\theta)^{(i)}
        \geq 
        \bar{m}_{t-1} m_{K} m_{G} \quad \forall i \in \supp{\bmu_{t}^\infty(\bw^\infty,\theta)} \right )=1 \quad \forall \theta \in \Theta.
    $$

    Finally, consider a realization of $\bw^\infty,\by_{1:t}^\infty$ and $i \in \supp{\bpi_{t}^\infty(\bw^\infty,\theta)}$ then:
    \begin{equation}
        \begin{split}
            \bpi_{t}^\infty(\bw^\infty,\theta)^{(i)}
            &\geq 
            \bar{m}_{t-1} m_{K} \sum_{j}  G(\bw^\infty,\theta)^{(i,j)} (\by_{t}^\infty)^{(j)},
        \end{split}
    \end{equation}
    where the inequality follows from what we have proven above and we know that by definition and by Proposition \ref{prop:rand_aympt_CAL_well_definess}:
    $$
    \sum_{j}  G(\bw^\infty,\theta)^{(i,j)} (\by_{t}^\infty)^{(j)} \leq 1 \text{ and } \sum_{j}  G(\bw^\infty,\theta)^{(i,j)} (\by_{t}^\infty)^{(j)}\neq 0,
    $$
    $\mathbb{P}$-almost surely. Following the same steps as in Proposition \ref{prop:CAL_as_bounded} we can conclude that there exist $\bar{m}_{t-1}, m_{K}, m_{G}>0$ such that:
    $$
    \mathbb{P} \left ( \bpi_{t}^\infty(\bw^\infty,\theta)^{(i)}
        \geq 
        \bar{m}_{t-1} m_{K} m_{G} \quad \forall i \in \supp{\bpi_{t}^\infty(\bw^\infty,\theta)} \right )=1 \quad \forall \theta \in \Theta.
    $$
    
    We can then set $\bar{m}_t \coloneqq \bar{m}_{t-1} m_{K} m_{G}$ and conclude that there exists $\bar{m}_t >0$ such that:
    $$
    \mathbb{P} \left ( \bpi_{t}^\infty(\bw^\infty,\theta)^{(i)}
        \geq 
        \bar{m}_t \quad \forall i \in \supp{\bpi_{t}^\infty(\bw^\infty,\theta)} \right )=1 \quad \forall \theta \in \Theta,
    $$
    which closes the induction, meaning that the above statement holds for an arbitrary $t$. As a consequence, we also have that for any $t\geq 1$ there exists ${m}_{t}>0$ such that:
    $$
    \mathbb{P} \left ( \bmu_{t}^\infty(\bw^\infty,\theta)^{(i)}
        \geq 
        {m}_{t}\quad \forall i \in \supp{\mu_{t}^\infty(\bw^\infty,\theta)} \right )=1 \quad \forall \theta \in \Theta,
    $$
    with ${m}_{t}\coloneqq \bar{m}_{t-1} m_{K} m_{G}$, which concludes the proof.
\end{proof}

\paragraph{Population saturated process.} The population saturated process consists of $\mathbb{O}_M$-valued disease states $(\bx^\infty_{n,t})_{t\geq 0}$ and $\mathbb{O}_{M+1}$-valued observations $(\by^\infty_{n,t})_{t\geq 1}$, for each  $n\in\mathbb{N}$.  

Given $\bw_1,\bw_2,\ldots,$ (which are the same covariate vectors as in the data-generating process), the individuals and observations $(\bx^\infty_{n,t})_{t\geq 0}$, $(\by^\infty_{n,t})_{t\geq 1}$, are defined to be conditionally independent across $n
$, and distributed as follows:
\begin{equation}\label{eq:limiting_process_n}
    \begin{split}
    &\bx_{n,0}^\infty|\bw_n \sim \mbox{Cat}\left( \cdot|p_0(\bw_n,\theta^\star)\right),\\
    &\bx_{n,t}^\infty|\bx_{n,t-1}^\infty,\bw_n \sim \mbox{Cat}\left( \cdot|\left [ (\bx_{t-1}^\infty)^\top K_{ \boeta_{t-1}^{\infty} (\bw_n,\theta^\star ) }(\bw_n,\theta^\star) \right ]^\top\right),\\
    &\by_{n,t}^\infty|\bx_{n,t}^\infty,\bw_n \sim \mbox{Cat}\left( \cdot|\left [ (\bx_{t}^\infty)^\top G(\bw_n,\theta^\star) \right ]^\top\right).
    \end{split}
\end{equation}
where $\boeta_{t-1}^{\infty}$ is as in Corollary \ref{corol:eta_bound}.

\begin{proposition} \label{thm:LLN_f_y_yinf}
    Under Assumption \ref{ass:w_iid},\ref{ass:eta_structure},\ref{ass:kernel_continuity}, for any $t\geq 1$ there exists constants $e_t >0$ and $B_t>0$ such that for any function $f_t$ with $f_t(\bw_n,\by_{n,1:t}^N) \in [-B_t,B_t]$ and $f_t(\bw_n,\by_{n,1:t}^\infty) \in [-B_t,B_t]$ almost surely we have:
    \begin{equation}
        \begin{split}
            \normiii[\Bigg]{\frac{1}{N} \sum_{n \in [N]} f_t(\bw_n,\by_{n,1:t}^N) - f_t(\bw_n,\by_{n,1:t}^\infty)}_4 \leq 
            2 B_t \sqrt[4]{6} N^{-\frac{1}{2}} e_t.
        \end{split}
    \end{equation}
\end{proposition}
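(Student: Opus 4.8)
The plan is to prove the statement by a coupling argument between the data-generating process of Section~\ref{sec:data_generating_process} and the population saturated process \eqref{eq:limiting_process_n}, both driven by the same covariates $\bw_1,\bw_2,\dots$. I would realise the two processes on a common space by constructing, forward in time and independently across individuals, a maximal coupling of the transition kernels at each step: given the data-generating trajectory, I sample $(\bx^\infty_{n,s})_{s}$ and hence $(\by^\infty_{n,s})_s$ for each $n$ using an auxiliary randomisation that is independent across $n$. The two chains for individual $n$ use the kernels $K_{\boeta^N_{s-1}(\bw_n,\theta^\star)}$ and $K_{\boeta^\infty_{s-1}(\bw_n,\theta^\star)}$ respectively, and share the same emission matrix $G(\bw_n,\theta^\star)$; so the coupling can be arranged to make the latent states, and therefore the observations, coincide whenever the interaction terms coincide. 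Writing $\mathcal{G}\coloneqq\sigma(\bW^N,\bX^N_{0:t},\bY^N_{1:t})$, this construction ensures that conditionally on $\mathcal{G}$ the saturated trajectories $\by^\infty_{n,1:t}$ are independent across $n$, while the data-generating quantities $\by^N_{n,1:t}$ and $\boeta^N_{s}(\bw_n,\theta^\star)$ are $\mathcal{G}$-measurable.

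Set $\boldsymbol{\delta}_n\coloneqq f_t(\bw_n,\by^N_{n,1:t})-f_t(\bw_n,\by^\infty_{n,1:t})$ and let $A_n$ be the event that the two observation trajectories $\by^N_{n,1:t},\by^\infty_{n,1:t}$ differ. By boundedness of $f_t$ we have $|\boldsymbol{\delta}_n|\leq 2B_t\,\mathbf{1}_{A_n}$, so the whole contribution comes from coupling failures. I would then split
\begin{equation}
\frac{1}{N}\sum_{n\in[N]}\boldsymbol{\delta}_n
= \frac{1}{N}\sum_{n\in[N]}\left(\boldsymbol{\delta}_n-\mathbb{E}[\boldsymbol{\delta}_n\mid\mathcal{G}]\right)
+\frac{1}{N}\sum_{n\in[N]}\mathbb{E}[\boldsymbol{\delta}_n\mid\mathcal{G}],
\end{equation}
and bound the two terms separately. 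The centred variables $\boldsymbol{\delta}_n-\mathbb{E}[\boldsymbol{\delta}_n\mid\mathcal{G}]$ are conditionally independent given $\mathcal{G}$ (by the construction), conditionally mean zero, and bounded by $4B_t$, so Lemma~\ref{lemma:mean_0_bound} applied with the filtration $\mathcal{G}$ yields an $\mathcal{O}(N^{-1/2})$ bound directly.

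For the second term I would control the conditional bias through the coupling-failure probability. A first-disagreement-time decomposition, together with the total-variation bound for categorical distributions and the Lipschitz continuity of $K_\eta$ in Assumption~\ref{ass:kernel_continuity}, gives $\mathbb{P}(A_n\mid\mathcal{G})\leq c\sum_{s=0}^{t-1}\bigl|\boeta^N_{s}(\bw_n,\theta^\star)-\boeta^\infty_{s}(\bw_n,\theta^\star)\bigr|$ for a constant $c$ depending only on $M,L$, whence $|\mathbb{E}[\boldsymbol{\delta}_n\mid\mathcal{G}]|\leq 2B_t\,\mathbb{P}(A_n\mid\mathcal{G})$. Applying Minkowski's inequality and then Corollary~\ref{corol:eta_bound}, which furnishes a bound on $\normiii{\boeta^N_{s}(\bw_n,\theta^\star)-\boeta^\infty_{s}(\bw_n,\theta^\star)}_4$ that is uniform in $n$, shows this term is also $\mathcal{O}(N^{-1/2})$. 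Summing the two bounds and setting $\xi_t$ to collect the resulting constants completes the argument.

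The main obstacle will be the coupling construction in the first paragraph: I must simultaneously guarantee (i) conditional independence of the saturated trajectories given the whole data-generating process, so that Lemma~\ref{lemma:mean_0_bound} is applicable, and (ii) that the probability of a trajectory disagreement is controlled by the accumulated interaction-term discrepancy rather than being of constant order. Point (ii) requires propagating the step-wise maximal coupling through time and verifying that the saturated chain retains its correct marginal law \eqref{eq:limiting_process_n} under the conditional construction; the naive term-by-term estimate $\tfrac1N\sum_n\normiii{\boldsymbol{\delta}_n}_4$ is too lossy, as it would only give $\mathcal{O}(N^{-1/8})$, which is precisely why the conditional-independence structure of step~(i) is indispensable.
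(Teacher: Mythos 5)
Your overall architecture (a step-wise coupling of the two processes sharing covariates, a centering at the terminal time with respect to $\mathcal{G}=\sigma(\bW^N,\bX^N_{0:t},\bY^N_{1:t})$, Lemma \ref{lemma:mean_0_bound} for the centred part, and a coupling-failure bound for the bias) is a genuinely different route from the paper, which instead proves the result by induction on $t$: at each step it inserts and removes one conditional expectation at a time, so that the kernel discrepancy enters as a difference of expectations over the \emph{next} state, $\abs{\sum_x f(\cdot,x)\,x_{s-1}^\top[K_{\boeta^N_{s-1}}-K_{\boeta^\infty_{s-1}}]x}\le B_t\norm{K_{\boeta^N_{s-1}}-K_{\boeta^\infty_{s-1}}}_\infty\le B_tL\abs{\boeta^N_{s-1}-\boeta^\infty_{s-1}}$, and then lifts the latent-state result to observation functionals. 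No explicit coupling is ever built.

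The gap in your version is the claim $\mathbb{P}(A_n\mid\mathcal{G})\le c\sum_{s}\abs{\boeta^N_{s}(\bw_n,\theta^\star)-\boeta^\infty_{s}(\bw_n,\theta^\star)}$ with $c$ depending only on $M,L$. A maximal coupling of two rows $p,q$ controls the \emph{unconditional} disagreement probability by $d_{TV}(p,q)$, but your $\mathcal{G}$ contains the realized data-generating trajectory, so the relevant quantity is $\mathbb{P}\left(X^\infty_s\neq i\mid X^N_s=i,\;\text{agreement at }s-1\right)=\bigl(p^{(i)}-\min(p^{(i)},q^{(i)})\bigr)/p^{(i)}$, which can be of order $1$ even when $d_{TV}(p,q)$ is tiny, whenever the realized transition probability $p^{(i)}$ is itself tiny. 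Under the hypotheses actually stated in the proposition (Assumptions \ref{ass:w_iid}, \ref{ass:eta_structure}, \ref{ass:kernel_continuity} only) there is no lower bound on within-support entries of $K_\eta(w,\theta^\star)$, so the constant you need does not exist; averaging over the realized state only recovers $d_{TV}$ inside a first moment, which after the $L^4$ manipulation degrades to the $\mathcal{O}(N^{-1/8})$ rate you were trying to avoid. The argument can be repaired by additionally invoking Assumptions \ref{ass:compactness_continuity} and \ref{ass:HMM_support}, which (as in Proposition \ref{prop:CAL_as_bounded}) give a uniform lower bound $m_K>0$ on transition probabilities over the invariant support and hence the pointwise bound $L\abs{\boeta^N_{s-1}-\boeta^\infty_{s-1}}/m_K$ for each first-disagreement increment; but that proves the statement under strictly stronger hypotheses than claimed, whereas the paper's telescoping induction needs neither the coupling nor the lower bound. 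You would also need to spell out the forward construction verifying that the saturated chain retains its marginal law and that its conditional law given the full data trajectory depends only on the one-step past, which you correctly flag but do not resolve.
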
 

\begin{proof}
    Remark that both $\bx_{n,t}^N$ and $\bx_{n,t}^\infty$ are random variables that take values on $\mathbb{O}_M$, and similarly $\by_{n,t}^N$ and $\by_{n,t}^\infty$ take values on $\mathbb{O}_{M+1}$. We often write expectations over $\bx_{n,t}^N$ or $\bx_{n,t}^\infty$ (resp. $\by_{n,t}^N$ or $\by_{n,t}^\infty$) as summations over ``$x$'' (resp. ``$y$''), implicitly it should be understood that we are marginalizing over $x\in \mathbb{O}_M$ (resp. $y \in \mathbb{O}_{M+1}$). 
    
    Consider the following inductive hypothesis. There exists $e_{t-1} >0$ such that for any function $f_{t-1}$ with $f_{t-1}(\bw_n,\bx_{n,1:{t-1}}^N) \in [-B_{t-1},B_{t-1}]$ and $f_{t-1}(\bw_n,\bx_{n,1:{t-1}}^\infty) \in [-B_{t-1},B_{t-1}]$ we have:
    \begin{equation}
        \begin{split}
            \normiii[\Bigg]{\frac{1}{N} \sum_{n \in [N]} f_{t-1}(\bw_n,\bx_{n,{0:t-1}}^N) - f_{t-1}(\bw_n,\bx_{n,{0:t-1}}^\infty)}_4 \leq 2 B_{t-1} \sqrt[4]{6} N^{-\frac{1}{2}} \bar{e}_{t-1}.
        \end{split}
    \end{equation}
    We start by proving it at $t-1=0$, and specifically we want:
    \begin{equation}
        \begin{split}
            \normiii[\Bigg]{\frac{1}{N} \sum_{n \in [N]} f_0(\bw_n,\bx_{n,0}^N) - f_0(\bw_n,\bx_{n,0}^\infty)}_4 \leq 2 B_0 \sqrt[4]{6} N^{-\frac{1}{2}} \bar{e}_{0}.
        \end{split}
    \end{equation}
    Observe that $f_0(\bw_n,\bx_{n,0}^N)-f_0(\bw_n,\bx_{n,0}^\infty)$ are conditionally independent given $\bW^N$ and mean zero because:
    \begin{equation}
        \begin{split}
            \mathbb{E} \left [ f_0(\bw_n,\bx_{n,0}^N) | \bW^N \right ] = \sum_{x_0} f_0(\bw_n,x_0) p_0(\bw_n,\theta^\star) =\mathbb{E} \left [ f_0(\bw_n,\bx_{n,0}^\infty) | \bW^N \right ].
        \end{split}
    \end{equation}
    Moreover, they are almost surely bounded because we are assuming $f_0(\bw_n,\bx_{n,0}^N) \in [-B_0,B_0]$ and $f_0(\bw_n,\bx_{n,0}^\infty) \in [-B_0,B_0]$ almost surely, hence:
    \begin{equation}
        \begin{split}
            \norm{f_0(\bw_n,\bx_{n,0}^N) - f_0(\bw_n,\bx_{n,0}^\infty)}_\infty \leq 2 B_0.
        \end{split}
    \end{equation}
    meaning that by Lemma \ref{lemma:mean_0_bound}:
    \begin{equation}
        \begin{split}
            \normiii[\Bigg]{\frac{1}{N} \sum_{n \in [N]} f_0(\bw_n,\bx_{n,0}^N) - f_0(\bw_n,\bx_{n,0}^\infty)}_4 \leq 2 B_0 \sqrt[4]{6} N^{-\frac{1}{2}},
        \end{split}
    \end{equation}
    so our inductive hypothesis is true at $t-1=0$ for $\bar{e}_0=1$.

    Consider now a general time $t$, we can rewrite:
    \begin{align}
            &f_{t}(\bw_n,\bx_{n,{0:t}}^N) - f_{t}(\bw_n,\bx_{n,{0:t}}^\infty)\\
            &=
            f_{t}(\bw_n,\bx_{n,{0:t}}^N) - \sum_{x} f_{t}(\bw_n,(\bx_{n,{0:t-1}}^N,x)) (\bx_{n,t-1}^N)^\top K_{\boeta_{t-1}^N(\bw_n,\theta^\star)}(\bw_n,\theta^\star) x \\
            &\quad
            + \sum_{x} f_{t}(\bw_n,(\bx_{n,{0:t-1}}^N,x)) (\bx_{n,t-1}^N)^\top \left [ K_{\boeta_{t-1}^N(\bw_n,\theta^\star)}(\bw_n,\theta^\star) - K_{\boeta_{t-1}^\infty(\bw_n,\theta^\star)}(\bw_n,\theta^\star) \right ] x\\
            &\quad
            + \sum_{x} f_{t}(\bw_n,(\bx_{n,{0:t-1}}^N,x)) (\bx_{n,t-1}^N)^\top K_{\boeta_{t-1}^\infty(\bw_n,\theta^\star)}(\bw_n,\theta^\star) x\\
            &\quad\quad
            - \sum_{x} f_{t}(\bw_n,(\bx_{n,{0:t-1}}^\infty,x)) (\bx_{n,t-1}^\infty)^\top K_{\boeta_{t-1}^\infty(\bw_n,\theta^\star)}(\bw_n,\theta^\star) \\
            &\quad
            +\sum_{x} f_{t}(\bw_n,(\bx_{n,{0:t-1}}^\infty,x)) (\bx_{n,t-1}^\infty)^\top K_{\boeta_{t-1}^\infty(\bw_n,\theta^\star)}(\bw_n,\theta^\star) x - f_{t}(\bw_n,\bx_{n,{t}}^\infty), 
    \end{align}
    meaning that by Minkowski inequality:
    \begin{align}
            &\normiii[\Bigg]{\frac{1}{N} \sum_{n \in [N]} f_{t}(\bw_n,\bx_{n,{0:t}}^N) - f_{t}(\bw_n,\bx_{n,{0:t}}^\infty)}_4\\
            &=
            \normiii[\Bigg]{\frac{1}{N} \sum_{n \in [N]} f_{t}(\bw_n,\bx_{n,{0:t}}^N) - \sum_{x} f_{t}(\bw_n,(\bx_{n,{0:t-1}}^N,x)) (\bx_{n,t-1}^N)^\top K_{\boeta_{t-1}^N(\bw_n,\theta^\star)}(\bw_n,\theta^\star) x}_4 \label{eq:asympt_A_generalized}\\
            &\quad
            + 
            \left \vvvert\frac{1}{N} \sum_{n \in [N]} \sum_{x} f_{t}(\bw_n,(\bx_{n,{0:t-1}}^N,x)) (\bx_{n,t-1}^N)^\top \left [ K_{\boeta_{t-1}^N(\bw_n,\theta^\star)}(\bw_n,\theta^\star) \right.\right.\\
            &\left.\left. \qquad\qquad \qquad\qquad \qquad\qquad \qquad\qquad \qquad\qquad
            - K_{\boeta_{t-1}^\infty(\bw_n,\theta^\star)}(\bw_n,\theta^\star) \right ] x\right \vvvert_4 \label{eq:asympt_B_generalized}\\
            &\quad
            + 
            \left \vvvert \frac{1}{N} \sum_{n \in [N]}  \sum_{x} f_{t}(\bw_n,(\bx_{n,{0:t-1}}^N,x)) (\bx_{n,t-1}^N)^\top K_{\boeta_{t-1}^\infty(\bw_n,\theta^\star)}(\bw_n,\theta^\star) x \right.\\
            &\left. \qquad\qquad\qquad\qquad
            - \sum_{x} f_{t}(\bw_n,(\bx_{n,{0:t-1}}^\infty,x)) (\bx_{n,t-1}^\infty)^\top K_{\boeta_{t-1}^\infty(\bw_n,\theta^\star)}(\bw_n,\theta^\star) x \right \vvvert_4 \label{eq:asympt_C_generalized}\\
            &\quad
            +
            \normiii[\Bigg]{\frac{1}{N} \sum_{n \in [N]} \sum_{x} f_{t}(\bw_n,(\bx_{n,{0:t-1}}^\infty,x)) (\bx_{n,t-1}^\infty)^\top K_{\boeta_{t-1}^\infty(\bw_n,\theta^\star)}(\bw_n,\theta^\star) x - f_{t}(\bw_n,\bx_{n,{t}}^\infty)}_4. \label{eq:asympt_D_generalized}
    \end{align}

    Starting from \eqref{eq:asympt_A_generalized} we can notice that:
    $$
    \mathbb{E}  \left [ f_{t}(\bw_n,\bx_{n,{0:t}}^N) | \bW^N, \bx_{0:t-1}^N \right ] = \sum_{x} f_{t}(\bw_n,(\bx_{n,{0:t-1}}^N,x)) (\bx_{n,t-1}^N)^\top K_{\boeta_{t-1}^N(\bw_n,\theta^\star)}(\bw_n,\theta^\star) x,
    $$
    moreover the random variables $f_{t}(\bw_n,\bx_{n,{0:t}}^N) - \sum_{x} f_{t}(\bw_n,(\bx_{n,{0:t-1}}^N,x)) (\bx_{n,t-1}^N)^\top K_{\boeta_{t-1}^N(\bw_n,\theta^\star)}(\bw_n,\theta^\star) x$ are conditionally independent across $n$ given $\bW^N, \bX_{t-1}^N$ and almost surely bounded by $2B_t$, we can then conclude by Lemma \ref{lemma:mean_0_bound}:
    \begin{equation}
        \normiii[\Bigg]{\frac{1}{N} \sum_{n \in [N]} f_{t}(\bw_n,\bx_{n,{0:t}}^N) - \sum_{x} f_{t}(\bw_n,(\bx_{n,{0:t-1}}^N,x)) (\bx_{n,t-1}^N)^\top K_{\boeta_{t-1}^N(\bw_n,\theta^\star)}(\bw_n,\theta^\star) x}_4
        \leq 2 B_t \sqrt[4]{6} N^{-\frac{1}{2}}.
    \end{equation}
    Moving to \eqref{eq:asympt_B_generalized} we note that by Minkowski inequality:
    \begin{equation}
        \begin{split}
            &\left \vvvert\frac{1}{N} \sum_{n \in [N]} \sum_{x} f_{t}(\bw_n,(\bx_{n,{0:t-1}}^N,x)) (\bx_{n,t-1}^N)^\top \left [ K_{\boeta_{t-1}^N(\bw_n,\theta^\star)}(\bw_n,\theta^\star) \right.\right.\\
            &\left.\left. \qquad\qquad \qquad\qquad \qquad\qquad \qquad\qquad \qquad\qquad
            - K_{\boeta_{t-1}^\infty(\bw_n,\theta^\star)}(\bw_n,\theta^\star) \right ] x\right \vvvert_4 \\
            &\leq
            \frac{1}{N} \sum_{n \in [N]} \left \vvvert \sum_{x} f_{t}(\bw_n,(\bx_{n,{0:t-1}}^N,x)) (\bx_{n,t-1}^N)^\top \left [ K_{\boeta_{t-1}^N(\bw_n,\theta^\star)}(\bw_n,\theta^\star) \right.\right.\\
            &\left.\left. \qquad\qquad \qquad\qquad \qquad\qquad \qquad\qquad \qquad\qquad
            - K_{\boeta_{t-1}^\infty(\bw_n,\theta^\star)}(\bw_n,\theta^\star) \right ] x\right \vvvert_4.
        \end{split}
    \end{equation}
    Remark that:
    \begin{equation}
        \begin{split}
            &\abs{\sum_{x} f_{t}(\bw_n,(\bx_{n,{0:t-1}}^N,x))(\bx_{n,t-1}^N)^\top \left [ K_{\boeta_{t-1}^N(\bw_n,\theta^\star)}(\bw_n,\theta^\star) - K_{\boeta_{t-1}^\infty(\bw_n,\theta^\star)}(\bw_n,\theta^\star) \right ] x}\\
            &\leq 
            B_t \sum_{x} \abs{(\bx_{n,t-1}^N)^\top \left [ K_{\boeta_{t-1}^N(\bw_n,\theta^\star)}(\bw_n,\theta^\star) - K_{\boeta_{t-1}^\infty(\bw_n,\theta^\star)}(\bw_n,\theta^\star) \right ] x}\\
            &\leq 
            B_t \norm{K_{\boeta_{t-1}^N(\bw_n,\theta^\star)}(\bw_n,\theta^\star) - K_{\boeta_{t-1}^\infty(\bw_n,\theta^\star)}(\bw_n,\theta^\star)}_\infty,
        \end{split}
    \end{equation}
    hence by Assumption \ref{ass:kernel_continuity} we have:
    \begin{equation}
        \begin{split}
            &\abs{\sum_{x} f_{t}(\bw_n,(\bx_{n,{0:t-1}}^N,x))(\bx_{n,t-1}^N)^\top \left [ K_{\boeta_{t-1}^N(\bw_n,\theta^\star)}(\bw_n,\theta^\star) - K_{\boeta_{t-1}^\infty(\bw_n,\theta^\star)}(\bw_n,\theta^\star) \right ] x}\\
            &\leq 
            B_t L\abs{\boeta_{t-1}^N(\bw_n,\theta^\star) - \boeta_{t-1}^\infty(\bw_n,\theta^\star)}.
        \end{split}
    \end{equation}
    We can then rewrite:
    \begin{equation}
        \begin{split}
            &\left \vvvert\frac{1}{N} \sum_{n \in [N]} \sum_{x} f_{t}(\bw_n,(\bx_{n,{0:t-1}}^N,x)) (\bx_{n,t-1}^N)^\top \left [ K_{\boeta_{t-1}^N(\bw_n,\theta^\star)}(\bw_n,\theta^\star) \right.\right.\\
            &\left.\left. \qquad\qquad \qquad\qquad \qquad\qquad \qquad\qquad \qquad\qquad
            - K_{\boeta_{t-1}^\infty(\bw_n,\theta^\star)}(\bw_n,\theta^\star) \right ] x\right \vvvert_4 \\
            &\leq
            \frac{B_t L}{N} \sum_{n \in [N]} \normiii[\Bigg]{\boeta_{t-1}^N(\bw_n,\theta^\star) - \boeta_{t-1}^\infty(\bw_n,\theta^\star)}_4,
        \end{split}
    \end{equation}
    and because of Corollary \ref{corol:eta_bound} and Assumption \ref{ass:kernel_continuity}, we can conclude:
    \begin{equation}
        \begin{split}
            &\left \vvvert\frac{1}{N} \sum_{n \in [N]} \sum_{x} f_{t}(\bw_n,(\bx_{n,{0:t-1}}^N,x)) (\bx_{n,t-1}^N)^\top \left [ K_{\boeta_{t-1}^N(\bw_n,\theta^\star)}(\bw_n,\theta^\star) \right.\right.\\
            &\left.\left. \qquad\qquad \qquad\qquad \qquad\qquad \qquad\qquad \qquad\qquad
            - K_{\boeta_{t-1}^\infty(\bw_n,\theta^\star)}(\bw_n,\theta^\star) \right ] x\right \vvvert_4 \\
            &\leq
            2 B_t L C \sqrt[4]{6} N^{-\frac{1}{2}} \alpha_{t-1}.
        \end{split}
    \end{equation}
    
    Consider \eqref{eq:asympt_C_generalized}, it is just enough to apply our inductive hypothesis with test function $f_{t-1}$ given by:
    $$
    f_{t-1}(\bw_n,x_{0:t-1}) = \sum_{x} f_{t}(\bw_n,(x_{0:t-1},x)) x_{t-1}^\top K_{\boeta_{t-1}^\infty(\bw_n,\theta^\star)}(\bw_n,\theta^\star) x,
    $$
    indeed whether we have $x_{0:t-1}=\bx_{n,0:t-1}^N$ or $x_{0:t-1}=\bx_{n,0:t-1}^\infty$ in both cases:
    \begin{equation}
        \begin{split}
        &\abs{\sum_{x} f_{t}(\bw_n,(x_{0:t-1},x)) x_{t-1}^\top K_{\boeta_{t-1}^\infty(\bw_n,\theta^\star)}(\bw_n,\theta^\star) x} \\
        &\leq 
        \sum_{x} \abs{f_{t}(\bw_n,(x_{0:t-1},x)) x_{t-1}^\top K_{\boeta_{t-1}^\infty(\bw_n,\theta^\star)}(\bw_n,\theta^\star) x}\\
        &\leq 
        B_t \sum_{x} \abs{x_{t-1}^\top K_{\boeta_{t-1}^\infty(\bw_n,\theta^\star)}(\bw_n,\theta^\star) x} =B_t.
        \end{split}
    \end{equation}
    Hence:
    \begin{equation}
        \begin{split}
            &\left \vvvert \frac{1}{N} \sum_{n \in [N]}  \sum_{x} f_{t}(\bw_n,(\bx_{n,{0:t-1}}^N,x)) (\bx_{n,t-1}^N)^\top K_{\boeta_{t-1}^\infty(\bw_n,\theta^\star)}(\bw_n,\theta^\star) x \right.\\
            &\left. \qquad\qquad\qquad\qquad
            - \sum_{x} f_{t}(\bw_n,(\bx_{n,{0:t-1}}^\infty,x)) (\bx_{n,t-1}^\infty)^\top K_{\boeta_{t-1}^\infty(\bw_n,\theta^\star)}(\bw_n,\theta^\star) x \right \vvvert_4\\
            &\leq
            2 B_t \sqrt[4]{6} N^{-\frac{1}{2}} \bar{e}_{t-1}.
        \end{split}
    \end{equation}
    The final term to work on is \eqref{eq:asympt_D_generalized}, but:
    \begin{equation}
        \begin{split}
        \mathbb{E} \left [ f_{t}(\bw_n,\bx_{n,{0:t}}^\infty) | \bW^N \right ] &=
        \mathbb{E} \left \{ \mathbb{E} \left [ f_{t}(\bw_n,\bx_{n,{0:t}}^\infty)| \bx_{n,0:t-1}^\infty, \bW^N \right ] | \bW^N \right \}\\
        &=
        \mathbb{E} \left [ \sum_{x} f_{t}(\bw_n,(\bx_{n,{0:t-1}}^\infty,x)) (\bx_{n,t-1}^\infty)^\top K_{\boeta_{t-1}^\infty(\bw_n,\theta^\star)}(\bw_n,\theta^\star) x | \bW^N \right ],
        \end{split}
    \end{equation}
    and the random variables:
    $$
    \sum_{x} f_{t}(\bw_n,(\bx_{n,{0:t-1}}^\infty,x)) (\bx_{n,t-1}^\infty)^\top K_{\boeta_{t-1}^\infty(\bw_n,\theta^\star)}(\bw_n,\theta^\star) x - f_{t}(\bw_n,\bx_{n,{0:t}}^\infty),
    $$
    are defined to be conditionally independent given $\bW^N$ and bounded by $2B_t$. We can then apply Lemma \ref{lemma:mean_0_bound} and conclude:
    \begin{equation}
        \begin{split}
            &\normiii[\Bigg]{\frac{1}{N} \sum_{n \in [N]} \sum_{x} f_{t}(\bw_n,(\bx_{n,{0:t-1}}^\infty,x)) (\bx_{n,t-1}^\infty)^\top K_{\boeta_{t-1}^\infty(\bw_n,\theta^\star)}(\bw_n,\theta^\star) x - f_{t}(\bw_n,\bx_{n,{0:t}}^\infty)}_4\\
            &\leq
            2 B_t \sqrt[4]{6} N^{-\frac{1}{2}}.
        \end{split}
    \end{equation}

    By putting everything together we can conclude:
    \begin{equation}
        \begin{split}
            &\normiii[\Bigg]{\frac{1}{N} \sum_{n \in [N]} f_{t}(\bw_n,\bx_{n,{0:t}}^N) - f_{t}(\bw_n,\bx_{n,{0:t}}^\infty)}_4\\
            &\leq 
            2 B_t \sqrt[4]{6} N^{-\frac{1}{2}}
            +
            2 B_t L C \sqrt[4]{6} N^{-\frac{1}{2}} \bar{e}_{t-1}
            +
            2 B_t \sqrt[4]{6} N^{-\frac{1}{2}} \bar{e}_{t-1}
            +
            2 B_t \sqrt[4]{6} N^{-\frac{1}{2}}\\
            &=
            2 B_t \sqrt[4]{6} N^{-\frac{1}{2}} \left [ 2 + ( L C + 1) \bar{e}_{t-1} \right ].
        \end{split}
    \end{equation}
    Hence we have proven that our inductive hypothesis is valid at time $t$ with the constant $\bar{e}_{t} \coloneqq \left [ 2 + ( L M C + 1) \bar{e}_{t-1} \right ]$, which tells us that for any $t\geq 1$ we have:
    \begin{equation}
        \begin{split}
            \normiii[\Bigg]{\frac{1}{N} \sum_{n \in [N]} f_{t}(\bw_n,\bx_{n,{0:t}}^N) - f_{t}(\bw_n,\bx_{n,{0:t}}^\infty)}_4 \leq 2 B_t \sqrt[4]{6} N^{-\frac{1}{2}} \bar{e}_{t},
        \end{split}
    \end{equation}
    which also implies:
    \begin{align}
            &\normiii[\Bigg]{\frac{1}{N} \sum_{n \in [N]} f_{t}(\bw_n,\bx_{n,{1:t}}^N) - f_{t}(\bw_n,\bx_{n,{1:t}}^\infty)}_4 \leq 2 B_t \sqrt[4]{6} N^{-\frac{1}{2}} \bar{e}_{t}, \label{eq:test_function_x_ind_path}
    \end{align}
    as we can rewrite $f_{t}(\bw_n,\bx_{n,{1:t}}^N)$ as $f_{t}(\bw_n,\bx_{n,{1:t}}^N)\mathbb{I}(\bx_{n,0} \in \mathbb{O}_M)$, where the indicator condition is always satisfied.

    We can now move to prove the statement of the proposition, with $f_t$ as therein:
    \begin{equation}
        \begin{split}
            \normiii[\Bigg]{\frac{1}{N} \sum_{n \in [N]} f_{t}(\bw_n,\by_{n,{1:t}}^N) - f_{t}(\bw_n,\by_{n,{1:t}}^\infty)}_4 \leq 2 B_t \sqrt[4]{6} N^{-\frac{1}{2}} e_{t}.
        \end{split}
    \end{equation}
    Observe that:
    \begin{align}
            &f_{t}(\bw_n,\by_{n,{1:t}}^N) - f_{t}(\bw_n,\by_{n,{1:t}}^\infty)\\
            &=
            f_{t}(\bw_n,\by_{n,{1:t}}^N) - \sum_{y_{1:t}} f_{t}(\bw_n,y_{1:t}) \prod_{s=1}^t (\bx_{n,s}^N)^\top G(\bw_n,\theta^\star) y_s\\
            &\quad
            + \sum_{y_{1:t}} f_{t}(\bw_n,y_{1:t}) \left [ \prod_{s=1}^t (\bx_{n,s}^N)^\top G(\bw_n,\theta^\star) y_s
            - \prod_{s=1}^t (\bx_{n,s}^\infty)^\top G(\bw_n,\theta^\star) y_s \right ]\\
            &\quad
            +\sum_{y_{1:t}} f_{t}(\bw_n,y_{1:t}) \prod_{s=1}^t (\bx_{n,s}^\infty)^\top G(\bw_n,\theta^\star) y_s - f_{t}(\bw_n,\by_{n,{t}}^\infty), 
    \end{align}
    then by Minkowski inequality:
    \begin{align}
            &\normiii[\Bigg]{\frac{1}{N} \sum_{n \in [N]} f_{t}(\bw_n,\by_{n,{1:t}}^N) - f_{t}(\bw_n,\by_{n,{1:t}}^\infty)}_4\\
            &=
            \normiii[\Bigg]{\frac{1}{N} \sum_{n \in [N]} f_{t}(\bw_n,\by_{n,{1:t}}^N) - \sum_{y_{1:t}} f_{t}(\bw_n,y_{1:t}) \prod_{s=1}^t (\bx_{n,s}^N)^\top G(\bw_n,\theta^\star) y_s }_4 \label{eq:asympt_yA_generalized}\\
            &
            + 
            \normiii[\Bigg]{\frac{1}{N} \sum_{n \in [N]} \sum_{y_{1:t}} f_{t}(\bw_n,y_{1:t}) \left [ \prod_{s=1}^t (\bx_{n,s}^N)^\top G(\bw_n,\theta^\star) y_s
            - \prod_{s=1}^t (\bx_{n,s}^\infty)^\top G(\bw_n,\theta^\star) y_s \right ] }_4 \label{eq:asympt_yB_generalized}\\
            &
            +
            \normiii[\Bigg]{\frac{1}{N} \sum_{n \in [N]} \sum_{y_{1:t}} f_{t}(\bw_n,y_{1:t}) \prod_{s=1}^t (\bx_{n,s}^\infty)^\top G(\bw_n,\theta^\star) y_s - f_{t}(\bw_n,\by_{n,{1:t}}^\infty)}_4. \label{eq:asympt_yC_generalized}
    \end{align}
    Starting from \eqref{eq:asympt_yA_generalized} we can notice that:
    $$
    \mathbb{E} \left [ f_{t}(\bw_n,\by_{n,{1:t}}^N) | \bW^N, \bX_{1:t}^N \right ] 
    = 
    \sum_{y_{1:t}} f_{t}(\bw_n,y_{1:t}) \prod_{s=1}^t (\bx_{n,s}^N)^\top G(\bw_n,\theta^\star) y_s,
    $$
    moreover the random variables:
    $$
    f_{t}(\bw_n,\by_{n,{1:t}}^N) - \sum_{y_{1:t}} f_{t}(\bw_n,y_{1:t}) \prod_{s=1}^t (\bx_{n,s}^N)^\top G(\bw_n,\theta^\star) y_s,
    $$
    are conditionally independent given $\bW^N, \bX_{1:t}^N$ and bounded by $2B_t$, hence we can apply Lemma \ref{lemma:mean_0_bound} and conclude:
    $$
    \normiii[\Bigg]{\frac{1}{N} \sum_{n \in [N]} f_{t}(\bw_n,\by_{n,{1:t}}^N) - \sum_{y_{1:t}} f_{t}(\bw_n,y_{1:t}) \prod_{s=1}^t (\bx_{n,s}^N)^\top G(\bw_n,\theta^\star) y_s }_4 
    \leq 
    2 B_t \sqrt[4]{6} N^{-\frac{1}{2}}.
    $$
    Consider now \eqref{eq:asympt_yB_generalized} and notice that if we consider the test function:
    $$
    h_t(\bw_n, x_{0:t}) = \sum_{y_{1:t}} f_{t}(\bw_n,y_{1:t}) \prod_{s=1}^t (x_{s})^\top G(\bw_n,\theta^\star) y_s,
    $$
    whether we have $x_{0:t}=\bx_{n,0:t}^N$ or $x_{0:t}=\bx_{n,0:t}^\infty$ in both cases:
    $$
    \abs{\sum_{y_{1:t}} f_{t}(\bw_n,y_{1:t}) \prod_{s=1}^t (x_{s})^\top G(\bw_n,\theta^\star) y_s} \leq B_t \sum_{y_{1:t}} \abs{ \prod_{s=1}^t (x_{s})^\top G(\bw_n,\theta^\star) y_s} = B_t.
    $$
    The term \eqref{eq:asympt_yB_generalized} is then an application of \eqref{eq:test_function_x_ind_path}:
    \begin{equation}
        \begin{split}
            &\normiii[\Bigg]{\frac{1}{N} \sum_{n \in [N]} \sum_{y_{1:t}} f_{t}(\bw_n,y_{1:t}) \left [ \prod_{s=1}^t (\bx_{n,s}^N)^\top G(\bw_n,\theta^\star) y_s
            - \prod_{s=1}^t (\bx_{n,s}^\infty)^\top G(\bw_n,\theta^\star) y_s \right ] }_4 \\
            &\leq
            2 B_t \sqrt[4]{6} N^{-\frac{1}{2}} \bar{e}_{t}.
        \end{split}
    \end{equation}
    The last term we need to work on is \eqref{eq:asympt_yC_generalized}, but:
    \begin{equation}
        \begin{split}
            \mathbb{E} \left [  f_{t}(\bw_n,\by_{n,{1:t}}^\infty) | \bW^N \right ] 
            &= 
            \mathbb{E} \left \{  \mathbb{E} \left [  f_{t}(\bw_n,\by_{n,{1:t}}^\infty) | \bW^N, \bx_{n,{0:t}}^\infty \right ]| \bW^N \right \} \\
            &=
            \sum_{y_{1:t}} \mathbb{E} \left [  f_{t}(\bw_n,y_{1:t}) \prod_{s=1}^t (\bx_{n,s}^\infty)^\top G(\bw_n,\theta^\star) y_s | \bW^N \right ], 
        \end{split}
    \end{equation}
    and the random variables:
    $$
    \sum_{y_{1:t}} f_{t}(\bw_n,y_{1:t}) \prod_{s=1}^t (\bx_{n,s}^\infty)^\top G(\bw_n,\theta^\star) y_s - f_{t}(\bw_n,\by_{n,{1:t}}^\infty),
    $$
    are defined to be conditionally independent given $\bW^N$ and bounded by $2B_t$, hence we can apply Lemma \ref{lemma:mean_0_bound} and conclude:
    $$
    \normiii[\Bigg]{\frac{1}{N} \sum_{n \in [N]} \sum_{y_{1:t}} f_{t}(\bw_n,y_{1:t}) \prod_{s=1}^t (\bx_{n,s}^\infty)^\top G(\bw_n,\theta^\star) y_s - f_{t}(\bw_n,\by_{n,{1:t}}^\infty)}_4 \leq 2 B_t \sqrt[4]{6} N^{-\frac{1}{2}}.
    $$
    By putting everything together we get:
    \begin{equation}
        \begin{split}
            &\normiii[\Bigg]{\frac{1}{N} \sum_{n \in [N]} f_{t}(\bw_n,\by_{n,{1:t}}^N) - f_{t}(\bw_n,\by_{n,{1:t}}^\infty)}_4\\
            &\leq 
            2 B_t \sqrt[4]{6} N^{-\frac{1}{2}} 
            +
            2 B_t \sqrt[4]{6} N^{-\frac{1}{2}} \bar{e}_{t}
            +
            2 B_t \sqrt[4]{6} N^{-\frac{1}{2}}\\
            &= 2 B_t \sqrt[4]{6} N^{-\frac{1}{2}} ( 2+\bar{e}_{t} ),
        \end{split}
    \end{equation}
    which conclude the proof under $e_{t} \coloneqq ( 2+\bar{e}_{t} )$.
\end{proof}

\paragraph{Saturated CAL algorithm.}
We refer to the following recursion as the ``saturated CAL algorithm''. 
\begin{equation}\label{rec:rand_halfasympt_CAL_rec_full}
    \begin{aligned} &\hat\bpi_{n,0}^\infty(\bw_n,\theta) \coloneqq p_{0}(\bw_n,\theta),\\
    &\hat\bpi_{n,t|t-1}^\infty(\bw_n,\theta)  \coloneqq  \left [ \hat\bpi_{n,t-1}^\infty(\bw_n,\theta)^{\top} K_{\bar{\boeta}_{t-1}^\infty(\bw_n,\theta)} (\bw_n, \theta)\right ]^\top,\\
    &\hat\bmu_{n,t}^\infty(\bw_n,\theta)  \coloneqq  \left [ \hat\bpi_{n,t|t-1}^\infty(\bw_n,\theta)^{\top} G(\bw_n,\theta) \right ]^\top,\\
    &\hat\bpi_{n,t}^\infty(\bw_n,\theta)  \coloneqq \hat\bpi_{n,t|t-1}^\infty(\bw_n,\theta) \odot \left \{ \left [   G(\bw_n,\theta) \oslash \left ( 1_M \hat\bmu_{n,t}^\infty(\bw_n,\theta)^\top \right ) \right ] \by_{n,t}^N  \right \},
    \end{aligned}
\end{equation}
where $\bar{\boeta}_{t-1}^\infty(\cdot,\cdot)$ is defined in \eqref{rec:asympt_CAL_rec_full}.

\begin{proposition} \label{prop:satCAL_well_definess}
    Under assumptions \ref{ass:w_iid},\ref{ass:HMM_support}, for any $N \in \mathbb{N}, t\geq 1, n \in [N]$ and $\theta \in \Theta$ we have that $\sum_i \hat\bmu_{n,t}^\infty(\bw_n,\theta)^{(i)}(\by_{n,t}^N)^{(i)}\neq 0$ \quad $\mathbb{P}$-almost surely.
\end{proposition}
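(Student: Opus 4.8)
The plan is to mirror the proof of Theorem \ref{thm:CAL_well_definess} almost verbatim, exploiting the fact that the saturated CAL recursion \eqref{rec:rand_halfasympt_CAL_rec_full} differs from the genuine CAL recursion \eqref{rec:CAL_rec} only in that the empirical interaction term $\tilde{\boeta}_{t-1}^N(\bw_n,\theta)$ is replaced by the deterministic limit $\bar{\boeta}_{t-1}^\infty(\bw_n,\theta)$, while both are fed the \emph{same} data-generating observations $\by_{n,t}^N$. The structural fact, inherited from Assumption \ref{ass:HMM_support}, is that $\supp{K_\eta(w,\theta)}$ does not depend on either $\eta$ or $\theta$; consequently the zero pattern of $K_{\bar{\boeta}_{t-1}^\infty(\bw_n,\theta)}(\bw_n,\theta)$ coincides with that of the data-generating kernel $K_{\boeta_{t-1}^N(\bw_n,\theta^\star)}(\bw_n,\theta^\star)$, which is precisely the property driving propositions \ref{prop:prediction_as}--\ref{prop:correction_as}.

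First I would re-prove the support-preservation statements of Proposition \ref{prop:sequential_welldefined} with $\hat\bpi_{n,t|t-1}^\infty$, $\hat\bmu_{n,t}^\infty$, $\hat\bpi_{n,t}^\infty$ in place of $\bpi_{n,t|t-1}^N$, $\bmu_{n,t}^N$, $\bpi_{n,t}^N$, by induction on $t$. The base case is immediate because $\hat\bpi_{n,0}^\infty(\bw_n,\theta)=p_0(\bw_n,\theta)=\bpi_{n,0}^N(\bw_n,\theta)$, so Proposition \ref{prop:initial_as} applies unchanged. For the prediction step I would repeat the argument of Proposition \ref{prop:prediction_as}: writing $\hat\bpi_{n,t|t-1}^\infty(\bw_n,\theta)^{(i)}=0$ as $\hat\bpi_{n,t-1}^\infty(\bw_n,\theta)^{(j)}K_{\bar{\boeta}_{t-1}^\infty(\bw_n,\theta)}(\bw_n,\theta)^{(j,i)}=0$ for every $j$, each term must vanish either through the inductive hypothesis on $\hat\bpi_{n,t-1}^\infty$ (forcing $(\bx_{n,t-1}^N)^{(j)}=0$) or through $K_{\bar{\boeta}_{t-1}^\infty(\bw_n,\theta)}(\bw_n,\theta)^{(j,i)}=0$, which by support-invariance yields $K_{\boeta_{t-1}^N(\bw_n,\theta^\star)}(\bw_n,\theta^\star)^{(j,i)}=0$; the marginalization computation of Proposition \ref{prop:prediction_as} then gives $(\bx_{n,t}^N)^{(i)}=0$ almost surely. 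The correction step is then literally Proposition \ref{prop:correction_as}, since it involves only $G(\bw_n,\theta)$ and the realized observation $\by_{n,t}^N$, neither of which is affected by the change of interaction term.

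Having established that $\hat\bmu_{n,t}^\infty(\bw_n,\theta)^{(i)}=0$ implies $(\by_{n,t}^N)^{(i)}=0$ almost surely under $\mathbb{P}(\cdot\mid\bW^N=W^N,\bY_{1:t-1}^N=Y_{1:t-1}^N)$, I would close the argument exactly as in Theorem \ref{thm:CAL_well_definess}: decompose the event $\{\sum_i\hat\bmu_{n,t}^\infty(\bw_n,\theta)^{(i)}(\by_{n,t}^N)^{(i)}=0\}$ into the union over $k$ of $\{\hat\bmu_{n,t}^\infty(\bw_n,\theta)^{(k)}=0\}\cap\{(\by_{n,t}^N)^{(k)}=1\}$ (using that $\by_{n,t}^N$ is one-hot), note that each such event has conditional probability zero by the above, sum over the observation history $Y_{1:t-1}^N$, and finally integrate over the i.i.d.\ covariates using Assumption \ref{ass:w_iid}.

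The main obstacle — really the only conceptual point — is the hybrid nature of the object: the saturated CAL is statistically decoupled across individuals (depending on $\bw_n$ and the deterministic $\bar{\boeta}_{t-1}^\infty$ alone), yet it must be shown compatible with the \emph{coupled} data-generating observations $\by_{n,t}^N$. This is exactly where Assumption \ref{ass:HMM_support} does the work: because the supports of $p_0$, $K_\eta$, and $G$ are invariant across $\eta$ and $\theta$, the deterministic interaction term carries the same zero pattern as the true one, so the support-preservation chain linking the algorithm's quantities to the realized states and observations is undisturbed. Once this is recognized, no new estimates are required and the proof is a routine transcription of that of Theorem \ref{thm:CAL_well_definess}.
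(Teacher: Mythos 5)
Your proposal is correct and follows essentially the same route as the paper, which likewise proves the result by replicating Propositions \ref{prop:initial_as}--\ref{prop:sequential_welldefined} for the saturated recursion and notes that the only change is in the prediction step, where $K_{\bar{\boeta}^\infty_{t-1}(\bw_n,\theta)}(\bw_n,\theta)^{(j,i)}=0$ still exhibits an $\eta$ with $K_{\eta}(\bw_n,\theta)^{(j,i)}=0$ so that Assumption \ref{ass:HMM_support} transfers the zero to the data-generating kernel. Your write-up is in fact more explicit than the paper's own (which is a brief pointer to Theorem \ref{thm:CAL_well_definess}), and your identification of the support-invariance as the load-bearing step is exactly right.
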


\begin{proof}
    The proof follows the same steps as Theorem \ref{thm:CAL_well_definess}, where we can replicate the same of Proposition \ref{prop:initial_as}, Proposition \ref{prop:prediction_as}, Proposition \ref{prop:correction_as}, and Proposition \ref{prop:sequential_welldefined} for the saturated CAL. The only difference can be found in the prediction step where instead of
    $K_{\widetilde{\boeta}^N_{t-1}(\bw_n,\theta)}(\bw_n,\theta)^{(j,i)}=0$ we have $K_{\bar{\boeta}^\infty_{t-1}(\bw_n,\theta)}(\bw_n,\theta)^{(j,i)}=0$ which similarly implies that there exists $\eta = \bar{\boeta}^\infty_{t-1}(\bw_n,\theta)$ such that $K_{\eta}(\bw_n,\theta)^{(j,i)}=0$, which allows us to follow the same argument as in Proposition \ref{prop:prediction_as}.
\end{proof}

\begin{proposition} \label{prop:satCAL_as_bounded}
    Under assumptions \ref{ass:compactness_continuity},\ref{ass:HMM_support},\ref{ass:kernel_continuity}, for any $t\geq 1$ there exists ${m}_{t}>0$ as in Proposition \ref{prop:CAL_as_bounded} for any $N \in \mathbb{N}$ and $n \in [N]$ such that:
    $$
    \mathbb{P} \left ( \hat{\bmu}_{n,t}^N(\bw_n,\theta)^{(i)}
        \geq 
        {m}_{t}\quad \forall i \in \supp{\bmu_{n,t}^N(\bw_n,\theta)} \right )=1 \quad \forall \theta \in \Theta.
    $$
\end{proposition}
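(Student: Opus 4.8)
The plan is to mirror the inductive argument of Proposition~\ref{prop:CAL_as_bounded} (and its one-individual analogue Proposition~\ref{prop:infCAL_as_bounded}) almost verbatim, the only structural change being that the saturated CAL recursion~\eqref{rec:rand_halfasympt_CAL_rec_full} replaces the random interaction term $\tilde{\boeta}_{t-1}^N$ by the deterministic $\bar{\boeta}_{t-1}^\infty(\bw_n,\theta)$ while retaining the data-generating observations $\by_{n,t}^N$. First I would fix $N$ and $n$ and induct on $t$, with the inductive hypothesis that there exists $\bar{m}_{t-1}>0$, not depending on $N$, $n$, $\theta$, or the realization, such that $\hat{\bpi}_{n,t-1}^\infty(\bw_n,\theta)^{(i)}\geq \bar{m}_{t-1}$ for all $i\in\supp{\hat{\bpi}_{n,t-1}^\infty(\bw_n,\theta)}$, $\mathbb{P}$-almost surely and for all $\theta\in\Theta$. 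The base case $t-1=0$ is immediate since $\hat{\bpi}_{n,0}^\infty(\bw_n,\theta)=p_0(\bw_n,\theta)$: compactness of $\mathbb{W}$ and $\Theta$ (Assumption~\ref{ass:compactness_continuity}), continuity of $p_0$, and Assumption~\ref{ass:HMM_support} let me invoke Weierstrass to extract the same strictly positive minimum $m_0$ as in Proposition~\ref{prop:CAL_as_bounded}.

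For the inductive step I would process the three stages of~\eqref{rec:rand_halfasympt_CAL_rec_full} exactly as in Proposition~\ref{prop:CAL_as_bounded}. In the prediction stage, for $i\in\supp{\hat{\bpi}_{n,t|t-1}^\infty(\bw_n,\theta)}$ I would write $\hat{\bpi}_{n,t|t-1}^\infty(\bw_n,\theta)^{(i)}=\sum_j \hat{\bpi}_{n,t-1}^\infty(\bw_n,\theta)^{(j)}K_{\bar{\boeta}_{t-1}^\infty(\bw_n,\theta)}(\bw_n,\theta)^{(j,i)}$, restrict the sum to the support of $\hat{\bpi}_{n,t-1}^\infty$, and bound it below by $\bar{m}_{t-1}$ times a column-support sum of kernel entries that is nonzero precisely because $i$ lies in the support of the left-hand side. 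The emission and correction stages proceed identically, the correction stage additionally using that the denominator $\sum_j(\by_{n,t}^N)^{(j)}\hat{\bmu}_{n,t}^\infty(\bw_n,\theta)^{(j)}$ lies in $(0,1]$, $\mathbb{P}$-almost surely, where positivity is supplied by Proposition~\ref{prop:satCAL_well_definess}. Setting $\bar{m}_t\coloneqq\bar{m}_{t-1}m_K m_G$ closes the induction and yields $m_t=\bar{m}_{t-1}m_K m_G$ as the lower bound on the support of $\hat{\bmu}_{n,t}^\infty$.

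The one point deserving explicit verification — and the reason the proposition may claim the \emph{same} constant $m_t$ as Proposition~\ref{prop:CAL_as_bounded} — is that the kernel minimum $m_K\coloneqq\min_{\theta,w,\eta\in[0,C]}\min_{(i,j)\in\supp{K_{\eta}(w,\theta)}}K_{\eta}(w,\theta)^{(j,i)}$ is taken uniformly over \emph{all} $\eta\in[0,C]$, not merely over the realized interaction term. Since Assumption~\ref{ass:eta_structure} guarantees $\bar{\boeta}_{t-1}^\infty(\bw_n,\theta)\in[0,C]$ — the integrated form inherits the bound $\norm{d}_\infty\leq C$ — the same $m_K$ controls the saturated-CAL prediction step, and likewise $m_0$ and $m_G$ are manufactured independently of the interaction term. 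I do not anticipate a genuine obstacle: the compactness/continuity machinery together with Assumption~\ref{ass:HMM_support} is interaction-agnostic, so the entire argument transfers, with Proposition~\ref{prop:satCAL_well_definess} playing exactly the role that Theorem~\ref{thm:CAL_well_definess} plays inside Proposition~\ref{prop:CAL_as_bounded}.
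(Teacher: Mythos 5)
Your proposal is correct and follows essentially the same route as the paper, which likewise proves this proposition by repeating the induction of Proposition \ref{prop:CAL_as_bounded} and observing that the kernel minimum $m_K$ is taken uniformly over all $\eta\in[0,C]$, so that replacing $\tilde{\boeta}_{t-1}^N$ by $\bar{\boeta}_{t-1}^\infty(\bw_n,\theta)\in[0,C]$ changes nothing. Your explicit verification that the same constant $m_t$ carries over, and that Proposition \ref{prop:satCAL_well_definess} substitutes for Theorem \ref{thm:CAL_well_definess} in the correction step, is exactly the content of the paper's argument.
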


\begin{proof}
    The proof follows the same steps as the proof of Proposition \ref{prop:CAL_as_bounded}, indeed the only difference between the CAL and the saturated CAL is the use of the saturated dynamic from which we still get:
    \begin{equation}
    \begin{split}
    \min_{j \in \supp{K_{\bar{\boeta}^\infty_{t-1}(\bw_n,\theta)}(\bw_n,\theta)^{(\cdot,i)}} } & K_{\bar{\boeta}^\infty_{t-1}(\bw_n,\theta)}(\bw_n,\theta)^{(j,i)} 
    \geq \min_{\eta \in [0,C]} \min_{j \in \supp{K_{\eta}(\bw_n,\theta)^{(\cdot,i)}} } K_{\eta}(\bw_n,\theta)^{(j,i)},
    \end{split}
    \end{equation}
    as in the proof of Proposition \ref{prop:CAL_as_bounded}.
\end{proof}

To conclude the section we establish that the incremental terms in the logarithm of the CAL approximated the corresponding quantities from the saturated CAL algorithm.

\begin{proposition} \label{prop:mu_muhat_bound}
    Under assumptions \ref{ass:compactness_continuity},\ref{ass:w_iid},\ref{ass:HMM_support},\ref{ass:eta_structure},\ref{ass:kernel_continuity}, there exists $\chi_t >0$ such that for any $t\geq 1$, $n\in [N]$:
    \begin{equation}
        \begin{split}
            \normiii[\Bigg]{
		 \log \left ( (\by_{n,t}^N)^\top \bmu_{n,t}^N(\bw_n,\theta) \right ) - \log \left ( (\by_{n,t}^N)^\top \hat{\bmu}_{n,t}^\infty(\bw_n,\theta) \right )
            }_4 \leq 2 \sqrt[4]{6} N^{-\frac{1}{2}} \chi_{t}.
        \end{split}
    \end{equation}
\end{proposition}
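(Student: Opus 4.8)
The plan is to exploit that the CAL recursion \eqref{rec:CAL_rec} and the saturated CAL recursion \eqref{rec:rand_halfasympt_CAL_rec_full} are driven by exactly the same observations $\by_{n,t}^N$ and the same matrices $K_{\cdot}(\bw_n,\theta)$, $G(\bw_n,\theta)$; they differ \emph{only} in the interaction term, $\tilde{\boeta}_{t-1}^N(\bw_n,\theta)$ for the former versus its deterministic limit $\bar{\boeta}_{t-1}^\infty(\bw_n,\theta)$ for the latter. Consequently the two families of per-individual filtering vectors can differ solely through the propagation of this one discrepancy, whose $L^4$ size Corollary \ref{corol:eta_bound_CAL} already controls at rate $N^{-1/2}$. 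So the heavy analytic input (a law of large numbers for the interaction average) is in hand, and what remains is to track how the discrepancy is amplified through the filter and then converted to a log-scale bound.

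First I would reduce the log-difference to a difference of probabilities. Because $\by_{n,t}^N$ is a one-hot vector, $(\by_{n,t}^N)^\top\bmu_{n,t}^N(\bw_n,\theta)$ selects a single component of $\bmu_{n,t}^N(\bw_n,\theta)$, which then lies in its support and is therefore bounded below by $m_t>0$, $\mathbb{P}$-almost surely, by Proposition \ref{prop:CAL_as_bounded} together with the well-definedness Theorem \ref{thm:CAL_well_definess}; the analogous lower bound for $(\by_{n,t}^N)^\top\hat{\bmu}_{n,t}^\infty(\bw_n,\theta)$ follows from Proposition \ref{prop:satCAL_as_bounded} and Proposition \ref{prop:satCAL_well_definess}. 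Since both arguments lie in $[m_t,1]$ and $\log$ is $\tfrac{1}{m_t}$-Lipschitz there, a mean-value bound yields
\begin{equation}
\abs{\log\big((\by_{n,t}^N)^\top\bmu_{n,t}^N(\bw_n,\theta)\big)-\log\big((\by_{n,t}^N)^\top\hat{\bmu}_{n,t}^\infty(\bw_n,\theta)\big)}
\leq \frac{1}{m_t}\norm{\bmu_{n,t}^N(\bw_n,\theta)-\hat{\bmu}_{n,t}^\infty(\bw_n,\theta)}_\infty,
\end{equation}
$\mathbb{P}$-almost surely, so it suffices to establish an $O(N^{-1/2})$ $L^4$ bound on $\norm{\bmu_{n,t}^N-\hat{\bmu}_{n,t}^\infty}_\infty$.

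The core is then an induction on $t$, carried out for the per-individual differences $\bpi_{n,t|t-1}^N-\hat{\bpi}_{n,t|t-1}^\infty$, $\bmu_{n,t}^N-\hat{\bmu}_{n,t}^\infty$ and $\bpi_{n,t}^N-\hat{\bpi}_{n,t}^\infty$, measured in the $\ell^1$ norm (non-expansive under row-stochastic multiplication, and dominating $\ell^\infty$). The base case is immediate since $\bpi_{n,0}^N(\bw_n,\theta)=p_0(\bw_n,\theta)=\hat{\bpi}_{n,0}^\infty(\bw_n,\theta)$. For the prediction step I would add and subtract $[\,\bpi_{n,t-1}^N(\bw_n,\theta)^\top K_{\bar{\boeta}_{t-1}^\infty(\bw_n,\theta)}(\bw_n,\theta)\,]^\top$ and apply Minkowski: the first term is controlled through Assumption \ref{ass:kernel_continuity} and Corollary \ref{corol:eta_bound_CAL}, injecting a fresh $O(N^{-1/2})$ error exactly as in the bound on \eqref{eq:asympt_step_1_prediction} within the proof of Proposition \ref{prop:bound_CAL_prediction}; the second term is $[\,(\bpi_{n,t-1}^N-\hat{\bpi}_{n,t-1}^\infty)^\top K_{\bar{\boeta}_{t-1}^\infty(\bw_n,\theta)}(\bw_n,\theta)\,]^\top$, whose $\ell^1$ norm does not exceed that of $\bpi_{n,t-1}^N-\hat{\bpi}_{n,t-1}^\infty$, hence is handled by the inductive hypothesis. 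The marginal step then transfers for free: $\bmu_{n,t}^N-\hat{\bmu}_{n,t}^\infty=[\,(\bpi_{n,t|t-1}^N-\hat{\bpi}_{n,t|t-1}^\infty)^\top G(\bw_n,\theta)\,]^\top$, and row-stochasticity of $G$ passes the prediction bound through unchanged, which already delivers the quantity needed in the displayed reduction above.

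The main obstacle is the filtering correction step, where the elementwise division by $\bmu_{n,t}^N$ versus $\hat{\bmu}_{n,t}^\infty$ enters and must be closed back into the induction. Here I would reuse the algebraic identity appearing in the proof of Proposition \ref{prop:bound_CAL_correction}, namely $x^\top G_\mu b-x^\top G_{\tilde\mu}b=\sum_{i,j}x^{(i)}\tfrac{b^{(j)}}{\tilde\mu^{(j)}\mu^{(j)}}G^{(i,j)}(\tilde\mu^{(j)}-\mu^{(j)})$, combined with a telescoping decomposition that separately isolates the contribution of $\bpi_{n,t|t-1}^N-\hat{\bpi}_{n,t|t-1}^\infty$ and that of $\bmu_{n,t}^N-\hat{\bmu}_{n,t}^\infty$. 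The crucial quantitative inputs are the uniform lower bounds $\bmu_{n,t}^N,\hat{\bmu}_{n,t}^\infty\geq m_t$ of Propositions \ref{prop:CAL_as_bounded} and \ref{prop:satCAL_as_bounded}, which keep the denominators away from zero so that $\by_{n,t}^N\oslash\bmu_{n,t}^N$ is finite $\mathbb{P}$-almost surely. After collecting terms by Minkowski, both differences reappear at order $N^{-1/2}$ and feed the induction, giving $\normiii{\norm{\bpi_{n,t}^N-\hat{\bpi}_{n,t}^\infty}_\infty}_4\leq \rho_t N^{-1/2}$ for a constant $\rho_t$ depending only on $t,M,L,C$ and $m_1,\dots,m_t$. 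Plugging the step-$t$ marginal bound into the first-paragraph reduction then closes the argument, with $\chi_t$ chosen to absorb $\tfrac{1}{m_t}$ and the accumulated recursion constants.
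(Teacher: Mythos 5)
Your proposal is correct and follows essentially the same route as the paper's proof: the same Lipschitz reduction of the log-difference on $[m_t,1]$ via Propositions \ref{prop:CAL_as_bounded} and \ref{prop:satCAL_as_bounded}, the same induction propagating the per-individual discrepancy through the prediction step (splitting off the $\tilde{\boeta}_{t-1}^N$ versus $\bar{\boeta}_{t-1}^\infty$ error controlled by Corollary \ref{corol:eta_bound_CAL} and handling the remainder by the inductive hypothesis), and the same $G_{\bmu}$ identity with the $m_t$ lower bounds to close the correction step. The only cosmetic difference is that you phrase the induction in terms of $\ell^1$ norms of the vector differences, whereas the paper uses the equivalent dual formulation with arbitrary almost-surely bounded random test vectors $\bof_n$.
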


\begin{proof}
    Using Proposition \ref{prop:CAL_as_bounded} and Proposition \ref{prop:satCAL_as_bounded} for the lower bound, together with the fact that $\by_{n,t}^N$ is a one hot encoding vector and the $\bmu_\cdot^\cdot$ are probability vectors for the upper bound,  we can conclude that both $(\by_{n,t}^N)^\top {\bmu}_{n,t}^N(\bw_n,\theta)$ and $(\by_{n,t}^N)^\top \hat{\bmu}_{n,t}^\infty(\bw_n,\theta)$ are such that:
    \begin{align}
        0<{m}_{t} \leq (\by_{n,t}^N)^\top \bmu_{n,t}^N(\bw_n,\theta) \leq 1, \quad 0<{m}_{t}  \leq (\by_{n,t}^N)^\top \hat{\bmu}_{n,t}^\infty(\bw_n,\theta) \leq 1,
    \end{align}
    $\mathbb{P}$-almost surely.
    
    As the the function $u\mapsto\log(u)$ is Lipschitz on the compact interval $[{m}_{t},1]$ we have:
    \begin{equation}
        \begin{split}
             &\abs{\log \left ( (\by_{n,t}^N)^\top \bmu_{n,t}^N(\bw_n,\theta) \right ) - \log \left ( (\by_{n,t}^N)^\top \hat{\bmu}_{n,t}^\infty(\bw_n,\theta) \right )}\\
             &\leq 
             \frac{1}{{m}_{t}}
             \abs{(\by_{n,t}^N)^\top \bmu_{n,t}^N(\bw_n,\theta) - (\by_{n,t}^N)^\top \hat{\bmu}_{n,t}^\infty(\bw_n,\theta)},
        \end{split}
    \end{equation}
    $\mathbb{P}$-almost surely. From the above we can conclude:
    \begin{equation}
        \begin{split}
            &\normiii[\Bigg]{
	\log \left ( (\by_{n,t}^N)^\top \bmu_{n,t}^N(\bw_n,\theta) \right ) - \log \left ( (\by_{n,t}^N)^\top \hat{\bmu}_{n,t}^\infty(\bw_n,\theta) \right )
            }_4\\
            &\leq 
            \frac{1}{{m}_{t}} \normiii[\Bigg]{(\by_{n,t}^N)^\top \bmu_{n,t}^N(\bw_n,\theta) - (\by_{n,t}^N)^\top \hat{\bmu}_{n,t}^\infty(\bw_n,\theta) }_4,
        \end{split}
    \end{equation}
    meaning that it remains to bound:
    $$
    \normiii[\Bigg]{(\by_{n,t}^N)^\top \bmu_{n,t}^N(\bw_n,\theta) - (\by_{n,t}^N)^\top \hat{\bmu}_{n,t}^\infty(\bw_n,\theta) }_4.
    $$

    Consider as an inductive hypothesis that there exists a constant $\bar{\chi}_{t-1}>0$ such that for any random vector $\bof_n$ which satisfies $\norm{\bof_n}_\infty \leq B$, $\mathbb{P}$-almost surely:
    \begin{equation}
        \begin{split}
            \normiii[\Bigg]{
            \bof_n^\top \bpi_{n,t-1}^N(\bw_n,\theta) - \bof_n^\top \hat{\bpi}_{n,t-1}^\infty(\bw_n,\theta)
            }_4
            \leq 2 B \sqrt[4]{6} N^{-\frac{1}{2}} \bar{\chi}_{t-1}.
        \end{split}
    \end{equation}

    Observe that this is valid at $t-1=0$ since ${\bpi}_{n,0}^\infty(\bw_n,\theta)=\hat{\bpi}_{n,0}^\infty(\bw_n,\theta)$, therefore we have:
    \begin{equation}
        \begin{split}
            \normiii[\Bigg]{
            \bof_n^\top \bpi_{n,0}^N(\bw_n,\theta) - \bof_n^\top \hat{\bpi}_{n,0}^\infty(\bw_n,\theta)
            }_4
            =0.
        \end{split}
    \end{equation}
    In order to show that the inductive hypothesis holds at time $t$, let us first consider the prediction step, and observe that:
    \begin{align}
            &\normiii[\Bigg]{
            \bof_n^\top \bpi_{n,t|t-1}^N(\bw_n,\theta) - \bof_n^\top \hat{\bpi}_{n,t|t-1}^\infty(\bw_n,\theta)
            }_4\\
            &\leq
            \normiii[\Bigg]{
            \bpi_{n,t-1}^N(\bw_n,\theta)^\top K_{\widetilde{\boeta}_{t-1}^N(\bw_n,\theta)} (\bw_n, \theta) \bof_n
            - {\bpi}_{n,t-1}^N(\bw_n,\theta)^\top K_{\bar{\boeta}_{t-1}^\infty(\bw_n,\theta)} (\bw_n, \theta) \bof_n }_4 \label{eq:mu_muhat_pred_A}\\
            & 
            + 
            \normiii[\Bigg]{
            {\bpi}_{n,t-1}^N(\bw_n,\theta)^\top K_{\bar{\boeta}_{t-1}^\infty(\bw_n,\theta)} (\bw_n, \theta) \bof_n - \hat{\bpi}_{n,t-1}^\infty(\bw_n,\theta)^\top K_{\bar{\boeta}_{t-1}^\infty(\bw_n,\theta)} (\bw_n, \theta) \bof_n}_4. \label{eq:mu_muhat_pred_B}
    \end{align}
    Starting from \eqref{eq:mu_muhat_pred_A}, we have:
    \begin{equation}
        \begin{split}
            &\normiii[\Bigg]{
            \bpi_{n,t-1}^N(\bw_n,\theta)^\top K_{\widetilde{\boeta}_{t-1}^N(\bw_n,\theta)} (\bw_n, \theta) \bof_n
            - {\bpi}_{n,t-1}^N(\bw_n,\theta)^\top K_{\bar{\boeta}_{t-1}^\infty(\bw_n,\theta)} (\bw_n, \theta) \bof_n }_4\\
            &\leq 
            \norm{\bof_n}_\infty
            \normiii[\Bigg]{
            \widetilde{\boeta}_{t-1}^N(\bw_n,\theta) - \bar{\boeta}_{t-1}^\infty(\bw_n,\theta)
            }_4 \leq 
            2 B C \sqrt[4]{6} N^{-\frac{1}{2}} (\gamma_{t-1} +1),
        \end{split}
    \end{equation}
    which follows in the same way as the proof of Proposition \ref{prop:bound_CAL_prediction}, see \eqref{eq:bound_tilte_eta_bar_eta}.

    Moving to \eqref{eq:mu_muhat_pred_B}, we can apply our inductive hypothesis on the random vector: $K_{\bar{\boeta}_{t-1}^\infty(\bw_n,\theta)} (\bw_n, \theta) \bof_n$ since  $\norm{K_{\bar{\boeta}_{t-1}^\infty(\bw_n,\theta)} (\bw_n, \theta) \bof_n}_\infty \leq B$, $\mathbb{P}$-almost surely, hence:
    \begin{equation}
        \begin{split}
            &\normiii[\Bigg]{
            {\bpi}_{n,t-1}^N(\bw_n,\theta)^\top K_{\bar{\boeta}_{t-1}^\infty(\bw_n,\theta)} (\bw_n, \theta) \bof_n - \hat{\bpi}_{n,t-1}^\infty(\bw_n,\theta)^\top K_{\bar{\boeta}_{t-1}^\infty(\bw_n,\theta)} (\bw_n, \theta) \bof_n}_4\\
            & \leq 
            2 B \sqrt[4]{6} N^{-\frac{1}{2}} \bar{\chi}_{t-1}.
        \end{split}
    \end{equation}
    By putting everything together we can conclude:
    \begin{align}
            \normiii[\Bigg]{
            \bof_n^\top \bpi_{n,t|t-1}^N(\bw_n,\theta) - \bof_n^\top \hat{\bpi}_{n,t|t-1}^\infty(\bw_n,\theta)
            }_4
            &\leq
            2 B [C(\gamma_{t-1} +1) + \bar{\chi}_{t-1}] \sqrt[4]{6} N^{-\frac{1}{2}}\\
            &\leq
            2 B \sqrt[4]{6} N^{-\frac{1}{2}} \bar{\chi}_{t|t-1},
    \end{align}
    where $\bar{\chi}_{t|t-1} \coloneqq C(\gamma_{t-1} +1) + \bar{\chi}_{t-1}$. Given the above we also have:
    \begin{equation} \label{eq:mu_muhat_bound}
        \begin{split}
            &\normiii[\Bigg]{
            \bof_n^\top \bmu_{n,t}^N(\bw_n,\theta) - \bof_n^\top \hat{\bmu}_{n,t}^\infty(\bw_n,\theta)
            }_4\\
            &\leq
            \normiii[\Bigg]{
            \bpi_{n,t|t-1}^N(\bw_n,\theta)^\top G(\bw_n,\theta)  \bof_n- \hat{\bpi}_{n,t|t-1}^\infty(\bw_n,\theta)^\top  G(\bw_n,\theta)  \bof_n
            }_4\\
            &\leq
            2 B \sqrt[4]{6} N^{-\frac{1}{2}} \bar{\chi}_{t|t-1},
        \end{split}
    \end{equation}
    which is just an application of the previous result on the bounded random variable $G(\bw_n,\theta)  \bof_n$, as we know  $\norm{G(\bw_n,\theta)  f}_\infty \leq B$ is bounded $\mathbb{P}$-almost surely.

    Now we need to work on:
    \begin{align}
            \normiii[\Bigg]{
            \bof_n^\top \bpi_{n,t}^N(\bw_n,\theta) - \bof_n^\top \hat{\bpi}_{n,t}^\infty(\bw_n,\theta)
            }_4.
    \end{align}
    Note that by using $G_{\bmu}(\bw,\theta)$ for the matrix with elements $G_{\bmu}(\bw,\theta)^{(i,j)} = \frac{ G(\bw,\theta)^{(i,j)}}{\bmu^{(j)}}$ where $\frac{0}{0}=0$ by convention, we can rewrite everything in a more compact way:
    \begin{equation}
        \begin{split}
            &\normiii[\Bigg]{
            \bof_n^\top \bpi_{n,t}^N(\bw_n,\theta) - \bof_n^\top \hat{\bpi}_{n,t}^\infty(\bw_n,\theta)
            }_4\\
            &= 
            \Bigg{\vvvert} \left [\bof_n \odot \bpi_{n,t|t-1}^N(\bw_n,\theta) \right ]^\top \left [ G_{\bmu_{n,t}^N(\bw_n,\theta)}(\bw_n,\theta) \by_{n,t}^N \right ] \\
            &\qquad-
           \left [ \bof_n \odot \hat{\bpi}_{t|t-1}^\infty(\bw_n,\theta) \right ]^\top \left [ G_{\hat{\bmu}_{n,t}^\infty(\bw_n,\theta)}(\bw_n,\theta) \by_{n,t}^N \right ]\Bigg{\vvvert}_4.
        \end{split}
    \end{equation}
    By Minkowski inequality we can then conclude:
    \begin{align}
            &\normiii[\Bigg]{
            \bof_n^\top \bpi_{n,t}^N(\bw_n,\theta) - \bof_n^\top \hat{\bpi}_{n,t}^\infty(\bw_n,\theta)
            }_4\\
            &\leq
            \Bigg{\vvvert} \left [ \bof_n \odot \bpi_{n,t|t-1}^N(\bw_n,\theta) \right ]^\top \left [ G_{\bmu_{n,t}^N(\bw_n,\theta)}(\bw_n,\theta) \by_{n,t}^N \right ] \\
            &\qquad-
           \left [ \bof_n \odot {\bpi}_{t|t-1}^N(\bw_n,\theta) \right ]^\top \left [ G_{\hat{\bmu}_{n,t}^\infty(\bw_n,\theta)}(\bw_n,\theta) \by_{n,t}^N \right ]\Bigg{\vvvert}_4 \label{mu_muhat_corr_A}\\
           &+
           \Bigg{\vvvert} \left [ \bof_n \odot {\bpi}_{n,t|t-1}^N(\bw_n,\theta) \right ]^\top \left [ G_{\hat{\bmu}_{n,t}^\infty(\bw_n,\theta)}(\bw_n,\theta) \by_{n,t}^N \right ] \\
            &\qquad-
           \left [ \bof_n \odot \hat{\bpi}_{t|t-1}^\infty(\bw_n,\theta) \right ]^\top \left [ G_{\hat{\bmu}_{n,t}^\infty(\bw_n,\theta)}(\bw_n,\theta) \by_{n,t}^N \right ]\Bigg{\vvvert}_4. \label{mu_muhat_corr_B}
    \end{align}

    Starting from \eqref{mu_muhat_corr_A}, we remark that:
    \begin{equation}
        x^\top G_\mu b - x^\top G_{\widetilde{\mu}} b = \sum_{i,j} x^{(i)} y^{(j)} \frac{G^{(i,j)} \widetilde{\mu}^{(j)} - G^{(i,j)} \mu^{(j)}}{\widetilde{\mu}^{(j)}{\mu}^{(j)}} = \sum_{i,j} x^{(i)} \frac{y^{(j)}}{\widetilde{\mu}^{(j)}{\mu}^{(j)}}G^{(i,j)} \left ( \widetilde{\mu}^{(j)} - \mu^{(j)} \right ).
    \end{equation}
    
    Hence we can reformulate \eqref{mu_muhat_corr_A}:
    \begin{equation}
        \begin{split}
            &\Bigg{\vvvert} \left [ \bof_n \odot \bpi_{n,t|t-1}^N(\bw_n,\theta) \right ]^\top \left [ G_{\bmu_{n,t}^N(\bw_n,\theta)}(\bw_n,\theta) \by_{n,t}^N \right ] \\
            &\qquad-
           \left [ \bof_n \odot {\bpi}_{t|t-1}^N(\bw_n,\theta) \right ]^\top \left [ G_{\hat{\bmu}_{n,t}^\infty(\bw_n,\theta)}(\bw_n,\theta) \by_{n,t}^N \right ]\Bigg{\vvvert}_4\\
            &=
            \Bigg{\vvvert} \left \{ \left [ \bof_n \odot  \bpi_{n,t|t-1}^N(\bw_n,\theta) \right ]^\top G(\bw_n,\theta) \right \}^\top \odot \left [ \by_{n,t}^N \oslash \bmu_{n,t}^N(\bw_n,\theta) \oslash \hat{\bmu}_{n,t}^\infty(\bw_n,\theta)\right ]^\top\\
            &\qquad \qquad 
            \left [\hat{\bmu}_{n,t}^\infty(\bw_n,\theta) - \bmu_{n,t}^N(\bw_n,\theta) \right ] \Bigg{\vvvert}_4,
        \end{split}
    \end{equation}
    from which we can notice that for any $\bw_n$:
    \begin{equation}
        \begin{split}
            &\norm{\left \{ \left [ \bof_n \odot {\bpi}_{t|t-1}^N(\bw_n,\theta) \right ]^\top G(\bw_n,\theta) \right \} \odot \left [ \by_{n,t}^N \oslash \bmu_{n,t}^N(\bw_n,\theta) \oslash \hat{\bmu}_{n,t}^\infty(\bw_n,\theta)\right ]}_\infty\\
            &\leq
            \norm{\bof_n}_\infty \norm{ \by_{n,t}^N \oslash \hat{\bmu}_{n,t}^\infty(\bw_n,\theta)}_\infty,
        \end{split}
    \end{equation}
    where the first step follows from ${\bmu}_{n,t}^N(\bw,\theta) = \left [ {\bpi}_{t|t-1}^N(\bw,\theta)^\top G(\bw,\theta) \right ]^\top$ and the elementwise ratio $\by_{n,t}^N \oslash \hat{\bmu}_{n,t}^\infty(\bw_n,\theta)$ is well-defined because of Proposition \ref{prop:satCAL_well_definess}. 
    
    As from Proposition \ref{prop:satCAL_as_bounded} we know that the saturated CAL is almost surely bounded we have:
    \begin{equation}
        \begin{split}
        &\norm{\left \{ \left [ \bof_n \odot {\bpi}_{t|t-1}^N(\bw_n,\theta) \right ]^\top G(\bw_n,\theta) \right \} \odot \left [ \by_{n,t}^N \oslash \bmu_{n,t}^N(\bw_n,\theta) \oslash \hat{\bmu}_{n,t}^\infty(\bw_n,\theta)\right ]}_\infty\\ &\leq \frac{\norm{\bof_n}_\infty}{{m}_{t}}
        \leq \frac{B}{{m}_{t}},        
        \end{split}
    \end{equation}
    $\mathbb{P}$-almost surely. Hence we can apply \eqref{eq:mu_muhat_bound} as we are considering an almost surely bounded random vector:
    $$
    \left \{ \left [ \bof_n \odot {\bpi}_{t|t-1}^N(\bw_n,\theta) \right ]^\top G(\bw_n,\theta) \right \} \odot \left [ \by_{n,t}^N \oslash \bmu_{n,t}^N(\bw_n,\theta) \oslash \hat{\bmu}_{n,t}^\infty(\bw_n,\theta)\right ]
    $$
    and conclude:
    \begin{equation}
        \begin{split}
            &\Bigg{\vvvert} \left [ \bof_n \odot \bpi_{n,t|t-1}^N(\bw_n,\theta) \right ]^\top \left [ G_{\bmu_{n,t}^N(\bw_n,\theta)}(\bw_n,\theta) \by_{n,t}^N \right ] \\
            &\qquad-
           \left [ \bof_n \odot {\bpi}_{t|t-1}^N(\bw_n,\theta) \right ]^\top \left [ G_{\hat{\bmu}_{n,t}^\infty(\bw_n,\theta)}(\bw_n,\theta) \by_{n,t}^N \right ]\Bigg{\vvvert}_4\\
            &=
            \Bigg{\vvvert} \left \{ \left [ \bof_n \odot  \bpi_{n,t|t-1}^N(\bw_n,\theta) \right ]^\top G(\bw_n,\theta) \right \}^\top \odot \left [ \by_{n,t}^N \oslash \bmu_{n,t}^N(\bw_n,\theta) \oslash \hat{\bmu}_{n,t}^\infty(\bw_n,\theta)\right ]^\top\\
            &\qquad \qquad 
            \left [\hat{\bmu}_{n,t}^\infty(\bw_n,\theta) - \bmu_{n,t}^N(\bw_n,\theta) \right ] \Bigg{\vvvert}_4 \leq 2 B \sqrt[4]{6} N^{-\frac{1}{2}} \frac{\chi_t}{{m}_{t}}.
        \end{split}
    \end{equation}
    Moving to \eqref{mu_muhat_corr_B}, we can observe that:
    \begin{equation}
        \begin{split}
            &\Bigg{\vvvert} \left [ \bof_n \odot {\bpi}_{n,t|t-1}^N(\bw_n,\theta) \right ]^\top \left [ G_{\hat{\bmu}_{n,t}^\infty(\bw_n,\theta)}(\bw_n,\theta) \by_{n,t}^N \right ] \\
            &\qquad-
           \left [ \bof_n \odot \hat{\bpi}_{t|t-1}^\infty(\bw_n,\theta) \right ]^\top \left [ G_{\hat{\bmu}_{n,t}^\infty(\bw_n,\theta)}(\bw_n,\theta) \by_{n,t}^N \right ]\Bigg{\vvvert}_4\\
           &=
           \Bigg{\vvvert} \left [ \bof_n \odot G_{\hat{\bmu}_{n,t}^\infty(\bw_n,\theta)}(\bw_n,\theta) \by_{n,t}^N \right ]^\top \left [ {\bpi}_{t|t-1}^N(\bw_n,\theta) - \hat{\bpi}_{t|t-1}^\infty(\bw_n,\theta) \right ] \Bigg{\vvvert}_4\\
           &\leq 
           2 B \sqrt[4]{6} N^{-\frac{1}{2}} {\bar{\chi}_{t|t-1}},
        \end{split}
    \end{equation}
    where we can apply \eqref{eq:mu_muhat_bound} to the test vector $\bof_n \odot G_{\hat{\bmu}_{n,t}^\infty(\bw_n,\theta)}(\bw_n,\theta) \by_{n,t}^N$, as it is almost surely bounded.

    By putting everything together we can conclude:
    \begin{align}
        &\normiii[\Bigg]{
        \bof_n^\top \bpi_{n,t}^N(\bw_n,\theta) - \bof_n^\top \hat{\bpi}_{n,t}^\infty(\bw_n,\theta)
        }_4
        \leq
        2 B \sqrt[4]{6} N^{-\frac{1}{2}} \frac{\chi_t}{{m}_{t}}+
        2 B \sqrt[4]{6} N^{-\frac{1}{2}} {\bar{\chi}_{t|t-1}},
    \end{align}
    which closes our inductive hypothesis by setting ${\bar{\chi}_{t}} = \frac{\chi_t}{{m}_{t}}+{\bar{\chi}_{t|t-1}}$, and so we can conclude that for any $t\geq 1$ there exists $\bar{\chi}_{t|t-1},\bar{\chi}_t$ such that for any test random vector $\bof_n$, with $\norm{\bof_n}_\infty \leq B$, $\mathbb{P}$-almost surely:
    \begin{itemize}
        \item $\normiii[\Bigg]{
            \bof_n^\top \bpi_{n,t|t-1}^N(\bw_n,\theta) - \bof_n^\top \hat{\bpi}_{n,t|t-1}^\infty(\bw_n,\theta)
            }_4
            \leq
            2 B \sqrt[4]{6} N^{-\frac{1}{2}} \bar{\chi}_{t|t-1}$;
        \item $\normiii[\Bigg]{
            \bof_n^\top \bmu_{n,t}^N(\bw_n,\theta) - \bof_n^\top \hat{\bmu}_{n,t}^\infty(\bw_n,\theta)
            }_4
            \leq
            2 B \sqrt[4]{6} N^{-\frac{1}{2}} \bar{\chi}_{t|t-1}$;
        \item $\normiii[\Bigg]{
            \bof_n^\top \bpi_{n,t}^N(\bw_n,\theta) - \bof_n^\top \hat{\bpi}_{n,t}^\infty(\bw_n,\theta)
            }_4
            \leq
            2 B \sqrt[4]{6} N^{-\frac{1}{2}} \bar{\chi}_{t}$.
    \end{itemize}
    As $\by_{n,t}^N$ is almost surely bounded $\norm{\by_{n,t}^N}_\infty \leq 1$ we can conclude:
    $$
    \normiii[\Bigg]{(\by_{n,t}^N)^\top \bmu_{n,t}^N(\bw_n,\theta) - (\by_{n,t}^N)^\top \hat{\bmu}_{n,t}^\infty(\bw_n,\theta) }_4 \leq 2 \sqrt[4]{6} N^{-\frac{1}{2}} \bar{\chi}_{t|t-1},
    $$
    so we can conclude our proof by setting $\chi_t \coloneqq \frac{\bar{\chi}_{t|t-1}}{{m}_{t}}$ as:
    \begin{equation}
        \begin{split}
            &\normiii[\Bigg]{
	\log \left ( (\by_{n,t}^N)^\top \bmu_{n,t}^N(\bw_n,\theta) \right ) - \log \left ( (\by_{n,t}^N)^\top \hat{\bmu}_{n,t}^\infty(\bw_n,\theta) \right )
            }_4\\
            &\leq 
            \frac{1}{{m}_{t}} \normiii[\Bigg]{(\by_{n,t}^N)^\top \bmu_{n,t}^N(\bw_n,\theta) - (\by_{n,t}^N)^\top \hat{\bmu}_{n,t}^\infty(\bw_n,\theta) }_4\leq 
            2 \sqrt[4]{6} N^{-\frac{1}{2}} \frac{\bar{\chi}_{t|t-1}}{{m}_{t}}.
        \end{split}
    \end{equation}
\end{proof}

\subsection{Strong consistency} \label{sec:constrast_function_pointwise_conv}

\paragraph{Contrast function.}
In Proposition \ref{prop:convergence_log_likelihood} below we establish the convergence of the rescaled logarithm of the CAL, $\frac{\ell_{1:T}^N(\theta)}{N}$ in the large population limit, and in Theorem \ref{thm:constrast_function_conv} show how $\frac{\ell_{1:T}^N(\theta)}{N} -  \frac{\ell_{1:T}^N(\theta^\star)}{N}$ converges to a contrast function.

\begin{proposition} \label{prop:convergence_log_likelihood}
    Under assumptions \ref{ass:compactness_continuity},\ref{ass:w_iid},\ref{ass:HMM_support},\ref{ass:eta_structure},\ref{ass:kernel_continuity} and for any $\theta\in\Theta$ let:
    $$
    \ell_t^N(\theta) \coloneqq \sum_{n \in [N]} \log \left ( (\by_{n,t}^N)^\top  \bmu_{n,t}^N(\bw_n,\theta) \right ),
    $$ 
    then $N^{-1} \ell_t^N(\theta)$ converges to $\mathbb{E} \left [ \log \left ( (\by_t^\infty)^\top  \bmu_t^\infty(\bw^\infty,\theta) \right ) \right ]$ as $N\to\infty$, $\mathbb{P}$-almost surely. Moreover:
    \begin{equation}
        \mathbb{E} \left [ \log \left ( (\by_t^\infty)^\top  \bmu_t^\infty(\bw^\infty,\theta) \right ) \right ] 
        =
        \mathbb{E} \left [ \bmu_t^\infty(\bw^\infty,\theta^\star)^\top  \log \left ( \bmu_t^\infty(\bw^\infty,\theta) \right ) \right ].
    \end{equation}
\end{proposition}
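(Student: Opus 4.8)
The plan is to deduce the almost-sure convergence from an $L^4$ bound of order $N^{-1/2}$ on the fluctuation, which I then upgrade to an almost-sure statement by Borel--Cantelli. I introduce the deterministic map $f_t(w,y_{1:t}):=\log\!\big(y_t^\top\mu_t\big)$, where $\mu_t$ is produced by running the saturated recursion \eqref{rec:rand_asympt_CAL_rec_full}--\eqref{rec:rand_halfasympt_CAL_rec_full} (which uses the \emph{deterministic} interaction term $\bar{\boeta}^\infty_{\cdot}$) driven by the observation path $y_{1:t-1}$. The point of this map is that $f_t(\bw_n,\by_{n,1:t}^N)=\log\!\big((\by_{n,t}^N)^\top\hat{\bmu}_{n,t}^\infty(\bw_n,\theta)\big)$ and $f_t(\bw_n,\by_{n,1:t}^\infty)=\log\!\big((\by_{n,t}^\infty)^\top\bmu_t^\infty(\bw_n,\theta)\big)$ depend only on the $n$-th individual's own covariate and observation path, so the population is decoupled. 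By Minkowski's inequality,
\begin{align}
&\normiii[\Bigg]{\tfrac{1}{N}\ell_t^N(\theta) - \mathbb{E}\!\left[\log\!\big((\by_t^\infty)^\top\bmu_t^\infty(\bw^\infty,\theta)\big)\right]}_4 \\
&\leq \normiii[\Bigg]{\tfrac{1}{N}\sum_{n\in[N]}\!\Big[\log\!\big((\by_{n,t}^N)^\top\bmu_{n,t}^N(\bw_n,\theta)\big) - \log\!\big((\by_{n,t}^N)^\top\hat{\bmu}_{n,t}^\infty(\bw_n,\theta)\big)\Big]}_4 \\
&\quad + \normiii[\Bigg]{\tfrac{1}{N}\sum_{n\in[N]}\!\big[f_t(\bw_n,\by_{n,1:t}^N) - f_t(\bw_n,\by_{n,1:t}^\infty)\big]}_4 \\
&\quad + \normiii[\Bigg]{\tfrac{1}{N}\sum_{n\in[N]}\!\big[f_t(\bw_n,\by_{n,1:t}^\infty) - \mathbb{E}\big(f_t(\bw^\infty,\by_{1:t}^\infty)\big)\big]}_4.
\end{align}

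The first term is $O(N^{-1/2})$ by summing the per-individual bound of Proposition \ref{prop:mu_muhat_bound} and applying Minkowski. For the second, I invoke Proposition \ref{thm:LLN_f_y_yinf}; this requires $f_t$ to be bounded, which holds because Propositions \ref{prop:satCAL_well_definess}, \ref{prop:satCAL_as_bounded} give $(\by_{n,t}^N)^\top\hat{\bmu}_{n,t}^\infty(\bw_n,\theta)\in[m_t,1]$ and Propositions \ref{prop:rand_aympt_CAL_well_definess}, \ref{prop:infCAL_as_bounded} give $(\by_{n,t}^\infty)^\top\bmu_t^\infty(\bw_n,\theta)\in[m_t,1]$ almost surely, so $|f_t|\leq -\log m_t=:B_t$. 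The third term is $O(N^{-1/2})$ by Lemma \ref{lemma:mean_0_bound}, since under Assumption \ref{ass:w_iid} and the definition \eqref{eq:limiting_process_n} the pairs $(\bw_n,\by_{n,1:t}^\infty)$ are i.i.d.\ across $n$, so the centred summands are independent, mean zero and bounded by $2B_t$. Combining gives $\normiii{N^{-1}\ell_t^N(\theta)-\mathbb{E}[\,\cdot\,]}_4\leq\kappa_t N^{-1/2}$ for some $\kappa_t>0$; then for fixed $\epsilon>0$, Markov's inequality yields $\mathbb{P}(|N^{-1}\ell_t^N(\theta)-\mathbb{E}[\,\cdot\,]|>\epsilon)\leq\epsilon^{-4}\kappa_t^4N^{-2}$, which is summable in $N$, so Borel--Cantelli delivers the claimed $\mathbb{P}$-almost-sure convergence.

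For the identity I would condition on $(\bw^\infty,\by_{1:t-1}^\infty)$. Given these, $\bmu_t^\infty(\bw^\infty,\theta)$ is deterministic, while by Proposition \ref{prop:joint_rand_aymptoticCAL_distribution} the conditional law of $\by_t^\infty$ under the DGP is $\text{Cat}(\cdot\,|\,\bmu_t^\infty(\bw^\infty,\theta^\star))$, i.e.\ $\mathbb{P}((\by_t^\infty)^{(i)}=1\mid\bw^\infty,\by_{1:t-1}^\infty)=\bmu_t^\infty(\bw^\infty,\theta^\star)^{(i)}$. Hence
\begin{equation}
\mathbb{E}\!\left[\log\!\big((\by_t^\infty)^\top\bmu_t^\infty(\bw^\infty,\theta)\big)\,\middle|\,\bw^\infty,\by_{1:t-1}^\infty\right] = \sum_{i}\bmu_t^\infty(\bw^\infty,\theta^\star)^{(i)}\,\log\!\big(\bmu_t^\infty(\bw^\infty,\theta)^{(i)}\big),
\end{equation}
and the tower property gives the stated equality with the elementwise logarithm. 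The only care needed is at components outside the support: by the support-invariance Assumption \ref{ass:HMM_support}, propagated through the recursion, $\supp{\bmu_t^\infty(\bw^\infty,\theta)}=\supp{\bmu_t^\infty(\bw^\infty,\theta^\star)}$, so every index with $\bmu_t^\infty(\bw^\infty,\theta^\star)^{(i)}=0$ contributes $0$ under the convention $0\log 0=0$, while on the common support the entries are bounded below by $m_t>0$ and the logarithm is finite.

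The main obstacle will be the bookkeeping in the $L^4$ decomposition, specifically recognising $\log\!\big((\by_{n,t}^N)^\top\hat{\bmu}_{n,t}^\infty(\bw_n,\theta)\big)$ as a \emph{bounded, individual-level} test function $f_t(\bw_n,\by_{n,1:t}^N)$ of only the $n$-th covariate and observation path --- this is exactly what decouples the population and licenses Proposition \ref{thm:LLN_f_y_yinf} --- together with verifying the uniform lower bound $m_t$ (via Propositions \ref{prop:satCAL_as_bounded} and \ref{prop:infCAL_as_bounded}) so that $f_t$ is bounded and the logarithm is Lipschitz on $[m_t,1]$. The Borel--Cantelli upgrade and the conditional-expectation identity are then routine.
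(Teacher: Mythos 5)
Your proposal is correct and follows essentially the same route as the paper's proof: the identical three-term decomposition (CAL vs.\ saturated CAL via Proposition \ref{prop:mu_muhat_bound}; saturated CAL on real vs.\ saturated observations as a bounded individual-level test function via Proposition \ref{thm:LLN_f_y_yinf}; an i.i.d.\ average against its mean via Lemma \ref{lemma:mean_0_bound}), followed by the Markov--Borel--Cantelli upgrade and the tower-rule computation with $\by_t^\infty|\by_{1:t-1}^\infty,\bw^\infty \sim \mbox{Cat}(\cdot|\bmu_t^\infty(\bw^\infty,\theta^\star))$ for the final identity. Your handling of the lower bound $m_t$ and the support-invariance issue matches the paper's use of Propositions \ref{prop:satCAL_as_bounded} and \ref{prop:infCAL_as_bounded}.
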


\begin{proof}

    Consider the definition of $\bmu_{n,t}^N(\bw_n,\theta)$ from \eqref{rec:CAL_rec}, the definition of $\hat{\bmu}_{n,t}^\infty(\bw_n,\theta)$ from \eqref{rec:rand_halfasympt_CAL_rec_full}, the definition of $\bmu_t^\infty(w,\theta)$ from \eqref{rec:rand_asympt_CAL_rec_full}, and define:
    \begin{equation}\label{rec:single_rand_asympt_CAL_rec_full}
        \begin{aligned}
        &\bpi_{n,0}^\infty(\bw_n,\theta) \coloneqq p_{0}(\bw_n,\theta),\\
        &\bpi_{n,t|t-1}^\infty(\bw_n,\theta)  \coloneqq  \left [ \bpi_{n,t-1}^\infty(\bw_n,\theta)^{\top} K_{\bar{\boeta}_{t-1}^\infty(\bw_n,\theta)} (\bw_n, \theta)\right ]^\top, \\
        &\bmu_{n,t}^\infty(\bw_n,\theta)  \coloneqq  \left [ \bpi_{n,t|t-1}^\infty(\bw_n,\theta)^{\top} G(\bw_n,\theta) \right ]^\top,\\
        &\bpi_{n,t}^\infty(\bw_n,\theta)  \coloneqq \bpi_{n,t|t-1}^\infty(\bw_n,\theta) \odot \left \{ \left [   G(\bw_n,\theta) \oslash \left ( 1_M \bmu_{n,t}^\infty(\bw_n,\theta)^\top \right ) \right ] \by_{n,t}^\infty  \right \},
        \end{aligned}
    \end{equation} 
    where $\bar{\boeta}_{t-1}^\infty(\cdot,\cdot)$ is defined in \eqref{rec:asympt_CAL_rec_full} and $\by^\infty_{n,t}$ is defined in \eqref{eq:limiting_process_n}.
    
    We can then consider the following decomposition:
    \begin{align}
        &\frac{1}{N} \sum_{n \in [N]}  \log \left ( (\by_{n,t}^N)^\top \bmu_{n,t}^N(\bw_n,\theta) \right )- \mathbb{E} \left [  \log \left ( (\by_t^\infty)^\top \bmu_t^\infty(\bw^\infty,\theta) \right )\right ]\\
        &=
        \frac{1}{N} \sum_{n \in [N]} ( \log \left ( \by_{n,t}^N)^\top \bmu_{n,t}^N(\bw_n,\theta) \right )
        -
         \log \left ( (\by_{n,t}^N)^\top \hat{\bmu}_{n,t}^\infty(\bw_n,\theta) \right )\\
        & 
        +
        \frac{1}{N} \sum_{n \in [N]}  \log \left ( (\by_{n,t}^N)^\top \hat{\bmu}_{n,t}^\infty(\bw_n,\theta) \right ) 
        -
         \log \left ( (\by_{n,t}^\infty)^\top \bmu_{n,t}^\infty(\bw_n,\theta) \right )\\
        & 
        +\frac{1}{N} \sum_{n \in [N]}  \log \left ( (\by_{n,t}^\infty)^\top \bmu_{n,t}^\infty(\bw_n,\theta) \right ) - \mathbb{E} \left [  \log \left ( (\by_t^\infty)^\top \bmu_t^\infty(\bw^\infty,\theta) \right ) \right ],
    \end{align}
    by Minkowski inequality we conclude:
    \begin{align}
        &\normiii[\Bigg]{\frac{1}{N} \sum_{n \in [N]}  \log \left ( (\by_{n,t}^N)^\top \bmu_{n,t}^N(\bw_n,\theta) \right )- \mathbb{E} \left [  \log \left ( (\by_t^\infty)^\top \bmu_t^\infty(\bw^\infty,\theta) \right )\right ]}_4\\
        &\leq 
        \normiii[\Bigg]{\frac{1}{N} \sum_{n \in [N]}  \log \left ( (\by_{n,t}^N)^\top \bmu_{n,t}^N(\bw_n,\theta) \right )
        -
         \log \left ( (\by_{n,t}^N)^\top \hat{\bmu}_{n,t}^\infty(\bw_n,\theta) \right )}_4 \label{eq:contrast_A}\\
        & 
        +
        \normiii[\Bigg]{\frac{1}{N} \sum_{n \in [N]}  \log \left ( (\by_{n,t}^N)^\top \hat{\bmu}_{n,t}^\infty(\bw_n,\theta) \right ) 
        -
         \log \left ( (\by_{n,t}^\infty)^\top \bmu_{n,t}^\infty(\bw_n,\theta) \right )}_4 
        \label{eq:contrast_B}\\
        & 
        +\normiii[\Bigg]{\frac{1}{N} \sum_{n \in [N]}  \log \left ( (\by_{n,t}^\infty)^\top \bmu_{n,t}^\infty(\bw_n,\theta) \right ) - \mathbb{E} \left [  \log \left ( (\by_t^\infty)^\top \bmu_{t}^\infty(\bw^\infty,\theta) \right ) \right ]}_4.
        \label{eq:contrast_C}
    \end{align}
    Starting from \eqref{eq:contrast_A}, we can apply Proposition \ref{prop:mu_muhat_bound} to obtain:
    \begin{equation}
        \begin{split}
            \normiii[\Bigg]{
		 \log \left ( (\by_{n,t}^N)^\top \bmu_{n,t}^N(\bw_n,\theta) \right ) - \log \left ( (\by_{n,t}^N)^\top \hat{\bmu}_{n,t}^\infty(\bw_n,\theta) \right )
            }_4 \leq 2 \sqrt[4]{6} N^{-\frac{1}{2}} \chi_{t}.
        \end{split}
    \end{equation}

    Consider now \eqref{eq:contrast_B}, it is important to observe that when we compare \eqref{rec:rand_halfasympt_CAL_rec_full} with \eqref{rec:single_rand_asympt_CAL_rec_full}, the only difference between the two is that \eqref{rec:rand_halfasympt_CAL_rec_full} uses $\by_{n,t}^N$ while \eqref{rec:single_rand_asympt_CAL_rec_full} uses the population saturated process $\by_{n,t}^\infty$. Hence, if we look at $(\by_{n,t}^N)^\top \hat{\bmu}_{n,t}^\infty(\bw_n,\theta)$ as a function of $\by_{n,1:t}^N$ and $(\by_{n,t}^\infty)^\top {\bmu}_{n,t}^\infty(\bw_n,\theta)$ as a function of $\by_{n,1:t}^\infty$ we are considering the same function evaluated in different arguments;  we can define for a fixed $\theta$ the function $h_t^\theta(w,y_{1:t})$ which is such that $h_t^\theta(\bw_n,\by_{n,t}^N)= (\by_{n,t}^N)^\top \hat{\bmu}_{n,t}^\infty(\bw_n,\theta)$ and $h_t^\theta(\bw_n,\by_{n,t}^\infty)=(\by_{n,t}^\infty)^\top \bmu_{n,t}^\infty(\bw_n,\theta)$.
    Because of Proposition \ref{prop:satCAL_as_bounded}, for all $i \in [M]$ we have that $\hat{\bmu}_{n,t}^\infty(\bw_n,\theta)^{(i)}>m_t$, meaning that $h_t^\theta(\bw_n,\by_{n,t}^N) \in [m_t,1]$ almost surely. Similarly, because of Proposition \ref{prop:infCAL_as_bounded}, for all $i \in [M]$ we have that ${\bmu}_{n,t}^\infty(\bw_n,\theta)^{(i)}>m_t$, meaning that $h_t^\theta(\bw_n,\by_{n,t}^\infty) \in [m_t,1]$ almost surely. This follows as \eqref{eq:limiting_process_n} consists of repeating \eqref{eq:limiting_process} $N$ times. We can then consider $\log(h_t^\theta(\bw_n,\by_{n,t}^N)) \in [\log(m_t),0]$ and $\log(h_t^\theta(\bw_n,\by_{n,t}^\infty)) \in [\log(m_t),0]$ almost surely and apply Proposition \ref{thm:LLN_f_y_yinf} as both $\log(h_t^\theta(\bw_n,\by_{n,t}^N))$ and $\log(h_t^\theta(\bw_n,\by_{n,t}^\infty))$ are almost surely in $[\log(m_t),\abs{\log(m_t)}]$:
    \begin{equation}
        \begin{split}
            \normiii[\Bigg]{
            \frac{1}{N} \sum_{n \in [N]}  \log \left ( (\by_{n,t}^N)^\top \hat{\bmu}_{n,t}^\infty(\bw_n,\theta) \right ) 
            -
             \log \left ( (\by_{n,t}^\infty)^\top \bmu_t^\infty(\bw_n,\theta) \right )
             }_4 \leq 
            2 \abs{\log(m_t)} \sqrt[4]{6} N^{-\frac{1}{2}} e_t.
        \end{split}
    \end{equation}

    For \eqref{eq:contrast_C}, note that:
    \begin{equation}
        \mathbb{E} \left [ \log \left ( (\by_{n,t}^\infty)^\top \bmu_{n,t}^\infty(\bw_n,\theta) \right )  \right ]
        = 
        \mathbb{E} \left [ \log \left ( (\by_{t}^\infty)^\top  \bmu_{t}^\infty(\bw^\infty,\theta) \right ) \right ],
    \end{equation}
    see the definitions of $\by_{n,t}^\infty,\by_{t}^\infty,\bmu_{t}^\infty(\bw^\infty,\theta)$ in Section \ref{sec:limiting_quantities_support} and the definition of $\bmu_{n,t}^\infty(\bw_n,\theta)$ in \eqref{rec:single_rand_asympt_CAL_rec_full} at the beginning of the proof. Moreover, 
    \begin{equation}
        \norm{\log \left ( (\by_{n,t}^\infty)^\top \bmu_{n,t}^\infty(\bw_n,\theta) \right ) - \mathbb{E} \left [ \log \left ( (\by_t^\infty)^\top  \bmu_t^\infty(\bw^\infty,\theta) \right ) \right ]} \leq 2 \abs{\log(m_t)},
    \end{equation}
    $\mathbb{P}$-almost surely, because of the previous reasoning. As we are considering averages of random variables that are mean zero, bounded, and independent, we can apply Lemma \ref{lemma:mean_0_bound} and conclude:
    \begin{equation}
    \begin{split}
        &\normiii[\Bigg]{\frac{1}{N} \sum_{n \in [N]}  \log \left ( (\by_{n,t}^\infty)^\top \bmu_{n,t}^\infty(\bw_n,\theta) \right ) - \mathbb{E} \left [  \log \left ( (\by_t^\infty)^\top \bmu_t^\infty(\bw^\infty,\theta) \right ) \right ]}_4\leq  2 \sqrt[4]{6} N^{-\frac{1}{2}} \abs{\log(m_t)}. 
    \end{split}
    \end{equation}
    By putting everything together we obtain:
    \begin{equation} \label{eq:final_bound_borel_cantelli}
        \begin{split}
            &\normiii[\Bigg]{\frac{1}{N} \sum_{n \in [N]}  \log \left ( (\by_{n,t}^N)^\top \bmu_{n,t}^N(\bw_n,\theta) \right )- \mathbb{E} \left [  \log \left ( (\by_t^\infty)^\top \bmu_t^\infty(\bw^\infty,\theta) \right )\right ]}_4\\
            &\leq
            2 \sqrt[4]{6} N^{-\frac{1}{2}} \chi_{t}
            +
            2 \abs{\log(m_t)} \sqrt[4]{6} N^{-\frac{1}{2}} e_t
            +
            2 \sqrt[4]{6} N^{-\frac{1}{2}} \abs{\log(m_t)}\\
            &=
            2 \sqrt[4]{6} N^{-\frac{1}{2}} \left [ \chi_{t} + \abs{\log(m_t)} (e_t + 1)\right ].
        \end{split}
    \end{equation}
    By applying Markov's inequality, for any $\iota>0$,
    \begin{equation}
        \begin{split}
            &\mathbb{P} \left ( \abs{\frac{1}{N} \sum_{n \in [N]}  \log \left ( (\by_{n,t}^N)^\top \bmu_{n,t}^N(\bw_n,\theta) \right )- \mathbb{E} \left [  \log \left ( (\by_t^\infty)^\top \bmu_t^\infty(\bw^\infty,\theta) \right )\right ]} > \iota \right ) \\
        &=
        \mathbb{P} \left ( \abs{\frac{1}{N} \sum_{n \in [N]}  \log \left ( (\by_{n,t}^N)^\top \bmu_{n,t}^N(\bw_n,\theta) \right )- \mathbb{E} \left [  \log \left ( (\by_t^\infty)^\top \bmu_t^\infty(\bw^\infty,\theta) \right )\right ]}^4 > \iota^4 \right ) \\
            &\leq 
        \frac{1}{\iota^4} \mathbb{E} \left [ \abs{\frac{1}{N} \sum_{n \in [N]}  \log \left ( (\by_{n,t}^N)^\top \bmu_{n,t}^N(\bw_n,\theta) \right )- \mathbb{E} \left [  \log \left ( (\by_t^\infty)^\top \bmu_t^\infty(\bw^\infty,\theta) \right )\right ]}^4 \right ]\\
            &= 
        \iota^{-4} \normiii[\Bigg]{\frac{1}{N} \sum_{n \in [N]}  \log \left ( (\by_{n,t}^N)^\top \bmu_{n,t}^N(\bw_n,\theta) \right )- \mathbb{E} \left [  \log \left ( (\by_t^\infty)^\top \bmu_t^\infty(\bw^\infty,\theta) \right )\right ]}_4^4\\
        &\leq 
        6 \frac{2^4}{\iota^{4}} N^{-2} \left [ \chi_{t} + \abs{\log(m_t)} (e_t + 1)\right ]^4,
        \end{split}
    \end{equation}
    where the last bound follows from \eqref{eq:final_bound_borel_cantelli}. We then conclude that:
    \begin{equation}
        \sum_{N=1}^\infty \mathbb{P} \left ( \abs{\frac{1}{N} \sum_{n \in [N]}  \log \left ( (\by_{n,t}^N)^\top \bmu_{n,t}^N(\bw_n,\theta) \right )- \mathbb{E} \left [  \log \left ( (\by_t^\infty)^\top \bmu_t^\infty(\bw^\infty,\theta) \right )\right ]} > \iota \right ) 
        < \infty,
    \end{equation}
    hence by the Borel-Cantelli lemma:
    \begin{equation}
        \frac{1}{N} \sum_{n \in [N]}  \log \left ( (\by_{n,t}^N)^\top \bmu_{n,t}^N(\bw_n,\theta) \right )- \mathbb{E} \left [  \log \left ( (\by_t^\infty)^\top \bmu_t^\infty(\bw^\infty,\theta) \right )\right ] \to 0,
    \end{equation}
    as $N\to \infty$, $\mathbb{P}$-almost surely, or equivalently $\frac{1}{N}\ell^N_t(\theta)$ converges to $\mathbb{E} \left [  \log \left ( (\by_t^\infty)^\top \bmu_t^\infty(\bw^\infty,\theta) \right )\right ]$ as $N \to \infty$,  $\mathbb{P}$-almost surely.

    For the final statement of the proposition it is enough to observe that as $\by_t^\infty$ is a one-hot encoding vector:
    $$
    \log \left ( (\by_t^\infty)^\top \bmu_t^\infty(\bw^\infty,\theta) \right )
    =
    (\by_t^\infty)^\top \log \left ( \bmu_t^\infty(\bw^\infty,\theta) \right )
    $$
    under the convention $0\log0=0$, and from Proposition \ref{prop:joint_rand_aymptoticCAL_distribution} we have:
    $$
    \by_t^\infty|\by_{1:t-1}^\infty,\bw^\infty \sim \mbox{Cat}\left ( \cdot | \bmu_{t}^\infty(\bw^\infty,\theta^\star) \right ).
    $$ 
    Hence, by the tower rule:
    \begin{equation}
    \begin{split}
        \mathbb{E} \left [ (\by_t^\infty)^\top  \log \left ( \bmu_t^\infty(\bw^\infty,\theta) \right )\right ] 
        &=
        \mathbb{E} \left \{ \mathbb{E} \left [ (\by_t^\infty)^\top  \log \left ( \bmu_t^\infty(\bw^\infty,\theta) \right )| \by_{1:t-1}^\infty,\bw^\infty \right ] \right \} \\
        &=
        \mathbb{E} \left \{ \mathbb{E} \left [ (\by_t^\infty)^\top | \by_{1:t-1}^\infty,\bw^\infty \right ] \log \left ( \bmu_t^\infty(\bw^\infty,\theta) \right ) \right \} \\
        &=
        \mathbb{E} \left [ \bmu_t^\infty(\bw^\infty,\theta^\star)^\top  \log \left ( \bmu_t^\infty(\bw^\infty,\theta) \right )\right ].
    \end{split}
    \end{equation}
\end{proof}

Because of Proposition \ref{prop:convergence_log_likelihood}, we can conclude that the CAL has a contrast function which is an expected Kullback-Leibler divergence, as in the following theorem. Recall from \eqref{rec:rand_asympt_CAL_rec_full} the definition $\bmu_t^\infty(\bw^\infty,\theta)$.
\begin{theorem} \label{thm:constrast_function_conv}
    Under assumptions \ref{ass:compactness_continuity},\ref{ass:w_iid},\ref{ass:HMM_support},\ref{ass:eta_structure},\ref{ass:kernel_continuity}, for any $T\geq 1$ and $\theta \in \Theta$ let:
    $$
    \ell_{1:T}^N(\theta) \coloneqq \sum_{t=1}^T \ell_{t}^N(\theta)
    $$
    then:
    $$
        \frac{\ell_{1:T}^N(\theta)}{N} -  \frac{\ell_{1:T}^N(\theta^\star)}{N}
        \to
        \mathcal{C}_T(\theta,\theta^\star),
    $$
    as $N\to\infty$, $\mathbb{P}$-almost surely, where:
    \begin{equation}\label{eq:contrast_function}
        \mathcal{C}_T(\theta,\theta^\star)
        \coloneqq -\sum_{t=1}^T \mathbb{E} \left \{ \mathbf{KL} \left [ \mbox{Cat} \left (\cdot | \bmu_t^\infty(\bw^\infty,\theta^\star) \right )|| \mbox{Cat} \left (\cdot | \bmu_t^\infty(\bw^\infty,\theta) \right ) \right ] \right \}.
    \end{equation}
    Moreover:
    \begin{equation}\label{eq:datagenpar_maximise}
        \theta^\star \in \Theta^\star \coloneqq \argmax_{\theta \in \Theta} \mathcal{C}_T(\theta,\theta^\star).
    \end{equation}
\end{theorem}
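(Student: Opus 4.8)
The plan is to assemble the theorem from the pointwise limit already supplied by Proposition \ref{prop:convergence_log_likelihood}. First I would fix $\theta \in \Theta$ and apply that proposition at both $\theta$ and $\theta^\star$: on the intersection of the two almost-sure events (a finite intersection, hence itself almost sure), for each $t \in [T]$ one has
\begin{equation}
N^{-1}\ell_t^N(\theta) \to \mathbb{E}\left[\log\left((\by_t^\infty)^\top \bmu_t^\infty(\bw^\infty,\theta)\right)\right],
\end{equation}
and the analogous limit at $\theta^\star$. Since $\ell_{1:T}^N = \sum_{t=1}^T \ell_t^N$ is a finite sum, almost-sure convergence is preserved under summation and under the difference, so
\begin{equation}
\frac{\ell_{1:T}^N(\theta)}{N} - \frac{\ell_{1:T}^N(\theta^\star)}{N} \to \sum_{t=1}^T \left\{ \mathbb{E}\left[\log\left((\by_t^\infty)^\top \bmu_t^\infty(\bw^\infty,\theta)\right)\right] - \mathbb{E}\left[\log\left((\by_t^\infty)^\top \bmu_t^\infty(\bw^\infty,\theta^\star)\right)\right] \right\}
\end{equation}
as $N \to \infty$, $\mathbb{P}$-almost surely.

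Next I would identify this limit with $\mathcal{C}_T(\theta,\theta^\star)$. Using the second identity of Proposition \ref{prop:convergence_log_likelihood}, each expected logarithm can be rewritten against $\bmu_t^\infty(\bw^\infty,\theta^\star)$, so that for every $t$
\begin{equation}
\mathbb{E}\left[\log\left((\by_t^\infty)^\top \bmu_t^\infty(\bw^\infty,\theta)\right)\right] - \mathbb{E}\left[\log\left((\by_t^\infty)^\top \bmu_t^\infty(\bw^\infty,\theta^\star)\right)\right] = \mathbb{E}\left[\bmu_t^\infty(\bw^\infty,\theta^\star)^\top \left( \log \bmu_t^\infty(\bw^\infty,\theta) - \log \bmu_t^\infty(\bw^\infty,\theta^\star) \right)\right].
\end{equation}
The expectand on the right is exactly $-\mathbf{KL}\left[\mbox{Cat}(\cdot|\bmu_t^\infty(\bw^\infty,\theta^\star)) \,\|\, \mbox{Cat}(\cdot|\bmu_t^\infty(\bw^\infty,\theta))\right]$, so summing over $t$ recovers the contrast function \eqref{eq:contrast_function}, which settles the convergence claim.

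For the final assertion $\theta^\star \in \Theta^\star$ I would exploit the sign structure of the divergence. By nonnegativity of the Kullback--Leibler divergence each summand is $\geq 0$, hence its expectation is nonnegative and $\mathcal{C}_T(\theta,\theta^\star) \leq 0$ for every $\theta \in \Theta$; evaluating at $\theta = \theta^\star$ makes each categorical pair coincide, so the divergence vanishes identically and $\mathcal{C}_T(\theta^\star,\theta^\star) = 0$. Therefore $\theta^\star$ attains the maximal value $0$ and lies in $\argmax_{\theta \in \Theta} \mathcal{C}_T(\theta,\theta^\star) = \Theta^\star$. The main obstacle is not this sign argument, which is immediate, but ensuring the contrast function is well-defined and finite so that the rewriting in the previous step is legitimate and the comparison is not vacuously between $-\infty$ values: one must rule out $\log 0$ in the limiting increments. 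This is where I would invoke the support machinery, namely Assumption \ref{ass:HMM_support} together with Proposition \ref{prop:rand_aympt_CAL_well_definess} and the strict lower bound $\bmu_t^\infty(\bw^\infty,\theta)^{(i)} \geq m_t > 0$ on the support from Proposition \ref{prop:infCAL_as_bounded}, to conclude that $\by_t^\infty$ almost surely charges only indices lying in the common support of $\bmu_t^\infty(\bw^\infty,\theta)$ and $\bmu_t^\infty(\bw^\infty,\theta^\star)$, rendering every logarithm and each expected divergence finite.
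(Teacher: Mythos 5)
Your proposal is correct and follows essentially the same route as the paper: pointwise almost-sure convergence from Proposition \ref{prop:convergence_log_likelihood} applied at $\theta$ and $\theta^\star$, the rewriting of each expected log-increment against $\bmu_t^\infty(\bw^\infty,\theta^\star)$ via the second identity of that proposition so the difference becomes a negative expected Kullback--Leibler divergence, and nonnegativity of the divergence to place $\theta^\star$ in $\Theta^\star$. Your closing remark on well-definedness via Propositions \ref{prop:rand_aympt_CAL_well_definess} and \ref{prop:infCAL_as_bounded} is a point the paper leaves implicit but is handled by the same machinery you cite.
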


\begin{proof}
    Because of Proposition \ref{prop:convergence_log_likelihood}:
    \begin{equation}
    \begin{split}
         &\frac{\ell_{1:T}^N(\theta)}{N} -  \frac{\ell_{1:T}^N(\theta^\star)}{N}\to \sum_{t=1}^T \int \mathbb{E} \left [\left. \bmu_t^\infty(\bw^\infty,\theta^\star)^\top \log \left ( \frac{\bmu_t^\infty(\bw^\infty,\theta)}{\bmu_t^\infty(\bw^\infty,\theta^\star)} \right )\right| \bw^\infty =w \right ]\Gamma(d w),
    \end{split}
    \end{equation}
    as $N\to \infty$, $\mathbb{P}$-almost surely, where we  notice that:
    \begin{equation}
    \begin{split}
        &\mathbb{E} \left [\left. \bmu_t^\infty(\bw^\infty,\theta^\star)^\top \log \left ( \frac{\bmu_t^\infty(\bw^\infty,\theta)}{\bmu_t^\infty(\bw^\infty,\theta^\star)} \right )\right| \bw^\infty =w \right ]\\
        &= 
        \mathbb{E} \left \{\left.
        \mathbb{E} \left [\left. \bmu_t^\infty(\bw^\infty,\theta^\star)^\top \log \left ( \frac{\bmu_t^\infty(\bw^\infty,\theta)}{\bmu_t^\infty(\bw^\infty,\theta^\star)} \right )\right| \by_{1:t-1}^\infty, \bw^\infty =w \right ]
        \right| \bw^\infty =w \right \}
        \\
        &= 
        - \mathbb{E} \left \{ \mathbf{KL} \left [ \mbox{Cat} \left (\cdot | \bmu_t^\infty(\bw^\infty,\theta^\star) \right )|| \mbox{Cat} \left (\cdot | \bmu_t^\infty(\bw^\infty,\theta) \right ) \right ] | \bw^\infty =w  \right \},
    \end{split}
    \end{equation}
    hence:
    \begin{equation}
        \begin{split}
         &\frac{\ell_{1:T}^N(\theta)}{N} -  \frac{\ell_{1:T}^N(\theta^\star)}{N} 
         \\
         &\to
         -\sum_{t=1}^T \int \mathbb{E} \left \{ \mathbf{KL} \left [ \mbox{Cat} \left (\cdot | \bmu_t^\infty(\bw^\infty,\theta^\star) \right )|| \mbox{Cat} \left (\cdot | \bmu_t^\infty(\bw^\infty,\theta) \right ) \right ] | \bw^\infty =w  \right \} \Gamma(d w),
        \end{split}
    \end{equation}
    as $N\to \infty$, $\mathbb{P}$-almost surely, from which we conclude the first part of the proof as:
    \begin{equation}
        \mathcal{C}_T(\theta,\theta^\star)
        = -\sum_{t=1}^T \int \mathbb{E} \left \{ \mathbf{KL} \left [ \mbox{Cat} \left (\cdot | \bmu_t^\infty(\bw^\infty,\theta^\star) \right )|| \mbox{Cat} \left (\cdot | \bmu_t^\infty(\bw^\infty,\theta) \right ) \right ] | \bw^\infty =w  \right \} \Gamma(d w).
    \end{equation}
    The KL-divergence is always greater than or equal to zero, and equal to zero if and only if the two distributions are equal. Hence the maximal value of the negative KL-divergence is zero, and we have:
    \begin{equation}
        \theta^\star \in \Theta^\star = \argmax_{\theta \in \Theta} \mathcal{C}_T(\theta,\theta^\star),
    \end{equation}
    which concludes the proof.
\end{proof}

\paragraph{Uniform almost sure convergence.}

Theorem \ref{thm:constrast_function_conv} proves the convergence $ \frac{\ell_{1:T}^N(\theta)}{N} -  \frac{\ell_{1:T}^N(\theta^\star)}{N} \to \mathcal{C}_T(\theta,\theta^\star)$ \textit{pointwise} in $\theta$. In order to make statements about the convergence of the maximizer of $\ell_{1:T}^N(\theta)$ in the same vein as \cite{whitehouse2023consistent}, we must show this convergence is uniform. To proceed we will use the following results.

\begin{definition}
    Let $(\mathcal{H}_N)_{N \geq 1}$ be a sequence of random functions $\mathcal{H}_N:\theta\in\Theta \mapsto \mathcal{H}_N(\theta) \in \mathbb{R}$ where $\Theta$ is a metric space. We say that $(\mathcal{H}_N)_{N \geq 1}$ are stochastically equicontinuous if there exists an event $E$ of probability $1$, such that for all $\iota>0$ and $\omega \in E$, there exists $N(\omega)$ and $\delta>0$ such that $N>N(\omega)$ implies:
    \begin{equation*}
        \sup_{|\theta_1-\theta_2|<\delta}| \mathcal{H}_N(\omega, \theta_1) - \mathcal{H}_N(\omega, \theta_2)| < \iota.
    \end{equation*}
\end{definition}

\begin{lemma}\label{stochascoli}
    Assume $\Theta$ is a compact metric space and let $(\mathcal{H}_N)_{N \geq 1}$ be a sequence of random functions $\mathcal{H}_N:\theta\in\Theta \rightarrow \mathcal{H}_N(\theta)\in\mathbb{R}$. If there exists a continuous function $\mathcal{H}$ such that for all $\theta \in \Theta$ we have  $| \mathcal{H}_N(\theta) - \mathcal{H}(\theta)| \overset{a.s.}{\rightarrow} 0$, and $(\mathcal{H}_N)_{N \geq 1}$ are stochastically equicontinuous, then:
    \begin{equation*}
        \sup_{\theta \in \Theta}| \mathcal{H}_N(\theta) - \mathcal{H}(\theta)| \overset{a.s.}{\rightarrow} 0.
    \end{equation*}
    That is $\mathcal{H}_N(\theta)$ converges to $\mathcal{H}(\theta)$ almost surely as $N \rightarrow \infty$, uniformly in $\theta$.
\end{lemma}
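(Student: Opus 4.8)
The plan is to prove this as a stochastic analogue of the Arzelà–Ascoli theorem, combining three standard ingredients — a finite $\delta$-net from compactness of $\Theta$, uniform continuity of the limit $\mathcal{H}$, and stochastic equicontinuity of $(\mathcal{H}_N)$ — through a single three-term triangle inequality. First I would fix $\iota > 0$ and invoke the uniform continuity of $\mathcal{H}$ (automatic, since $\mathcal{H}$ is continuous on the compact metric space $\Theta$) to obtain $\delta_1 > 0$ such that $|\theta_1 - \theta_2| < \delta_1$ implies $|\mathcal{H}(\theta_1) - \mathcal{H}(\theta_2)| < \iota/3$.

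The care in the argument lies in handling the almost-sure events, because the $\delta$ supplied by stochastic equicontinuity is permitted to depend on the realization $\omega$. To route around this, I would first pass to a countable dense subset $D = \{\theta_1, \theta_2, \dots\} \subseteq \Theta$ (which exists, as a compact metric space is separable), and set $E_D := \bigcap_{j \geq 1} \{\omega : |\mathcal{H}_N(\omega, \theta_j) - \mathcal{H}(\theta_j)| \to 0\}$. By the pointwise hypothesis each event in this intersection has probability one, so as a countable intersection $\mathbb{P}(E_D) = 1$, and hence $\mathbb{P}(E \cap E_D) = 1$, where $E$ is the stochastic-equicontinuity event.

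Then, for each fixed $\omega \in E \cap E_D$ and the fixed $\iota$, stochastic equicontinuity yields $N_1(\omega)$ and $\delta_2(\omega) > 0$ such that $\sup_{|\theta_1 - \theta_2| < \delta_2(\omega)} |\mathcal{H}_N(\omega, \theta_1) - \mathcal{H}_N(\omega, \theta_2)| < \iota/3$ for all $N > N_1(\omega)$. Setting $\delta := \min(\delta_1, \delta_2(\omega))$, density of $D$ makes the balls $\{B(\theta_j, \delta)\}_{j}$ cover $\Theta$, so by compactness finitely many centres $\theta_{j_1}, \dots, \theta_{j_L} \in D$ suffice. For arbitrary $\theta$ I pick $\theta_{j_l}$ with $|\theta - \theta_{j_l}| < \delta$ and decompose
\[
|\mathcal{H}_N(\omega,\theta) - \mathcal{H}(\theta)| \leq |\mathcal{H}_N(\omega,\theta) - \mathcal{H}_N(\omega,\theta_{j_l})| + |\mathcal{H}_N(\omega,\theta_{j_l}) - \mathcal{H}(\theta_{j_l})| + |\mathcal{H}(\theta_{j_l}) - \mathcal{H}(\theta)|,
\]
bounding the first term by equicontinuity (as $|\theta - \theta_{j_l}| < \delta \leq \delta_2(\omega)$), the third by uniform continuity (as $|\theta - \theta_{j_l}| < \delta \leq \delta_1$), and the second — ranging over the \emph{finite} centre set — using $\omega \in E_D$ to extract $N_2(\omega)$ beyond which all $L$ of these gaps fall below $\iota/3$. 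For $N > \max(N_1(\omega), N_2(\omega))$ the supremum over $\theta$ is then below $\iota$; since $\iota$ and $\omega \in E \cap E_D$ were arbitrary, this is exactly the claimed uniform almost-sure convergence.

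The step I expect to be the main obstacle — or rather the point demanding the most care — is precisely the $\omega$-dependence of $\delta$ in the equicontinuity hypothesis: a naively pre-chosen finite net would depend on the realization, so one cannot simply intersect pointwise-convergence events over a fixed finite net. Routing the argument through the fixed countable dense set $D$, on which convergence holds simultaneously with probability one, and only afterwards extracting a realization-dependent finite subnet from $D$, is what makes the finite-net threshold $N_2(\omega)$ well-defined for each $\omega$ and closes the gap.
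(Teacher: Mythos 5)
Your proof is correct. Note, however, that the paper does not actually prove this lemma: its entire ``proof'' is a citation to Andrews (1992), where this generic uniform convergence result is established. What you have written is a complete, self-contained derivation along the standard Arzel\`a--Ascoli lines, and it is essentially the argument underlying the cited result. The one genuinely delicate point --- that the radius $\delta$ furnished by stochastic equicontinuity depends on the realization $\omega$, so one cannot fix a finite net in advance and intersect pointwise-convergence events over it --- is exactly the right thing to worry about, and your resolution (fix a countable dense set $D$ once and for all, intersect the countably many probability-one pointwise-convergence events over $D$ to get $E_D$, and only then, for each $\omega \in E \cap E_D$, extract an $\omega$-dependent finite subnet from $D$) is the correct and standard fix. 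The order of quantifiers also works out: the probability-one event $E \cap E_D$ does not depend on $\iota$, so for each fixed $\omega$ in it you obtain, for every $\iota>0$, a threshold beyond which the supremum is below $\iota$, which is precisely almost-sure uniform convergence. Relative to the paper, your version buys self-containedness at the cost of a page of routine argument; the paper's citation buys brevity but leaves the reader to unpack Andrews' more general framework.
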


\begin{proof}
    See \cite{andrews1992generic}.
\end{proof}

\begin{lemma} \label{lem:stochequnfiorm}
    Under Assumptions \ref{ass:compactness_continuity},\ref{ass:w_iid},\ref{ass:HMM_support},\ref{ass:eta_structure},\ref{ass:kernel_continuity}, 
    
       \begin{equation}
        \sup_{\theta\in\Theta}\left|\frac{\ell_{1:T}^N(\theta)}{N} -  \frac{\ell_{1:T}^N(\theta^\star)}{N}  
       - \mathcal{C}_T(\theta,\theta^\star)\right|\to 0, 
    \end{equation}
    $\mathbb{P}$-almost surely.
\end{lemma}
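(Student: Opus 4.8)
The plan is to deduce the claim from the stochastic Arzel\`a--Ascoli result, Lemma \ref{stochascoli}, applied to the random functions $\mathcal{H}_N(\theta) \coloneqq N^{-1}\left(\ell_{1:T}^N(\theta) - \ell_{1:T}^N(\theta^\star)\right)$ and the candidate limit $\mathcal{H}(\theta)\coloneqq \mathcal{C}_T(\theta,\theta^\star)$. Two of the three hypotheses of Lemma \ref{stochascoli} are essentially in hand: pointwise almost-sure convergence $\mathcal{H}_N(\theta)\to\mathcal{H}(\theta)$ for each fixed $\theta$ is exactly Theorem \ref{thm:constrast_function_conv}, and $\Theta$ is a compact metric space by Assumption \ref{ass:compactness_continuity}. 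It therefore remains to (i) verify that $\theta\mapsto\mathcal{C}_T(\theta,\theta^\star)$ is continuous, and (ii) establish stochastic equicontinuity of $(\mathcal{H}_N)_{N\ge1}$.

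For (i), I would show that $\theta\mapsto\bmu_t^\infty(w,\theta)$ is continuous for each fixed $w$, by induction on $t$ through the recursion \eqref{rec:rand_asympt_CAL_rec_full}: $p_0$, $K_\eta$ and $G$ are continuous by Assumption \ref{ass:compactness_continuity}, the limiting interaction term $\bar\boeta_{t-1}^\infty(w,\theta)=\int d(w,\tilde w,\theta)^\top\bar\bpi_{t-1}^\infty(\tilde w,\theta)\,\Gamma(d\tilde w)$ is continuous by dominated convergence using $\norm{d}_\infty\le C$, and the correction step preserves continuity because $\bmu_t^\infty(w,\theta)$ is bounded below on its support by $m_t>0$ (Proposition \ref{prop:infCAL_as_bounded}), so no division by zero occurs. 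The integrand $\mathbf{KL}[\mbox{Cat}(\cdot|\bmu_t^\infty(w,\theta^\star))\|\mbox{Cat}(\cdot|\bmu_t^\infty(w,\theta))]$ is then continuous in $\theta$ and, again by the $m_t$ lower bound, uniformly bounded in $w$ and $\theta$, so a final application of dominated convergence transfers continuity through the expectation to $\mathcal{C}_T$.

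The crux is (ii), and here I would prove the stronger statement that $(\mathcal{H}_N)$ are \emph{deterministically} equicontinuous on a single probability-one event, which trivially implies stochastic equicontinuity. The key is to exhibit a common modulus of continuity in $\theta$ for the CAL quantities, uniform over $n\in[N]$, over $N$, and over realizations of the covariates and observations. Since $\mathbb{W}$ and $\Theta$ are compact, $p_0$, $d$, $K_\eta$ (jointly in $\eta,w,\theta$) and $G$ are uniformly continuous. Proceeding by induction on $t$ through \eqref{rec:CAL_rec}, one bounds $\norm{\bpi_{n,t}^N(\bw_n,\theta)-\bpi_{n,t}^N(\bw_n,\theta')}_\infty$ by a modulus depending only on these uniform moduli, on $C$, $L$ and on $m_t$, but not on $n$, $N$ or the randomness; the only delicate point is the interaction term $\tilde\boeta_{t-1}^N(\bw_n,\theta)=\tfrac1N\sum_k d(\bw_n,\bw_k,\theta)^\top\bpi_{k,t-1}^N(\bw_k,\theta)$, whose variation I would control by splitting into a $d$-variation piece (bounded by the modulus of $d$, since each $\bpi$ is a probability vector) and a $\bpi$-variation piece (bounded by $\norm{d}_\infty\le C$ times the inductive modulus), the average over $k$ preserving the bound. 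Composing with the outer logarithm, which is $1/m_t$-Lipschitz on $[m_t,1]$ by Proposition \ref{prop:CAL_as_bounded} (with $m_t$ uniform in $\theta$, as the support is $\theta$-invariant by Assumption \ref{ass:HMM_support}), yields a modulus for each summand $\log((\by_{n,t}^N)^\top\bmu_{n,t}^N(\bw_n,\theta))$ that is uniform over $n,N$ and over the finitely many one-hot values of $\by_{n,t}^N$; averaging preserves it, so $|\mathcal{H}_N(\theta)-\mathcal{H}_N(\theta')|$ is controlled uniformly in $N$ by a deterministic modulus of continuity.

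With (i) and (ii) established, Lemma \ref{stochascoli} delivers $\sup_{\theta\in\Theta}|\mathcal{H}_N(\theta)-\mathcal{H}(\theta)|\to0$ almost surely, which is the claim. I expect the main obstacle to be the bookkeeping in step (ii): propagating a single modulus of continuity through the coupled prediction/correction recursion while keeping every constant free of $n$, $N$ and the data, and in particular verifying that the lower bound $m_t$ of Proposition \ref{prop:CAL_as_bounded} can be taken to hold simultaneously for all $\theta$ on one probability-one event, which rests on the $\theta$-invariance of the support granted by Assumption \ref{ass:HMM_support}.
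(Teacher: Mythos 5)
Your proposal reaches the statement through the same top-level device as the paper — Lemma \ref{stochascoli} applied to $\mathcal{H}_N(\theta) = N^{-1}(\ell_{1:T}^N(\theta)-\ell_{1:T}^N(\theta^\star))$, with pointwise almost-sure convergence supplied by Theorem \ref{thm:constrast_function_conv} and continuity of the limit checked by pushing continuity through the saturated recursion — but your treatment of stochastic equicontinuity is genuinely different. The paper bounds $|\mathcal{C}_t^N(\theta_1)-\mathcal{C}_t^N(\theta_2)|$ by inserting the limiting functional $\mathcal{C}_t^{\infty}$, controlling the two $N$-dependent pieces \eqref{eq:stocheqdecomp1}--\eqref{eq:stocheqdecomp2} by the already-established pointwise convergence and leaving only continuity of $\theta\mapsto\mathcal{C}_t^{\infty}(\theta)$; you instead propagate a deterministic, data-uniform modulus of continuity in $\theta$ through the finite-$N$ recursion \eqref{rec:CAL_rec} itself, using uniform continuity of $p_0,K,G$ on compacts, the Lipschitz bound of Assumption \ref{ass:kernel_continuity} for the interaction term, and the lower bound $m_t$ of Proposition \ref{prop:CAL_as_bounded} to tame the correction step and the outer logarithm. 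Your route costs more bookkeeping but buys something concrete: a deterministic modulus controls $\sup_{|\theta_1-\theta_2|<\delta}|\mathcal{H}_N(\theta_1)-\mathcal{H}_N(\theta_2)|$ directly, whereas the paper's argument bounds $|\mathcal{C}_t^N(\theta_i)-\mathcal{C}_t^{\infty}(\theta_i)|$ only at the two \emph{fixed} points $\theta_1,\theta_2$, so the supremum over all $\delta$-close pairs demanded by the definition of stochastic equicontinuity requires an additional uniformity step that the paper leaves implicit; your version closes that. Two caveats, both of which you partly anticipate: first, the interaction kernel $d$ of Assumption \ref{ass:eta_structure} is only assumed bounded, not continuous in $\theta$, yet your modulus for $\tilde{\boeta}_{t-1}^N$ (and, equally, the paper's continuity claim for $\mathcal{C}_t^{\infty}$) needs continuity of $\theta\mapsto d(w,\tilde w,\theta)$ — this must be added as a hypothesis or extracted from the model, since compactness of the domain alone does not supply it; second, the lower bound $m_t$ is a deterministic constant built from Weierstrass minima over $\Theta\times\mathbb{W}\times[0,C]$, so tracing the proof of Proposition \ref{prop:CAL_as_bounded} does give a single probability-one event on which the bound holds for all $\theta$ simultaneously, confirming the point you flag as delicate.
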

\begin{proof}
    By Theorem \ref{thm:constrast_function_conv} we have pointwise convergence. Hence, by Lemma \ref{stochascoli} it is enough to show that
    \begin{equation}\label{eq:finite_cf}
        \mathcal{C}_{t}^N(\theta) \coloneqq  \frac{1}{N} \sum_{n \in [N]} (\by_{n,t}^N)^\top  \log \left ( \bmu_{n,t}^N(\bw_n,\theta) \right ),
    \end{equation}
    is stochastically equicontinuous. Define 
    \begin{equation}
        \mathcal{C}_{t}^{\infty}(\theta) \coloneqq \int \mathbb{E} \left [ \bmu_t^\infty(\bw^\infty,\theta^\star) ^\top  \log \left ( \bmu_t^\infty(\bw^\infty,\theta) \right )| \bw^\infty = \bw \right ] \Gamma(d \bw).
    \end{equation}
    Let $\theta_1,\theta_2 \in \Theta$ and consider the decomposition
    \begin{align} \label{eq:stocheqdecomp1}
        | \mathcal{C}_{t}^N(\theta_1) -  \mathcal{C}_{t}^N(\theta_2)| \leq &| \mathcal{C}_{t}^N(\theta_1) -  \mathcal{C}_{t}^{\infty}(\theta_1)| \\\label{eq:stocheqdecomp2}
        + &| \mathcal{C}_{t}^N(\theta_2) -  \mathcal{C}_{t}^{\infty}(\theta_2)| \\\label{eq:stocheqdecomp3}
        + &| \mathcal{C}_{t}^{\infty}(\theta_1) -  \mathcal{C}_{t}^{\infty}(\theta_2)|.
    \end{align}
    Note that all of these quantities are well defined by Theorem \ref{thm:CAL_well_definess} and by the fact that $\bmu^\infty_t(\bw^\infty,\theta)\neq 0$ because of Proposition \ref{prop:rand_aympt_CAL_well_definess}.
    
    Let $E\subset \Omega$ such that $\mathbb{P}(E) = 1$. Let $\theta,\theta^\star \in \Theta$, $\omega \in E$, and $\iota>0$. By Theorem \ref{thm:constrast_function_conv} we have almost sure convergence of \eqref{eq:stocheqdecomp1} and \eqref{eq:stocheqdecomp2} to $0$, hence there exists an ${N}(\omega)$ such that for all $N>{N}(\omega)$ these terms are bounded by $\iota/3$. 

    It remains to show that there exists a $\delta$ such that for $|\theta_1-\theta_2|<\delta$ implies that \eqref{eq:stocheqdecomp3} is bounded by $\iota/3$. This follows directly from noticing that $\mathcal{C}_{t}^{\infty}(\theta)$ comprises a composition of continuous functions of our model quantities $p_0, K,$ and $G$, which are themselves continuous functions of $\theta$ because of Assumption \ref{ass:compactness_continuity}.
\end{proof}

\begin{theorem}\label{thm:consistency}
    Let \ref{ass:compactness_continuity},\ref{ass:w_iid},\ref{ass:HMM_support},\ref{ass:eta_structure},\ref{ass:kernel_continuity} hold and let $\hat{ {\theta}}_N$ be a maximizer of $  \ell_{1:T}^N({\theta})$. Then $\hat{ {\theta}}_N$ converges to $\Theta^\star$ as $N \to \infty$,  $\mathbb{P}$-almost surely.
\end{theorem}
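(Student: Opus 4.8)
The plan is to run the standard argmax-consistency (Wald-type) argument, using the uniform almost sure convergence already established in Lemma \ref{lem:stochequnfiorm} together with the compactness of $\Theta$ and the continuity of the contrast function. First I would observe that maximizing $\ell_{1:T}^N(\theta)$ over $\theta \in \Theta$ is equivalent to maximizing
\[
Q_N(\theta) := \frac{\ell_{1:T}^N(\theta)}{N} - \frac{\ell_{1:T}^N(\theta^\star)}{N},
\]
since the two differ only by the additive term $N^{-1}\ell_{1:T}^N(\theta^\star)$, which is independent of $\theta$, together with a positive rescaling; hence $\hat\theta_N \in \argmax_{\theta\in\Theta} Q_N(\theta)$. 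Note that $Q_N(\theta^\star) = 0$, and by Theorem \ref{thm:constrast_function_conv} we have $\mathcal{C}_T(\theta,\theta^\star) \leq 0 = \mathcal{C}_T(\theta^\star,\theta^\star)$ with $\theta^\star \in \Theta^\star$, so that $\sup_{\theta\in\Theta}\mathcal{C}_T(\theta,\theta^\star) = 0$ and this supremum is attained exactly on the set $\Theta^\star$.

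Next I would fix an outcome $\omega$ in the probability-one event on which Lemma \ref{lem:stochequnfiorm} gives $\sup_{\theta\in\Theta}|Q_N(\theta) - \mathcal{C}_T(\theta,\theta^\star)| \to 0$, and argue by contradiction that $d(\hat\theta_N, \Theta^\star) \to 0$, where $d(\cdot, \Theta^\star)$ denotes distance to the set. Suppose instead that along some subsequence $d(\hat\theta_{N_k}, \Theta^\star) \geq \varepsilon$ for some $\varepsilon > 0$. The set $K_\varepsilon := \{\theta \in \Theta : d(\theta,\Theta^\star) \geq \varepsilon\}$ is closed and contained in the compact space $\Theta$ (Assumption \ref{ass:compactness_continuity}), hence compact, and the continuous function $\theta \mapsto \mathcal{C}_T(\theta,\theta^\star)$ attains a maximum on it; since every global maximizer lies in $\Theta^\star$ and is therefore excluded from $K_\varepsilon$, this maximum is strictly negative, say $\max_{\theta \in K_\varepsilon}\mathcal{C}_T(\theta,\theta^\star) = -\eta < 0$.

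I would then combine the two facts. On one hand, optimality gives $Q_{N_k}(\hat\theta_{N_k}) \geq Q_{N_k}(\theta^\star) = 0$. On the other hand, since $\hat\theta_{N_k} \in K_\varepsilon$, uniform convergence yields
\[
Q_{N_k}(\hat\theta_{N_k}) \leq \mathcal{C}_T(\hat\theta_{N_k},\theta^\star) + \sup_{\theta\in\Theta}\bigl|Q_{N_k}(\theta) - \mathcal{C}_T(\theta,\theta^\star)\bigr| \leq -\eta + \sup_{\theta\in\Theta}\bigl|Q_{N_k}(\theta) - \mathcal{C}_T(\theta,\theta^\star)\bigr|.
\]
Letting $k \to \infty$, the last supremum vanishes, so the right-hand side converges to $-\eta < 0$, contradicting $Q_{N_k}(\hat\theta_{N_k}) \geq 0$. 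Hence no such subsequence exists, and $d(\hat\theta_N,\Theta^\star) \to 0$ on this event, i.e.\ $\mathbb{P}$-almost surely.

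The routine parts here are the rescaling reduction and the contradiction bookkeeping; the only point requiring genuine care is the set-valued nature of the limit. Because the model need not be identifiable, $\Theta^\star$ is truly a set (its elements induce statistically indistinguishable saturated laws, as emphasized after the statement of Theorem \ref{thm:main_consistency}), so one cannot hope for convergence to a single point and must instead establish the ``well-separated maximum over the set'' property: that $\mathcal{C}_T(\cdot,\theta^\star)$ is bounded strictly below $0$ uniformly outside any $\varepsilon$-neighbourhood of $\Theta^\star$. This is exactly what the compactness-plus-continuity argument on $K_\varepsilon$ supplies, and it is the step I would be most careful to state correctly.
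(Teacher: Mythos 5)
Your proposal is correct and follows essentially the same route as the paper's proof: uniform almost sure convergence from Lemma \ref{lem:stochequnfiorm}, the optimality inequality $Q_N(\hat\theta_N)\geq Q_N(\theta^\star)=0$, and a contradiction argument using compactness of the set outside an $\varepsilon$-neighbourhood of $\Theta^\star$ together with continuity of $\mathcal{C}_T(\cdot,\theta^\star)$ and the extreme value theorem to obtain a strictly negative separation. The only cosmetic difference is that the paper first deduces $\mathcal{C}_T(\theta^\star,\hat\theta_N)\to\mathcal{C}_T(\theta^\star,\theta^\star)$ via a three-term sandwich and then runs the contradiction, whereas you fold both steps into a single chain of inequalities; the substance is identical.
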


\begin{proof}
    The proof follows in the same manner as that of Theorem 1 in \cite{whitehouse2023consistent}, we include it for completeness.

    Let $\mathcal{C}_T( \theta^\star,  \theta)$ be as defined by \eqref{eq:contrast_function} and let $\mathcal{C}_T^N(\theta) = \sum_{t=1}^T \mathcal{C}_t^{N}(\theta)$ be as in Equation \eqref{eq:finite_cf}.  We have that $\mathcal{C}_T^{N}(\hat{ \theta}_N) \geq \mathcal{C}_T^{N}({ \theta})$ for all $ \theta \in \Theta^\star$. Furthermore  ${\mathcal{C}_T( \theta^\star,  \theta^\star) - \mathcal{C}_T( \theta^\star,  \theta) \geq 0}$ for all $ \theta \in \Theta$. We can combine these inequalities to obtain:
    \begin{equation} \label{supconv}
        \begin{aligned}
        0 &\leq \mathcal{C}_T( \theta^\star,  \theta^\star) - \mathcal{C}_T( \theta^\star, \hat{ \theta}_n) \\
        & =  \mathcal{C}_T( \theta^\star,  \theta^\star)  - \mathcal{C}_T^N(  \theta^\star)  +\mathcal{C}_T^N(  \theta^\star)
        -\mathcal{C}_T^N(  \hat{ \theta}_n)
        + \mathcal{C}_T^N(  \hat{ \theta}_n)
        - \mathcal{C}_T( \theta^\star, \hat{ \theta}_n) \\
        &\leq 2\sup_{ \theta \in \Theta}\left|\mathcal{C}_T( \theta^\star,  \theta) - \mathcal{C}_T^N( \theta) \right| \rightarrow 0 \quad \mathbb{P} \text{-almost surely,}
        \end{aligned}
    \end{equation}
    by Lemma \ref{lem:stochequnfiorm}. Hence $\mathcal{C}_T( \theta^\star, \hat{ \theta}_n) \rightarrow \mathcal{C}_T( \theta^\star, { \theta^\star})$ $\mathbb{P}$-almost surely.

    Now assume for purposes of contradiction that there is some positive probability that $\hat{ \theta}_n$ does not converge to the set $\Theta^\star$, i.e. assume that there is an event $E\subset \Omega$ with $\mathbb{P}(E)>0$ such that for all $\omega \in E$ there exists a $\delta>0$ such that for infinitely many $n \in \mathbb{N}$ we have ${\hat{ \theta}_n(\omega)}$  is not in the open neighborhood $B_\delta(\Theta^\star) = \{ \theta \in \Theta: \exists  \theta' \in \Theta^\star: \| \theta -  \theta' \|<\delta\}$.
    Since $\Theta$ is compact, the set $B_\delta(\Theta^\star)^c = \Theta \setminus B_\delta(\Theta^\star)$ is closed, bounded, and therefore compact. Furthermore, $\mathcal{C}_T( \theta^\star, \theta)$ is continuous in $ \theta$. By the extreme value theorem this means that there exists a $ \theta'\in B_\delta(\Theta^\star)^c$ such that for all  $ \theta \in B_\delta(\Theta^\star)^c$:
    $$ 
    \mathcal{C}_T( \theta^\star, \theta) \leq \mathcal{C}_T( \theta^\star, \theta')
    $$
    Furthermore, since $ \theta' \notin \Theta^\star$ there exists  $\iota>0$ such that:
    $$ 
    \mathcal{C}_T( \theta^\star, \theta') < \mathcal{C}_T( \theta^\star, \theta^\star) - \iota.
    $$
    By our assumption we have for each $\omega \in E$ there are infinitely many $n \in \mathbb{N}$ such that $\hat{ \theta}_n(\omega) \in B_\delta(\Theta^\star)^c$. But this implies that for each $\omega \in E$ there are infinitely many $n \in \mathbb{N}$ such that:
    $$ 
    \mathcal{C}_T( \theta^\star, \hat{ \theta}_n(\omega)) \leq \mathcal{C}_T( \theta^\star, \theta') < \mathcal{C}_T( \theta^\star, \theta^\star) - \iota,
    $$
    $$
    \implies |\mathcal{C}_T( \theta^\star,  \theta^\star) - \mathcal{C}_T( \theta^\star, \hat{ \theta}_n(\omega))|> \iota,
    $$
    which contradicts \eqref{supconv}. Hence we must have that $\hat{ \theta}_n$ converges to the set $\Theta^\star$ $\mathbb{P}$-almost surely.
\end{proof}

\subsubsection{Identifiability} 

\begin{definition}
    Let $\{\by_{t}^\infty\}_{t\geq 1}$ be generated according to the process defined by equations \eqref{eq:limiting_process} 
    with data-generating parameter $\theta^\star \in \Theta$. Denote the law of $\{\by_{t}^\infty\}_{t\geq 1}$ conditional on $\bw^\infty = w$ with $\mathbb{P}_{\infty}^{\theta^\star,w}$.
\end{definition}

\begin{lemma}\label{lem:identifiability}
    Let $\theta^\star \in \Theta$. 
    For any $\theta_1,\theta_2 \in \Theta^\star \coloneqq \argmax_{\theta \in \Theta} \mathcal{C}_T(\theta,\theta^\star)$ we have that $\mathbb{P}_{\infty}^{\theta_1, w} = \mathbb{P}_{\infty}^{\theta_2,w}$ for $\Gamma$-almost all $w\in \mathbb{W}$.
\end{lemma}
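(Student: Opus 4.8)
The plan is to reduce everything to a comparison with the data-generating parameter. Since $\theta^\star\in\Theta^\star$ by \eqref{eq:datagenpar_maximise} and equality of laws is transitive, it suffices to show $\mathbb{P}_\infty^{\theta,w}=\mathbb{P}_\infty^{\theta^\star,w}$ for every $\theta\in\Theta^\star$ and $\Gamma$-almost every $w$; applying this to $\theta_1$ and $\theta_2$ and chaining through $\theta^\star$ then gives the claim. First I would extract from membership in $\Theta^\star$ a pointwise statement about the limiting predictives. By Theorem \ref{thm:constrast_function_conv} the maximal value of $\mathcal{C}_T(\cdot,\theta^\star)$ is $0$ (attained at $\theta^\star$, where every Kullback--Leibler term vanishes), so $\mathcal{C}_T(\theta,\theta^\star)=0$ for $\theta\in\Theta^\star$. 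As \eqref{eq:contrast_function} is a sum over $t$ of expectations of nonnegative KL divergences, each summand must be zero, and a nonnegative random variable with zero mean vanishes almost surely, so $\mathbf{KL}[\mbox{Cat}(\cdot|\bmu_t^\infty(\bw^\infty,\theta^\star))\,\|\,\mbox{Cat}(\cdot|\bmu_t^\infty(\bw^\infty,\theta))]=0$ $\mathbb{P}$-almost surely for each $t\in[T]$. Strict positivity of the entries of $\bmu_t^\infty$ (Proposition \ref{prop:infCAL_as_bounded}) makes the KL finite, and its definiteness forces
\begin{equation}
\bmu_t^\infty(\bw^\infty,\theta)=\bmu_t^\infty(\bw^\infty,\theta^\star),\qquad \mathbb{P}\text{-almost surely},\ t=1,\dots,T. \tag{$\ast$}
\end{equation}

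Next I would turn $(\ast)$ into equality of joint observation laws. By Proposition \ref{prop:joint_rand_aymptoticCAL_distribution}, conditional on $\bw^\infty=w$ the law $\mathbb{P}_\infty^{\theta^\star,w}$ of $\by_{1:T}^\infty$ factorizes as the telescoping product $\prod_{t=1}^T (\by_t^\infty)^\top\bmu_t^\infty(w,\theta^\star)$ of one-step predictives computed from $\by_{1:t-1}^\infty$. The goal is to establish the analogous factorization at $\theta$, i.e.\ that the one-step predictives of the saturated HMM \eqref{eq:limiting_process} run with parameter $\theta$ coincide with $\bmu_t^\infty(\cdot,\theta)$; granting this, $(\ast)$ says these predictives agree with those at $\theta^\star$ on every path of positive $\mathbb{P}_\infty^{\theta^\star,w}$-probability, so the joint densities agree there, and a total-mass argument (both laws have total mass $1$ and agree on the support of $\mathbb{P}_\infty^{\theta^\star,w}$) extends the equality to the whole of $\mathbb{O}_{M+1}^T$. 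This yields $\mathbb{P}_\infty^{\theta,w}=\mathbb{P}_\infty^{\theta^\star,w}$.

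The hard part is the reconciliation flagged above: the predictive $\bmu_t^\infty(\cdot,\theta)$ is built from the saturated-CAL interaction term $\bar{\boeta}_{t-1}^\infty(\cdot,\theta)$ of \eqref{rec:asympt_CAL_rec_full}--\eqref{rec:rand_asympt_CAL_rec_full}, whereas $\mathbb{P}_\infty^{\theta,w}$ is generated using the data-generating interaction $\boeta_{t-1}^\infty(\cdot,\theta)$ of \eqref{eq:recursion_lambda}. These coincide at $\theta^\star$ (Corollary \ref{corol:eta_bound_CAL}) but not obviously elsewhere, since $\bar{\boeta}$ is formed from the mean-field filter $\bar{\bpi}^\infty$ (corrected by the deterministic emission law $\bnu^\infty(\cdot,\theta^\star)$) while $\boeta$ uses the uncorrected predictive $\blambda^\infty$. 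I would close the gap by a single induction on $t$ carrying simultaneously: (i) the laws of $\by_{1:t}^\infty$ under $\theta$ and $\theta^\star$ agree (conditional on $w$, $\Gamma$-a.e.); (ii) $\bar{\boeta}_t^\infty(\cdot,\theta)=\boeta_t^\infty(\cdot,\theta)$, equivalently $\bar{\bpi}_t^\infty(\cdot,\theta)=\blambda_t^\infty(\cdot,\theta)$; and (iii) the per-path CAL filter at $\theta$ equals the forward filter of the saturated HMM at $\theta$. At step $t$, (ii) and (iii) at $t-1$ make the CAL and HMM prediction steps use the same transition matrix, so $\bmu_t^\infty(\cdot,\theta)$ equals the HMM predictive; combined with $(\ast)$ and the identification at $\theta^\star$ this gives agreement of the $\theta$- and $\theta^\star$-predictives on positive-probability paths, which, since the Bayes-correction maps are identical functions of identical inputs, propagates (i) and (iii). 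Finally, that predictive agreement together with (i) — taking conditional expectations over $\by_{1:t-1}^\infty$, whose laws coincide — reduces (ii) at time $t$ to matching of the deterministic marginal emission laws $\bnu_t^\infty(\cdot,\theta)=\bnu_t^\infty(\cdot,\theta^\star)$ of \eqref{eq:recursion_nu}, which follows by the tower property, and this matching makes the mean-field correction trivial and delivers (ii). I expect this $\bar{\boeta}$-versus-$\boeta$ bookkeeping, rather than the KL step, to be the principal technical burden; the well-definedness inputs (Proposition \ref{prop:rand_aympt_CAL_well_definess} and Proposition \ref{prop:infCAL_as_bounded}) ensure that all ratios in the corrections remain finite throughout.
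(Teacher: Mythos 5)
Your proposal is correct and its spine is the same as the paper's: show that membership in $\Theta^\star$ forces each KL summand in \eqref{eq:contrast_function} to vanish almost surely, deduce $\bmu_t^\infty(\bw^\infty,\theta)=\bmu_t^\infty(\bw^\infty,\theta^\star)$ for all $t$ on $\mathbb{P}_\infty^{\theta^\star,w}$-almost all paths and $\Gamma$-almost all $w$, and then telescope the one-step conditionals of Proposition \ref{prop:joint_rand_aymptoticCAL_distribution} to equate the joint laws, using Assumption \ref{ass:HMM_support} to control supports. Where you differ is in what you treat as the main technical burden. The paper closes the argument in one line by writing $\mathbb{P}_{\infty}^{\theta_1,w}(\by_t^\infty=y_t\mid \by_{1:t-1}^\infty=y_{1:t-1})=\mbox{Cat}(y_t\mid\mu_t^\infty(w,\theta_1))$, i.e.\ it applies Proposition \ref{prop:joint_rand_aymptoticCAL_distribution} with $\theta_1$ in the role of the data-generating parameter; this tacitly identifies the saturated-CAL predictive $\mu_t^\infty(\cdot,\theta_1)$ of \eqref{rec:pathwise_CAL} (built from $\bar{\boeta}^\infty(\cdot,\theta_1)$, whose mean-field filter is corrected by the $\theta^\star$-marginal $\bnu_t^\infty(\cdot,\theta^\star)$) with the forward predictive of the $\theta_1$-saturated HMM (built from $\boeta^\infty(\cdot,\theta_1)$ of \eqref{eq:recursion_lambda}), an identification that Corollary \ref{corol:eta_bound_CAL} only supplies at $\theta^\star$. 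You isolate exactly this reconciliation and propose a joint induction carrying equality of laws, $\bar{\boeta}^\infty_t(\cdot,\theta)=\boeta^\infty_t(\cdot,\theta)$, and agreement of the per-path filters, with the marginal matching $\bnu_t^\infty(\cdot,\theta)=\bnu_t^\infty(\cdot,\theta^\star)$ closing the loop via the tower property. That induction is coherent and does go through; what it buys is an explicit justification of the step the paper performs implicitly, at the cost of extra bookkeeping that the paper's presentation avoids. One small inaccuracy: Proposition \ref{prop:infCAL_as_bounded} lower-bounds only the entries of $\bmu_t^\infty$ on its support, so finiteness of the KL divergence really comes from the support-invariance in Assumption \ref{ass:HMM_support} (plus Proposition \ref{prop:rand_aympt_CAL_well_definess}) rather than from strict positivity of all entries; this does not affect the argument.
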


\begin{proof}
Recall from \eqref{eq:contrast_function} the definition of the contrast function:
    \begin{equation}
        \mathcal{C}_T(\theta,\theta^\star)
        \coloneqq -\sum_{t=1}^T \mathbb{E} \left \{ \mathbf{KL} \left [ \mbox{Cat} \left (\cdot | \bmu_t^\infty(\bw^\infty,\theta^\star) \right )|| \mbox{Cat} \left (\cdot | \bmu_t^\infty(\bw^\infty,\theta) \right ) \right ] \right \}.
    \end{equation}
    where $\theta^\star$ is the DGP and $\theta$ is a candidate parameter. With $\theta^\star$ fixed, we want to characterize the set of maximizers of $\mathcal{C}_T(\theta,\theta^\star)$ in the first argument, i.e. the set $\Theta^\star \coloneqq \argmax_{\theta \in \Theta} \mathcal{C}_T(\theta,\theta^\star)$. 
  Let $w \in \mathbb{W}$ and consider the conditional expectation
    \begin{equation}
         \mathbb{E} \left \{ \mathbf{KL} \left [ \mbox{Cat} \left (\cdot | \bmu_t^\infty(\bw^\infty,\theta^\star) \right )|| \mbox{Cat} \left (\cdot | \bmu_t^\infty(\bw^\infty,\theta) \right ) \right ] \right | \bw^\infty = w\},
    \end{equation}
    this is the expectation of a KL-divergence between categorical distributions parameterized by the random vectors $\bmu_t^\infty(\bw^\infty,\theta^\star)$ and $\bmu_t^\infty(\bw^\infty,\theta)$. Recall that according to the recursive definition of these vectors given in \eqref{rec:rand_asympt_CAL_rec_full}, conditional on $\bw^\infty = w$  these vectors are random solely as functions of $\by_{1:t-1}^\infty \sim \mathbb{P}_{\infty}^{\theta^\star,w}$, with no other sources of stochasticity. Hence this conditional expectation is a summation over $y_{1:t-1} \in \mathbb{O}_{M+1}^{t-1}$:
    \begin{align} \label{eq:contrast_condexp}
        &\mathbb{E} \left \{ \mathbf{KL} \left [ \mbox{Cat} \left (\cdot | \bmu_t^\infty(\bw^\infty,\theta^\star) \right )|| \mbox{Cat} \left (\cdot | \bmu_t^\infty(\bw^\infty,\theta) \right ) |\bw^\infty = w \right ] \right \} \\ 
        &= 
        \sum_{y_{1:t-1}} \mathbb{P}_{\infty}^{\theta^\star,w}(\by^\infty_{1:t-1}=y_{1:t-1})\mathbf{KL} \left [ \mbox{Cat} \left (\cdot | \mu_t^\infty(w,\theta^\star) \right )|| \mbox{Cat} \left (\cdot | \mu_t^\infty(w,\theta) \right ) \right ]
    \end{align}
    where the unbolded $\mu_t^\infty(w,\theta^\star)$ and $\mu_t^\infty(w,\theta)$ (similarly to the process defined by Equation \eqref{eq:limiting_process}) are calculated with the recursions: 
\begin{equation}\label{rec:pathwise_CAL}
    \begin{aligned}
    &\pi_{0}^\infty(w,\theta) \coloneqq p_{0}(w,\theta),\\
    &\bar{\eta}_{t-1}^\infty(w,\theta) =  \int d(w,\widetilde{w},\theta)^\top \bar{\pi}_{t-1}^\infty(\widetilde{w},\theta) \Gamma(d \widetilde{w}),\\
    &\pi_{t|t-1}^\infty(w,\theta)  \coloneqq  \left [ \pi_{t-1}^\infty(w,\theta)^{\top} K_{\bar{\eta}_{t-1}^\infty(w,\theta)} (w, \theta)\right ]^\top, \\
    &\mu_{t}^\infty(w,\theta)  \coloneqq  \left [ \pi_{t|t-1}^\infty(w,\theta)^{\top} G(w,\theta) \right ]^\top,\\
    &\pi_{t}^\infty(w,\theta)  \coloneqq \pi_{t|t-1}^\infty(w,\theta) \odot \left \{ \left [   G(w,\theta) \oslash \left ( 1_M \mu_{t}^\infty(w,\theta)^\top \right ) \right ] y_{t}  \right \},
    \end{aligned}
\end{equation} 
where the dependence of various quantities on $y_{1:t}$ is not shown in the notation.

By properties of the KL-divergence we have that
    \begin{equation}
        \mathbf{KL} \left [ \mbox{Cat} \left (\cdot | \mu_t^\infty(w,\theta^\star) \right )|| \mbox{Cat} \left (\cdot | \mu_t^\infty(w,\theta) \right )\right]=0 \iff \mu_t^\infty(w,\theta)=\mu_t^\infty(w,\theta^\star),
    \end{equation}
    and
    \begin{equation}
        \mathbf{KL} \left [ \mbox{Cat} \left (\cdot | \mu_t^\infty(w,\theta^\star) \right )|| \mbox{Cat} \left (\cdot | \mu_t^\infty(w,\theta) \right )\right] >0  \iff \mu_t^\infty(w,\theta) \neq \mu_t^\infty(w,\theta^\star),
    \end{equation}
    which makes it clear that $\theta^\star \in \Theta^\star$, as already mentioned in \eqref{eq:datagenpar_maximise}. It then follows from Equation \eqref{eq:contrast_condexp} that if $\theta \in \Theta^\star$ then for 
    $\mathbb{P}_\infty^{\theta^\star,w}$-almost all paths ${y}_{1:t-1} \in 
    \mathbb{O}_{M+1}^{t-1}$ we must have $\mu_t^\infty(w,\theta)=\mu_t^\infty(w,\theta^\star)$. Now, considering the full contrast function we have by the tower rule:
    \begin{align}
        \mathcal{C}_T(\theta,\theta^\star)
        &\coloneqq -\sum_{t=1}^T \mathbb{E} \left \{ \mathbf{KL} \left [ \mbox{Cat} \left (\cdot | \bmu_t^\infty(\bw^\infty,\theta^\star) \right )|| \mbox{Cat} \left (\cdot | \bmu_t^\infty(\bw^\infty,\theta) \right ) \right ] \right \} \\
        &= -\sum_{t=1}^T \mathbb{E} \left \{ \mathbb{E} \left \{ \mathbf{KL} \left [ \mbox{Cat} \left (\cdot | \bmu_t^\infty(\bw^\infty,\theta^\star) \right )|| \mbox{Cat} \left (\cdot | \bmu_t^\infty(\bw^\infty,\theta) \right ) \right ]| \bw^{\infty}
        \right \} \right\}.
    \end{align}
    Here we are summing over $t\in [T]$ and taking the expectation of \eqref{eq:contrast_condexp} over $\bw^\infty \sim \Gamma$. It therefore follows that if $\theta \in \Theta^\star$ then for all $t\in [T]$, $\Gamma-$ almost all $w\in \mathbb{W}$,  and $\mathbb{P}_\infty^{\theta^\star,w}$-almost all paths ${y}_{1:t-1} \in 
    \mathbb{O}_{M+1}^{t-1}$,  we must have that $\mu_t^\infty(w,\theta)=\mu_t^\infty(w,\theta^\star)$.
    
    
    To complete the proof, recall that by Proposition \ref{prop:joint_rand_aymptoticCAL_distribution} if $\by_{1:T}\sim \mathbb{P}_\infty^{\theta^\star,w}$  we have the conditional distributions  $\by_t^\infty|\by_{1:t-1}^\infty,\bw^\infty \sim \mbox{Cat}\left ( \cdot | \bmu_{t}^\infty(\bw^\infty,\theta^\star) \right )$. Hence for $\Gamma-$almost all $w \in \mathbb{W}$,  $\mathbb{P}_\infty^{\theta^\star,w}$-almost all paths ${y}_{1:T} \in 
    \mathbb{O}_{M+1}^{T}$ and any $\theta_1,\theta_2 \in \Theta^\star$ we have 
    \begin{equation}
        \begin{split}
            \mathbb{P}_{\infty}^{\theta_1,w}({ \by}_{1:T}^\infty = y_{1:T} ) 
            &= 
            \prod_{t=1}^T \mathbb{P}_{\infty}^{\theta_1,w}({ \by}_{t}^\infty = y_t|{\by}_{1:t-1}^\infty=y_{1:t-1} )\\ 
            &=
            \prod_{t=1}^T \mbox{Cat}\left ( y_t | \mu_{t}^\infty(w,\theta_1) \right ) \\
            &=
            \prod_{t=1}^T \mbox{Cat}\left ( y_{t} | \mu_{t}^\infty(w,\theta^\star) \right ) \\
            &=
            \prod_{t=1}^T \mbox{Cat}\left ( y_{t} | \mu_{t}^\infty(w,\theta_2) \right ) \\
            &= 
            \prod_{t=1}^T \mathbb{P}_{\infty}^{\theta_2,w}({ \by}_{t}^\infty = y_t|{\by}_{1:t-1}^\infty=y_{1:t-1} ) \\
            &= \mathbb{P}_{\infty}^{\theta_2,w}({ \by}_{1:T}^\infty = y_{1:T} ),
        \end{split}
    \end{equation}
    where we used that $\theta \in \Theta^\star$ implies $\mu_{t}^\infty(w,\theta)=\mu_{t}^\infty(w,\theta^\star)$.  Assumption \ref{ass:HMM_support} ensures that the support of $\mathbb{P}_{\infty}^{\theta,w}$ does not depend on $\theta$, and hence $\mathbb{P}_{\infty}^{\theta_1,w} = \mathbb{P}_{\infty}^{\theta_2,w}$ for $\Gamma-$almost all $w \in \mathbb{W}$.
\end{proof}

\section{Experiments}

\subsection{Computational considerations} \label{sec:supp_computational_cost}

Algorithm \ref{alg:CAL} requires only $N$ repetitions of simple linear algebra operations on $M$-dimensional vectors and $M\times M$ matrices at each time step. The resulting computational cost is $\mathcal{O}(T N M^2 + T N \mathcal{C}_\eta(N))$, where $T$ and $N$ arise from recursive operations over time and individuals, $M^2$ accounts for vector-matrix operations, and $\mathcal{C}_\eta(N)$ is the cost of evaluating the $\eta$ function for a fixed population size $N$.

\begin{table}[httb!]
    \centering
    \caption{Memory and running time summary. $\mathcal{C}_\eta(N)$ is the model specific cost of evaluating $\eta$ for a fixed population size $N$. For the SMC, $P$ is the number of particles, while $\mathcal{C}_q(M,1)$ is the cost of computing the parameters of the categorical distribution $q$ which is used as a proposal (see \cite{rimella2022approximating} for an example). The memory requirement assumes that everything is stored over time steps.}
    \label{table:comp_cost}
    \begin{tabular}{llll}
    \hline
      & Memory & Running time\\
    \hline
    Data simulation & $\mathcal{O}(T N)$ & $\mathcal{O}(T N \mathcal{C}_{Cat}(M,1) + T N \mathcal{C}_\eta(N))$\\
    \hline
    SMC & $\mathcal{O}(T N P)$ & $\mathcal{O}(T N \mathcal{C}_{Cat}(M,P) + T N P \mathcal{C}_\eta(N) + T N \mathcal{C}_q(M,P))$\\
    CAL & $\mathcal{O}(T N M )$ & $\mathcal{O}(T N M^2 + T N \mathcal{C}_\eta(N))$\\
    \hline
    \end{tabular}
\end{table}

The cost $\mathcal{C}_\eta(N)$ is model-specific. For instance, for the motivating example in sections \ref{sec:motivating_example_first} and \ref{sec:motivating_example_second}, we have a cost of $\mathcal{C}_\eta(N)=\mathcal{O}(N)$ for the homogeneous-mixing case and a cost of $\mathcal{C}_\eta(N)=\mathcal{O}(N^2)$ for the heterogeneous-mixing scenario or other dense spatial interactions \citep{jewell2009bayesian,rimella2022inference}. Remark that the cost $\mathcal{C}_\eta(N)$ is inherent to the model, rather than a feature of the CAL, as even simulating data from the model would incur in a cost of the same order: $\mathcal{O}(T N \mathcal{C}_{Cat}(M,1) + T N \mathcal{C}_{\eta}(N))$, where $\mathcal{C}_{Cat}(M,1)$ is the cost of simulating a single draw from a categorical distribution with $M$ categories. Algorithm \ref{alg:CAL} requires computing probability vectors $\bpi_{n,t|t-1}$, $\bmu_{n,t}$, and $\bpi_{n,t}$ for each individual $n$ at every time step. If all these vectors are stored over $T$ time steps, the memory requirement is $\mathcal{O}( T N M)$. The memory requirement of storing the data is $\mathcal{O}(T N)$. The resulting computational cost is significantly cheaper than SMC, which would require many simulations of each individual. Table \ref{table:comp_cost} provides an SMC-CAL complexity comparison summary. Remark that if the user is only interested in the likelihood computation, it is possible to reduce the memory consumption of the SMC to $\mathcal{O}(NP)$, and similarly to $\mathcal{O}(NM)$ for the CAL.


\subsection{Hamiltonian Monte Carlo for homogeneous-mixing SIS} \label{sec:app_exp_HMC}

We use the parameter settings $p_0=0.01, \beta = 0.2, \bb_I = 0.5, \bb_S = 1.0, \gamma=0.1, \bb_R=-0.5, q_S=0.2, q_I=0.5, q_{Se}=0.9, q_{Sp}=0.95$. We treat $p_0,q_{Se},q_{Sp}$ as known. To create a warm start for our HMC sampler we consider $100$ random initializations, where each parameter is drawn from a standard Gaussian, and perform $1000$ steps of Adam optimization with a learning rate of $0.1$. Across the $100$ initializations, we choose the resulting parameter values with the highest CAL log-likelihood as a warm-start for our HMC sampler which we then run for $200000$ iterations. We use uninformative independent Gaussian priors each with mean $0$ and standard deviation $10$.

The learning rate is adapted every $1000$ iteration to fall within the acceptance range $[0.55,0.75]$. Specifically, if the acceptance range is lower than $0.55$ then for the next $1000$ iterations it is decreased by $35\%$, while if it is higher than $0.75$ it is increased by $35\%$, otherwise if it falls within the range is kept stable. We run the 201000 HMC iterations and then we choose only the last iterations in which the acceptance rate was stable. We get acceptance rates within the range $0.60-0.70$, which are close to the optimal acceptance rate of $0.65$ \citep{neal2012mcmc}. We further apply a thinning procedure that select $10000$ samples equally spaced in time (number of iterations).



\subsection{Gradient-based calibration for heterogeneously-mixing SIS models} \label{sec:app_exp_grad}

\paragraph{Model 1}

Here we provide all the details on Model 1. As initial infection probabilities, we consider:
\begin{equation}
    p_0(\bw_n) 
    = 
    \begin{bmatrix} 
    1-p_{0}\\ p_{0} 
    \end{bmatrix},
\end{equation}
where $p_0 \in [0,1]$. We consider an interaction term:
\begin{equation}
    \boeta_{n,t-1}
    =
    \frac{1}{N} \sum_{k \in [N]} \begin{bmatrix}
    0\\
    \exp\{\bc_{k}^\top \bb_I\} \frac{1}{\sqrt{2 \pi \phi^2} {\exp\left\{-\frac{\|\bl_n-\bl_{k}\|^2}{2\phi^2}\right\}}}
    \end{bmatrix}^\top \bx_{k,t-1},
\end{equation}
where $\phi>0$ and $\bb_I \in \mathbb{R}$ as we are considering a single covariate. The $\boeta_{n,t-1}$ is then used in the transition matrix:
\begin{equation}
K_{\boeta_{n,t-1}}(\bw_n) 
    = 
    \begin{bmatrix} 
    \exp{ \left (-h\beta\exp\{\bc_n^\top \bb_{S}\} \boeta_{n,t-1} - h\epsilon  \right ) }
    &
    1 - \exp{ \left (-h\beta\exp\{\bc_n^\top \bb_{S}\} \boeta_{n,t-1} - h \epsilon  \right ) }
    \\ 
    1 - \exp{ \left ( -h\gamma_n \right )}
    &
    \exp{ \left ( -h\gamma_n  \right )}
    \end{bmatrix},
\end{equation}
with $\log \gamma_n = \log \gamma + \bc_n^\top \bb_R$ and where $\beta,\gamma >0$ and $\bb_S, \bb_R \in \mathbb{R}$.

The observation model is given by the matrix:
$$
    G(\bw_n) 
    =
    \begin{bmatrix}
    1 - q_S 
    & 
    q_S q_{Sp}  
    & 
    q_S (1-q_{Sp})\\
    1 - q_I 
    & 
    q_I (1- q_{Se}) 
    & 
    q_I q_{Se}
    \end{bmatrix},
$$
where $q_S, q_{Se}, q_I, q_{Sp} \in [0,1]$.

Figure \ref{fig:spatial_inference} provides a graphical representation of the optimization of Model 1.

\begin{figure}[httb!]
    \centering
    \includegraphics[width=\linewidth]{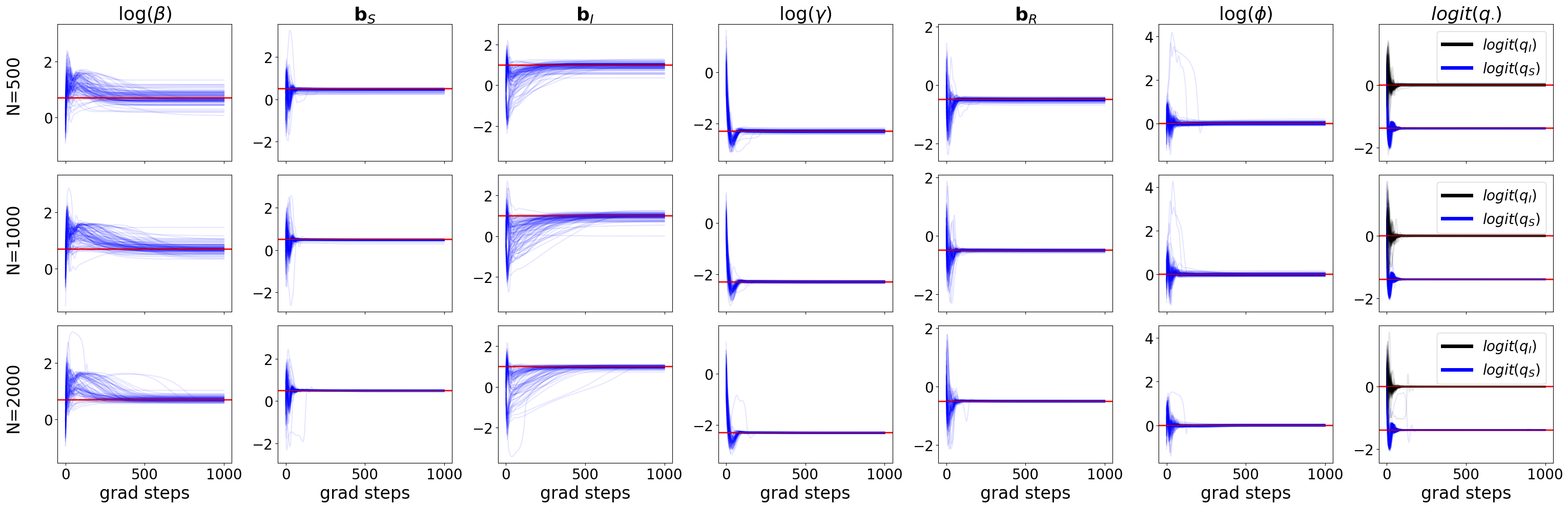}
    \caption{CAL parameters values during optimization for Model 1 from Section \ref{sec:exp_gradient} over the gradient steps of Adam and across different population sizes (from top to bottom). Lines refer to the best out of $10$ optimizations on different datasets. Blue and black colors are used in the same plot to distinguish across parameter components.}
    \label{fig:spatial_inference}
\end{figure}

\paragraph{Model 2}

Here we provide all the details on Model 2. Everything is as in Model 1 but the spatial location is now set to the centroid of the community the individual is in. As a consequence, there is also a computationally cheaper representation of $\boeta_{n,t-1}$, which exploits the fact that some individuals belong to the same community. Denote with ${\bom}^c_i$ the centroid of community $i$ and with $\ba_i$ the set of individuals within the community, i.e. $n \in \ba_i$ if $n$ is in the community $i$. We can then observe that for $n \in {\ba}_j$:
\begin{equation}
\begin{split}
    \boeta_{n,t-1}
    &=
    \frac{1}{N} \sum_{k \in [N]} \begin{bmatrix}
    0\\
    \exp\{\bc_{k}^\top \bb_I\} \frac{1}{\sqrt{2 \pi \phi^2}} {\exp\left\{-\frac{\|\bom_n-\bom_{k}\|^2}{2\phi^2}\right\}}
    \end{bmatrix}^\top \bx_{k,t-1}\\
    &=
    \frac{1}{N} \sum_i \sum_{k \in {\ba}_i} \begin{bmatrix}
    0\\
    \exp\{\bc_{k}^\top \bb_I\} \frac{1}{\sqrt{2 \pi \phi^2}} {\exp\left\{-\frac{\|\bom_n-\bom_{k}\|^2}{2\phi^2}\right\}}
    \end{bmatrix}^\top \bx_{k,t-1}\\
    &=
    \frac{1}{N} \sum_i \sum_{k \in {\ba}_i} \begin{bmatrix}
    0\\
    \exp\{\bc_{k}^\top \bb_I\} \frac{1}{\sqrt{2 \pi \phi^2}} {\exp\left\{-\frac{\|{\bom}^c_{j}-{\bom}^c_{i}\|^2}{2\phi^2}\right\}}
    \end{bmatrix}^\top \bx_{k,t-1},
\end{split}
\end{equation}
from which we observe that for any $n,k \in {\ba}_j$ we have $\boeta_{n,t-1}=\boeta_{k,t-1}$, hence we just need to compute the interaction term for each community. This reduces the computational cost of computing all $\boeta_{n,t-1}$ from $N^2$ to $N$ times the number of communities.

Figure \ref{fig:graph_inference} provides a graphical representation of the optimization of Model 2.

\begin{figure}[httb!]
    \centering
    \includegraphics[width=\linewidth]{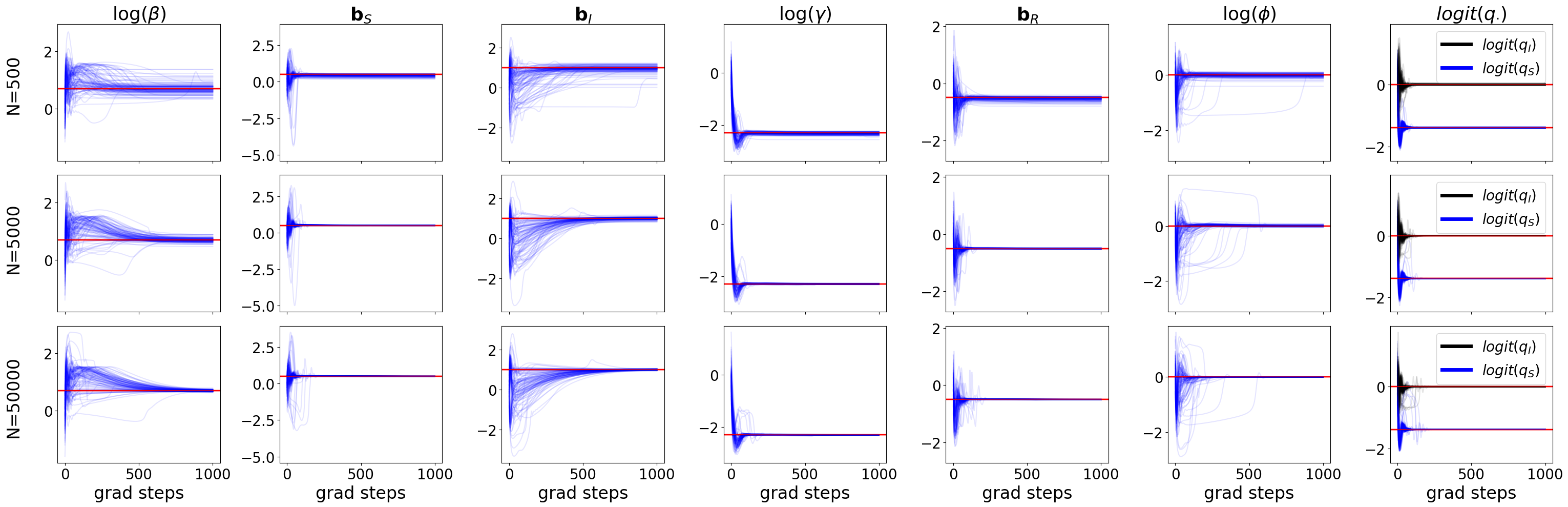}
    \caption{CAL parameters values during optimization for Model 2 from Section \ref{sec:exp_gradient} over the gradient steps of Adam and across different population sizes (from top to bottom). Lines refer to the best out of $10$ optimizations on different datasets. Blue and black colors are used in the same plot to distinguish across parameter components.}
    \label{fig:graph_inference}
\end{figure}

\subsection{Calibration and filtering for heterogeneously-mixing SIR} \label{sec:app_exp_mispec}

\paragraph{Well-specified model}

Here we provide all the details on the correctly specified model from Section \ref{sec:exp_mispec}. As initial infection probabilities, we consider:
\begin{equation}
    p_0(\bw_n) 
    = 
    \begin{bmatrix} 
    1 - p_0 \mathbb{I}\left( \bl_n \in \left[-\infty, 5 \right] \times \left[8, +\infty\right ] \right )\\ p_0 \mathbb{I}\left( \bl_n \in \left[-\infty,5\right] \times \left[8, +\infty\right ] \right )\\
    0
    \end{bmatrix},
\end{equation}
where $p_0 \in [0,1]$. We consider an interaction term:
\begin{equation}
    \boeta_{n,t-1}
    =
    \frac{1}{N} \sum_{k \in [N]} \begin{bmatrix}
    0\\
    \exp\{\bc_{k}^\top \bb_I\} \frac{\exp\left\{-\frac{\|\bl_n-\bl_{k}\|^2}{2\phi}\right\}}{\sqrt{2 \pi \phi}}
    \end{bmatrix}^\top \bx_{k,t-1},
\end{equation}
where $\phi>0$ and $\bb_I \in \mathbb{R}$ as we are considering a single covariate. The $\boeta_{n,t-1}$ is then used in the transition matrix:
\begin{equation}
\begin{split}
&K_{\boeta_{n,t-1}}(\bw_n) \\
&= 
    \begin{bmatrix} 
    \exp{ \left (-h\beta\exp\{\bc_n^\top \bb_{S}\} \boeta_{n,t-1} - h\epsilon  \right ) }
    &
    1 - \exp{ \left (-h\beta\exp\{\bc_n^\top \bb_{S}\} \boeta_{n,t-1} - h \epsilon  \right ) }
    &
    0
    \\ 
    0
    &
    \exp{ \left ( -h\gamma_n  \right )}
    &
    1 - \exp{ \left ( -h\gamma_n \right )}\\
    0
    &
    0
    &
    1
    \end{bmatrix},
\end{split}
\end{equation}
with $\log \gamma_n = \log \gamma + \bc_n^\top \bb_R$ and where $\beta,\gamma >0$ and $\bb_S, \bb_R \in \mathbb{R}$.

Half of the population is forced to be unobserved and misreporting is not allowed. We call $U$ the set of individuals that are always unobserved, the observation model is then given by:
$$
    G(\bw_n) 
    =
    \begin{bmatrix}
    1 - q_S\mathbb{I}(n \notin U) 
    & 
    q_S\mathbb{I}(n \notin U) 
    & 
    0
    &
    0\\
    1 - q_I\mathbb{I}(n \notin U) 
    & 
    0 
    & 
    q_I\mathbb{I}(n \notin U)
    &
    0\\
    1 - q_R\mathbb{I}(n \notin U) 
    & 
    0 
    & 
    0
    &
    q_R\mathbb{I}(n \notin U)
    \end{bmatrix},
$$
where $q_S, q_I,q_R \in [0,1]$.

\paragraph{Misspecified model}

The initial infection probabilities, the transition matrix, and the observation model are the same as for the well-specified model. However, the interaction term is:
\begin{equation}
    \boeta_{n,t-1}
    =
    \frac{1}{N} \sum_{k \in [N]} \begin{bmatrix}
    0\\
    \exp\{\bc_{k}^\top \bb_I\} \frac{1}{\sqrt{2 \pi \phi^2}}\exp\left\{-\frac{(\|\bom_n-\bom_{k}\|\mathbb{I}(\|\bom_n-\bom_{k}\| \neq 0) + \bar{\bl}_n \mathbb{I}(\|\bom_n-\bom_{k}\| = 0) )^2}{2\phi^2}\right\}
    \end{bmatrix}^\top \bx_{k,t-1}.
\end{equation}
as explained in the main paper. We can notice that the big advantage of this formulation, as for Model 2, is the computational cost indeed:
\begin{equation}
\begin{split}
    \boeta_{n,t-1}
    &=
    \frac{1}{N} \sum_{k \in [N]} \begin{bmatrix}
    0\\
    \exp\{\bc_{k}^\top \bb_I\} \frac{1}{\sqrt{2 \pi \phi^2}}\exp\left\{-\frac{(\|\bom_n-\bom_{k}\|\mathbb{I}(\|\bom_n-\bom_{k}\| \neq 0) + \bar{\bl}_n \mathbb{I}(\|\bom_n-\bom_{k}\| = 0) )^2}{2\phi^2}\right\}
    \end{bmatrix}^\top \bx_{k,t-1}\\
    &=
    \frac{1}{N} \sum_i \sum_{k \in \ba_i} \begin{bmatrix}
    0\\
    \exp\{\bc_{k}^\top \bb_I\} \frac{1}{\sqrt{2 \pi \phi^2}}\exp\left\{-\frac{(\|\bom_n-\bom_{k}\|\mathbb{I}(\|\bom_n-\bom_{k}\| \neq 0) + \bar{\bl}_n \mathbb{I}(\|\bom_n-\bom_{k}\| = 0) )^2}{2\phi^2}\right\}
    \end{bmatrix}^\top \bx_{k,t-1}\\
    &=
    \frac{1}{N} \sum_i \sum_{k \in \ba_i} \begin{bmatrix}
    0\\
    \exp\{\bc_{k}^\top \bb_I\} \frac{1}{\sqrt{2 \pi \phi^2}}\exp\left\{-\frac{(\|\bom_j^c - \bom_{i}^c\|\mathbb{I}(\|\bom_j^c - \bom_{i}^c\| \neq 0) + \bar{\bl}_j^c \mathbb{I}(\|\bom_j^c - \bom_{i}^c\| = 0) )^2}{2\phi^2}\right\}
    \end{bmatrix}^\top \bx_{k,t-1}\\
\end{split}
\end{equation}
where $\bom_i^c$ is the centroid of community $i$, $\bar{\bl}_i^c$ is the mean distance within community $i$, $\ba_i$ is the set of individuals within community $i$, and we assume $n \in \ba_j$.

The considered metrics for comparison are:
\begin{itemize}
    \item cross-entropy loss: 
    $
    -\frac{1}{N T} \sum_{t=1}^T \sum_{n \in [N]} \bx_{n,t}^\top \log (\bpi_{n,t});
    $
    \item accuracy: 
    $
    \left [ \frac{1}{N T} \sum_{t=1}^T \sum_{n \in [N]} \mathbb{I} \left ( \argmax_i \bx_{n,t} = \argmax_i \bpi_{n,t}\right ) \right ] \cdot 100 \%.
    $
\end{itemize}
We now define the baseline classifiers. To build a classifier we need to create a vector of probabilities, which represents the probability of estimating the different states, e.g. for the CAL this is $\pi_{n,t}$. Let us start with the ``Random'' classifier, here our vector of probabilities for estimating is: $\bpi_{n,t}^g = \mathbb{I}\left (\by_{n,t}^{(M+1)}=1 \right ) (1_3 \oslash 3) + \mathbb{I}\left (\by_{n,t}^{(M+1)}=0 \right) \by_{n,t}$, meaning that we estimate at random if the individual is unreported, otherwise we estimate with what is reported. ``Prev. uncertain'' and ``Prev. certain'' are more complicated and we define them via Algorithm \ref{alg:guess}. Here, if the individual is reported, we estimate the individual's state with what is reported, otherwise, we have a confidence parameter $g$ which tells us how confident we are with predicting the $n$th individual at $t$ with their latest observed state. If $g=0.34$ we have ``Prev. uncertain'', while if $g=0.99$ we have ``Prev. certain''.

\begin{algorithm}[t!]
\caption{Previous guess} \label{alg:guess}
    \begin{algorithmic}
    \Require $\bW, \bY_{1:T}, g$
    \State Initialize $\bpi^g_{n,0}$ and $\bpi^{guess}_{n,0}$ with $1_M \oslash M$ for all $n \in [N]$
    \For{$t \in 1,\dots,T$}
        \For{$n \in [N]$}
            \If{$\by_{n,t}^{(M+1)}=0$}
            \State $\bpi^g_{n,t} = \by_{n,t}^{(1:M)}$
            \State $\bpi_{n,t}^{guess} = g \odot \by_{n,t}^{(1:M)} + \frac{1-g}{M-1} \odot \left (1-\by_{n,t}^{(1:M)} \right )$
            \Else
            \State $\bpi^g_{n,t} = \bpi_{n,t-1}^{guess}$
            \State $\bpi_{n,t}^{guess} = \bpi_{n,t-1}^{guess}$
            \EndIf
        \EndFor
    \EndFor 
    \end{algorithmic}
\end{algorithm}

\subsubsection{Changing the parameters values}\label{sec:app_tutorial_py}

Suppose now that we want to consider different parameters values, for instance we want $q_R=0$. This can be easily done by looking at the tutorial\_SIR.ipynb file in the GitHub repository \href{https://github.com/LorenzoRimella/CAL}{LorenzoRimella/CAL} and by changing the configuration:
\begin{verbatim}
    parameters = {
    "prior_infection":tf.convert_to_tensor([1-0.5, 0.5, 0.0], dtype = tf.float32),
    "log_beta":tf.math.log(tf.convert_to_tensor([3.0], dtype = tf.float32)),
    "b_S":tf.convert_to_tensor([+0.5], dtype = tf.float32),
    "b_I":tf.convert_to_tensor([+1.0], dtype = tf.float32),
    "log_gamma":tf.math.log(tf.convert_to_tensor([0.05], dtype = tf.float32)),
    "b_R":tf.convert_to_tensor([-0.1], dtype = tf.float32),
    "log_phi":tf.math.log(tf.convert_to_tensor([1.5], dtype = tf.float32)),
    "log_epsilon":tf.math.log(tf.convert_to_tensor([0.0001], dtype = tf.float32)),
    "logit_prob_testing":logit(tf.convert_to_tensor([0.1, 0.2, 0.5], 
    dtype = tf.float32))
    }
\end{verbatim}
For the case $q_R=0$ the above would change to: 
\begin{verbatim}
    parameters = {
    "prior_infection":tf.convert_to_tensor([1-0.5, 0.5, 0.0], dtype = tf.float32),
    "log_beta":tf.math.log(tf.convert_to_tensor([3.0], dtype = tf.float32)),
    "b_S":tf.convert_to_tensor([+0.5], dtype = tf.float32),
    "b_I":tf.convert_to_tensor([+1.0], dtype = tf.float32),
    "log_gamma":tf.math.log(tf.convert_to_tensor([0.05], dtype = tf.float32)),
    "b_R":tf.convert_to_tensor([-0.1], dtype = tf.float32),
    "log_phi":tf.math.log(tf.convert_to_tensor([1.5], dtype = tf.float32)),
    "log_epsilon":tf.math.log(tf.convert_to_tensor([0.0001], dtype = tf.float32)),
    "logit_prob_testing":logit(tf.convert_to_tensor([0.1, 0.2, 0.0], 
    dtype = tf.float32))
    }
\end{verbatim}

Rerunning the simulation leads to similar conclusion to Section \ref{sec:exp_mispec}, with parameter estimates close to the truth and similar cross-entropy loss and accuracy, see Table \ref{tab:app_well_mis_accuracy}.

\begin{table}[httb!]
    \centering
    \caption{Cross-entropy loss (the lower the better) and accuracy (the higher the better) for the CAL well-specified and misspecified, along with some baselines. The predicted state for accuracy is the argmax of the probability vector.}
    \label{tab:app_well_mis_accuracy}
    \begin{tabular}{l|ccccc}
        Metric & Random & Prev. uncertain & Prev. certain & CAL \\
        \hline
        Cross-entropy & 1.1 & 1.1 & 1.86 & 0.34 \\
        Accuracy      & $34.85\%$  & $41.65\%$  & $41.65\%$ & $86.09\%$ \\
        \bottomrule
    \end{tabular}
\end{table}

\subsection{Comparing CAL with some baselines}\label{sec:app_exp_SMC}

In this section, we compare the run time and marginal likelihood values obtained by CAL against those from: SMC for individual-based models with approximate optimal proposals by \cite{rimella2022approximating}, where $\alpha$ controls the number of future observations included in the lookahead scheme, and Simulation Based Composite Likelihood (SimBa-CL)  \citep{rimella2025simulation}. The SMC proposed by \cite{rimella2022approximating} provides a suitable proxy for comparison as it uses proposal distribution that are informed by future observations avoiding particles degeneracy. It has been recently discovered that \cite{rimella2022approximating} likelihood estimator is biased due to the exclusion of the initial weights when performing calculations. However we expect this bias to disappear as $P$ increases.

We consider a homogeneous-mixing SIS inspired by \cite{ju2021sequential}. Here the initial distribution is given by a logistic regression on $\bc_n$ with parameters $\bb_0\in \mathbb{R}^2$:
\begin{equation}
    p_0(\bw_n) 
    = 
    \begin{bmatrix} 
    1-\frac{1}{1 + \exp(-\bc_n^\top \bb_0)}\\ 
    \frac{1}{1 + \exp(-\bc_n^\top \bb_0)}
    \end{bmatrix}.
\end{equation}
We consider an interaction term $\boeta_{n,t-1} = \frac{1}{N} \sum_{k \in [N]} \bx_{k,t-1}^{(2)}$ or equivalently:
\begin{equation}
    \boeta_{n,t-1} = \frac{1}{N} \sum_{k \in [N]} 
    \begin{bmatrix}
        0\\
        1
    \end{bmatrix}^\top \bx_{k,t-1},
\end{equation}
as we are considering a homogeneous-mixing case, which is then used in the transition matrix:
\begin{equation}
    K_{\eta_{n,t-1}}(\bw_n) 
    = 
    \begin{bmatrix} 
    \exp \left ( - h \frac{\boeta_{n,t-1} +\epsilon }{1 + \exp{(-\bb_{S}^\top \bc_n)}} \right )
    & 
    1 - \exp \left (  -h \frac{\boeta_{n,t-1} +\epsilon }{1 + \exp{(-\bb_{S}^\top \bc_n)}} \right )\\
    1- \exp \left ( - \frac{h }{1 + \exp{(-\bb_{R}^\top \bc_n)}}\right )
    & 
    \exp \left ( - \frac{h }{1 + \exp{(-\bb_{R}^\top \bc_n)}}\right )
    \end{bmatrix},
\end{equation}
where $\bb_S,\bb_I \in \mathbb{R}^2$ and $\epsilon > 0$ and . The observation model is then given by:
\begin{equation}
    G(\bw_n) 
    =
    \begin{bmatrix}
    1 - q_S 
    & 
    q_S  
    & 
    0 \\
    1 - q_I 
    & 
    0 
    & 
    q_I
    \end{bmatrix},
\end{equation}
with $q_S,q_I \in [0,1]$. We consider a population of $1000$ individuals and a time horizon of $100$, with parameters set to $\bb_0=[-\log(100-1), 0]^\top, \epsilon=0.001, \bb_S=[-1, 2]^\top, \bb_R=[-1, -1]^\top, q_S = 0.6, q_I = 0.4, q_{Se} = 1.0, q_{Sp} = 1$. 

\begin{table*}[httb!]
    \centering
    \caption{Log-likelihood means and standard deviations for the SIS model. We denote \cite{rimella2022approximating}  with $\dag$, with $\alpha$ being the number of future observations included in the lookahead scheme. Log-likelihood results are averages and standard deviations over $100$ runs. Running times are reported for a single run and as averages across particles. }
    \label{tab:N1000_SIS}
    \resizebox{\textwidth}{!}{
    \begin{tabular}{l|ccc|c}
     Number of particles $P$                      & 512               & 1024              & 2048              & Time (sec)\\
    \midrule
    $\dag$ with $\alpha=5$              & -79551.92 (1.79)  & -79552.24 (1.6)   & -79552.81 (1.57)  & 3.78s \\
    $\dag$ with $\alpha=10$             & -79551.9 (1.81)   & -79552.22 (1.47)  & -79553.01 (1.56)  & 5.61s \\
    SimBa-CL           & -79612.74 (3.4)   & -79612.31 (2.37)  & -79612.34 (1.55)  & 1.03s \\
    \midrule
    CAL           &    &   &   -79550.69 & 1.93s\\
    CAL jit\_compiled           &    &   &   -79550.69 & 0.003s\\
    \bottomrule
    \end{tabular}
    }
\end{table*}

We consider $N=1000, T=100$, simulate from the model and, for that one realization of the data, estimate mean and standard deviation of the log-likelihood at the DGP for each method over $100$ runs, the CAL requires a single run as it is a deterministic algorithm. The results are reported in Table \ref{tab:N1000_SIS}. As expected, the method by \cite{rimella2022approximating} with the highest $\alpha$ has the lowest variance, but it is computationally intensive as a single run requires more than 5s. SimBa-CL performs well computationally but it is more biased. The CAL is the fastest and the log-likelihood estimate is close to the one from \cite{rimella2022approximating}, with a running time even dropping to 0.003s when considering just-in-time compilation, which is, as explained in Section \ref{sec:exper}, straightforward for the CAL.

The SEIR scenario is significantly more challenging as some transitions are not allowed, making the SMC more prone to particle impoverishment and degeneracy. In our SEIR we consider a homogeneous-mixing individual-based model with an initial distribution as in the aforementioned SIS, but an additional zero probability of being assigned to $E,R$ at the beginning of the epidemic. The model is again inspired by \cite{ju2021sequential}, now the initial distribution is:
\begin{equation}
    p_0(\bw_n) = 
    \begin{bmatrix}
         1- \frac{1}{1+\exp(-\bb_0^\top \bc_n)} \\
         0 \\
         \frac{1}{1+\exp(-\bb_0^\top \bc_n)} \\
         0
    \end{bmatrix},
\end{equation}
where $\bb_0 \in \mathbb{R}^2$. As we are again considering a homogeneous-mixing scenario the interaction term is:
\begin{equation}
    \boeta_{n,t-1} 
    = 
    \frac{1}{N} \sum_{k \in [N]} 
    \begin{bmatrix}
        0\\
        0\\
        1\\
        0
    \end{bmatrix}^\top \bx_{k,t-1},
\end{equation}
and used in the stochastic transition matrix:
\begin{equation}
\begin{split}
    &K_{\eta_{n,t-1}}(\bw_n)= \\
    &
    \begin{bmatrix} 
    \exp \left ( \frac{-h \boeta_{n,t-1} - h\epsilon }{1 + \exp{(-\bb_{S}^\top \bc_n)}} \right )
    & 
    1- \exp \left ( \frac{-h \boeta_{n,t-1} -h \epsilon }{1 + \exp{(-\bb_{S}^\top \bc_n)}} \right )& 0 & 0\\ 
    0 & \exp(-h \rho) & 1 - \exp(-h \rho) & 0 \\
    0 & 0 & \exp \left ( \frac{-h}{1 + \exp{(-\bb_{R}^\top \bc_n)}} \right )
    & 
    1- \exp \left ( \frac{-h}{1 + \exp{(-\bb_{R}^\top \bc_n)}} \right )\\
     0 & 0 & 0 & 1
    \end{bmatrix},
\end{split}
\end{equation}
with $\bb_S,\bb_I \in \mathbb{R}^2, \epsilon > 0$ and $\rho \in \mathbb{R}_+$. The observation matrix is:
\begin{equation}
    G(\bw_n) 
    =
    \begin{bmatrix}
    1 - q_S 
    & 
    q_S  
    & 
    0 & 0 & 0\\
    1 - q_E 
    & 
    0 
    & 
    q_E 
    & 0 & 0\\
    1 - q_I 
    & 
    0 
    & 
    0
    & 
    q_I 
    & 0 \\
    1 - q_R
    & 
    0 
    & 0 & 0
    & 
    q_R
    \end{bmatrix},
\end{equation}
with $q_S, q_E ,q_I, q_R \in [0,1]$.


\begin{table*}[httb!]
    \centering
    \caption{Log-likelihood means and log-likelihood standard deviations for the individual-based SEIR model with $N=1000$. We denote \cite{rimella2022approximating} with $\dag$, with $\alpha$ being the number of future observations included in the lookahead scheme. Log-likelihood results are averages and standard deviation over $100$ runs. Running times are reported for a single run. }
    \begin{tabular}{lcccc}
     Number of particles $P$                  & 512              & 1024             & 2048             & Time (sec)\\
    \hline
    $\dag$ with $\alpha=5$               & -43447.56 (52.04) & -43419.52 (51.08) & -43391.0 (52.41)  & 4.44s \\
     $\dag$ with $\alpha=20$              & -43004.55 (5.38)  & -43001.9 (4.65)   & -42999.76 (3.7)   & 11.08s \\
     $\dag$ with $\alpha=50$             & -42999.93 (3.44)  & -42998.13 (2.72)  & -42996.74 (2.39)  & 20.88s \\
     SimBa             & -43683.85 (9.54)  & -43683.67 (7.35)  & -43683.76 (5.16)  & 1.25s \\
    \hline
    CAL           &    &   & -43454.97 & 1.89s\\
    CAL jit\_compiled          &    &   & -43454.97 & 0.003s\\
    \hline
    \end{tabular}
    \label{tab:N1000_SEIR}
\end{table*}

As for the SIS scenario, we consider a population of $1000$ individuals and a time horizon of $100$. We set $\bb_0=[-\log(100-1), 0]^\top,\epsilon=0.001,\bb_S=[-1, 2]^\top, \rho=0.2, \bb_R=[-1, -1]^\top, q_S =0, q_E =0, q_I = 0.4, q_R = 0.6$ and we simulate from the model. The log-likelihood mean and standard deviation are then estimated over multiple runs. As expected, the method by \cite{rimella2022approximating} requires several future observations to achieve low variance, which implies a significant increase in the running time. SimBa-CL is computationally more efficient, but it shows higher variance and a significant bias. The CAL is closer to \cite{rimella2022approximating} with $\alpha=50$ compared to the other baselines. Again, if just-in-time compilation is considered, the CAL runs in about 0.003s.

\subsection{Comparing CAL with SMC} \label{sec:app_exp_SMC_only}

For the experiment of Section \ref{sec:exp_SMC_only} in the main manuscript, we use the parameter settings $p_0=0.01, \beta = 0.2, \bb_I = 0.3, \bb_S = -0.3, \gamma=0.1, \bb_R=0.2, q_S=0.2, q_I=0.5, q_{Se}=0.9, q_{Sp}=0.95$. 

We consider a batched Block APF to reduce memory consumption. Our implementation loads everything into memory to exploit GPU parallel computing, resulting in a memory blow-up when resampling is performed on all the individuals. Switching from an algorithm that is fully parallelized over both individuals and particles to a sequential version where parallelization is applied only within batches offers a trade-off between memory usage and running time. 

In the batched Block APF, the particles are always batched into four subgroups: instead of resampling $P$ particles in parallel, we sequentially sample $\frac{P}{4}$ particles and then concatenate the resulting samples. Individuals are batched according to the population size $N$. Specifically, there is no batching for $N = 10$; a batching size of $10$ for $N = 100$; a batching size of $100$ for $N = 1000$; and again a batching size of $100$ for $N = 10000$.

From Table \ref{tab:SMC_comparison_table} it appears that when $N,P\to \infty$ the (batched) Block APF has a similar asymptotic behavior to the CAL as $N\to \infty$. We therefore provide a graphical illustration of this behavior in Figure \ref{fig:smc_boxplot}. It can be observed that as $N$ increases the SMC methods require more and more particles to control the variance. Moreover, for $N = 1000$ and $N = 10000$, we notice that when $P$ increases the (batched) Block APF gets closer and closer to CAL. We expect this to be a consequence of the large-population limit of the model and the resulting decoupling of the individuals. Indeed, the CAL becomes exact as $N$ goes to infinity, and we expect the Block APF to have similar properties as it considers a similar approximation. This perspective differs from that taken by \cite{rebeschini2015can}, where the quality of the block particle filter was measured on coarser and coarser partitions rather than increasing dimension, and may be of independent interest.

\begin{figure}[httb!]
    \centering
    \includegraphics[width=0.85\linewidth]{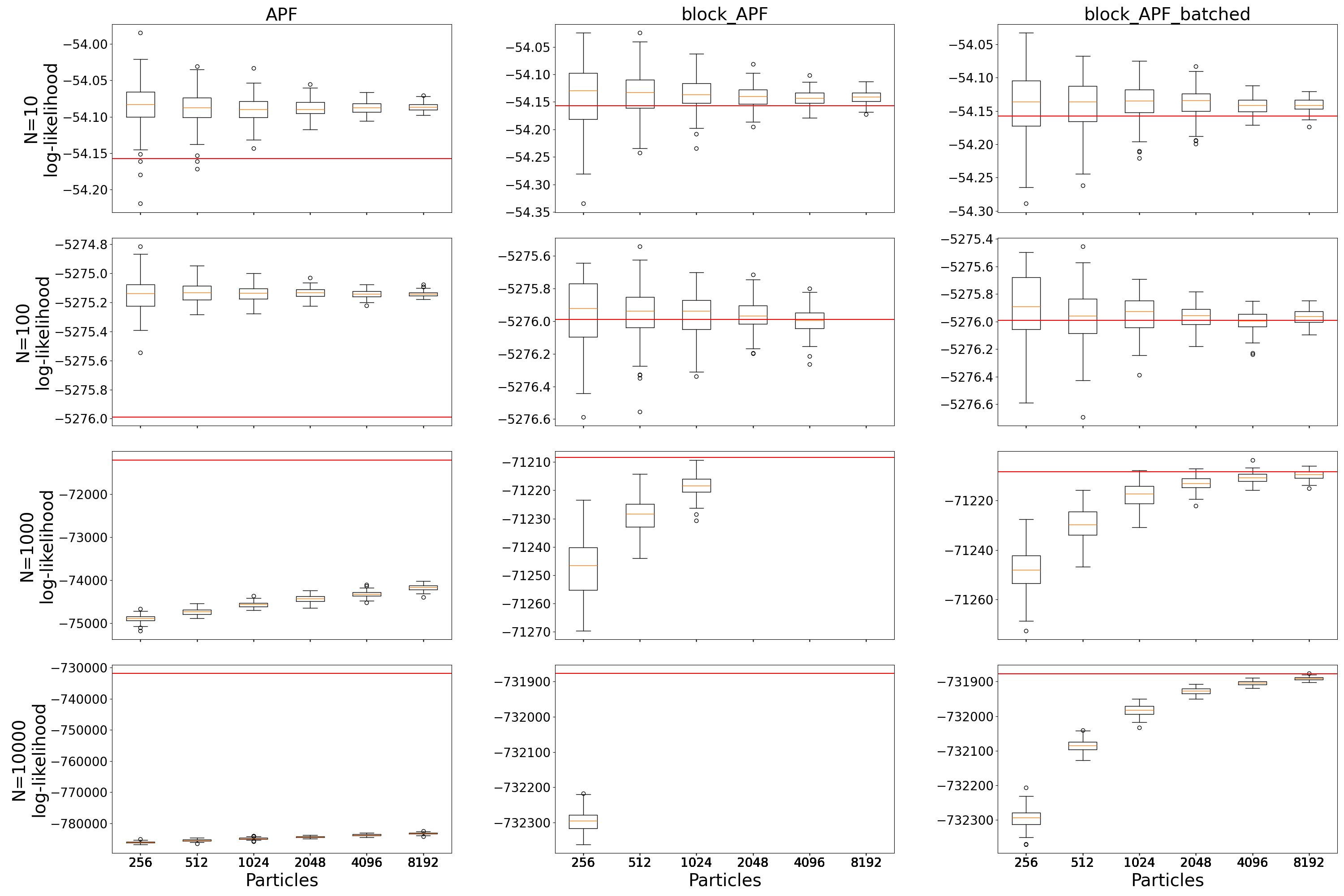}
    \caption{Log-likelihood boxplots for APF, Block APF, and batched Block APF when $N$ and $P$ increases for the SIS model from Section \ref{sec:motivating_example_first}. The CAL log-likelihood is reported as a horizontal red line.}
    \label{fig:smc_boxplot}
\end{figure}

\begin{table*}[httb!]
    \centering
    \caption{Log-likelihood means and standard deviations, over $100$ runs, for the SIS model from \cite{ju2021sequential}. Running times are reported in seconds for a single run and as averages across the $100$ runs.}
    \label{tab:SMC_comparison_table_ju}
    \resizebox{0.85\textwidth}{!}{
    \begin{tabular}{ll|llll|llllll}
    \hline
     N     & P    & CAL        & time(s) & CAL compiled & time(s) & APF                & time(s) & Block APF        & time(s) & batched Block APF & time(s) \\
     \hline
    10    & 256  & -67.13     & 1.354 & -67.13       & 0.001 & -67.01(0.03)       & 0.11  & -67.1(0.05)       & 0.13 & -67.11(0.06)      & 2.43    \\
     10    & 512  &            &       &              &       & -67.01(0.02)       & 0.12  & -67.1(0.03)       & 0.13 & -67.1(0.04)       & 2.42    \\
     10    & 1024 &            &       &              &       & -67.01(0.01)       & 0.11  & -67.1(0.02)       & 0.13 & -67.1(0.03)       & 2.41    \\
     10    & 2048 &            &       &              &       & -67.01(0.01)       & 0.12  & -67.1(0.02)       & 0.13 & -67.1(0.02)       & 2.41    \\
     10    & 4096 &            &       &              &       & -67.01(0.01)       & 0.12  & -67.1(0.01)       & 0.16 & -67.1(0.01)       & 2.41    \\
     10    & 8192 &            &       &              &       & -67.01(0.0)        & 0.12  & -67.1(0.01)       & 0.31 & -67.1(0.01)       & 2.31    \\
     \hline
     100   & 256  & -7101.4    & 2.128 & -7101.4      & 0.001 & -7107.8(4.94)      & 1.12  & -7102.37(1.82)    & 1.26 & -7102.82(1.72)    & 23.53   \\
     100   & 512  &            &       &              &       & -7105.34(3.21)     & 1.08  & -7102.38(1.3)     & 1.31 & -7102.25(1.24)    & 23.0    \\
     100   & 1024 &            &       &              &       & -7102.46(2.92)     & 1.11  & -7101.92(0.9)     & 1.46 & -7102.02(0.94)    & 23.02   \\
     100   & 2048 &            &       &              &       & -7100.6(2.37)      & 1.11  & -7101.78(0.61)    & 2.35 & -7101.7(0.67)     & 23.2    \\
     100   & 4096 &            &       &              &       & -7099.73(1.98)     & 1.14  & -7101.6(0.5)      & 5.93 & -7101.61(0.53)    & 23.26   \\
     100   & 8192 &            &       &              &       & -7099.1(1.47)      & 1.26  & Out of memory     &      & -7101.57(0.33)    & 37.83   \\
     \hline
     1000  & 256  & -82562.3   & 2.207 & -82562.3     & 0.001 & -84153.91(38.25)   & 1.17  & -82584.62(7.16)   & 1.47 & -82582.9(7.13)    & 25.75   \\
     1000  & 512  &            &       &              &       & -84043.09(35.49)   & 1.22  & -82573.03(4.79)   & 2.09 & -82573.19(4.48)   & 25.43   \\
     1000  & 1024 &            &       &              &       & -83963.92(37.44)   & 1.31  & -82567.92(3.23)   & 4.51 & -82567.3(2.98)    & 25.75   \\
     1000  & 2048 &            &       &              &       & -83869.86(33.91)   & 1.47  & Out of memory     &      & -82565.28(2.24)   & 30.94   \\
     1000  & 4096 &            &       &              &       & -83792.78(28.68)   & 1.86  & Out of memory     &      & -82563.78(1.88)   & 66.6    \\
     1000  & 8192 &            &       &              &       & -83721.4(27.97)    & 2.71  & Out of memory     &      & -82562.98(1.16)   & 208.23  \\
     \hline
     10000 & 256  & -822379.25 & 2.435 & -822379.25   & 0.001 & -847562.6(144.4)   & 1.44  & -822586.25(21.59) & 3.84 & -822587.7(19.98)  & 217.59  \\
     10000 & 512  &            &       &              &       & -847122.5(152.59)  & 1.94  & Out of memory     &      & -822482.56(13.75) & 218.71  \\
     10000 & 1024 &            &       &              &       & -846751.25(141.99) & 3.01  & Out of memory     &      & -822429.75(10.21) & 220.13  \\
     10000 & 2048 &            &       &              &       & -846343.2(130.99)  & 5.08  & Out of memory     &      & -822405.5(7.72)   & 283.91  \\
     10000 & 4096 &            &       &              &       & -845975.6(137.44)  & 9.7   & Out of memory     &      & -822393.2(4.76)   & 676.48  \\
     10000 & 8192 &            &       &              &       & -845647.8(132.39)  & 19.12 & Out of memory     &      & -822385.6(3.37)   & 2105.17 \\
    \hline
    \end{tabular}
    }
\end{table*}

\begin{figure}[httb!]
    \centering
    \includegraphics[width=0.85\linewidth]{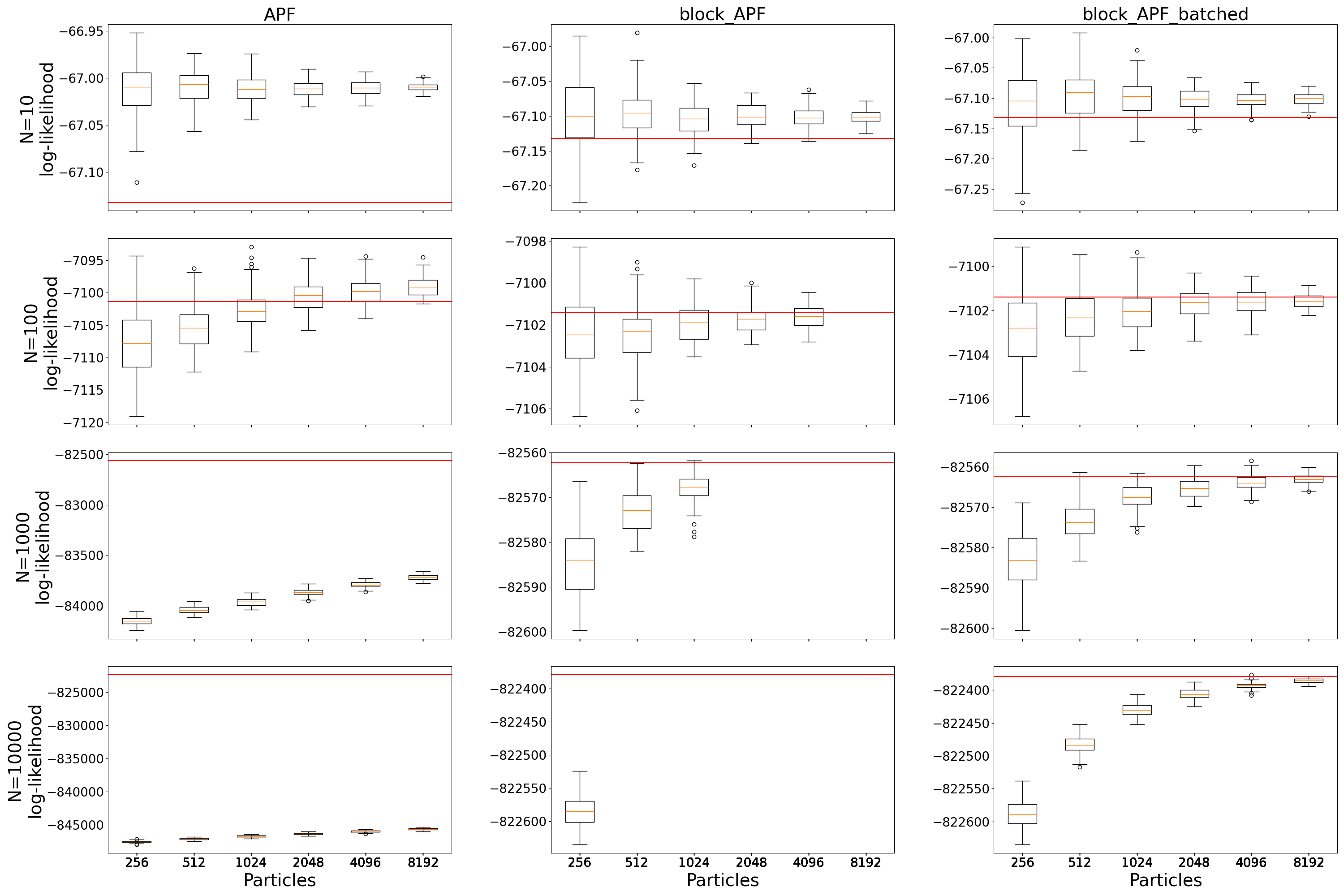}
    \caption{Log-likelihood boxplots for APF, Block APF, and batched Block APF when $N$ and $P$ increases for the SIS model by \cite{ju2021sequential}. The CAL log-likelihood is reported as a horizontal red line.}
    \label{fig:smc_boxplot_ju}
\end{figure}

We also replicated the same experiment under the \cite{ju2021sequential} model from the previous subsection with parameters set to $\bb_0=[-\log(100-1), 0]^\top, \epsilon=0.001, \bb_S=[-1, 2]^\top, \bb_R=[-1, -1]^\top, q_S = 0.6, q_I = 0.4, q_{Se} = 1.0, q_{Sp} = 1$. We report numerical results in Table \ref{tab:SMC_comparison_table_ju} and a graphical representation in Figure \ref{fig:smc_boxplot_ju}.

\subsection{2001 UK foot-and-mouth disease outbreak}\label{sec:app_exp_FM}

 \paragraph{Local authorities meta-population model.} For the local authorities we consider \cite{locauto}, reporting digital vector boundaries of the UK's local authority districts in December 2023. We did not find any open-source digital vector boundaries from 2001, which would have been ideal. Local authorities with less than five farms were excluded from the study, e.g. London and Birmingham.   The farm-specific covariates are then $\bw_n= [\bom_n, \bar{\bl}_n, \bc_n]$, where $\bom_n$ is the centroid (in EPSG:27700, the projected coordinate system for the UK) of the local authority individual $n$ is assigned to, $\bar{\bl}_n$ is the mean-distance (in km) across farms within the local authority, and $\bc_n$ is a bi-dimensional vector containing the log-number of cattle and the log-number of sheep. The components of $\bom_n$ are further divided by $1000$ so when computing Euclidean distances across local authorities the resulting distances are in Km. Observe that $\bc_n$ is still at an individual-level, while we have aggregated the spatial component.

\paragraph{Model.} We consider a heterogeneous-mixing individual-based SIR model, where transitions from $S$ to $R$ are also allowed, representing the culling\slash quarantine of healthy farms to create containment zones around infected farms. We consider two interaction terms:
\begin{align} \label{eq:int1}
    &\boeta_{n,t-1}^I
    =
    \frac{1}{N} \sum_{k \in [N]} \begin{bmatrix}
    0\\
    \frac{\exp\{\bc_{k}^\top \bb_I\}}{\sqrt{2 \pi \phi^2}}\exp\left\{-\frac{\|\bom_n-\bom_{k}\|^2 \mathbb{I}(\|\bom_n-\bom_{k}\| \neq 0) + \bar{\bl}_n^2 \mathbb{I}(\|\bom_n-\bom_{k}\| = 0)}{2\phi^2}\right\}\\
    0
    \end{bmatrix}^\top \bx_{k,t-1},\\
    &\boeta_{n,t-1}^C \label{eq:int2}
    =
    \frac{1}{N} \sum_{k \in [N]} \begin{bmatrix}
    0\\
    \frac{1}{\sqrt{2 \pi \psi^2}}\exp\left\{-\frac{\|\bom_n-\bom_{k}\|^2 \mathbb{I}(\|\bom_n-\bom_{k}\| \neq 0) + \bar{\bl}_n^2 \mathbb{I}(\|\bom_n-\bom_{k}\| = 0)}{2\psi^2}\right\}\\
    0
    \end{bmatrix}^\top \bx_{k,t-1},
\end{align}
where $\bb_I \in \mathbb{R}^2, \phi>0, \psi >0$, and we also define:
\begin{equation} \label{eq:int3}
\boeta_{n}^0 \coloneqq \frac{1}{N} \sum_{k \in [N]}\exp\{\bc_{k}^\top \bb_I\} \frac{\exp\left\{-\frac{ \|\bom_n-\bom_{k}\|^2 \mathbb{I}(\|\bom_n-\bom_{k}\| \neq 0) + \bar{\bl}_n^2 \mathbb{I}(\|\bom_n-\bom_{k}\| = 0) }{2\phi^2}\right\}}{\sqrt{2 \pi \phi^2}} \tau,
\end{equation}
where $\tau >0$ can be interpreted as the probability of being infected before $t=0$. We consider an initial distribution:
\begin{equation}
    p_0(\bw_n)
    =
    \begin{bmatrix}
        \exp\left ( -\beta \exp \{ \bc_n^\top \bb_S \} \boeta_{n}^0 -\epsilon \right )\\
        1-\exp\left ( -\beta \exp \{ \bc_n^\top \bb_S \} \boeta_{n}^0 -\epsilon \right )\\
        0
    \end{bmatrix}.
\end{equation}
We then define the culling\slash quarantine probability of farm $n$ by $P^C_n \coloneqq 1-\exp \left ( - h \rho \boeta_{n,t-1}^C \right )$ where $\rho>0$, the infection probability of a non-culled\slash quarantine farm by $P^I_n \coloneqq 1-\exp\left ( - h \beta \exp \{ \bc_n^\top \bb_S \} \boeta_{n,t-1}^I -h \epsilon\right )$ where $\beta>0$ and $\bb_S \in \mathbb{R}^2$, and the recovery probability $P^R_n \coloneqq 1- \exp \left ( -h \gamma \right )$ where $\gamma >0$. The stochastic transition matrix is then given by:
\begin{equation}
\begin{split}
    &K_{\eta_{n,t-1}^I,\eta_{n,t-1}^R}(\bw_n)
    =
    \begin{bmatrix} 
    (1- P^C_n)(1 - P^I_n)
    &
    (1- P^C_n)P^I_n
    &
    P^C_n\\
    0
    &
    (1- P^C_n)(1 - P^R_n)
    &
    P^C_n + (1 - P^C_n) P^R_n\\
     0 & 0 & 1
    \end{bmatrix}.
\end{split}
\end{equation}
For the observation model, we do not allow for misreporting and we assume only infected are observable:
\begin{equation}
    G(\bw_n) 
    =
    \begin{bmatrix}
    1 & 0 & 0 & 0\\ 
    1 - q_I & 0 & q_I & 0 \\
    1 & 0 & 0 & 0
    \end{bmatrix}.
\end{equation}
The aforementioned individual-based model has two interaction terms and an interaction term inside the initial distribution and does not strictly belong to the class of models that satisfy our assumptions. Nevertheless, the terms \eqref{eq:int1},\eqref{eq:int2},\eqref{eq:int3} follow a law of large numbers, as they are averages, hence our saturation theory can be developed for such individual-based models with multidimensional interaction terms. Consistency then follows using the same techniques of Section \ref{sec:theory_main}.

\paragraph{Maximum CAL estimation.}  We consider $100$ different random initializations of $\tau,\beta,\bb_S,$\\  $\bb_I,\phi,\gamma,$$\rho,\psi,\epsilon,q_I$ and run Adam optimizer for each of them for $10000$ gradients steps using auto-differentiation in TensorFlow. We then select the one with the highest log-CAL. We then perform a sequential optimization with Adam where every $50000$ iteration we change the value of the step size and reset the Adam optimizer. Specifically, we consider the step sizes $0.1, 0.01, 0.01, 0.001,0.001, $\\$ 0.001, 0.00001, 0.00001$, resulting in a total of $400000$ gradient steps.

\paragraph{Credible intervals.} We then run an HMC using the maximum CAL estimation as a warm start. We considered a step size of $0.00274625$, which was obtained via pre-tuning on the acceptance rate, 10 leapfrog steps, a vague prior for $\beta,\bb_S,\bb_I,\phi,\epsilon,q_I$ given by a Gaussian with mean 0 and standard deviation 100, and an informative prior on $\tau,\gamma,\rho,\psi$ given by Gaussians with means $12.5,3,12.5,1.2$ and standard deviation $0.25$. The informative prior choice was needed to improve mixing of the HMC. The mean of the prior was set to the maximum CAL estimation we used as a warm start, while the standard deviation was tuned to ensure proper mixing. Posterior means and $95\%$ credible interval are reported in Table \ref{tab:CI_mean}

\begin{table}
    \centering
    \begin{tabular}{l|llllll}
    \hline
    Param. & $\log(\tau)$ & $\log(\beta)$ & $\bb_S^{(1)}$ & $\bb_S^{(2)}$ & $\bb_I^{(1)}$ & $\bb_I^{(2)}$ \\
    \hline
     Mean & -12.52 & 4.83 & 0.37 & 0.11 & 0.23 & 0.09\\
     95\% CI & [-13.00,-11.97] & [4.49,5.14] & [0.35,0.40 ] & [0.10,0.13] & [0.18,0.30] & [0.05,0.12]\\
     \hline
     \end{tabular}

    \vspace{0.5cm}
     
     \begin{tabular}{l|llllll}
     \hline
     Param. & $\log(\phi)$           & $\log(\gamma)$ & $\log(\rho)$ & $\log(\psi)$ & $\log(\epsilon)$ & $logit(q_I)$ \\
    \hline
     Mean & 3.36 & 3.0 & 12.52 & 1.25 & -12.83 & 0.67 \\
     95\% CI & [3.32,3.40] & [2.47,3.49] & [12.08,12.99] & [0.92,1.44] & [-13.14,-12.54] & [0.51,0.85] \\
    \hline
    \end{tabular}
    \caption{HMC posterior means and $95\%$ credible intervals on the foot-and-mouth model.}
    \label{tab:CI_mean}
\end{table}

\begin{figure}[httb!]
    \centering
    \includegraphics[width=0.9\linewidth]{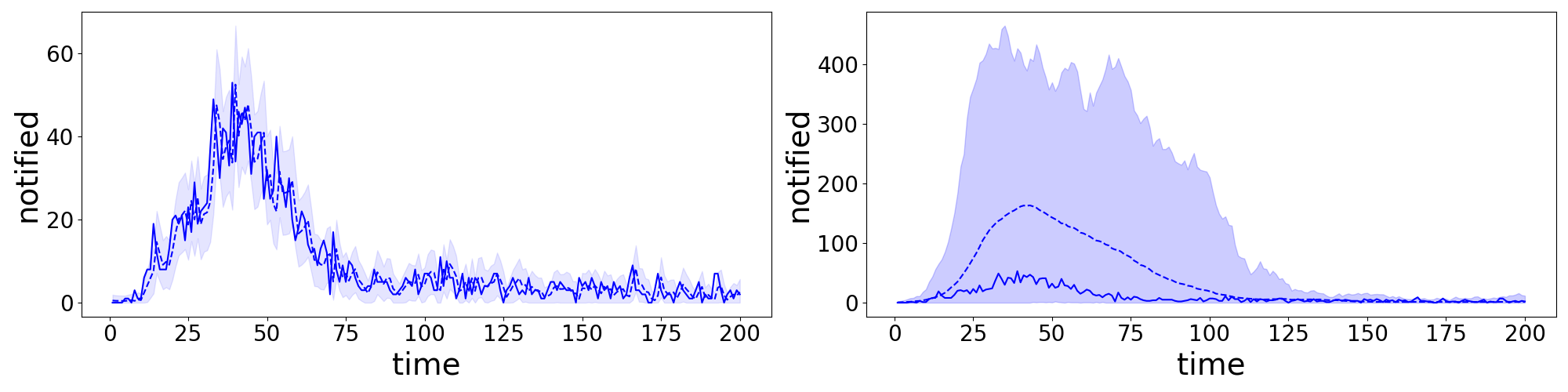}
    \caption{On the left, the CAL filter mean (dashed blue line) with $95\%$ bands based on the CAL filter variance. On the right, the Monte Carlo mean (dashed blue line) and $2.5\%$, $97.5\%$ quantile of the predictive distribution. Solid lines are used for the observations.}
    \label{fig:FM_predictive}
\end{figure}

 In Figure \ref{fig:FM_predictive} we report the CAL filter and the predictive distribution under the optimized parameters for the total number of notified. The CAL filter is simply obtained by running the CAL on the optimized parameters and by using the Categorical approximation to estimate mean and variance of $\sum_{n \in [N]} \by_{n,t}$. The predictive distribution is obtained via multiple simulations from the model with the DGP set to the optimized parameters, which are then used to get Monte Carlo estimates of mean and quantiles. In both cases we can observe that we get good coverage of the observations, showing that the optimized parameters are valid estimates.

The spatial location of a single farm cannot be disclosed for privacy, hence the left-hand side of Figure \ref{fig:FM_spread} and the whole Figure \ref{fig:FM_spat_track} cannot be reproduced. However, we are allowed to disclose information about local authorities and, in particular, the spread of the disease within the local authorities, which can be found in our GitHub repository. The equivalent of Figure \ref{fig:FM_spat_track} for the local authorities is Figure \ref{fig:FM_spat_track_locaut}. Even though this aggregated version is less informative compared to the fully spatial one, we can still recognize the same patterns in terms of the disease's spread of the infection and distribution of the removed.

\begin{figure}[httb!]
    \centering
    \includegraphics[width=\linewidth]{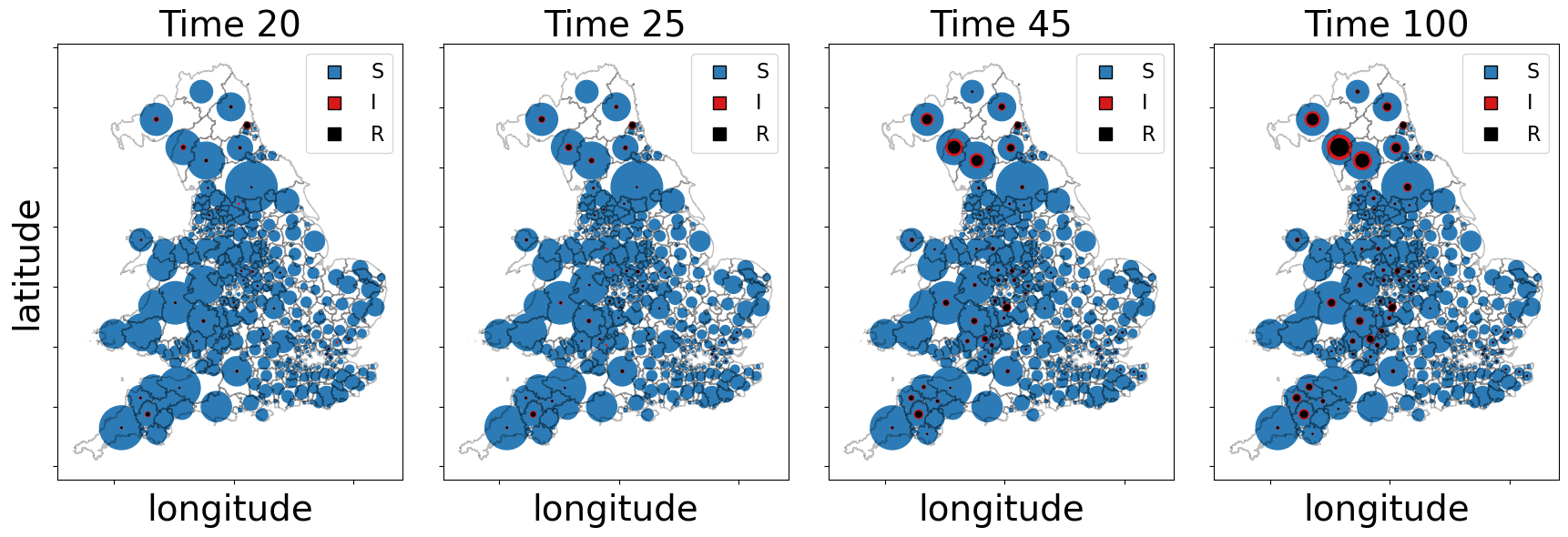}
    \caption{The CAL prediction over time of susceptible farms (blue), infected farms (red), and removed farms (black) for the local authority model. Black dots are double the radius and red dots are four times the radius for visual purposes.}
    \label{fig:FM_spat_track_locaut}
\end{figure}

\subsubsection{Benchmarking}

Consider $\widetilde{\by}_{n,t}$ which takes values $1,0$ depending on whether the farm is reported as infected or not. We model $p(\widetilde{\by}_{n,1}=1),p(\widetilde{\by}_{n,t}|\widetilde{\by}_{n,t-1})$ with an AR logistic regression:
$$
p(\widetilde{\by}_{n,1}=1) = \frac{\gamma}{1+e^{ -\bb^\top \bw_n}}, \quad p(\widetilde{\by}_{n,t}=1|\widetilde{\by}_{n,t-1}) = \frac{\gamma}{1+e^{-\beta \widetilde{\by}_{n,t-1} -\bb^\top \bw_n}},
$$
where $\gamma\in[0,1]$, $\beta \in \mathbb{R}$ and $\bb \in \mathbb{R}^{C+1}$, with $C$ number of covariates for each individual (+1 for the intercept). This results in the likelihood:
\begin{equation}
    \begin{split}
        p(\widetilde{\by}_{n,1:T}=1) &= \prod_{n\in[N]} \left( \frac{\gamma}{1+e^{ -\bb^\top \bw_n}} \right )^{\by_{n,1}} \left( 1- \frac{\gamma}{1+e^{ -\bb^\top \bw_n}} \right )^{1-\by_{n,1}}\\
        &\qquad \cdot \prod_{t=2}^T  \left ( \frac{\gamma}{1+e^{-\beta \widetilde{\by}_{n,t-1} -\bb^\top \bw_n}}\right )^{\by_{n,t}} \left ( 1- \frac{\gamma}{1+e^{-\beta \widetilde{\by}_{n,t-1} -\bb^\top \bw_n}}\right )^{1-\by_{n,t}}.
    \end{split}
\end{equation}

We optimize $\gamma,\beta,\bb$ using gradient ascent on the log-likelihood and Adam optimizer with $10000$ gradient steps and a learning rate of $0.1$, resulting in a log-likelihood of $-20858.994$. 

We then consider the same AR logistic regression but with local authority specific parameters. Specifically, we have
$$
p(\widetilde{\by}_{n,1}=1) = \frac{\gamma}{1+e^{ -\bb_k^\top \bw_n}}, \quad p(\widetilde{\by}_{n,t}=1|\widetilde{\by}_{n,t-1}) = \frac{\gamma_k}{1+e^{-\beta_k \widetilde{\by}_{n,t-1} -\bb_k^\top \bw_n}},
$$
where $\gamma_k\in[0,1]$, $\beta_k \in \mathbb{R}$ and $\bb_k \in \mathbb{R}^{C+1}$ and with $k$ being the local authority of individual $n$.

We similarly optimize $(\gamma_k,\beta_k,\bb_k)_k$ using gradient ascent on the log-likelihood and Adam optimizer with $10000$ gradient steps and a learning rate of $0.1$, resulting in a log-likelihood of $-18866.531$. 

\begin{algorithm}[t!]
\caption{CAL within bootstrap particle filter for ``shared'' overdispersion} \label{alg:overdispersion}
    \begin{algorithmic}
    \Require $\bW, \bY_{1:T}, p_0(\cdot), K_{\cdot}(\cdot), G(\cdot), P$
    \State Initialize $\bpi_{n,0}^P$ with $p_0(\bw_n)$ for all $n \in [N]$ and $p \in [P]$
    \For{$t \in 1,\dots,T$}
        \State $\bPi^p_{t-1} = (\bpi^p_{1,t-1}, \dots, \bpi^p_{N,t-1})$ for all $p \in [P]$
        \State Sample $\xi_{t}^p$ from the prior for all $p \in [P]$
        \For{$n \in [N]$}
            \State $\widetilde{\boeta}^p_{n,t-1} = \eta(\bw_n,\bW,\bPi^p_{t-1})$
            \State $\bpi^p_{n,t|t-1} =  \left [ (\bpi^p_{n,t-1})^{\top} K_{\widetilde{\boeta}_{n,t-1}}(\bw_n,\xi_t^p) \right ]^{\top}$ 
            \State $\bmu_{n,t}^p = \left [ (\bpi_{n,t|t-1}^p)^\top  G(\bw_n) \right ]^\top$
            \State $\bpi_{n,t}^p = \bpi_{n,t|t-1}^p \odot \left \{  \left [  G(\bw_n) \oslash  \left ( 1_M (\bmu_{n,t}^p)^\top \right ) \right ] \by_{n,t} \right \}$  
        \EndFor
        \State Set $w_t^p = \prod_{n \in [N]} \by_{n,t}^\top \bmu_{n,t}^p$ for all $p \in [P]$
        \State Resample the particles $\xi_{t}^p,(\bpi_{n,t}^p)_{n \in [N]}$ according to $\bar{w}_t^p\propto {w}_t^p$
    \EndFor
    
    \State Return the approximate likelihood $\prod_{t=1}^T \frac{1}{P}
    \sum_{p \in [P]}  w_t^p$
    \end{algorithmic}
\end{algorithm}

\begin{algorithm}[t!]
\caption{CAL within block bootstrap particle filter for ``local authority'' overdispersion} \label{alg:overdispersion_local}
    \begin{algorithmic}
    \Require $\bW, \bY_{1:T}, p_0(\cdot), K_{\cdot}(\cdot), G(\cdot), P$
    \State Initialize $\bpi_{n,0}^P$ with $p_0(\bw_n)$ for all $n \in [N]$ and $p \in [P]$
    \For{$t \in 1,\dots,T$}
        \State $\bPi^p_{t-1} = (\bpi^p_{1,t-1}, \dots, \bpi^p_{N,t-1})$ for all $p \in [P]$
        \For{$B \in$ local authorities}
            \State Sample $\xi_{B,t}^p$ from the prior for all $B \in$ local authorities and $p \in [P]$
            \For{$n \in B$}
                \State $\widetilde{\boeta}^p_{n,t-1} = \eta(\bw_n,\bW,\bPi^p_{t-1})$
                \State $\bpi^p_{n,t|t-1} =  \left [ (\bpi^p_{n,t-1})^{\top} K_{\widetilde{\boeta}_{n,t-1}}(\bw_n,\xi_{B,t}^p) \right ]^{\top}$ 
                \State $\bmu_{n,t}^p = \left [ (\bpi_{n,t|t-1}^p)^\top  G(\bw_n) \right ]^\top$
                \State $\bpi_{n,t}^p = \bpi_{n,t|t-1}^p \odot \left \{  \left [  G(\bw_n) \oslash  \left ( 1_M (\bmu_{n,t}^p)^\top \right ) \right ] \by_{n,t} \right \}$  
            \EndFor
            \State Set $w_{B,t}^p = \prod_{n \in B} \by_{n,t}^\top \bmu_{n,t}^p$
            \State Resample the particles $\xi_{B,t}^p,(\bpi_{n,t}^p)_{n \in B}$ according to $\bar{w}_{B,t}^p\propto {w}_{B,t}^p$
        \EndFor
    \EndFor
    
    \State Return the approximate likelihood $\prod_{t=1}^T \prod_{B} \frac{1}{P}
    \sum_{p \in [P]} w_{B,t}^p$
    \end{algorithmic}
\end{algorithm}

\subsubsection{Dealing with overdispersion}

Suppose that we want to include overdispersion in our foot-and-mouth model. This can be done by considering some stochastic parameters when the interaction term enters $P_n^I$.

\paragraph{Model overdispersed.} We consider the model as above but we set:
$$
P^I_n \coloneqq 1-\exp\left ( - h \beta \xi_{n,t} \exp \{ \bc_n^\top \bb_S \} \boeta_{n,t-1}^I -h \epsilon\right )
$$ 
where $\log \xi_{n,t}$ is Gaussian with mean $\mu_o$ and standard deviation $\sigma_o$. Here $\xi_{n,t}$ can be common across all the individuals, hence $\xi_t$, or across local authorities, hence $\xi_{B,t}$, resulting in what we call ``shared'' overdispersion and ``local authority'' overdispersion. $K_{\cdot}(\cdot)$ now depends on the stochastic parameters, meaning that we denote it with $K_{\eta_{n,t-1}^I,\eta_{n,t-1}^R}(\bw_n,\xi_{t})$ for ``shared'' overdispersion and with $K_{\eta_{n,t-1}^I,\eta_{n,t-1}^R}(\bw_n,\xi_{B,t})$ for ``local authority'' overdispersion. To make the notation lighter we refer to $K_{\eta_{n,t-1}^I,\eta_{n,t-1}^R}(\bw_n,\cdot)$ with just $K_{\eta_{n,t-1}}(\bw_n,\cdot)$.

\paragraph{Likelihood computation.} To provide likelihood estimate we nest the CAL within an SMC \citep{whitehouse2023consistent}. Precisely, we consider a bootstrap particle filter for ``shared'' overdispersion and a block bootstrap particle filter for ``local authority'' overdispersion. Here a particle approximation of $\xi_{n,t}|\bY_{1:t}$ is provided recursively via the SMC, where $\xi_{n,t}$ might be shared across different individuals. Algorithm \ref{alg:overdispersion} provides the pseudo-code for the bootstrap particle filter for the ``shared'' overdispersion, while Algorithm \ref{alg:overdispersion_local} provides the pseudo-code for the block bootstrap particle filter for the ``local authority'' overdispersion. 

\paragraph{Maximum CAL estimation.} Ideally we would like to optimize the parameters based on the likelihood approximations from both Algorithm \ref{alg:overdispersion} and Algorithm \ref{alg:overdispersion_local}. However, the current version of the algorithms is not suitable to automatic differentiation and it be JIT compiled only within a time step and not across time steps because of the resampling procedure. For computational reasons we hence optimize the parameters $\mu_o,\sigma_o$ on a grid while keeping the other parameters fixed to the maximum CAL estimator obtained without overdispersion. We decided to optimize also $\mu_o$ to ensure that shifts in the transmission rate are also possible if required. We consider $\mu_o \in \{ -2, -1, -0.5, -0.25, 0, 0.25, 0.5, 1, 2\}$ and $\sigma_o \in \{ 0.05, 0.125, 0.25, 0.5, 1, 2, 5\}$, and we get $\mu_o=0,\sigma_o=0.25$ as maximum on the grid, see Figure \ref{fig:ovd_grid} for a graphical illustration. We then consider $\mu_o=0,\sigma_o=0.25$ and rerun Algorithm \ref{alg:overdispersion} 100 times to get an estimate of the Monte Carlo variability. As we reported in the main manuscript we get a log-likelihood estimate of $-17926.988 \pm 0.4619721$. Considering Algorithm \ref{alg:overdispersion_local}, we set again $\mu_o=0,\sigma_o=0.25$ and run the algorithm 100 times. Here we obtain $-17907.555 \pm 0.835302$ as a log-likelihood estimate.

\begin{figure}[httb!]
    \centering
    \includegraphics[width=0.75\linewidth]{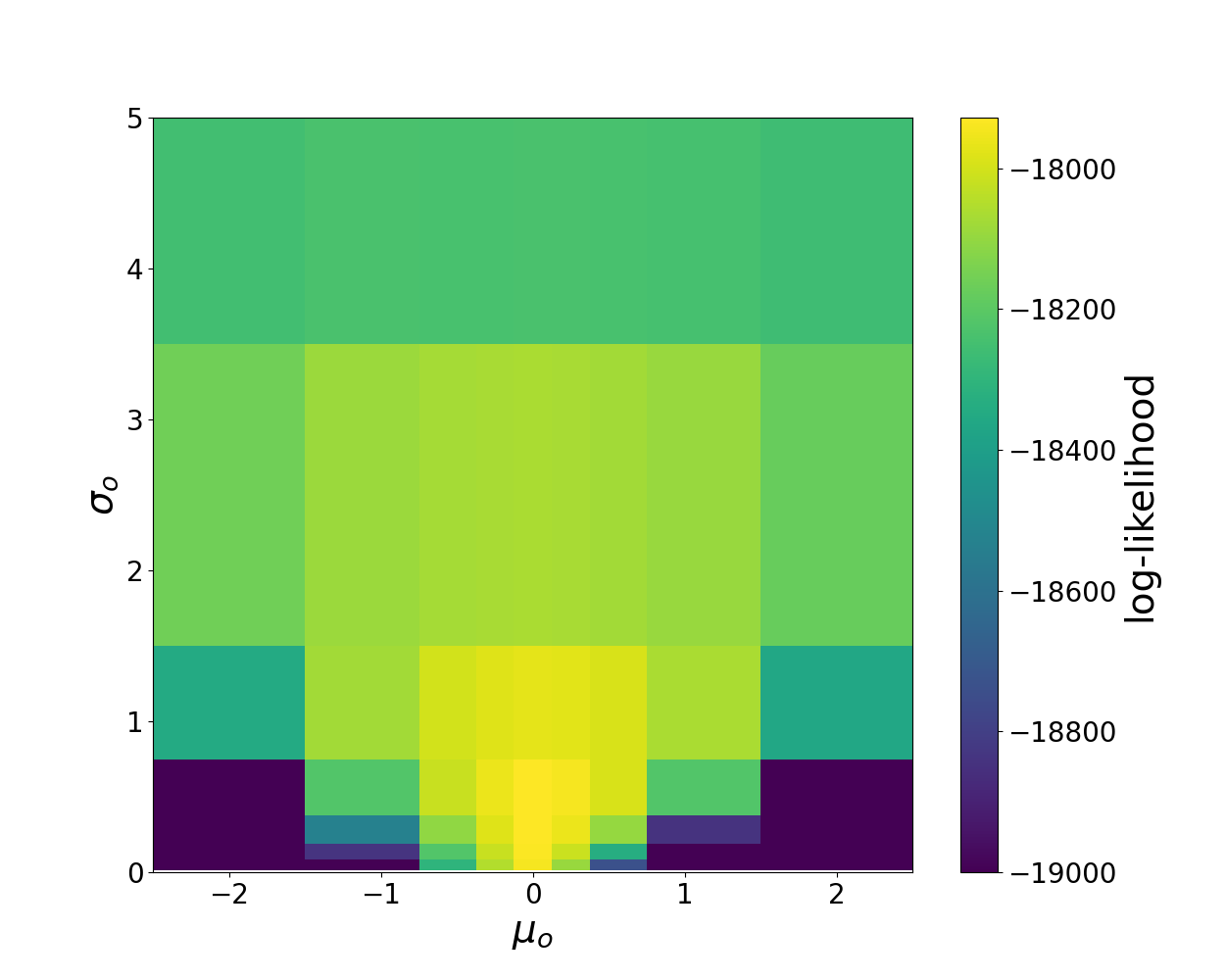}
    \caption{Log-likelihood estimates from Algorithm \ref{alg:overdispersion} on a grid of $\mu_o,\sigma_o$.}
    \label{fig:ovd_grid}
\end{figure}


\end{document}